\crefname{hypothesis}{Hypothesis}{Hypotheses}
\Crefname{ALC@unique}{Line}{Lines}
\colorlet{texcscolor}{blue!50!black}
\colorlet{texemcolor}{red!70!black}
\colorlet{texpreamble}{red!70!black}
\colorlet{codebackground}{black!25!white!25}
\newcommand  \stack[2]   {\overset{\text{#1}}{#2}}
\lstdefinestyle{siamlatex}{%
  style=tcblatex,
  texcsstyle=*\color{texcscolor},
  texcsstyle=[2]\color{texemcolor},
  keywordstyle=[2]\color{texemcolor},
  moretexcs={cref,Cref,maketitle,mathcal,text,headers,email,url},
}
\newcommand{\cV}{\mathcal{V}}
\newcommand{\cE}{\mathcal{E}}
\def\cV{\mathcal{V}}
\newcommand{\R}{\mathbb{R}}
\newcommand{\mG}{\mathcal{G}}
\def\diag{{\mathrm{diag}}}
\def\G{\mathcal {G}}
\patchcmd\newpage{\vfil}{}{}{}
\title{Robust recovery of bandlimited graph signals via randomized dynamical sampling \thanks{Submitted to the editors DATE.
\funding{L.Huang and D.Needell were supported in part by NSF DMS \#2011140 and the Dunn Family Endowed Chair fund. 
S. Tang was partially supported by Regents Junior Faculty fellowship and Faculty Early Career Acceleration grant sponsored by University of California Santa Barbara and NSF DMS \#2111303.  }}}
\author{Longxiu Huang\thanks{Department of  Mathematics, University of California Los Angeles, CA (\email{huangl3@math.ucla.edu}).}
\and Deanna Needell\thanks{Department of  Mathematics, University of California Los Angeles, CA (\email{deanna@math.ucla.edu}).}
\and Sui Tang\thanks{Department of Mathematics,  University of California Santa Barbara, CA (\email{suitang@math.ucsb.edu}).}
}
\title{Robust recovery of bandlimited graph signals via randomized dynamical sampling \thanks{Submitted to the editors DATE.
\funding{L.Huang and D.Needell were supported in part by NSF DMS \#2011140 and the Dunn Family Endowed Chair fund. 
S. Tang was partially supported by Regents Junior Faculty fellowship and Faculty Early Career Acceleration grant sponsored by University of California Santa Barbara and NSF DMS \#2111303.  }}}
\author{Longxiu Huang\thanks{Department of  Mathematics, University of California Los Angeles, CA (\email{huangl3@math.ucla.edu}).}
\and Deanna Needell\thanks{Department of  Mathematics, University of California Los Angeles, CA (\email{deanna@math.ucla.edu}).}
\and Sui Tang\thanks{Department of Mathematics,  University of California Santa Barbara, CA (\email{suitang@math.ucsb.edu}).}
}
\begin{document}
\maketitle

\begin{tcbverbatimwrite}{tmp_\jobname_abstract.tex}
\begin{abstract}
Heat diffusion processes have found wide applications  in modelling dynamical systems over graphs. In this paper, we consider the recovery of a $k$-bandlimited graph signal that is an initial signal of a heat diffusion process from its space-time samples. We propose three random space-time sampling regimes, termed  dynamical sampling techniques,   that consist in selecting a small subset of space-time nodes at random according to some probability distribution. We show that the number of space-time samples required to ensure stable recovery for each regime depends  on a parameter called the spectral graph weighted coherence, that depends on the interplay between the dynamics over the graphs and sampling probability distributions. In optimal scenarios, no more than $\mathcal{O}(k \log(k))$ space-time samples are sufficient to ensure  accurate and stable recovery of all $k$-bandlimited signals. In any case, dynamical sampling typically requires much fewer spatial samples than the static case by leveraging the temporal information. Then, we propose a computationally efficient method to reconstruct $k$-bandlimited signals from their space-time samples. We prove that it yields accurate reconstructions and that it is also stable to noise. Finally, we test dynamical sampling techniques on a wide variety of graphs. The numerical results support our theoretical findings and demonstrate the efficiency. 
  \end{abstract}

\begin{keywords}
 Bandlimited graph signals, random space-time sampling,  reconstruction, heat diffusion process,variable density sampling, compressed sensing
\end{keywords}

\begin{AMS}
  94A20,94A12
\end{AMS}
\end{tcbverbatimwrite}
\begin{abstract}
Heat diffusion processes have found wide applications  in modelling dynamical systems over graphs. In this paper, we consider the recovery of a $k$-bandlimited graph signal that is an initial signal of a heat diffusion process from its space-time samples. We propose three random space-time sampling regimes, termed  dynamical sampling techniques,   that consist in selecting a small subset of space-time nodes at random according to some probability distribution. We show that the number of space-time samples required to ensure stable recovery for each regime depends  on a parameter called the spectral graph weighted coherence, that depends on the interplay between the dynamics over the graphs and sampling probability distributions. In optimal scenarios, no more than $\mathcal{O}(k \log(k))$ space-time samples are sufficient to ensure  accurate and stable recovery of all $k$-bandlimited signals. In any case, dynamical sampling typically requires much fewer spatial samples than the static case by leveraging the temporal information. Then, we propose a computationally efficient method to reconstruct $k$-bandlimited signals from their space-time samples. We prove that it yields accurate reconstructions and that it is also stable to noise. Finally, we test dynamical sampling techniques on a wide variety of graphs. The numerical results support our theoretical findings and demonstrate the efficiency.
  \end{abstract}

\begin{keywords}
 Bandlimited graph signals, random space-time sampling,  reconstruction, heat diffusion process,variable density sampling, compressed sensing
\end{keywords}

\begin{AMS}
  94A20,94A12
\end{AMS}


\section{Introduction}
\label{sec:intro}

\label{sec:intro}

\subsection{Dynamical sampling problems over graphs}  Signals that arise from social, biological, and sensor networks  are typically structured and  interconnected, which are  modelled  as residing on weighted
graphs \cite{ceccon2017graph}.   The demand for large-scale data processing tasks over graphs such as graph-based filtering \cite{chen2013adaptive,loukas2016frequency} and  semi-supervised classifications  \cite{ekambaram2013wavelet,chen2014semi} has inspired an emerging field of graph signal processing \cite{shuman2013emerging,sandryhaila2014big}, where a cornerstone problem  is the robust recovery of graph signals from its sampled values on a proper subset of nodes \cite{pesenson2008sampling,pesenson2009variational,chen2015discrete,segarra2016reconstruction,puy2018random}.

In general, this type of inverse problem  is ill-posed and the exact reconstruction is impossible. Furthermore, the task of selecting a subset of nodes whose values enable reconstruction of the values in the entire graph with minimal loss is known to be NP- hard \cite{davis1997adaptive}. 
 However, many signals in real world applications exhibit structural features such as smoothness. In classical signal processing,  the smoothness assumptions are often defined in terms of the signal’s Fourier transform.  An important class of smooth signals is that of  bandlimited signals, i.e., signals whose Fourier transforms are compactly supported.  Bandlimited graph signals are defined in a similar way using the graph Fourier transform. That is to say, the signal is spanned by the eigenvectors corresponding to the small eigenvalues of the graph Laplacian. Such signals have appeared in various applications such as 
 temperature readings across U.S.,  wind speed across Minnesota, and natural image reconstruction  \cite{chen2016signal}.  Motivated by these applications, the sampling and reconstruction of  bandlimited graph signals have drawn considerable attention in recent years.

Most previous work in this area considered the static setting:  various sampling and reconstruction algorithms for bandlimited graph signals, inspired  from classical Shannon-sampling   \cite{chen2015discrete,pesenson2008sampling,pesenson2010sampling} and compressed sensing \cite{puy2018random},  have been proposed.  For $k$-bandlimited graph signals, one has to select at least $k$ nodes to ensure exact reconstruction  in the noiseless case, and it has been proved that $\mathcal{O}(k \log k)$ random samples are sufficient to recover all  $k$-bandlimited signals with high probability \cite{puy2018random}.

In many cases, the bandlimited signals over graphs are evolving and the underlying dynamical process can be modelled as a heat diffusion process over graphs, such as  rumor propagation over social networks \cite{xiao2019rumor}, traffic movement over transportation networks \cite{yu2017spatiotemporal}, spatial temperature profiles measured by a wireless sensor network \cite{thanou2017learning}, and neural activities at different regions of the brain \cite{sporns2010networks}. The existing sampling and reconstruction algorithms, however, do not take dynamical process into account.  In practice, we can obtain the space-time samples of the diffusive field by placing the sensors  at various spatial locations and   activating them at different time instances.  On one hand, due to application-specific restrictions,  it is often the case that one can not take a sufficient number of spatial samples at a single time instance to recover a bandlimited signal. On the other hand, the acquisition cost of space-time samples 
is much cheaper than the same amount of spatial samples, since the latter requires the physical presence of more sensors in the network, whereas the former is, in theory, only constrained by the communication capacity and energy budget of the sensor network.

The above mentioned practical concerns motivated  a new sampling approach termed ``dynamical sampling" \cite{aldroubi2013dynamical,aldroubi2017dynamical}.  Dynamical sampling deals with processing a time series of evolving signals. The goal is to recover the dynamical systems from a union of coarse spatial samples at multiple time instances. A core problem to address is where to put sensors, when to take samples and how to reconstruct the dynamical system from measured samples.

In this paper, we consider the dynamical sampling of a bandlimited diffusion field over the graph,  where we assume the physical law is known. Specifically, the homogeneous heat diffusion process over $\mathcal{G}$ is  governed by 
\begin{equation}\label{linearsystem}
\frac{\partial }{\partial t}\mathbf{x}_t =-L\mathbf{x}_t,  t\geq 0,
\end{equation} where $L$ is the normalized graph Laplacian of $\mG$.  The diffusive field $\mathbf{x}_t$   is completely determined by the initial condition $\mathbf{x}_0$:
$\mathbf{x}_t =e^{-tL}\mathbf{x}_0=:A^{t/\Delta t}\mathbf{x}_0, t\geq 0$
where $A \triangleq e^{-\Delta t}$ and $\Delta t>0$. In other words, $\mathbf{x}_t$ is just a filtered version of the initial state $\mathbf{x}_0$, with the filter being a time varying Laplacian kernel.

We observe the dynamics up to  $T$ time instances. For simplicity, let $\Omega \subset [n] \times \{0, \Delta t,\cdots, (T-1) \Delta t \}$ denote the space-time locations  where the entries of the diffusive field $\{\mathbf{x}_t: t\geq 0\}$ are observed. Now we decompose $\Omega=\bigcup_{\ell=0}^{T-1}\Omega_{\ell} \times \{\ell\}$ where $\Omega_{\ell}\subset [n]$ is the set of observational spatial locations at $t_{\ell}=\ell\Delta t$. We use $S_{\Omega_{\ell}}$ to denote the observation matrix that 
 $S_{\Omega_{\ell}}\mathbf{x}=\sum_{i\in \Omega_{\ell}}\mathbf{x}(i)\mathbf{e}_i$ where $\{\mathbf{e}_i\}_{i=1}^{n}$ is the standard orthonormal basis in $\mathbb{C}^n$.  Motivated by the way we gather the information from the social networks, we will investigate three random space-time sampling regimes with an example shown in Fig.~\ref{fig:sampling}: { \ding{172} Random spatial locations  at  initial  time, \ding{173} Random spatial locations at consecutive times and \ding{174} Random space-time locations}. In the first   regime, the sampling locations are the same for all time instances 
 and are randomly selected according to a predefined probability distribution $\mathbf{p}^{(1)}$ on $[n]$.  For the second  regime, the sampling locations at different time instances are  randomly selected according to their own predefined probability distributions $\mathbf{p}^{(2)}(t)$. Different from the first two sampling regimes, the evolved signals are treated as a signal in $\mathbb{R}^{nT}$ in the third sampling regime and the locations are randomly selected according to a probability distribution $\mathbf{p}^{(3)}$ in $[nT]$.

\begin{figure}
\centering
\begin{minipage}{.3\linewidth} \centering \small  Regime 1  \end{minipage}
\begin{minipage}{.28\linewidth} \centering \small  Regime 2 \end{minipage}
\begin{minipage}{.28\linewidth} \centering \small  Regime 3 \end{minipage}\\
\includegraphics[width=.9\linewidth, keepaspectratio]{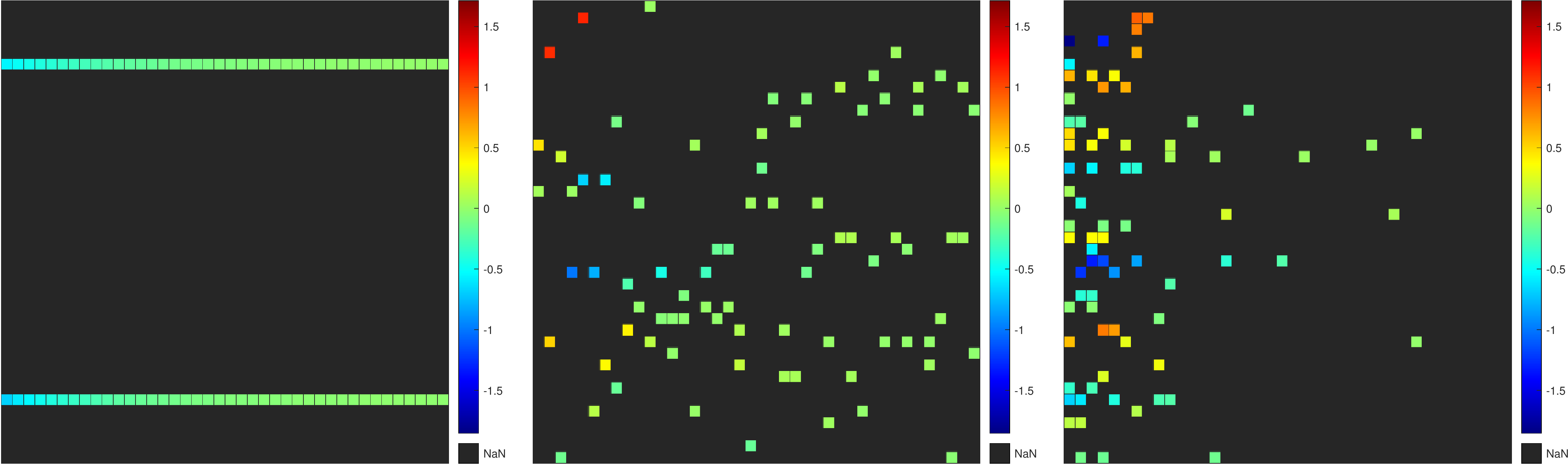}
\caption{\footnotesize\label{fig:sampling} An example of three space-time sampling regimes using the same number of samples. Here the $x$ axis denotes the time instances and  the $y$ axis denotes the spatial values of evolving signals with colors representing the magnitude.  Only values at colored spots are observed. }
\end{figure}

The goal is to investigate the conditions on $\Omega$ to achieve stable recovery of    the initial signal $\mathbf{x}_0$   from   (possibly noisy) partial observations  of states from a single trajectory $\{S_{\Omega_\ell}\mathbf{x}_{t_\ell}: \ell=0,1,\cdots, T-1\}$ and propose robust reconstruction strategies.

\subsection {Summary of contributions}
In this paper, we propose three  random  space-time sampling regimes to sample {bandlimited graph diffusive fields}. 
For  each regime, we introduce a quantity called  the \textit{spectral graph weighted coherence of order $(k,T)$}.  Such a quantity depends on the interplay between the sampling distribution and the localization of the  bandlimited diffusion fields over the space-time locations. We  provide sampling complexity bounds to ensure robust recovery of initial signal with high probability and show that they scale linearly with the square of  spectral graph weighted coherence of order $(k,T)$. We show that temporal information can reduce the spatial sampling density and one can  achieve comparable (even better) results given the same amount of space-time samples with the static case. We  further provide a detailed  comparison of three regimes in terms of their sample complexity  and   the influence of sampling probability  distributions on the performance. More details are discussed in Section \ref{subs:properties}.

Finally, we propose a robust method to reconstruct any $k$-bandlimited signal from its noisy samples, with a detailed error analysis. The idea is to use an  $\ell_2$ regularization term that utilizes the graph Laplacian and  bandlimited property of the graph signals. The results show that our methods can recover $k$-bandlimited signals exactly in the absence of noise and  the method is also robust to measurement acquisition noise and model errors. In addition, extensive numerical simulations are conducted on various graphs including an example of deblurring a real image from the space-time samples of its diffusion process, which further support our theoretical results and confirm that  random dynamical sampling offers a robust alternative for the standard spatial sampling.

 Note that our space-time sampling theorems are applicable to any  dynamical system over the graph  of the form $x_t=Ax_{t-1}$ with the evolution operator   $A$ being  a diagonalizable operator. For example, $A$ can be a polynomial function of any symmetrical Laplacian or adjacency matrix of any weighted undirected graph. It is also possible to generalize the current framework to bandlimited limited  graph processes governed by a linear time-invariant system \cite{isufi2020observing}.

 Nevertheless, the reconstruction methods  proposed here are specifically designed to take advantage of the semi-definite positivity of  the evolution operator. For  simplicity and conciseness, we therefore concentrate on the normalized Laplacians.

 Our work is built upon recent progress on sampling of bandlimited graph signals and dynamical sampling over regular domains. In particular, we generalize the work in \cite{puy2018random} to bandlimited diffusion fields with three random-space sampling regimes. In Section \ref{relatedwork}, we give  more detailed comparisons with these relevant works.

 \section{Related work} \label{relatedwork}
 
 \subsection{Dynamical sampling over regular domain}  Previous work on dynamical sampling has focused on deterministic sampling over regular domains.  Dynamical sampling was first proposed  in \cite{aldroubi2013dynamical,aldroubi2017dynamical} by Aldroubi et al for linear time invariant systems, motivated by the pioneering work of \cite{lu2009spatial} that considered the space-time sampling of bandlimited diffusion fields over the real line, see also \cite{ranieri2011sampling,dokmanic2011sensor, murray2015estimating,murray2016physics} for various extensions.   The majority of the work \cite{aceska2013dynamical,cabrelli2017dynamical,aldroubi2021sampling,aldroubi2020phaseless,tang2017universal,ulanovskii2021reconstruction,aldroubi2019frames,lai2017undersampled,christensen2019frame} has focused on developing deterministic sampling theorems to ensure  exact reconstruction, where the number and the location of sensors are  fixed for successive time instances.   Our paper is the first one to extend the dynamical sampling framework to graphs and consider the random space-time sampling regimes: the locations and numbers of sensors can change over time, and the observation time period is not necessarily successive.  We will show that this  allows us to adapt the graph topology to the sampling such that one needs much fewer number of space-time samples yet yields a more robust reconstruction.

 \subsection{Sampling and reconstruction of bandlimited graph signals} 
 
Sampling and reconstruction of bandlimited graph signals (corresponding to $T=1$  in our case)  have been extensively studied and achieved remarkable progress. The existing approaches can be mainly categorized into selection sampling and aggregation sampling. In the first category, deterministic sampling  (see e.g.~\cite{chen2015discrete,chamon2017greedy, pesenson2008sampling,pesenson2009variational, anis2016efficient, huang2020reconstruction}) and random sampling scheme (see e.g.~\cite{chen2015signal,chen2016signal,hashemi2018accelerated,puy2018random}) were proposed.  Using  ideas of variable density sampling from compressed sensing,  Vandergheynst et al. \cite{puy2018random}  derive random sampling schemes based on probability distributions over the graph vertexes.  In this paper, we generalize their work to the bandlimited diffusion field and propose three different  random space-time sampling schemes so that the work \cite{puy2018random} becomes a special case of our regimes  with $T=1$.  In the second category,   aggregation sampling  aims to recover signals from (random weighted) linear combinations of signal values on a  neighborhood of selected nodes. It corresponds to sampling a time-varying signal driven by a  graph shift operator, whose matrix form has the same sparsity pattern with the adjacent matrix.   The deterministic scheme proposed in \cite{marques2015sampling} (see  \cite{wang2016local,yang2021orthogonal} for extensions)  can be viewed a special case of our regime 1 with a pre-selected set of nodes; the random scheme \cite{valsesia2018sampling} used a graph shift operator with i.i.d.~Gaussian entries, corresponding to the case of $T=2$  and observations are only made at $T=2$. 
 
 Compared to the static setting,  there are relatively fewer work considering the sampling  and reconstruction problem in  dynamical system setting.  Different models for the time series over graph have been proposed. The examples include   distributed reconstruction algorithms for graph low-pass filtering processes  \cite{wang2015distributed} and  batch reconstruction methods   \cite{qiu2017time} for  graph signals whose temporal difference is bandlimited.
 \cite{isufi2020observing} considered 
 both deterministic and Bernoulli random space-time sampling for bandlimited graph processes. Necessary conditions on probabilities are derived to ensure the exact reconstruction of the process and a  convex optimization approach was proposed to choose the optimal sampling design.  The proposed regime is most related to our regime 3. However, our work is different: we also consider regimes 1 and 2, and we use a different observation model by extending the idea of variable density sampling.

\section{Preliminaries and Notation} \label{sec: NP}

\subsection{Operators associated to graphs} We consider an undirected weighted graph $\mG=(\cV,\cE,W)$ where $\cV=\{v_1,v_2,\cdots, v_n\}$ is  set of $n$ vertices and $ \cE \subset\mathcal{ V\times V}$ is a set of edges. {Let $\{v_i,v_j\}$ denote the edge if $v_i$ and $v_j$ are connected}. 
The weighted adjacent matrix $W$ is defined as 
\[
W(i,j) \triangleq 
\begin{cases}
   \alpha_{ij}&\mbox{if $\{v_i,v_j\} \in \cE$}\\
       0 &\mbox{otherwise}
\end{cases}; \alpha_{ij} \in \mathbb{R}_{+}; \text{ for all } v_i,v_j \in \mathcal{V}.\]

The weight associated with each edge in the graph often {indicates  similarity or dependency between the corresponding vertices}. The connectivity and edge weights are dictated by the physics of the problem at hand, for instance, the weights  are inversely proportional to the physical distance between nodes in the network.  In particular,  $W$ is symmetric.  The \textit{degree} $\deg(v_i)$ of a vertex $v_i \in V$ is defined as $\deg(v_i)=\sum_{j=1}^n W(i,j)$.  ${D} :=\diag(\deg(v_i)_{i=1}^{n})$ denotes the diagonal degree matrix. In the following, we introduce important operators associated with the graph $\mathcal{G}$. 

\begin{definition}The normalized diffusion operator of a graph $\mathcal{G}$ with  weighted adjacent matrix ${W} \in \R^{n \times n}$ is defined by 
$N =  {D}^{-\frac{1}{2}}W  {D}^{-\frac{1}{2}}$. We call $N$ the \textit{diffusion matrix} of $\mathcal{G}$.   The  normalized graph Laplacian operator  is $L =I-N$. 
\end{definition}

\subsection{Bandlimited graph signal}  The matrix $L$ is real symmetric, and so it admits an eigen-decomposition ${L}=U\Sigma U^{\top}$ where the columns of $U\in \mathbb{R}^{n\times n}$ are  orthonormal eigenvectors  and the diagonal entries of  $\Sigma$ consist of   $n$ eigenvalues $\sigma_1,\cdots, \sigma_n$. Furthermore, semi-definite positivity of $L$ implies that all eigenvalues are non-negative. Without loss of generality, we assume that $\sigma_1\leq\cdots\leq\sigma_n$. {Note that  $A=\exp(-\Delta t L)$, we have $A= U\Lambda U^\top $ with $\Lambda:=\exp(-\Delta t \Sigma)$. Let's denote the diagonal entries of $\Lambda$ as $\lambda_1=\exp(-\Delta t \sigma_1), \cdots,\lambda_n=\exp(-\Delta t\sigma_n)$. Thus we have $\lambda_1\geq \cdots\geq \lambda_n$. }

In graph signal processing, the matrix $U$ is often viewed as the graph Fourier transform. For any signal $\mathbf{x}\in \mathbb{R}^n$ defined on the nodes of the graph $\mathcal{G}$, $\widehat{\mathbf{x}}=U^\top\mathbf{x}$  {represents} the Fourier coefficients of $\mathbf{x}$ ordered in increasing frequencies. The smooth graph signals are modelled 
as  $k$ bandlimited  signals: for $\mathbf{x}\in \mathbb{R}^n$ on $\mathcal{G}$ with bandwidth $k>0$ it satisfies
$\mathbf{x}=U_k\widehat{\mathbf{x}}^k$
where $\widehat{\mathbf{x}}^k:=\widehat{\mathbf{x}}(1:k) \in \mathbb{R}^k$ and
$U_k:=(\mathbf{u}_1,
\cdots,\mathbf{u}_k) \in \mathbb{R}^{n\times k}$,
i.e., $U_k$ is the restriction of $U$ to its first $k$ vectors. This yields the following formal definition of a $k-$bandlimited signal.

\begin{definition}[Bandlimited signal on $\mG$] A signal $\mathbf{x} \in \mathbb{R}^n$ defined on the nodes of the graph $\mathcal{G}$ is
$k$-bandlimited with $k\in \mathbb{N}\setminus\{0\}$ if $\mathbf{x}\in \mathrm{span}(U_k)$.\end{definition}

Note that we use $U_k$ in our definition of $k$-bandlimited signals to handle the case where the eigen-decomposition 
is not unique.

\subsection{Notation}
 In this work, capital letters are used for matrices, lower boldface letters for vectors, and regular letters for scalars.  The set of the first $d$ natural numbers is denoted by $[d]$. For integers $i,n, T$, the notation $ i:n:Tn$ stands for the set  $\{k=i+jn\in[i, Tn]: \text{for some }j\in\mathbb{Z}\}$ and $I_d$ stands for the $d\times d$ identity matrix.
 
    We let $\diag(\mathbf{x})$ denote the diagonal matrix with diagonal entries given by  vector $\mathbf{x}$ and $\diag(X)=\diag(X(1,1),\cdots,X(d,d))$ for $X\in\mathbb{R}^{d\times d}$.
     Let $\mathbf{x}_1, \cdots,\mathbf{x}_t$ be vectors and $X_1,\cdots,X_t$ be  matrices, we define $\diag(\mathbf{x}_1;\cdots;\mathbf{x}_{t})$  and $\diag(X_1;\cdots;X_{t})$  as the block diagonal matrix with
    the $(i,i)$-th block of $\diag(\mathbf{x}_1; \cdots;\mathbf{x}_{t})$ being $\diag(\mathbf{x}_i)$ and  the $(i,i)$-th block of $\diag(X_1;\cdots;X_{t})$ being $X_i$ respectively.

      Let $T$ be a positive integer and we define 
      \begin{align}\label{map:f_T}
      f_{T}(\lambda)=\sqrt{\sum_{t=0}^{T-1}\lambda^{2t}},\; 
      F_{k,T}= \diag\left(\frac{1}{f_{T}(\lambda_1)},\cdots,\frac{1}{f_{T}(\lambda_k)}\right),\; \Lambda_k=\diag(\lambda_1,\cdots,\lambda_k).
      \end{align}

\section{Sampling of $k$-bandlimited diffusion fields} In this section, we describe three regimes that select a subset of space-time locations to sample a $k$-bandlimited diffusion field.  Then, we prove that the corresponding space-time sampling procedure stably embeds the set of initial signals that consists of $k$-bandlimited signals. 
We describe how to reconstruct initial signals  from the space-time samples in Section 5.

\subsection{The space-time sampling procedure}
 To select the space-time sampling set $\Omega$, we introduce various probability distributions $\mathbf{p}$ serving as sampling distributions for  three random  regimes.  
\begin{definition}\label{def:distributions} 
 The probability distributions can be defined as follows:
\begin{itemize}
    \item[\ding{172}] \textbf{Random spatial locations at initial time:} let $\mathbf{p}^{(1)}$ be a probability distribution on $[n]$. We define  $P_1 \in\mathbb{R}^{Tn\times Tn}$ as $P_1=\diag({\mathbf{p}^{(1)};\cdots;\mathbf{p}^{(1)}})$.
    \item[\ding{173}]  {\bf Random spatial locations at consecutive times:} let   $\mathbf{p}_t^{(2)}$ be a probability distribution on $[n]$ describing sampling at time $t$, for all $t=0,1,\cdots,T-1$. The associated matrix $P_2\in\mathbb{R}^{Tn\times Tn}$ is defined as $P_2=\diag(\mathbf{p}_0^{(2)}; \cdots;\mathbf{p}_{T-1}^{(2)} )$.
    \item[\ding{174}]  {\bf Random space-time locations:} we let  $\mathbf{p}^{(3)}$ be a probability distribution on $[Tn]$ and  define   $P_3\in\mathbb{R}^{Tn\times Tn}$ as $\mathrm{diag}(\mathbf{p}^{(3)})$. 
\end{itemize}

\end{definition}

We assume that the probability distributions above are all non-degenerate, i.e., entry-wise nonzero.  Once the probability distributions are assigned, we  obtain the nodes in  the space-time sampling sets $\Omega$ 
independently (with replacement) according to the  probability distributions. Note that $\Omega$ only needs to be selected once to sample all $k$-bandlimited diffusion fields over $\mathcal{G}$.  After the sampling distributions and sampling set $\Omega$ are determined, we define  the associated variables as follows.  
\begin{definition}\label{def_MPW}
For three regimes, we define the  associated sampling matrices and weighting matrices as follows. 
\begin{itemize}
 \item[\ding{172}] \textbf{Random spatial locations at initial time:}  $\Omega^{(1)}:= \Omega^{(1)}_0 \times [T]$ where $\Omega^{(1)}_0 =\{\omega_1^{(1)},\cdots,\omega_{m}^{(1)}\}$ is constructed by drawing $m$ indices independently (with replacement)  from set $[n]$ according to  probability $\mathbf{p}^{(1)}$, i.e.,
  $\mathbb{P}(\omega_j^{(1)}=i)=\mathbf{p}^{(1)}(i), \text{ for all } j\in[m] \text{ and }i\in[n]$.
  We define the  sampling matrix  
  \begin{equation}\label{eqn:samp_mat_fixed}
  S_1=\diag( \underbrace{S_1^{(0)};\cdots; S_1^{(0)}}_{T}),
  \end{equation}
        where  $ S_1^{(0)}\in\mathbb{R}^{m\times n}$ is defined as 
  $ S_1^{(0)}(i,j):=\begin{cases}
     1, &\text{ if }j=\omega_i\\
     0,& \text{otherwise}
  \end{cases}
  $.  Thus, we can define the probability matrix on the sampling set $P_{\Omega^{(1)}}=S_1P_1S_1^\top$. 
  To average the samples' information at each time instance,  we also introduce $M=Lm$ and the weighting matrix $W_1\in\mathbb{R}^{M\times M}$ as 
  \begin{equation}\label{eqn:wgt_mat_fixed}
     W_1=\frac{1}{m}\diag(\underbrace{I_{m};\cdots;I_{m}}_{T}).
  \end{equation}
    \item[\ding{173}]  \textbf{Random  spatial locations at consecutive times}: let $\Omega^{(2)}= \cup_{t=0}^{T-1}(\Omega_{t}^{(2)}\times \{t\})$ with $\Omega_t^{(2)}:=\{\omega_{t,1}^{(2)},\cdots,\omega_{t,m_t}^{(2)}\}$, where $\Omega_t^{(2)}$ is constructed by drawing  $m_t$ indices independently (with replacement) from the set $[n]$ according to the probability distribution $\mathbf{p}_t^{(2)}$, i.e.,
  $\mathbb{P}(\omega_{t,j}^{(2)}=i)=\mathbf{p}^{(2)}_t(i)$, for all $j\in[m_t] \text{ and }i\in[n]$.
  We set $M=\sum_{t=0}^{T-1}m_t$ and define the  sampling matrix $S_2\in\mathbb{R}^{M\times Tn}$  as
  \begin{equation}\label{eqn:samp_mat_rand_space}
  S_2=\diag( S_2^{(0)}; S_2^{(1)};\cdots; S_2^{(T-1)})
  \end{equation}
 where $S_2^{(t)}\in\mathbb{R}^{m_t\times n}$ is defined as $S_2^{(t)}(i,j):=\begin{cases}
     1, &j=\omega_{t,i}^{(2)}\\
     0,& \text{otherwise},
  \end{cases}
  $ for $t=0,\cdots,T-1$.  Thus, we could define the probability matrix on the sampling set $P_{\Omega^{(2)}}=S_2P_2S_2^\top$. 
  To average the samples at each time instance, we define the weighting matrix $W_2\in\mathbb{R}^{M\times M}$ as
  \begin{equation}\label{eqn:wgt_mat_fixed}
  W_2=\diag(\frac{1}{m_0}I_{m_0}; \frac{1}{m_1}I_{m_1};\cdots; \frac{1}{m_{T-1}}I_{m_{T-1}}).
  \end{equation}
    \item[\ding{174}] \textbf{Random space-time locations}:  $\Omega^{(3)}:=\{\omega_1^{(3)},\cdots,\omega_{m}^{(3)}\}$ is constructed by drawing $m$ indices independently (with replacement) from  $[Tn]$ according to probability distribution $\mathbf{p}^{(3)}$, i.e.,
  $\mathbb{P}(\omega_j=i)=\mathbf{p}^{(3)}(i), \forall j\in[m] \text{ and }i\in[Tn]$.
  For this sampling regime, we define the  sampling matrix $S_3\in\mathbb{R}^{m\times Tn}$ as
  \begin{equation}\label{eqn:samp_mat_rand_tsp}
  S_3(i,j):=\begin{cases}
     1, &\text{ if }j=\omega_i^{(3}\\
     0,& \text{otherwise}
  \end{cases},
  \end{equation}
    the probability matrix on the sampling set $P_{\Omega^{(3)}}=S_3P_3S_3^\top$, and weighting matrix $W_3=\frac{1}{M}I_{M}$ with $M=m$.
\end{itemize}
\end{definition}

\subsection{Stable embedding of $k$-bandlimited signals via space-time samples}
First let's introduce a map   
   $\pi_{A,T}:\mathbb{R}^n \rightarrow \mathbb{R}^{Tn}$ that   maps  $k$-bandlimited signals to    $k$-bandlimited diffusion fields and is defined by $\pi_{A,T}(\mathbf{x})=[\mathbf{x}^\top,(A\mathbf{x})^\top, \cdots, (A^{T-1}\mathbf{x})^\top]^\top$.
Notice that if $\mathbf{x}\in \mathrm{span}(U_k)$, then $\pi_{A,T}(\mathbf{x})\in \mathrm{span}(\widetilde{U}_{k,T})$, where  $\widetilde{U}_{k,T}  \in \mathbb{R}^{Tn\times k}$ consists of orthonormal columns vectors and is defined below
 \begin{equation}\label{eqn:tilde_UkL}
 \widetilde{U}_{k,T}=\begin{bmatrix}\frac{1}{f_{T}(\lambda_1)}\mathbf{u}_1& \frac{1}{f_{T}(\lambda_2)} \mathbf{u}_2&\cdots&\frac{1}{f_{T}(\lambda_k)} \mathbf{u}_k\\\frac{\lambda_1}{f_{T}(\lambda_1)} \mathbf{u}_1& \frac{\lambda_2}{f_{T}(\lambda_2)} \mathbf{u}_2&\cdots & \frac{\lambda_k}{f_{T}(\lambda_k)} \mathbf{u}_k\\ \vdots&\vdots&\cdots&\vdots\\ \frac{\lambda_1^{T-1}}{f_{T}(\lambda_1)} \mathbf{u}_1 &\frac{\lambda_2^{T-1}}{f_{T}(\lambda_2)}\mathbf{u}_2&\cdots&\frac{\lambda_k^{T-1}}{f_{T}(\lambda_k)} \mathbf{u}_k\end{bmatrix}.
 \end{equation}
This observation implies that $\pi_{ A,T}:\mathrm{span}(U_k) \rightarrow \mathrm{span}(\widetilde{U}_{k,T})$. The following lemma shows that $\pi_{A,T}$  is a stable embedding  and specifies lower and  upper embedding constants. 

\begin{lemma}\label{lmm: embedding}
For any $\mathbf{x}\in \mathrm{span}(U_k)$, we have that 
\begin{equation*}
f_{T}(\lambda_k)\|\mathbf{x}\|_2 \leq  \|\pi_{{A},T}(\mathbf{x})\|_2  \leq f_{T}(\lambda_1)\|\mathbf{x}\|_2,
\end{equation*}
where the map $f_T$ is defined in \eqref{map:f_T}. 
\end{lemma}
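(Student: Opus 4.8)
The plan is to work directly in the eigenbasis of the graph Laplacian, where both the bandlimited constraint and the action of the evolution operator $A$ become transparent. Since $\mathbf{x}\in\mathrm{span}(U_k)$, I would write $\mathbf{x}=\sum_{i=1}^{k} c_i \mathbf{u}_i$ with coefficients $c_i=\widehat{\mathbf{x}}^k(i)$, so that $\|\mathbf{x}\|_2^2=\sum_{i=1}^{k} c_i^2$ by orthonormality of the eigenvectors. The key observation that drives everything is that $A=U\Lambda U^\top$ is diagonalized by the same basis, hence $A^t\mathbf{u}_i=\lambda_i^t\mathbf{u}_i$ and therefore $A^t\mathbf{x}=\sum_{i=1}^{k} c_i\lambda_i^t\mathbf{u}_i$.

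Next I would compute the squared norm of the diffusion field directly from the definition of $\pi_{A,T}$. By construction $\|\pi_{A,T}(\mathbf{x})\|_2^2=\sum_{t=0}^{T-1}\|A^t\mathbf{x}\|_2^2$, and substituting the eigenbasis expansion and again invoking orthonormality gives $\|A^t\mathbf{x}\|_2^2=\sum_{i=1}^{k} c_i^2\lambda_i^{2t}$. Interchanging the two finite sums then yields $\|\pi_{A,T}(\mathbf{x})\|_2^2=\sum_{i=1}^{k} c_i^2\sum_{t=0}^{T-1}\lambda_i^{2t}=\sum_{i=1}^{k} c_i^2\,f_{T}(\lambda_i)^2$, where I recognize the inner sum as exactly $f_{T}(\lambda_i)^2$ from the definition in \eqref{map:f_T}. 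This reduces the lemma to a statement about a nonnegative weighted sum of the coefficients.

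The final step is a sandwiching argument. Since the eigenvalues obey $\lambda_1\geq\cdots\geq\lambda_k>0$ and $\lambda\mapsto f_{T}(\lambda)$ is nondecreasing on $[0,\infty)$, each weight satisfies $f_{T}(\lambda_k)\leq f_{T}(\lambda_i)\leq f_{T}(\lambda_1)$. Factoring these uniform bounds out of $\sum_{i=1}^{k} c_i^2 f_{T}(\lambda_i)^2$ and comparing with $\|\mathbf{x}\|_2^2=\sum_{i=1}^{k} c_i^2$ gives $f_{T}(\lambda_k)^2\|\mathbf{x}\|_2^2\leq\|\pi_{A,T}(\mathbf{x})\|_2^2\leq f_{T}(\lambda_1)^2\|\mathbf{x}\|_2^2$, and taking square roots finishes the claim. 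I do not anticipate a genuine obstacle here: the argument is a diagonalization followed by an elementary comparison. The only point requiring a moment of care is confirming the monotonicity of $f_T$, equivalently that the positivity $\lambda_i>0$ inherited from $\lambda_i=e^{-\Delta t\sigma_i}$ makes each term $\lambda_i^{2t}$ order-preserving; this is what guarantees that the extreme weights $f_{T}(\lambda_k)$ and $f_{T}(\lambda_1)$ are the sharp embedding constants rather than merely valid bounds.
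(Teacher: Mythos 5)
Your proof is correct and is essentially the argument the paper relies on: the paper states the lemma without an explicit proof, justifying it implicitly through the factorization $\pi_{A,T}(U_k\mathbf{c})=\widetilde{U}_{k,T}\,\diag\bigl(f_T(\lambda_1),\dots,f_T(\lambda_k)\bigr)\mathbf{c}$ with the orthonormal-column matrix $\widetilde{U}_{k,T}$ of \eqref{eqn:tilde_UkL}, and your eigenbasis computation $\|\pi_{A,T}(\mathbf{x})\|_2^2=\sum_{i=1}^{k}c_i^2 f_T(\lambda_i)^2$ is precisely that identity made explicit. The sandwiching step via monotonicity of $f_T$ on $[0,\infty)$ and the ordering $\lambda_1\geq\cdots\geq\lambda_k$ then gives the stated constants, exactly as the paper intends.
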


The key observation that $\pi_{ A,T}(\mathrm{span}(U_k))= \mathrm{span}(\widetilde{U}_{k,T})$ inspires us to interpret $k$-bandlimited diffusion fields  as $k$-bandlimited signals over an extended space-time graph $\mathcal{G}_T$,  whose top $k$ Laplacian eigenvectors are given by $ \widetilde{U}_{k,T}$.  Leveraging this interpretation, we connect the space-time sampling of bandlimited diffusion fields over $\mathcal{G}$ with the classical spatial sampling of  $k$-bandlimited graph signals over  $\mathcal{G}_T$.

 Similar to compressed sensing, the number of space-time samples needed to guarantee stable  reconstruction of $k$-bandlimited diffusion fields will depend on a quantity,   the spectral graph weighted coherence of order $(k,T)$, that represents how the energy of bandlimited diffusion fields spreads over the space-time nodes.  We will  derive and understand this quantity via sampling on $\mathcal{G}_T$. 

 Let  $\delta_i \in \mathbb{R}^{Tn}$ denote the Dirac at the space-time node $i$.
 The quantity 
\begin{equation}\label{gcoherence}
\|\widetilde{U}_{k,T}(i,:)\|_2 = \|\widetilde{U}_{k,T}^{\top}\delta_i\|_2 
\end{equation} describes how much the energy of $\delta_i$ is concentrated on the first $k$ Fourier modes of $\mathcal{G}_T$, or equivalently, the distribution of the energy of the $k$-bandlimited diffusion fields over the space-time nodes.

 When the quantity \eqref{gcoherence}  is one, then there exist $k$-bandlimited diffusion fields whose energy is solely concentrated on the $i$th space-time node, and not sampling the  $i$th node jeopardises the chance of reconstructions. When it is zero, then  no $k$-bandlimited diffusion fields over $\mathcal{G}$ has a part of its energy on the $i$th node; one can safely remove this node from the sampling set. We thus see that  the quality of our sampling method will depend on the interplay between the sampling distribution and the quantities  $\|\widetilde{U}_{k,T}^{\top}\delta_i\|_2$  for $i \in [nT]$. Ideally, the larger  $\|\widetilde{U}_{k,T}^{\top}\delta_i\|_2$ is, the higher probability of the node $i$ is observed.  Below, we introduce the spectral graph weighted coherence of order $(k,T)$ to characterize the interplay between the sampling distribution for each regime and space-time nodes, where the case of $T=1$ corresponds to the graph weighted coherence introduced in \cite[Definition 2.1]{puy2018random}   for the static case. 

\begin{definition}[Graph spectral  weighted coherence of order $(k,T)$ $\mathbf{\nu}_{j,\mathbf{p}^{(j)}}^{k,T}$ ]
\label{def:gswc}
Let $\mathbf{p}^{(j)}$ represent a sampling distribution for $j=1,2,3$. The graph spectral weighted coherence $\mathbf{\nu}_{j,\mathbf{p}^{(j)}}^{k,T}$ of order $(k,T)$ for the triple $(\G,T,\mathbf{p}^{(j)})$ is defined as following
\begin{itemize}
  \item[\ding{172}] \textbf{Random  spatial locations at the initial time:}  \vspace{-1.5mm}
  \begin{equation}\label{eqn:co_fixed}
  \begin{aligned}
      \mathbf{\nu}_{1,\mathbf{p}^{(1)}}^{k,T}:=&
      \max_{1\leq i\leq n}\left\{\frac{1}{\sqrt{\mathbf{p}^{(1)}(i)}}\left\| \widetilde{U}_{k,T}((i:n:Tn),:) \right\|_2\right\},
       \end{aligned}
  \end{equation}
  where $\widetilde{U}_{k,T}((i:n:Tn),:)$ is the row submatrix of $\widetilde{U}_{k,T}$ with row indices  in $(i:n:Tn)$ 
  
    \item[\ding{173}]  \textbf{Random  spatial locations at consecutive times:}  
     \vspace{-1.5mm}
    \begin{equation}\label{eqn:co_rand_space}
        \mathbf{\nu}_{2,\mathbf{p}^{(2)}}^{k,T}:=\begin{bmatrix}
        \mathbf{\nu}_{\mathbf{p}_0^{(2)}}\\
         \vdots\\
          \mathbf{\nu}_{\mathbf{p}_{T-1}^{(2)}}
        \end{bmatrix}=\begin{bmatrix}\max_{1\leq i\leq n}  \frac{1}{\sqrt{\mathbf{p}_0^{(2)}(i)}}\|\widetilde{U}_{k,T}^\top\delta_{i}\|_2\\
        \vdots\\
       \max_{1\leq i\leq n}  \frac{1}{\sqrt{\mathbf{p}_{T-1}^{(2)}(i)}}\|\widetilde{U}_{k,T}^\top\delta_{i+(T-1)n}\|_2\\
        \end{bmatrix},
    \end{equation}
     
    \item[\ding{174}] \textbf{Random space-time locations: }
    \begin{equation}\label{eqn:co_tsp}
      \mathbf{\nu}_{3,\mathbf{p}^{(3)}}^{k,T}:=\max_{1\leq i\leq Tn}\left\{\frac{1}{\sqrt{\mathbf{p}_{i}^{(3)}}}\|\widetilde{U}_{k,T}^{\top}\delta_i\|_2 \right\},
  \end{equation}
\end{itemize}

\end{definition}

From the definition,   the spectral graph weighted coherence of order $(k,T)$ for each regime is thus a characterization of the interaction among the $k$-bandlimited signals, spectrum of the graph Laplacian, and the respective sampling distributions. For the ease of the readers, we summarize three sampling regimes in Table \ref{tab:var_SR}.
\begin{table}[t]\footnotesize
\centering
\begin{tabular}{|l|l|l|l| }
\hline
Sampling  &\multirow{2}{*}{ \ding{172}} &\multirow{2}{*}{ \ding{173}} &\multirow{2}{*}{ \ding{174} } \\
Regimes &&&\\\hline
 $\mathbf{p}^{(j)}$&   $\mathbf{p}^{(1)}\in\mathbb{R}^{n}$ & $\mathbf{p}^{(2)}=[\mathbf{p}_0^{(2)};
\cdots;
\mathbf{p}_{T-1}^{(2)}]\in\mathbb{R}^{n\times T}$&  $\mathbf{p}^{(3)}\in\mathbb{R}^{Tn}$ \\
\hline
$P_j$ &$P_1=\diag(\underbrace{\mathbf{p}^{(1)};\cdots;\mathbf{p}^{(1)}}_{T})$&$P_2=\diag\left(\mathbf{p}_0^{(2)};\mathbf{p}_1^{(2)};\cdots;\mathbf{p}_{T-1}^{(2)}\right) $ &$P_3=\diag(\mathbf{p}^{(3)})$\\
 \hline
  \multirow{2}{*}{ $\Omega$}& { $\Omega^{(1)}=\Omega^{(1)}_0 \times [T]$} & $\Omega^{(2)}= \bigcup_{t=0}^{T-1}\Omega_{t}^{(2)} \times \{t\}$,  &  \multirow{2}{*}{ $\Omega^{(3)}=\{\omega_1^{(3)},\cdots,\omega_{m}^{(3)}\}$}\\
  & $\Omega^{(1)}_0 =\{\omega_1^{(1)},\cdots,\omega_{m}^{(1)}\}$& $\Omega_t^{(2)}=\{\omega_{t,1}^{(2)},\cdots,\omega_{t,m_t}^{(2)}\}$.&\\
  \hline
\multirow{4}{*}{ $S_j$}& $S_{1}=\diag( \underbrace{S_1^{(0)}; \cdots; S_1^{(0)}}_{T})$,&  $S_2=\diag( S_2^{(0)}; \cdots; S_2^{(T-1)})$ &\multirow{2}{*}{ $S_3\in\mathbb{R}^{m\times Tn}$ with}\\
&$ S_1^{(0)}\in\mathbb{R}^{m\times n}$ with &  $S_2^{(t)}\in\mathbb{R}^{m_t\times n}$ with &\\
&  $ S_1^{(0)}(i,j)=\begin{cases}
     1, & j=\omega_i\\
     0,& \text{otherwise}.
  \end{cases}$ & $S_2^{(t)}(i,j)=\begin{cases}
     1, & j=\omega_{t,i}\\
     0,& \text{otherwise}
  \end{cases}$
  &   $S_3(i,j)=\begin{cases}
     1, & j=\omega_i\\
     0,& \text{otherwise}.\\
     \end{cases}$\\
  &&for $t=0,\cdots,T-1$. &\\
 \hline
   $P_{\Omega^{(j)}}$& $P_{\Omega^{(1)}}=
   \diag(S_1 P_1 S_1^{\top})$& $P_{\Omega^{(2)}}=\diag(S_2P_2S_2^{\top})$ & $P_{\Omega^{(3)}}=\diag(S_3P_3S_3^{\top})$\\
 \hline
 $W_j$ & $W_1=\frac{1}{m}I_{mT}$&$W_2=\diag(\frac{1}{m_0}I_{m_0};  \cdots; \frac{1}{m_{T-1}}I_{m_{T-1}})$& $W_3=\frac{1}{m}I_m $\\
\hline 
$\mathbf{\nu}_{j,\mathbf{p}^{(j)}}^{k,T}$&$\mathbf{\nu}_{1,\mathbf{p}^{(1)}}^{k,T}$ in \eqref{eqn:co_fixed} &  $\mathbf{\nu}_{2,\mathbf{p}^{(2)}}^{k,T}$ in \eqref{eqn:co_rand_space}& $\mathbf{\nu}_{3,\mathbf{p}^{(3)}}^{k,T}$ in \eqref{eqn:co_tsp}\\
\hline
\end{tabular}
\caption{\footnotesize Relevant notations and matrices for different random space-time sampling regimes.}\label{tab:var_SR}
\end{table}

\subsection{Properties of spectral graph weighted coherences}\label{subs:properties} As shown later, the sampling size required to achieve  stable embedding  essentially depends on the  spectral graph weighted coherence. Next we  analyze its properties in Proposition \ref{prop:gcoherence}. 

\begin{proposition}\label{prop:gcoherence} We list several properties of the spectral graph weighted coherences of order $(k,T)$ that are useful for later analysis. 

\begin{itemize}

\item[(a)]For any probability distribution $ \mathbf{p}^{(j)}$ for $j=1,2,3$, we have that 

\begin{itemize}
    \item [(a.1)] $ \left(\mathbf{\nu}_{1,\mathbf{p}^{(1)}}^{k,T}\right)^2\geq \sum_{i=1}^{n}\left\| \widetilde{U}_{k,T}((i:n:Tn),:) \right\|_2^2 \geq k$ and the first inequality becomes an equality  when 
    \begin{equation}\label{opt1}
      \mathbf{p_{opt}}^{(1)}(i)=\frac{ \left\| 
    \widetilde{U}_{k,T}((i:n:Tn),:)) \right\|_2^2}{\sum_{\ell=1}^{n}\left\| 
     \widetilde{U}_{k,T}((\ell:n:Tn),:) \right\|_2^2} \quad  \text{ for } i=1,\cdots,n.
     \end{equation}

     \item[(a.2)] $(\nu_{2,\mathbf{p}^{(2)}}^{k,T}(t))^2 \geq  \sum\limits_{\ell=1}^{k}\frac{\lambda_{\ell}^{2t}}{f_{T}^2(\lambda_\ell)} $ for $t=0,\cdots,T-1$ and the equality holds when 
     \begin{equation}\label{opt2}
     \mathbf{p_{t,opt}}^{(2)}(i)= \frac{\|\widetilde{U}_{k,T}^{\top}\delta_{i+tn}\|_2^2}{\sum_{\ell=1}^{k}\lambda_{\ell}^{2t}/f_{T}^2(\lambda_\ell) }.
     \end{equation}
     As a result, 
     $\sum_{t=1}^{T} \left(\nu_{2,\mathbf{p}^{(2)}}^{k,T}(t)\right)^2 \geq k$.  
     \item[(a.3)] $\left(\mathbf{\nu}_{3,\mathbf{p}^{(3)}}^{k,T}\right)^2\geq k$ and the equality holds when 
     \begin{equation}\label{opt3}
     \mathbf{p_{opt}}^{(3)}(i)=\frac{ \|\widetilde{U}_{k,T}^{\top}\delta_i\|_2^2}{k}
     \end{equation}
\end{itemize}

\item[(b)] For any sampling distributions, we have that
\begin{equation*}
     T\left(\nu_{1,\mathbf{p}^{(1)}}^{k,T}\right)^2 \geq   \left(\nu_{3,\mathbf{p}^{(3)}}^{k,T}\right)^2  \geq  \sum_{t=0}^{T-1}\left(\nu_{2,\mathbf{p}^{(2)}}^{k,T}(t)\right)^2. \end{equation*}
 In particular, we call the probability  defined in \eqref{opt1}, \eqref{opt2} and \eqref{opt3} the optimal distribution for the regime 1,2 and 3. In this case, we have 
 \begin{equation*} T\left(\nu_{1,\mathbf{p}^{(1)}}^{k,T}\right)^2 \geq   \left(\nu_{3,\mathbf{p}^{(3)}}^{k,T}\right)^2  =  \sum_{t=0}^{T-1}\left(\nu_{2,\mathbf{p}^{(2)}}^{k,T}(t)\right)^2=k.  \end{equation*}

\end{itemize}
\end{proposition}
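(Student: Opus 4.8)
The plan is to reduce all four claims to a single elementary weighting inequality together with three bookkeeping identities for the row norms of $\widetilde{U}_{k,T}$. Writing $a_{i,t} := \|\widetilde{U}_{k,T}^{\top}\delta_{i+tn}\|_2^2$, one reads off from \eqref{eqn:tilde_UkL} that $a_{i,t} = \sum_{\ell=1}^k \frac{\lambda_\ell^{2t}}{f_T^2(\lambda_\ell)}\,\mathbf{u}_\ell(i)^2$. First I would record three identities: (i) summing over the spatial index, $\sum_{i=1}^n a_{i,t} = \sum_{\ell=1}^k \lambda_\ell^{2t}/f_T^2(\lambda_\ell)$, using $\sum_i \mathbf{u}_\ell(i)^2 = 1$; (ii) summing over time, $\sum_{t=0}^{T-1} a_{i,t} = \sum_{\ell=1}^k \mathbf{u}_\ell(i)^2$, using $\sum_{t=0}^{T-1}\lambda_\ell^{2t} = f_T^2(\lambda_\ell)$ from \eqref{map:f_T}; and (iii) summing over both, $\sum_{i,t} a_{i,t} = k$. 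Note that (ii) also identifies the regime-1 quantity $\|\widetilde{U}_{k,T}((i:n:Tn),:)\|_2^2$ with $\sum_t a_{i,t} = \sum_\ell \mathbf{u}_\ell(i)^2$, i.e.\ the squared Frobenius norm of the time block; this identification is what makes the second inequality in (a.1) an equality rather than a strict bound.

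The engine for part (a) is the fact that for nonnegative $\{a_i\}$ and a probability vector $\{p_i\}$ one has $\max_i a_i/p_i \ge \sum_i a_i$, with equality exactly when $p_i \propto a_i$ (write $\sum_i a_i = \sum_i p_i\,(a_i/p_i) \le \max_i a_i/p_i$). Applying this to each regime gives (a) directly: for (a.3) take the $Tn$ values $a_{i,t}$ against $\mathbf{p}^{(3)}$, so $(\nu_3^{k,T})^2 \ge \sum_{i,t} a_{i,t} = k$ by (iii), with optimizer \eqref{opt3}; for (a.2) fix $t$ and take the $n$ values $a_{i,t}$ against $\mathbf{p}_t^{(2)}$ to get $(\nu_2^{k,T}(t))^2 \ge \sum_i a_{i,t}$ via (i), with optimizer \eqref{opt2}, then sum over $t$ and use (iii); for (a.1) take the $n$ block norms $\sum_t a_{i,t}$ against $\mathbf{p}^{(1)}$, so $(\nu_1^{k,T})^2 \ge \sum_i \sum_t a_{i,t} = k$ by (ii)--(iii), with optimizer \eqref{opt1}.

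For part (b) I would first make precise the coupling of the three distributions under which the comparison is meaningful: a common spatial law $\mathbf{p}$ on $[n]$, used as $\mathbf{p}^{(1)}$, replicated at each time as $\mathbf{p}_t^{(2)}=\mathbf{p}$, and spread uniformly over time as $\mathbf{p}^{(3)}(i+tn)=\mathbf{p}(i)/T$. With this coupling $(\nu_3^{k,T})^2 = T\max_{i,t}\frac{a_{i,t}}{\mathbf{p}(i)}$, and the chain follows from two one-line comparisons: $T(\nu_1^{k,T})^2 = T\max_i \frac{\sum_t a_{i,t}}{\mathbf{p}(i)} \ge T\max_{i,t}\frac{a_{i,t}}{\mathbf{p}(i)} = (\nu_3^{k,T})^2$ (discard the nonnegative off-$t$ terms, since $\sum_t a_{i,t}\ge \max_t a_{i,t}$), and $\sum_t (\nu_2^{k,T}(t))^2 = \sum_t \max_i \frac{a_{i,t}}{\mathbf{p}(i)} \le T\max_{i,t}\frac{a_{i,t}}{\mathbf{p}(i)} = (\nu_3^{k,T})^2$ (a sum of $T$ maxima is at most $T$ times the joint maximum). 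The sharper ``in particular'' statement is then obtained by evaluating each regime at its own optimal distribution \eqref{opt1}--\eqref{opt3} and substituting the values from part (a): $(\nu_3^{k,T})^2 = \sum_t (\nu_2^{k,T}(t))^2 = k$ and $T(\nu_1^{k,T})^2 = Tk \ge k$.

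The main obstacle is conceptual rather than computational: one must fix the right correspondence between the three distributions. For genuinely independent distributions the inequality $T(\nu_1^{k,T})^2 \ge (\nu_3^{k,T})^2$ is \emph{false}---an adversarial $\mathbf{p}^{(3)}$ placing vanishing mass on a high-energy space-time node blows up $(\nu_3^{k,T})^2$ while $(\nu_1^{k,T})^2$ stays bounded---so the statement is really a comparison of a shared spatial budget redistributed across time, and keeping the factor $T$ and the $\max$-versus-$\sum$ directions straight is the only delicate point. The remaining care is in (a.1), where recognizing the block norm as a Frobenius norm via identity (ii) is what pins the lower bound to exactly $k$.
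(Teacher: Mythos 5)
Your part (a) is, modulo packaging, exactly the paper's argument: the paper proves (a.2) by writing $\sum_{\ell}\lambda_\ell^{2t}/f_T^2(\lambda_\ell)=\sum_i \mathbf{p}_t^{(2)}(i)\cdot\tfrac{1}{\mathbf{p}_t^{(2)}(i)}\|\widetilde{U}_{k,T}^{\top}\delta_{tn+i}\|_2^2\le \bigl(\nu_{2,\mathbf{p}^{(2)}}^{k,T}(t)\bigr)^2$ and then handles (a.1) and (a.3) ``similarly'' --- precisely your max-versus-weighted-average lemma combined with your identities (i)--(iii).

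Part (b) is where you genuinely deviate, in two ways worth recording. First, scope: the paper's proof of (b) opens with ``let's prove this result when $\mathbf{p}$ is the uniform sampling distribution'' and carries out the computation only for uniform distributions (the optimal case following from part (a)); the stated hypothesis ``for any sampling distributions'' is, as you observe, untenable for three unrelated distributions, and your coupled family (one spatial law $\mathbf{p}$ used as $\mathbf{p}^{(1)}$, replicated as $\mathbf{p}_t^{(2)}$, and spread as $\mathbf{p}^{(3)}(i+tn)=\mathbf{p}(i)/T$) is a clean hypothesis that contains the paper's uniform case and under which your two one-line comparisons go through. Second, norms: the paper runs part (b) through matrix identities, rewriting $\tfrac{1}{\mathbf{p}^{(1)}(i)}\|\widetilde{U}_{k,T}((i:n:Tn),:)\|_2^2$ as $n\|\sum_t F_{k,T}\Lambda_k^tU_k^{\top}\delta_i\delta_i^{\top}U_k\Lambda_k^tF_{k,T}\|_2$ and lower-bounding a PSD sum by one of its summands, i.e.\ it treats the regime-1 block norm as a spectral norm, whereas you read it as a Frobenius norm and work with the scalars $a_{i,t}$ throughout. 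Both readings support the chain in (b), but yours is the one under which (a.1) is actually true: with the spectral norm one has $\sum_i\|\widetilde{U}_{k,T}((i:n:Tn),:)\|_2^2\le k$ (each block's spectral norm is dominated by its Frobenius norm, and the Frobenius norms sum to exactly $k$), and a small example --- $n=k=2$, $T=2$, $U$ a rotation by $\pi/4$, $\lambda_1\neq\lambda_2$ --- gives $\bigl(\nu_{1,\mathbf{p}^{(1)}}^{k,T}\bigr)^2=1+\tfrac{1+\lambda_1\lambda_2}{f_T(\lambda_1)f_T(\lambda_2)}<2=k$ even at the best distribution, so the claimed bound $\geq k$ would fail. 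Your explicit Frobenius identification is therefore not pedantry; it is the step that makes the statement of (a.1) correct, and it silently repairs an inconsistency the paper leaves between its definition, its proof of (a.1), and its proof of (b).
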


We are now ready to present the main theorem in this section. We represent  three random space-time sampling procedures of $k$-bandlimited diffusion fields as three operators $P_{\Omega^{(j)}}^{-\frac{1}{2}}W_j^{\frac{1}{2}}S_j$ for $j=1,2,3$. We provided a unified theory to show that their composition with the map $\pi_{ A,T}$ yields  stable embeddings of $k$-bandlimited signals with high probability, as long as the  number of space-time samples taken is above a threshold depending on the respective spectral graph coherence.

\begin{theorem}\label{thm:embedding_all}
Let $P_{\Omega^{(j)}},S_j,W_j$ be defined in Table \ref{tab:var_SR}  with the sampling distribution $\mathbf{p}^{(j)}$. 
For any $\delta,\epsilon \in (0,1)$, the following inequality
\begin{equation}\label{eqn:frame}
(1-\delta)f_{T}^2(\lambda_k)\|\mathbf{x}\|^2\leq   \| P_{\Omega^{(j)}}^{-\frac{1}{2}}W_j^{\frac{1}{2}}S_j\pi_{ A,T}(\mathbf{x})\|_2^2   \leq (1+\delta)f_{T}^2(\lambda_{1})\|\mathbf{x}\|_2^2,
\end{equation}
 holds with probability at least $1-\epsilon$,
provided that $m\geq \frac{3}{\delta^2}\left(\mathbf{\nu}_{j,\mathbf{p}^{(j)}}^{k,T}\right)^2\log\left(\frac{2k}{\epsilon} \right)$ for $j=1,3$, and $m_t\geq \frac{3}{\delta^2}\left(\mathbf{\nu}_{2,\mathbf{p}^{(2)}}^{k,T}(t) \right)^2\log\left(\frac{2k}{\epsilon} \right)$ for  $j=2$, where $m_t$ is the number of space-time samples taken at time $t$.
\end{theorem}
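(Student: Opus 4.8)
The plan is to reduce the two-sided bound \eqref{eqn:frame} to a single spectral statement about a random $k\times k$ Gram matrix, and then apply a matrix Chernoff bound. Write $M_j:=P_{\Omega^{(j)}}^{-\frac12}W_j^{\frac12}S_j$, and for $\mathbf{x}=U_k\mathbf{a}\in\mathrm{span}(U_k)$ use the observation preceding Lemma~\ref{lmm: embedding} that $\pi_{A,T}(\mathbf{x})=\widetilde{U}_{k,T}\mathbf{c}$ with $\mathbf{c}=\diag(f_{T}(\lambda_1),\dots,f_{T}(\lambda_k))\mathbf{a}$. Since $\widetilde{U}_{k,T}$ has orthonormal columns, $\|\pi_{A,T}(\mathbf{x})\|_2^2=\|\mathbf{c}\|_2^2$ and $\|M_j\pi_{A,T}(\mathbf{x})\|_2^2=\mathbf{c}^\top\Theta_j\mathbf{c}$, where $\Theta_j:=\widetilde{U}_{k,T}^\top M_j^\top M_j\widetilde{U}_{k,T}$. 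Hence it suffices to show $\|\Theta_j-I_k\|\le\delta$ with probability at least $1-\epsilon$ under the stated sample bounds: combining $(1-\delta)\|\mathbf{c}\|_2^2\le\mathbf{c}^\top\Theta_j\mathbf{c}\le(1+\delta)\|\mathbf{c}\|_2^2$ with the embedding bounds of Lemma~\ref{lmm: embedding}, i.e. $f_{T}^2(\lambda_k)\|\mathbf{x}\|_2^2\le\|\mathbf{c}\|_2^2\le f_{T}^2(\lambda_1)\|\mathbf{x}\|_2^2$, yields \eqref{eqn:frame} at once.

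Next I would expand $\Theta_j$ as a sum of independent positive semidefinite matrices. Using the block structure of $S_j$, $W_j$ and the diagonal matrix $P_{\Omega^{(j)}}$, the matrix $M_j^\top M_j$ is block diagonal and each drawn node contributes a rank-one (regimes~\ding{174} and \ding{173}) or rank-$\le T$ (regime~\ding{172}) term. Concretely, $\Theta_3=\frac1m\sum_{i=1}^m X_i$ with $X_i=\frac{1}{\mathbf{p}^{(3)}(\omega_i)}(\widetilde{U}_{k,T}^\top\delta_{\omega_i})(\widetilde{U}_{k,T}^\top\delta_{\omega_i})^\top$; $\Theta_2=\sum_{t=0}^{T-1}\frac1{m_t}\sum_{j=1}^{m_t}X_{t,j}$ with $X_{t,j}$ built from $\widetilde{U}_{k,T}^\top\delta_{\omega_{t,j}+tn}$; and the $T$ identical spatial copies in regime~\ding{172} merge into $\Theta_1=\frac1m\sum_{j=1}^m\frac{1}{\mathbf{p}^{(1)}(\omega_j)}\widetilde{U}_{k,T}((\omega_j:n:Tn),:)^\top\widetilde{U}_{k,T}((\omega_j:n:Tn),:)$. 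The crucial algebraic step is to verify $\mathbb{E}[\Theta_j]=I_k$ in each case. For regimes~\ding{172} and \ding{174} this is simply $\sum_{r}\widetilde{U}_{k,T}(r,:)^\top\widetilde{U}_{k,T}(r,:)=\widetilde{U}_{k,T}^\top\widetilde{U}_{k,T}=I_k$. For regime~\ding{173} the $t$-th block has mean $\Sigma_t=\diag(\lambda_\ell^{2t}/f_{T}^2(\lambda_\ell))$, and summing over $t$ uses the defining identity $\sum_{t=0}^{T-1}\lambda_\ell^{2t}=f_{T}^2(\lambda_\ell)$ from \eqref{map:f_T} to give $\sum_t\Sigma_t=I_k$.

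Finally, I would bound the operator norm of each summand by the relevant coherence and invoke a matrix Chernoff inequality. By Definition~\ref{def:gswc}, $\|X_i\|\le(\nu_{3,\mathbf{p}^{(3)}}^{k,T})^2$, the regime-\ding{172} summands satisfy $\|X_j\|\le\frac{1}{\mathbf{p}^{(1)}(\omega_j)}\|\widetilde{U}_{k,T}((\omega_j:n:Tn),:)\|_2^2\le(\nu_{1,\mathbf{p}^{(1)}}^{k,T})^2$, and $\|\frac1{m_t}X_{t,j}\|\le(\nu_{2,\mathbf{p}^{(2)}}^{k,T}(t))^2/m_t$; these are exactly \eqref{eqn:co_fixed}--\eqref{eqn:co_tsp}. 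Applying the matrix Chernoff bound to independent positive semidefinite summands whose expected sum has all eigenvalues equal to $1$, the upper tail contributes a factor $k\,e^{-\delta^2/(3R)}$ and the lower tail $k\,e^{-\delta^2/(2R)}$, where $R$ is the uniform norm bound on a single summand; a union bound gives failure probability at most $2k\,e^{-\delta^2/(3R)}$. Setting this $\le\epsilon$ and substituting $R=(\nu_j)^2/m$ (regimes~\ding{172}, \ding{174}) or $R=\max_t(\nu_{2,\mathbf{p}^{(2)}}^{k,T}(t))^2/m_t$ (regime~\ding{173}) reproduces the thresholds $m\ge\frac{3}{\delta^2}(\nu_j)^2\log(2k/\epsilon)$ and $m_t\ge\frac{3}{\delta^2}(\nu_{2,\mathbf{p}^{(2)}}^{k,T}(t))^2\log(2k/\epsilon)$. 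I expect the main difficulty to be bookkeeping rather than conceptual: tracking the block structure of $M_j^\top M_j$ so that the per-node summands, their means, and their norm bounds align with the three coherence definitions---in particular the merging of the $T$ time copies in regime~\ding{172} and the non-identically-distributed blocks in regime~\ding{173}, where the mean identity hinges on the $f_{T}$ normalization being built into $\widetilde{U}_{k,T}$.
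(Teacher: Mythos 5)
Your proposal is correct and follows essentially the same route as the paper's proof: reduce to the random Gram matrix $\widetilde{U}_{k,T}^\top M_j^\top M_j\widetilde{U}_{k,T}$ written as a sum of independent positive semidefinite summands (rank-one for regimes \ding{173}, \ding{174}; rank-$\le T$ for regime \ding{172}), verify the expectation is $I_k$, bound each summand's norm by the corresponding coherence, apply Tropp's matrix Chernoff bound with a union bound to get failure probability $2k\,e^{-\delta^2/(3R)}$, and finish with Lemma~\ref{lmm: embedding}. The only cosmetic difference is that you make the change of variables $\pi_{A,T}(\mathbf{x})=\widetilde{U}_{k,T}\mathbf{c}$ explicit, while the paper invokes $\pi_{A,T}(\mathbf{x})\in\mathrm{span}(\widetilde{U}_{k,T})$ directly; the substance is identical.
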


\eqref{eqn:frame} represents the stable embedding of $k$-bandlimited signals into $\mathbb{R}^m$. Let us take two different vectors $\mathbf{x}_1 \neq \mathbf{x}_2$ in $\mathrm{span} (U_k)$. Notice that $\mathbf{x}_1-\mathbf{x}_2 \in \mathrm{span}(U_k)$, then   $\| P_{\Omega^{(j)}}^{-\frac{1}{2}}W_j^{\frac{1}{2}}S_j\pi_{ A,T}(\mathbf{x}_1-\mathbf{x}_2)\|_2^2\geq (1-\delta)f_{T}^2(\lambda_k)\|\mathbf{x}_1-\mathbf{x}_2\|^2>0.$ 
\begin{remark}We make  several important comments  on  Theorem \ref{thm:embedding_all}.
\begin{itemize}
\item In the case of $T=1$,   $f_{T}^2(\lambda_{1})=f_{T}^2(\lambda_{k})=1$ and all the spectral graph weighted coherences equal to $(\nu_{\mathbf{p}}^{k})^2=\max_{1\leq i\leq n} \frac{\|U_k^{\top}\delta_i\|^2_2}{\mathbf{p}(i)}$, which coincides with  \cite[ Theorem 2.2]{puy2018random},  that the embedding operator satisfies the Restricted Isometry Property (RIP). Note that for regime 1, the total number of space-time samples is $Tm$ and  our analysis in  Proposition \ref{prop:gcoherence}  shows that 
$T(\nu_{1,\mathbf{p}^{(1)}}^{k,T})^2$ is  the largest among three regimes.  Even in this case, due to the temporal dynamics, one would expect fewer requirements of spatial sensors than the static case: Indeed, one can verify that $ \left(\nu_{1,\mathbf{p}^{(1)}}^{k,T}\right)^2 \leq \left(\nu_{\mathbf{p}}^{k}\right)^2$.
We also refer to the numerical section (Section \ref{sec:experiments}) for more empirical evidences.

\item In practical situations, one just needs to measure the bandlimited diffusion fields  by $S_j\pi_{ A,T}(\mathbf{x})$. The re-weighting by
 $P_{\Omega^{(j)}}^{-\frac{1}{2}}W_j^{\frac{1}{2}}$ can be done offline.

\item The spectral graph weighted coherences $(\mathbf{\nu}_{j,\mathbf{p}^{(j)}}^{k,T})^2\geq k$ for $j=1,3$ and $\sum_{t=1}^{T}(\mathbf{\nu}_{j,\mathbf{p}^{(2)}}^{k,T})^2(t)\geq k$ (see Proposition \ref{prop:gcoherence}). It indicates that, we need to have at least $k$ space-time samples in all three regimes. Note that $k$ is also the minimum number of measurements that one must take to ensure the reconstruction of the $k$ bandlimited signals. {Furthermore, the optimal spectral graph weighted coherence for regimes 2 and 3 is $k$ and therefore is  independent of the total sampling time. This indicates that, one can expect to use fewer number of spatial sensors at each time instance on average as $T$ increase without hurting too much  reconstruction accuracy, we refer to Theorem \ref{thm: err_dirct}.}
\item By part~(b) of   Proposition \ref{prop:gcoherence}, the sampling complexity bounds satisfy that  regime 1 $\geq$ regime 3 $\geq$ regime 2. This is  consistent with our numerical results in Section 6. 
\end{itemize}
\end{remark}

Theorem \ref{thm:embedding_all} generalizes the well-known compressed sensing results in bounded orthonormal systems to the sampling of bandlimited diffusion fields with various space-time sampling regimes. Here we are in a different setting where the signal model is the bandlimited diffusion field. We leverage this temporal structure  to connect with the orthonormal matrix $\widetilde U_{k,T}$. This  connection allows us to utilize the same concentration inequalities used in compressed sensing yet  refine and tighten the  space-time sampling conditions.

 In part $(a)$ of Proposition \ref{prop:gcoherence}, we derive $\mathbf{p}_{\textbf{opt}}^{(j)}$
 for regime $j$, that minimizes the  corresponding $\nu_{j,\mathbf{p}^{(j)}}^{k,T}$. 
 Combining Theorem \ref{thm:embedding_all} and part $(a.2)$, part $(a.3)$ of Proposition \ref{prop:gcoherence}, we can see that there always exists a sampling distribution for regime 2 and regime 3 such that $\mathcal{O}(k\log k)$ space-time samples are enough to capture all $k$-bandlimited signals. For regime 1, due to the inequality in part $(a.1)$, one would expect   more space-time samples than other two regimes.

\subsection{Intuitive links between graph diffusion process and the spectral graph weighted coherence}\label{graphdiffusion} We will  interpret $i\in [nT]$ 
as a space-time node in $[n]\times [T]$. We give here some examples showing how the spectral graph weighted coherence changes for diffusion processes over different graphs.

Consider circulant graphs with periodic boundary conditions, whose adjacent matrices are real and symmetric circulant matrices. In this case, the eigenvectors of their Laplacian are the $n$ dimensional classical Fourier modes. For simplicity, assume that the top $k$ eigenvalues are distinct. Let $\lambda_1,\lambda_2,\cdots,\lambda_k$ denote the eigenvalues of evolution operator $A$ for the heat diffusion process. { In this case, one can show that  $\mathbf{p_{opt}}^{(1)}$ is  the uniform distribution, and $\mathbf{p_{opt}}^{(2)}$ and $\mathbf{p_{opt}}^{(3)}$ have the identical components for a fixed time instance. These conclusions come from the following two facts:}\\
(i)For regime 1, 
$\left\| \widetilde{U}_{k,T}((i:n:Tn),:) \right\|_2 $  is the same for all $1\leq i\leq n$ and it equals to the operator norm of the matrix 
 \begin{equation*}\frac{1}{\sqrt{n}}\begin{bmatrix}1&1&\cdots&1\\ \lambda_1&\lambda_2&\cdots&\lambda_k\\ \vdots&\vdots&\cdots&\vdots\\  \lambda_1^{T-1}&\lambda_2^{T-1}&\cdots&\lambda_k^{T-1} \end{bmatrix} \begin{bmatrix} {1}/{f_T(\lambda_1)}&0&\cdots&0\\ 0& {1}/{f_T(\lambda_2)} &\cdots&0\\ \vdots&\vdots&\cdots&\vdots\\ 0 & 0 &\cdots& {1}/{f_T(\lambda_k)}\end{bmatrix}.
 \end{equation*}
(ii) For regime 2 and regime 3, 
$\left\| \widetilde{U}_{k,T}^\top \delta_i \right\|_2$ is the same for all node $i$ at the  time instance $t$ and it equals to 
$\frac{1}{\sqrt{n}}\sqrt{\sum_{j=1}^{k} {\lambda_j^{2(t-1)}}/{f_T(\lambda_j)^2}}$.


Let us consider  the graph made of $k$ disconnected components of size $n_1,\cdots, n_k$ such that $\sum_{i=1}^{k}n_i=n$.  Let the Laplacian $L$ be the  graph Laplacian with edge weights all equal to one. Then a basis of $\mathrm{span}(U_k)$ is the concatenation of the square root of the degree operator times the indicator vectors of each component. Moreover, $\mathrm{span}(U_k)$  is the eigenspace associated to the eigenvalue 0 of $L$. So the heat diffusion process over this eigenspace is the trivial identity. So the temporal dynamics play no role in sampling and it will be the same with the static case.  By calculation, for the node $i$ at component $j$, we have 
\begin{equation*}
\left\| \widetilde{U}_{k,T}((i:n:Tn),:) \right\|_2 =\frac{\sqrt{d_{ij}}}{\sqrt{d_j}} ~\text{ and }~
\left\| \widetilde{U}_{k,T}^\top \delta_i \right\|_2 =\frac{\sqrt{d_{ij}}}{\sqrt{d_jT}}, 
\end{equation*}where $d_{ij}$ is the degree of  node $i$ and $d_j$ is the total degree of nodes in component $j$. 

To keep things simplified, let us assume each component is regular, i.e., each node has the same degree. In this case,  in $\mathbf{p_{opt}}^{(1)}$, the probability of choosing the spatial node $1\leq i \leq n$ at component $j$ is $\frac{1}{kn_j}$, coinciding with the case $T=1$. For regime 2 and 3, the optimal probability of choosing the spatial node $i$ at time $t$ is $\frac{1}{k{n_j}}$ and $\frac{1}{k{n_jT}}$ respectively. Thus, if all components have the same size, then uniform sampling is the optimal sampling. If the components have different sizes, {the smaller  a component is, the larger  the probability of sampling one of its nodes is}.  We can see that optimal sampling requires that each component is sampled with probability $\frac{1}{k}$, independent of its size. The probability that each component is sampled at least once, a necessary condition for perfect recovery, is thus higher than that using uniform sampling. In the numerical section (Section \ref{sec:experiments}), we consider the loosely defined
community-structured graph, which is a  relaxation of the strictly  disconnected component example. In this case, one also expects that $\mathbf{p}_{\textbf{opt}}^{(j)}$ to sample a node is inversely proportional to the size of its community.

\section{Reconstruction of $k$-bandlimited signals from space-time samples}
\label {Main}
In this section, we are interested in designing a robust procedure to recover the initial signal in $\mathrm{span}(U_k)$ from $M$  space-time samples. Specifically, 
  the space-time samples $\mathbf{y}\in\mathbb{R}^{M}$ are defined  as 
\[\mathbf{y}=S_j(\pi_{ A,T}(\mathbf{x}))+\mathbf{e}
\]
where  $\mathbf{e}\in\mathbb{R}^{M}$ models the observational noise and $S_j$ is defined in Definition \ref{def_MPW}. 

When one knows a basis for $\mathrm{span}(U_k)$, we adopt the standard least squares method to estimate $\mathbf{x}\in\text{span}(U_k)$ from $\mathbf{y}$. This is done by solving 
\begin{equation}\label{eqn:obj1}
    {\mathbf{x}}^*=\arg\min_{\widetilde{\mathbf{x}}\in\text{span}({U}_{k})}\|P_{\Omega^{(j)}}^{-1/2}W_{j}^{\frac{1}{2}}(S_j\pi_{ A,T}(\widetilde{\mathbf{x}})-\mathbf{y})\|_2.
\end{equation}
Here we introduce the weighting   matrix $P_{\Omega^{(j)}}^{-1/2}W_{j}^{\frac{1}{2}}$ in \eqref{eqn:obj1} to ensure the stable embedding property in Theorem \ref{thm:embedding_all}.  The  solution to \eqref{eqn:obj1} is obtained by solving 

\[\mathrm{z}=\arg\min_{\widetilde{\mathrm{z}}\in\mathbb{R}^{k}}\|P_{\Omega^{(j)}}^{-1/2}W_{j}^{\frac{1}{2}}(S_j\pi_{ A,T}U_{k}(\widetilde{\mathbf{z}})-\mathbf{y})\|_2
\]
and setting ${\mathbf{x}}^*=U_k \mathbf{z}$.  We generalize the idea of   \cite[Theorem 3.1]{puy2018random} to prove that ${\mathbf{x}}^*$ is a faithful estimation of $\mathbf{x}$ so that   \cite[Theorem 3.1]{puy2018random}  becomes a special case of the following theorem:

\begin{theorem}\label{thm: err_dirct}
Let $\Omega^{(j)}$ be a sampling set  according to a sampling distribution $\mathbf{p}^{(j)}$, and $S_j$ be the sampling matrix associated to $\Omega^{(j)}$. Let $\varepsilon ,\delta\in(0,1)$ and suppose that $m\geq \frac{3}{\delta^2}(\nu_{j,\mathbf{p}^{(j)}}^{k,T})^2\log(\frac{2k}{\varepsilon})$ for $j= 1, 3$ or $m_t\geq \frac{3}{\delta^2}(\nu_{2,\mathbf{p}^{(2)}}^{k,T}(t))^2\log(\frac{2k}{\varepsilon})$ for $j=2$. With probability at least $1-\varepsilon$, the following holds for all $\mathbf{x}\in\text{span}( {U}_{k})$ and all $\mathbf e\in\mathbb{R}^{M}$.
\begin{enumerate}[label=\roman*)]
    \item Let ${\mathbf{x}}^*$ be the solution of Problem \eqref{eqn:obj1} with $\mathbf{y}=S_j(\pi_{ A,T}(\mathbf{x}))+\mathbf{e}$. Then 
    \begin{equation}\label{err:1}
        \|\mathbf{x}^*-\mathbf{x}\|_2\leq \frac{2}{\sqrt{1-\delta}f_{T}(\lambda_k)}\|P_{\Omega^{(j)}}^{-1/2}W_{j}^{\frac{1}{2}}\mathbf{e}\|_2,
    \end{equation}
    where $f_T$ is defined in \eqref{map:f_T}
    \item There exist particular vectors $\mathbf e_0\in\mathbb{R}^{M}$ such that the solution ${\mathbf{x}}^*$ of Problem \eqref{eqn:obj1} with $\mathbf{y}=S_j\pi_{ A,T}(\mathbf{x})+\mathbf e_0$ satisfies 
    \begin{equation}\label{err:2}
        \|\mathbf{x}^*- {\mathbf{x}}\|_2\geq \frac{1}{\sqrt{1+\delta} f_{T}(\lambda_1)}\|P_{\Omega^{(j)}}^{-1/2}W_{j}^{\frac{1}{2}}\mathbf e_0\|_2.
    \end{equation}
\end{enumerate}
\end{theorem}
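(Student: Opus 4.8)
The plan is to reduce both statements to the stable-embedding estimate of Theorem~\ref{thm:embedding_all}, applied to the composite operator $M_j := P_{\Omega^{(j)}}^{-1/2}W_j^{1/2}S_j\pi_{A,T}$ acting on $\mathrm{span}(U_k)$. Writing $A_j := P_{\Omega^{(j)}}^{-1/2}W_j^{1/2}$, I first observe that this is a positive diagonal matrix---the sampling probabilities are non-degenerate and the weights are positive---hence invertible on $\mathbb{R}^M$, and $A_j S_j\pi_{A,T}=M_j$. Substituting $\mathbf{y}=S_j\pi_{A,T}(\mathbf{x})+\mathbf{e}$ and using that $\mathbf{x}\in\mathrm{span}(U_k)$ together with the linearity of $S_j$ and $\pi_{A,T}$, the objective in \eqref{eqn:obj1} becomes $\|M_j(\widetilde{\mathbf{x}}-\mathbf{x})-A_j\mathbf{e}\|_2$, so the whole analysis lives in $\mathrm{span}(U_k)$, where Theorem~\ref{thm:embedding_all} supplies
\[
\sqrt{1-\delta}\,f_T(\lambda_k)\,\|\mathbf{v}\|_2\le\|M_j\mathbf{v}\|_2\le\sqrt{1+\delta}\,f_T(\lambda_1)\,\|\mathbf{v}\|_2
\]
for every $\mathbf{v}\in\mathrm{span}(U_k)$, with probability at least $1-\varepsilon$ under the stated sample-complexity hypotheses.

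For part~(i) I would exploit the optimality of $\mathbf{x}^*$ against the feasible competitor $\widetilde{\mathbf{x}}=\mathbf{x}$, which yields $\|M_j(\mathbf{x}^*-\mathbf{x})-A_j\mathbf{e}\|_2\le\|A_j\mathbf{e}\|_2$. A triangle inequality then gives $\|M_j(\mathbf{x}^*-\mathbf{x})\|_2\le 2\|A_j\mathbf{e}\|_2$, and since $\mathbf{x}^*-\mathbf{x}\in\mathrm{span}(U_k)$, the lower embedding bound converts this into $\sqrt{1-\delta}\,f_T(\lambda_k)\|\mathbf{x}^*-\mathbf{x}\|_2\le 2\|A_j\mathbf{e}\|_2$, which is exactly \eqref{err:1}.

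For part~(ii) the idea is to engineer a worst-case noise aligned with the range of the measurement operator. I would pick any $\mathbf{h}\in\mathrm{span}(U_k)$ and set $\mathbf{e}_0:=S_j\pi_{A,T}(\mathbf{h})$, so that $A_j\mathbf{e}_0=M_j\mathbf{h}$ and $\mathbf{y}=S_j\pi_{A,T}(\mathbf{x}+\mathbf{h})$; the objective then reduces to $\|M_j(\widetilde{\mathbf{x}}-\mathbf{x}-\mathbf{h})\|_2$. Because the lower embedding bound makes $M_j$ injective on $\mathrm{span}(U_k)$, the unique minimizer is $\mathbf{x}^*=\mathbf{x}+\mathbf{h}$, whence $\|\mathbf{x}^*-\mathbf{x}\|_2=\|\mathbf{h}\|_2$. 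Applying the upper embedding bound to $\mathbf{h}$ gives $\|A_j\mathbf{e}_0\|_2=\|M_j\mathbf{h}\|_2\le\sqrt{1+\delta}\,f_T(\lambda_1)\|\mathbf{h}\|_2$, which rearranges to \eqref{err:2}.

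I expect no deep obstacle here, since this mirrors the standard least-squares error analysis in compressed sensing; the only care needed is bookkeeping. Specifically, I must confirm that $A_j$ is genuinely invertible so that $\mathbf{e}_0$ is well defined and $A_j^{-1}M_j=S_j\pi_{A,T}$, and that the reduction to the coordinate problem over $\mathbb{R}^k$ via $\widetilde{\mathbf{x}}=U_k\mathbf{z}$ preserves the uniqueness guaranteed by injectivity of $M_j$ on $\mathrm{span}(U_k)$ (equivalently, that $M_jU_k$ has full column rank).
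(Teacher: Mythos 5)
Your proposal is correct and follows essentially the same route as the paper's own proof: part~(i) uses optimality against the competitor $\widetilde{\mathbf{x}}=\mathbf{x}$, a triangle inequality, and the lower embedding bound of Theorem~\ref{thm:embedding_all}, while part~(ii) takes adversarial noise $\mathbf{e}_0=S_j\pi_{A,T}(\mathbf{h})$ with $\mathbf{h}\in\mathrm{span}(U_k)$ and applies the upper embedding bound. The only (welcome) difference is that you explicitly justify uniqueness of the minimizer in part~(ii) via injectivity of the embedded operator, a point the paper's proof glosses over by simply asserting $\mathbf{x}+\mathbf{x}_0$ is a solution.
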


The estimate \eqref{err:1} tells us that if $\mathbf{e}=0$, then ${\mathbf{x}}^*=\mathbf{x}$ with high probability.   In the presence of noise,  \eqref{err:1}  shows that  $\|\mathbf{x}^*- {\mathbf{x}}\|_2 $ scales linearly with $\|P_{\Omega^{(j)}}^{-1/2}W_{j}^{\frac{1}{2}}\mathbf{e}\|_2$, and the estimate \eqref{err:2} shows that the bound on  $\|\mathbf{x}^*- {\mathbf{x}}\|_2 $  is sharp up to a constant. In the case of uniform sampling, we have 
\begin{eqnarray}
\|P_{\Omega^{(j)}}^{-1/2}W_{j}^{\frac{1}{2}}\mathbf{e}\|_2^2&=&{\frac{nT}{M}}\|\mathbf{e}\|_2^2, j=1,3 \label{bd1}\\
\|P_{\Omega^{(2)}}^{-1/2}W_{2}^{\frac{1}{2}}\mathbf{e}\|_2^2&=& \sum_{t=0}^{T-1}\frac{n}{m_t}\|\mathbf{e}_t\|_2^2\label{bd2},
\end{eqnarray} where $\mathrm{e}_t$ denotes the projection of $\mathrm{e}$ into the components at time instance $t$.

In the case of non-uniform sampling, the noise may be amplified a lot at  some particular draws of $\Omega$ due to some probability weights could be very close to 0.  Non-uniform sampling could thus be very sensitive to noise unlike uniform sampling. Fortunately, this is a worst case scenario and it is unlikely to draw samples from locations where the probability is small by our sampling procedures. For each regime, the expectation $\mathbb{E}\|P_{\Omega^{(j)}}^{-1/2}W_{j}^{\frac{1}{2}}\mathbf{e}\|_2^2$ over the choice of $\Omega^{(j)}$
is as in the right hand side of \eqref{bd1} and \eqref{bd2} respectively.  Therefore  the noise term is not too large on average over the draw of $\Omega$.

Notice that to solve \eqref{eqn:obj1}, we need to know $U_k$ in advance, which could be computationally expensive for large scale graphs. When $U_k$ is not given,  we propose to estimate $\mathbf{x}$ by solving the following {regularized least square} problem
\begin{equation}\label{eqn:obj2}
    \min_{\widetilde{\mathbf{ x}}\in\mathbb{R}^{n}}\|P_{\Omega^{(j)}}^{-1/2}W_{j}^{\frac{1}{2}}(S_j\pi_{ A,T}(\widetilde{\mathbf {x}})-\mathbf{y})\|_2^2+\gamma \widetilde{\mathbf{ x}}^\top g(L) \widetilde{{\mathbf{x}}},
\end{equation}
where $\gamma>0$ and $g:\mathbb{R}\rightarrow\mathbb{R}$ is a nonnegative and nondecreasing polynomial function. One can find a solution to \eqref{eqn:obj2} by solving the following equation 
\[ (\pi_{ A,T}^\top S_j^\top W_j^{\frac{1}{2}}P_{\Omega^{(j)}}^{-1}W_{j}^{\frac{1}{2}}S_j\pi_{ A,T}+\gamma g(L) ){\mathbf{x}}=\pi_{ A,T}^\top S_j^\top W_j^{\frac{1}{2}}P_{\Omega^{(j)}}^{-1}W_{j}^{\frac{1}{2}} \mathbf{y}.
\]

The idea of this relaxation comes from graph-filtering techniques which have found connections with decoders used in the semi-supervised learning on graphs \cite{puy2018random}.  In our problem, we use the penalty term $ \gamma \widetilde{\mathbf {x}}^\top g(L)\widetilde{\mathbf{x}}$ to incorporate the smoothness of initial signal.  The next theorem provides bounds for the error  between the original signal $\mathbf{x}$ and the solution of \eqref{eqn:obj2}, where \cite[Theorem 3.2]{puy2018random} becomes a special case for $T=1$. 

\begin{theorem}\label{thm:main_regular}
Let $\Omega^{(j)}$ be a sampling set which obtained according to the sampling distribution $\mathbf{p}^{(j)}$.  $P_{\Omega^{(j)}}$, $S_j$, $W_j$  are provided in Table \ref{tab:var_SR}  associated with   $\mathbf{p}^{(j)}$.  
$M_{\max}>0$ is a constant such that $\|P_{\Omega^{(j)}}^{-1/2} W_{j}^{\frac{1}{2}}S_j\|_2\leq M_{\max}$. Let $\varepsilon ,\delta\in(0,1)$ and suppose that $m\geq \frac{3}{\delta^2}(\nu_{j,\mathbf{p}^{(j)}}^{k,T})^2\log(\frac{2k}{\varepsilon})$ for $j= 1, 3$ (and $m_t\geq \frac{3}{\delta^2}\left(\mathbf{\nu}_{2,\mathbf{p}^{(2)}}^{k,T}(t) \right)^2\log(\frac{2k}{\varepsilon})$ for $j=2$). With probability at least $1-\varepsilon$, the following holds for all $\mathbf{x}\in\text{span}({U}_{k})$ and all $\mathbf e\in\mathbb{R}^{M}$, all $\gamma>0$, and all nonnegative and nondecrasing polynomial functions $g$ such that $g(\sigma_{k+1})>0$, where $\sigma_{k+1}$ is the ($k+1$)th singular value of  $L$.

 Let $\mathbf{x}^*$ be the solution of Problem \eqref{eqn:obj2} with $\mathbf{y}=S_j(\pi_{ A,T}(\mathbf{x}))+\mathbf e$. Then 
    \begin{equation}\label{eqn:thm:main_regular1}
    \begin{aligned}
   &\|\alpha^*-\mathbf{x}\|_2\leq \frac{1}{\sqrt{1-\delta}f_{T}(\lambda_{k})}\left(2+\frac{M_{\max}f_{T}(\lambda_{k+1}) }{\sqrt{\gamma g(\sigma_{k+1}) }}\right)\|P_{\Omega^{(j)}}^{-1/2}W_{j}^{\frac{1}{2}}\mathbf{e}\|_2+\\
        &\frac{1}{\sqrt{1-\delta}f_{T}(\lambda_{k})}\left(M_{\max}f_{T}(\lambda_{k+1})\sqrt{\frac{g(\sigma_{k})}{g(\sigma_{k+1})}}+\sqrt{\gamma g(\sigma_{k})} \right)\|\mathbf{x}\|_2
        \end{aligned}
    \end{equation}
    and
       \begin{equation}\label{eqn:thm:main_regular2}
        \|\beta^*\|_2\leq\frac{1}{\sqrt{\gamma g(\sigma_{k+1})}}\|P_{\Omega^{(j)}}^{-1/2}W_{j}\mathbf{e}\|_2+\left( \sqrt{\frac{g(\sigma_{k})}{g(\sigma_{k+1})}}\|\mathbf{x}\|_2 \right),
    \end{equation}
    where $\alpha^*:=U_kU_k^\top\mathbf{x}^*$, $\beta^*:=(I_n-U_kU_k^\top)\mathbf{x}^*$ and $f_T$ is defined in \eqref{map:f_T}.
\end{theorem}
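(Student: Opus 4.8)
The plan is to follow the standard least-squares perturbation argument, but to track carefully how the diffusion map $\pi_{A,T}$ and the smoothness penalty $g(L)$ act on the two orthogonal pieces of the estimator. Throughout I abbreviate the measurement operator on $\mathbb{R}^n$ by $B_j := P_{\Omega^{(j)}}^{-1/2}W_j^{1/2}S_j\pi_{A,T}$, and write $\tilde{\mathbf{e}} := P_{\Omega^{(j)}}^{-1/2}W_j^{1/2}\mathbf{e}$, so that, after substituting $\mathbf{y}=S_j\pi_{A,T}(\mathbf{x})+\mathbf{e}$, the objective in \eqref{eqn:obj2} reads $J(\widetilde{\mathbf{x}})=\|B_j\widetilde{\mathbf{x}}-B_j\mathbf{x}-\tilde{\mathbf{e}}\|_2^2+\gamma\,\widetilde{\mathbf{x}}^\top g(L)\widetilde{\mathbf{x}}$. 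I split the minimizer as $\mathbf{x}^*=\alpha^*+\beta^*$ exactly as in the statement, and note that since $\mathbf{x}\in\mathrm{span}(U_k)$, both $\alpha^*-\mathbf{x}$ and $\mathbf{x}$ lie in $\mathrm{span}(U_k)$ while $\beta^*$ lies in its orthogonal complement.

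First I would record the optimality inequality $J(\mathbf{x}^*)\le J(\mathbf{x})$; since $J(\mathbf{x})=\|\tilde{\mathbf{e}}\|_2^2+\gamma\,\mathbf{x}^\top g(L)\mathbf{x}$, this reads $\|B_j(\mathbf{x}^*-\mathbf{x})-\tilde{\mathbf{e}}\|_2^2+\gamma\,(\mathbf{x}^*)^\top g(L)\mathbf{x}^*\le\|\tilde{\mathbf{e}}\|_2^2+\gamma\,\mathbf{x}^\top g(L)\mathbf{x}$. The penalty term splits cleanly: because $\alpha^*$ and $\beta^*$ sit in complementary eigenspaces of $L$ (hence of $g(L)$), the cross terms vanish and $(\mathbf{x}^*)^\top g(L)\mathbf{x}^*=(\alpha^*)^\top g(L)\alpha^*+(\beta^*)^\top g(L)\beta^*$. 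Using that $g$ is nonnegative and nondecreasing I would bound $(\beta^*)^\top g(L)\beta^*\ge g(\sigma_{k+1})\|\beta^*\|_2^2$ and $\mathbf{x}^\top g(L)\mathbf{x}\le g(\sigma_k)\|\mathbf{x}\|_2^2$, and discard the nonnegative term $(\alpha^*)^\top g(L)\alpha^*$. Dropping the nonnegative data-fidelity term as well yields $\gamma g(\sigma_{k+1})\|\beta^*\|_2^2\le\|\tilde{\mathbf{e}}\|_2^2+\gamma g(\sigma_k)\|\mathbf{x}\|_2^2$, and $\sqrt{a+b}\le\sqrt{a}+\sqrt{b}$ delivers \eqref{eqn:thm:main_regular2} directly.

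For the in-band error \eqref{eqn:thm:main_regular1}, I would apply the lower embedding bound of Theorem \ref{thm:embedding_all} to $\alpha^*-\mathbf{x}\in\mathrm{span}(U_k)$, giving $\|\alpha^*-\mathbf{x}\|_2\le(\sqrt{1-\delta}f_T(\lambda_k))^{-1}\|B_j(\alpha^*-\mathbf{x})\|_2$, and then use the triangle inequality $\|B_j(\alpha^*-\mathbf{x})\|_2\le\|B_j(\mathbf{x}^*-\mathbf{x})\|_2+\|B_j\beta^*\|_2$. The first factor is controlled by recovering $\|B_j(\mathbf{x}^*-\mathbf{x})-\tilde{\mathbf{e}}\|_2\le\|\tilde{\mathbf{e}}\|_2+\sqrt{\gamma g(\sigma_k)}\|\mathbf{x}\|_2$ from the optimality inequality (again discarding the nonnegative penalty of $\mathbf{x}^*$) and adding back $\|\tilde{\mathbf{e}}\|_2$, so that $\|B_j(\mathbf{x}^*-\mathbf{x})\|_2\le 2\|\tilde{\mathbf{e}}\|_2+\sqrt{\gamma g(\sigma_k)}\|\mathbf{x}\|_2$. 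Substituting the $\beta^*$ bound of the previous step into $\|B_j\beta^*\|_2$ then collects precisely the constants in \eqref{eqn:thm:main_regular1}.

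The main obstacle, and the one genuinely new estimate relative to the static case, is controlling $\|B_j\beta^*\|_2$ for the out-of-band part, which is \emph{not} bandlimited and therefore lies outside the scope of the stable embedding. Here I would fall back on the operator-norm bound $\|P_{\Omega^{(j)}}^{-1/2}W_j^{1/2}S_j\|_2\le M_{\max}$ and estimate $\|\pi_{A,T}(\beta^*)\|_2$ separately. Expanding $\beta^*$ in the eigenbasis and using $A=U\Lambda U^\top$ gives $\|\pi_{A,T}(\beta^*)\|_2^2=\sum_{i>k}f_T^2(\lambda_i)\,|\langle\mathbf{u}_i,\beta^*\rangle|^2$; since $f_T$ is increasing on $[0,\infty)$ and $\lambda_i\le\lambda_{k+1}$ for $i>k$, this is at most $f_T^2(\lambda_{k+1})\|\beta^*\|_2^2$, whence $\|B_j\beta^*\|_2\le M_{\max}f_T(\lambda_{k+1})\|\beta^*\|_2$. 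This is exactly where the diffusion structure enters: $\pi_{A,T}$ contracts the high-frequency content by the factor $f_T(\lambda_{k+1})$, which is what replaces the crude $T=1$ estimate and propagates into every term of \eqref{eqn:thm:main_regular1} and \eqref{eqn:thm:main_regular2}.
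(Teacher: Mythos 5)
Your proposal is correct and follows essentially the same argument as the paper's own proof: the optimality inequality evaluated at $\widetilde{\mathbf{x}}=\mathbf{x}$, the eigenspace splitting of the $g(L)$ penalty to isolate $\gamma g(\sigma_{k+1})\|\beta^*\|_2^2$ and derive \eqref{eqn:thm:main_regular2}, the stable embedding of Theorem \ref{thm:embedding_all} applied to $\alpha^*-\mathbf{x}\in\mathrm{span}(U_k)$, and the out-of-band control $\|P_{\Omega^{(j)}}^{-1/2}W_j^{1/2}S_j\pi_{A,T}(\beta^*)\|_2\le M_{\max}f_T(\lambda_{k+1})\|\beta^*\|_2$. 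The only cosmetic difference is that you bound $\|B_j(\alpha^*-\mathbf{x})\|_2$ by two forward applications of the triangle inequality, whereas the paper uses the reverse triangle inequality on $\|B_j(\mathbf{x}^*-\mathbf{x})-\tilde{\mathbf{e}}\|_2$ and then combines with its upper bound; both routes yield identical constants.
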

Notice that when $T=1$, the bounds in Theorem \ref{thm:main_regular} are exactly the bounds in \cite[Theorem 3.2]{puy2018random}, due to that $f_{T}(\lambda_k)=f_{T}(\lambda_{k+1})=1$ and $W_{j}=\frac{1}{m}I_{n}$.
In the estimates,  $x^*=\alpha^*+\beta^*$. 
To find a bound for $\|x^*-x\|_2$, one could simply use the triangular inequality and the bounds in \eqref{eqn:thm:main_regular1}, \eqref{eqn:thm:main_regular2}.

In the absence of noise, we thus have 
\begin{equation*}
\|x^*-x\|_2\leq \left(\left(\frac{M_{\max}f_{T}(\lambda_{k+1})}{\sqrt{1-\delta}f_{T}(\lambda_{k})}+1\right)\sqrt{\frac{g(\sigma_{k})}{g(\sigma_{k+1})}}+\frac{\sqrt{\gamma g(\sigma_{k})}}{\sqrt{1-\delta}f_{T}(\lambda_{k})} \right)\|\mathbf{x}\|_2.
 \end{equation*}
If $g(\sigma_{k})=0$, we notice that we obtain a perfect reconstruction. Note that  $g$ is supposed to be nondecreasing and nonnegative. In addition $\sigma_1\leq\cdots\leq\sigma_{n}$. Thus $g(\sigma_{k})=0$ implies that we also have $g(\sigma_{1})=\cdots=g(\sigma_{k-1})=0$. If $g(\sigma_{k})\neq 0$, the above bound shows that we should choose $\gamma$ as close as possible to $0$ and seek to minimize the ratio $\frac{g(\sigma_{k})}{g(\sigma_{k+1})}$ to minimize the upper bound on the reconstruction error. Notice that if $g(L)=L^{\ell}$ for $\ell\in\mathbb{N}^*$, then the ratio $g(\sigma_{k})/g(\sigma_{k+1})$ decreases as $\ell$ increases. Thus increasing the power of $L$ and taking $\gamma$ sufficiently small to compensate the potential growth of $g(\sigma_{k})$ is a simple solution to improve the reconstruction quality in the absence of noise. {In addition, notice that  $f_T$ is an increasing function with respect to $T$ (see \eqref{map:f_T}). Thus $1/f_T(\lambda_k)$ will be decreasing as $T$ is increasing. Additionally, $\lambda_k\geq \lambda_{k+1}$,  we have that $f_T(\lambda_{k+1})/f_T(\lambda_k)$ is  a nonincreasing function with respect to $T$. Therefore, for fixed $g$, the error bound will be a nonincreasing function with respect to $T$. }

In the presence of noise, for a fixed function $g$, the upper bound on the reconstruction error is minimized for a value of $\gamma$ proportional to $\|P_{\Omega^{(j)}}W_j\mathbf{e}\|_2/\|\mathbf{x}\|_2$. To optimize the result further, one should seek to have $g(\sigma_{k})$ as small as possible and $g(\sigma_{k+1})$ as large as possible.

Therefore,  the reconstruction of the original bandlimited graph signal can be summarized in Algorithm \ref{ALGO:reconstruction}.
 \begin{algorithm}[h!]
\caption{  Procedure to reconstruct the original signal}\label{ALGO:reconstruction}
\begin{algorithmic}[1]
\STATE\textbf{Input:} {  The signal evolution operator ${A}\in\R^{n\times n}$, the samples $\{\mathbf{y}_{t}^{(j)}\}_{t=0}^{T-1}$,  sampling sets $\Omega^{(j)}$, and the probability distributions $\mathbf{p}^{(j)}$.}
\STATE Generate the sampling matrix  $S_j$, the weighting matrix $W_{j}$ from the sampling set $\Omega^{(j)}$ and matrix $P_j$ from the probability distribution $\mathbf{p}^{(j)}$. 
\STATE Stack the samples together to a column vector and denote it as $\mathbf{y}^{(j)}$
 \IF{The support $U_k$ of $\mathbf{x}$ is known} 
 \STATE Set ${\mathbf{x}}^*=\arg\min\limits_{\widetilde{\mathbf{ x}}\in\text{span}({U}_{k})}\|P_{\Omega^{(j)}}^{-1/2}W_{j}^{\frac{1}{2}}(S_j\pi_{ A,T}(\widetilde{\mathbf{x}})-\mathbf{y})\|_2$ as   estimate of $\mathbf{x}$.
\ELSE
\STATE Choose some constant $\gamma$ and regularizization function $g(z)$.
\STATE Set ${\mathbf{x}}^*=\arg\min\limits_{\widetilde{\mathbf{ x}}\in\mathbb{R}^{n}}\|P_{\Omega^{(j)}}^{-1/2}W_{j}^{\frac{1}{2}}(S_j\pi_{ A,T}(\widetilde{\mathbf{ x}})-\mathbf{y})\|_2^2+\gamma \widetilde{\mathbf{x}}^{\top} g(L)\widetilde{\mathbf{ x}}$ as  estimate of $\mathbf{x}$.
 \ENDIF
\STATE\textbf{Output: $\mathbf{x}^*$. }
\end{algorithmic}
\end{algorithm}

\section{Numerical Experiments}
\label{sec:experiments}

\begin{figure}[h]
\centering
\includegraphics[width=.18\linewidth, keepaspectratio]{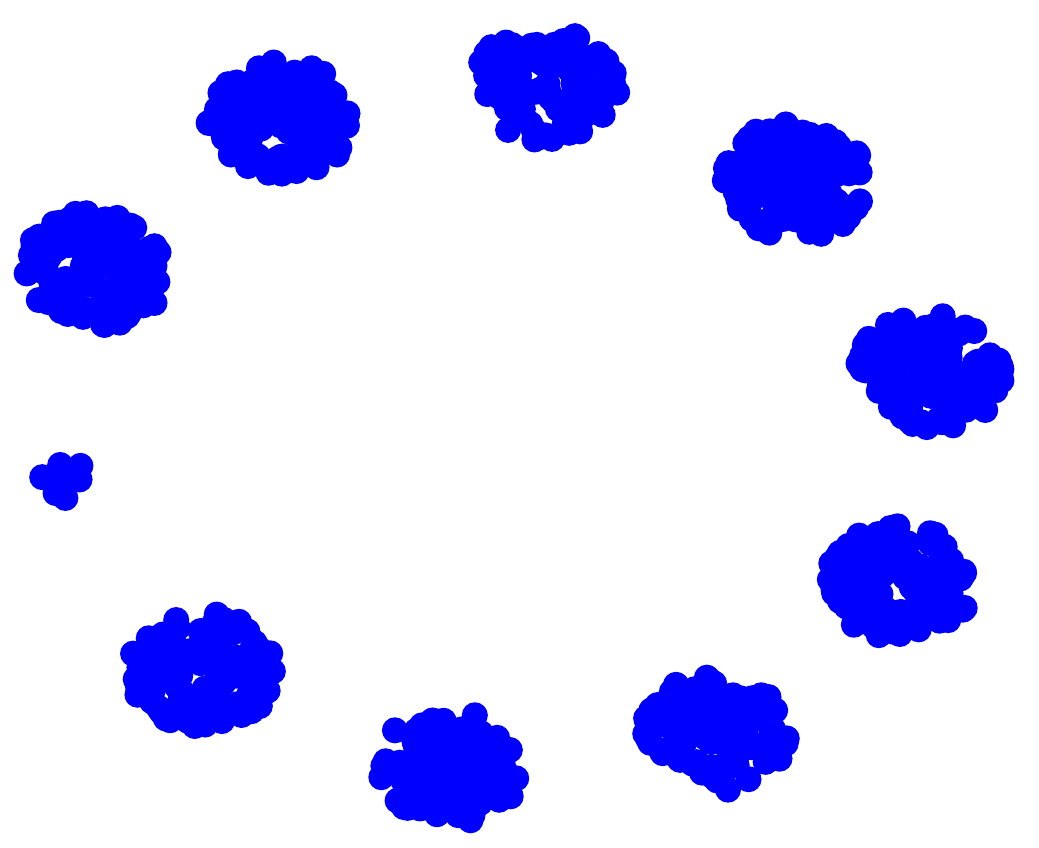}
\includegraphics[width=.18\linewidth, keepaspectratio]{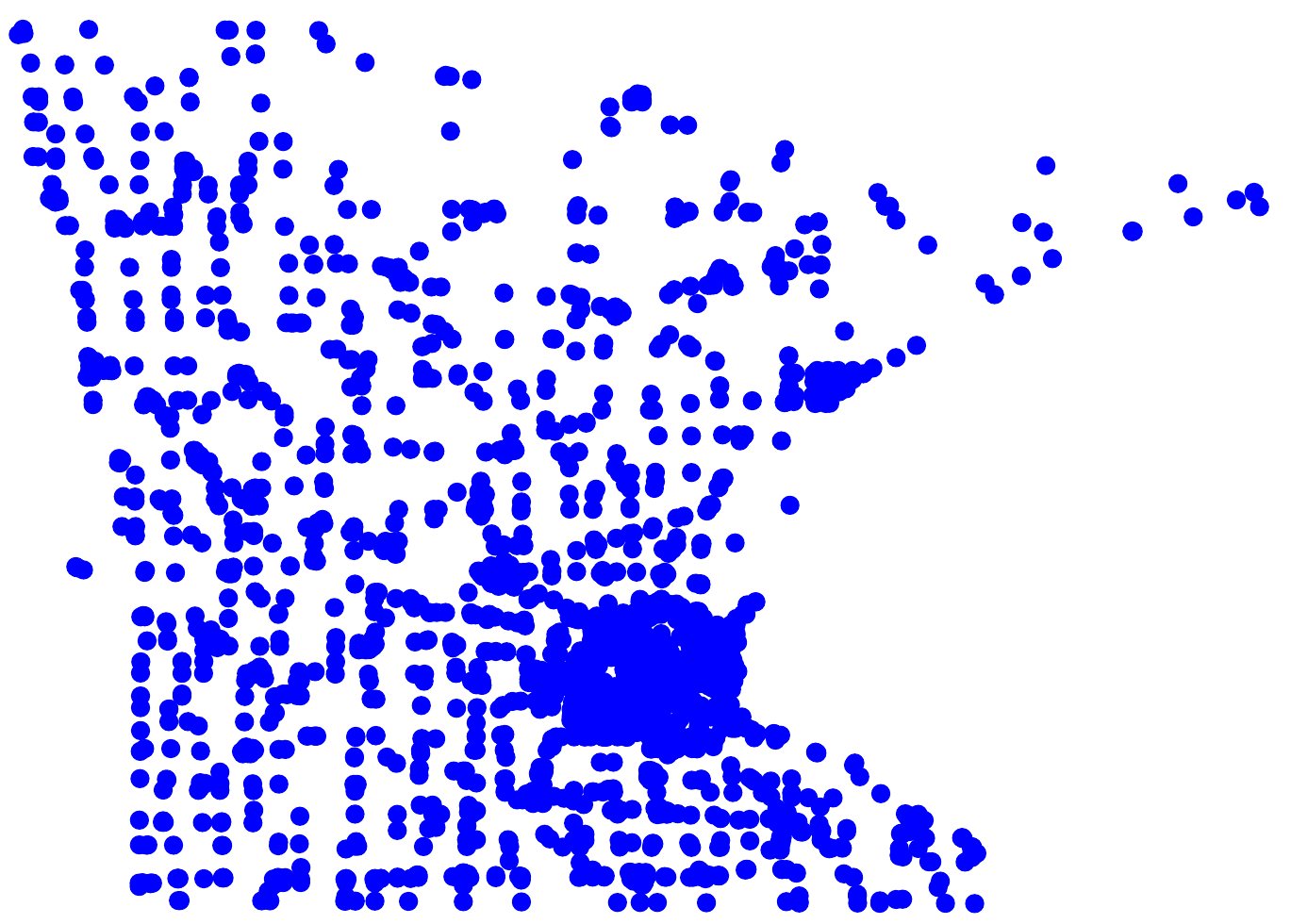}
\includegraphics[width=.18\linewidth, keepaspectratio]{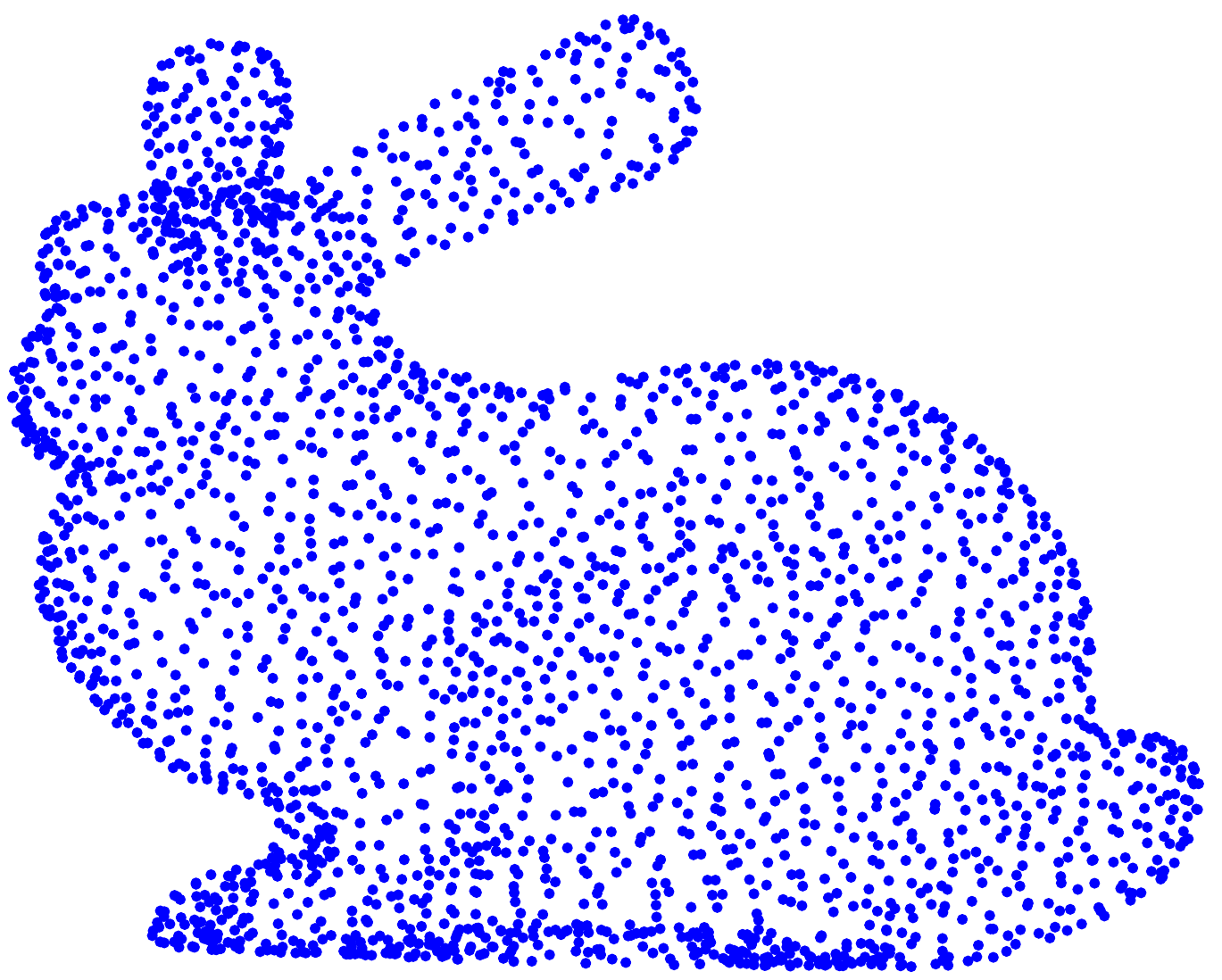}
\includegraphics[width=.18\linewidth, keepaspectratio]{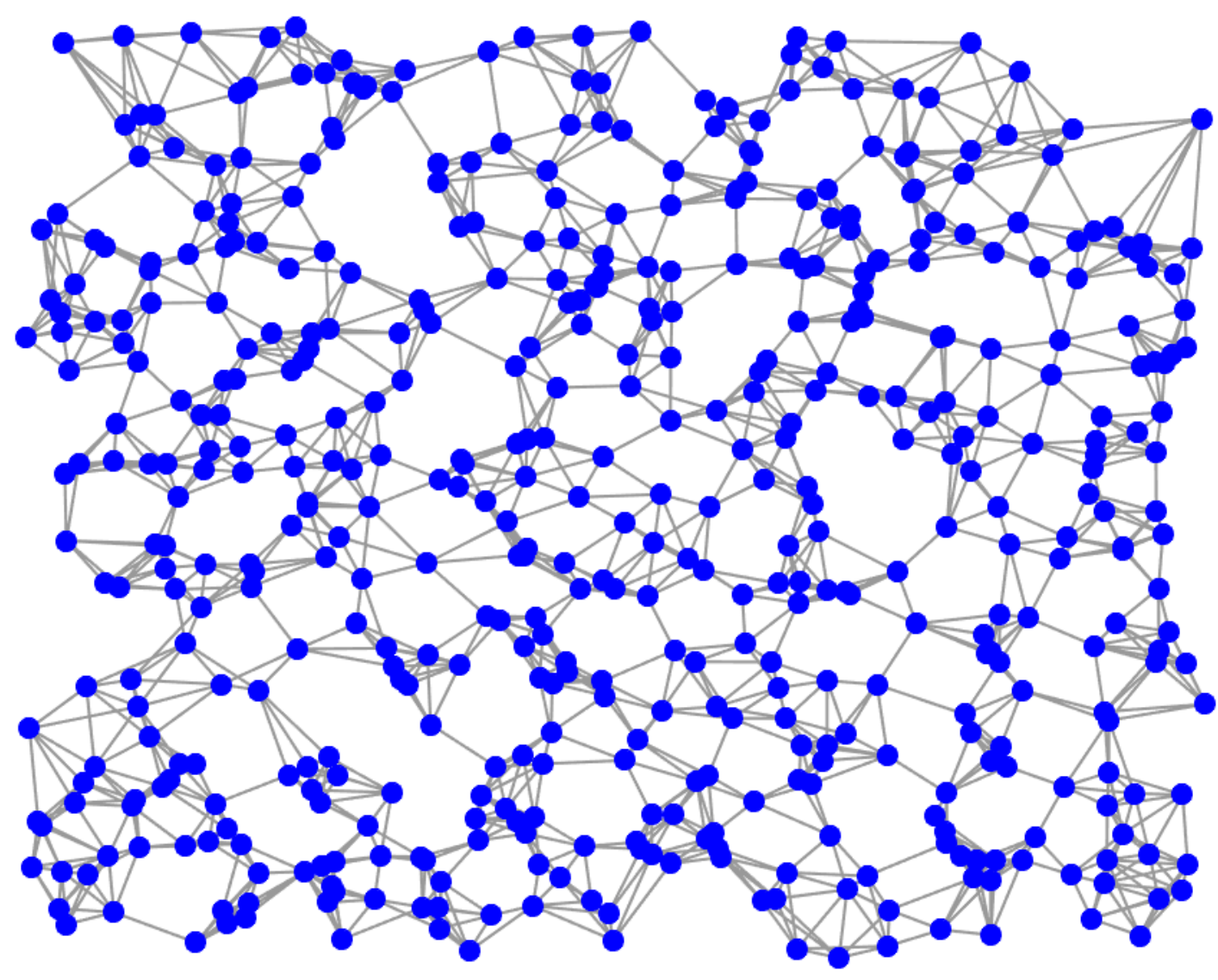}
\includegraphics[width=.18\linewidth, keepaspectratio]{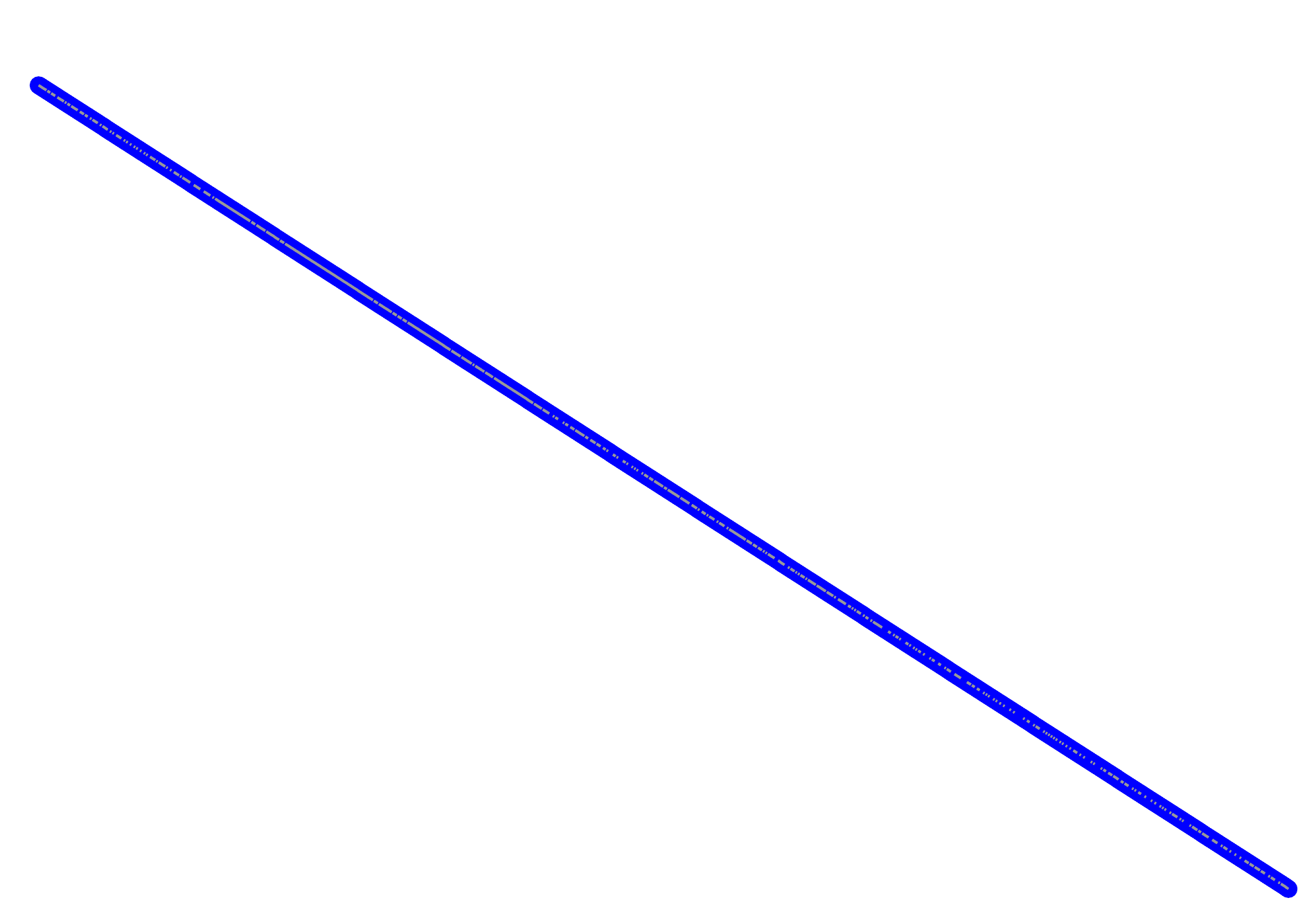}
\caption{\footnotesize\label{fig:graphs} The five different graphs used in the simulations: the figures are community graph, Minnesota graph, bunny graph, sensor graph and ring graph from left to right.}
\vspace{-0.1in}
\end{figure}

In this section, we  empirically illustrate the  theoretical findings and demonstrate the performance of the proposed algorithms.

\paragraph{Graphs} We  use almost the same set of graphs in \cite{puy2018random}.  It consists of {five} different types of graphs, and is  all available in the GSP toolbox \cite{perraudin2014} which is  presented in Fig.~\ref{fig:graphs}: a) different community-type graphs of size $V = 1000$; b)Minnesota road graph of size $V = 2642$; c) Stanford bunny graph of size $V = 2503$; d) an unweighted sensor graph of size $V = 1000$ \footnote{We replace the binary tree graph in \cite{puy2018random} with the sensor graph, since the latter is more relevant to applications considered in our paper.}; e) a path graph of size $V = 1000$. Each node 
is connected to its left and right neighbours except for the
boundary nodes which only have one neighbour.

\paragraph{Heat diffusion processes} We use the normalized Laplacian $L$ and consider the continuous heat diffusion processes over the graphs in all experiments with the evolution operator $A=\exp(-\Delta t L)$. Note that in \cite{puy2018random}, the combinatorial Laplacian was used, so we also run the experiments for $T=1$ for comparison  with static case, instead of using the numerical results in \cite{puy2018random}. The  specific parameters are summarized in Table~\ref{tab:table-diffusion}.

\begin{table}[th]
\footnotesize
\centering
\begin{tabular}{|l|l|l|l|l|l|l|}
\hline
Type       & Community & Minesota & Bunny & Sensor & Path & Real images \\ \hline
$\Delta t$ &  4         &    30      &    30   &      10  &    4  &      1       \\ \hline
$T$ &   20        &     20     &   20    &      10  &    20  &    6         \\ \hline
\end{tabular}
\caption{\label{tab:table-diffusion} Parameters in heat diffusion processes }
\vspace{-0.2in}
\end{table}

\paragraph{Space-time sampling} All space-time 
samples are taken  with replacement as stated in   Theorem~\ref{thm:embedding_all}.  The total number of space-time samples is denoted by $M$ and $m_t$ for $t=0,\cdots,T-1$ are set to be equal for regime 2. 
 
\subsection{Effect of the space-time sampling distributions on $M$}
First we show how $\mathbf{p}^{(j)}$ on the proposed   sampling regimes
affect the required total number of   samples $M$  to ensure   stable embedding \eqref{eqn:frame}. 
 We compute the lower embedding constant 
 \begin{equation*}
 \begin{aligned}
\small
\underline{\delta}_{k,\mathbf{p}^{(j)}} :&=   \min_{x\in \mathrm{span}(U_k),\|x\|=1} \| P_{\Omega^{(j)}}^{-\frac{1}{2}}W_j^{\frac{1}{2}}S_j\pi_{ A,T}(\mathbf{x})\|_2^2\\
&= \lambda_{\text{min}}(U_k^\top\pi_{ A,T}^\top S_j^\top W^{\frac{1}{2}}P_{\Omega^{(j)}}^{-1}W^{\frac{1}{2}}S_j\pi_{ A,T}U_k)
 \vspace{-1mm}
\end{aligned}
 \end{equation*}
for {different  $M$}. 
We compute $\underline{\delta}_{k,\mathbf{p}^{(j)}}$ for 250 independent draws of the matrices $S_j$s.

\subsubsection{Using community graphs}

Inspired by \cite{puy2018random}, we use  three types of community graphs\footnote{The community graphs $C_1,C_2,C_3$ used in our paper correspond to $C_1,C_3,C_5$ in \cite{puy2018random}}, denoted by $C_1, C_2, C_3$,  to study the effect of the size of the communities  on sampling distributions.   All  these graphs have  $10$ communities with  $9$ of them of approximately equal size and  the last community being reduced size.  Specifically: (i) the graphs of type $C_1$ have $10$ communities of size $100$; (ii) the graphs of type $C_2$ have $1$ community of size $25$, $8$ communities of size $108$, and $1$ community of size $111$;
(iii) the graphs of type $C_3$ have $1$ community of size $13$, $8$ communities of size $109$, and $1$ community of size $115$.

\begin{figure}[th]
\centering
\begin{minipage}{.3\linewidth} \centering \small \hspace{2mm} $\underline{\delta}_{10,\pi^{(1)}}$  \end{minipage}
\begin{minipage}{.3\linewidth} \centering \small \hspace{2mm}  $\underline{\delta}_{10,\pi^{(2)}}$ \end{minipage}
\begin{minipage}{.3\linewidth} \centering \small \hspace{2mm}  $\underline{\delta}_{10,\pi^{(3)}}$  \end{minipage}\\
\includegraphics[width=.3\linewidth, height=0.25\linewidth
]{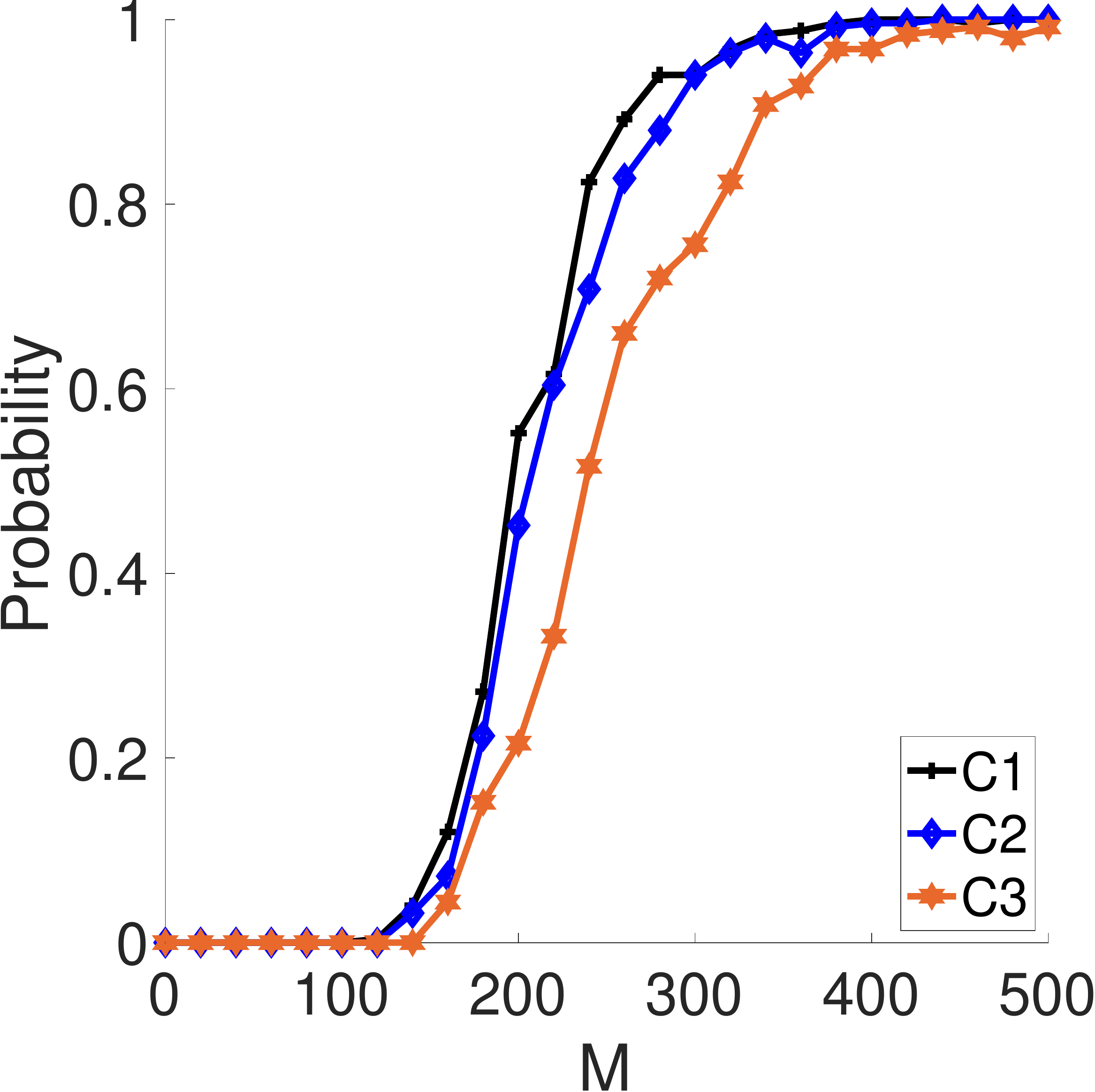}
\includegraphics[width=.3\linewidth, height=0.25\linewidth
]{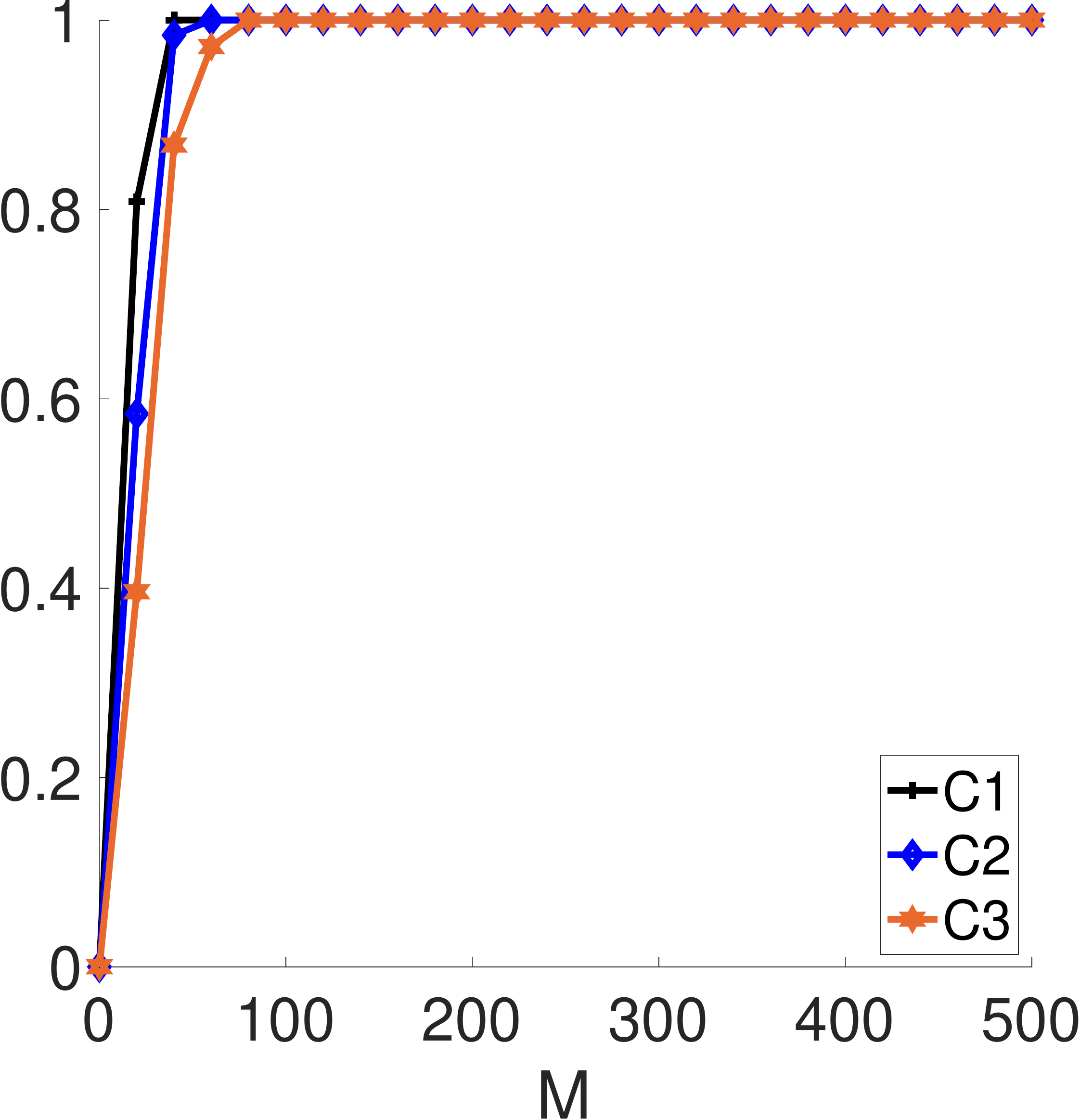}
\includegraphics[width=.3\linewidth, height=0.25\linewidth
]{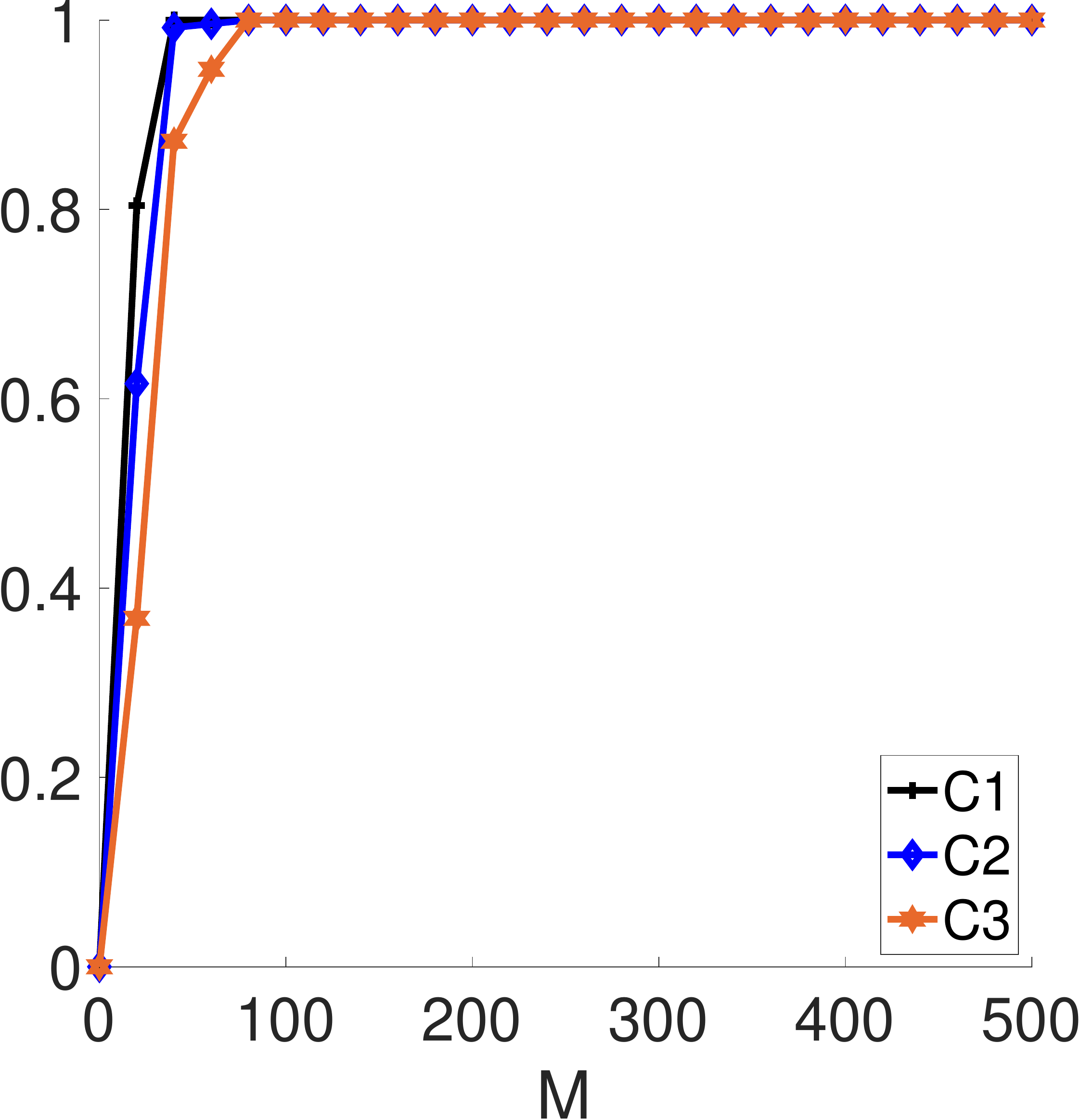}\\
\begin{minipage}{.3\linewidth} \centering \small \hspace{2mm} $\underline{\delta}_{10, \mathbf{p}_{\mathrm{opt}}^{(1)}}$  \end{minipage}
\begin{minipage}{.3\linewidth} \centering \small \hspace{2mm}    $\underline{\delta}_{10, \mathbf{p}_{\mathrm{opt}}^{(2)}}$ \end{minipage}
\begin{minipage}{.3\linewidth} \centering \small \hspace{2mm}   $\underline{\delta}_{10,  \mathbf{p}_{\mathrm{opt}}^{(3)}}$  \end{minipage}\\
\includegraphics[width=.3\linewidth, height=0.25\linewidth
]{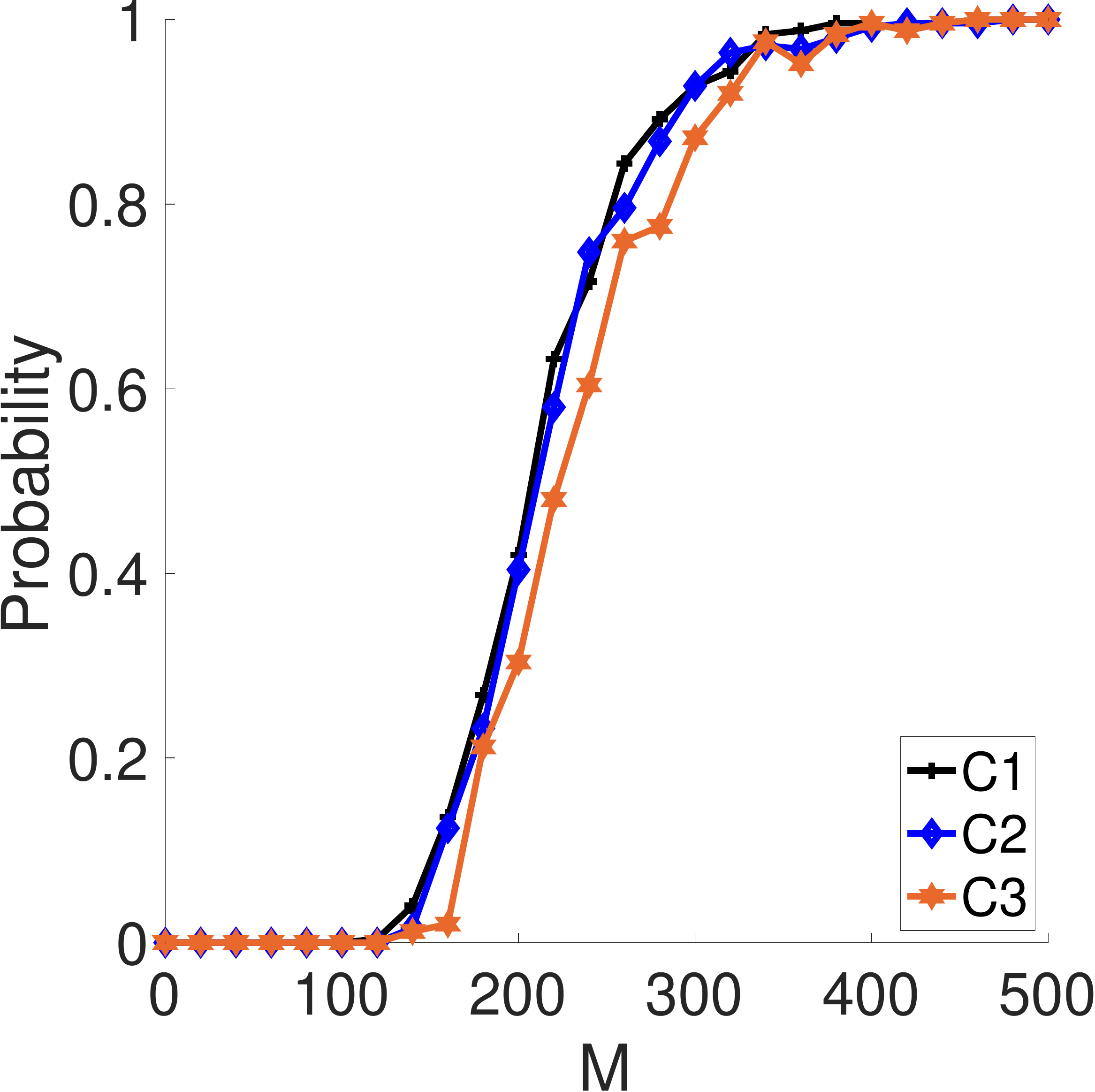}
\includegraphics[width=.3\linewidth, height=0.25\linewidth
]{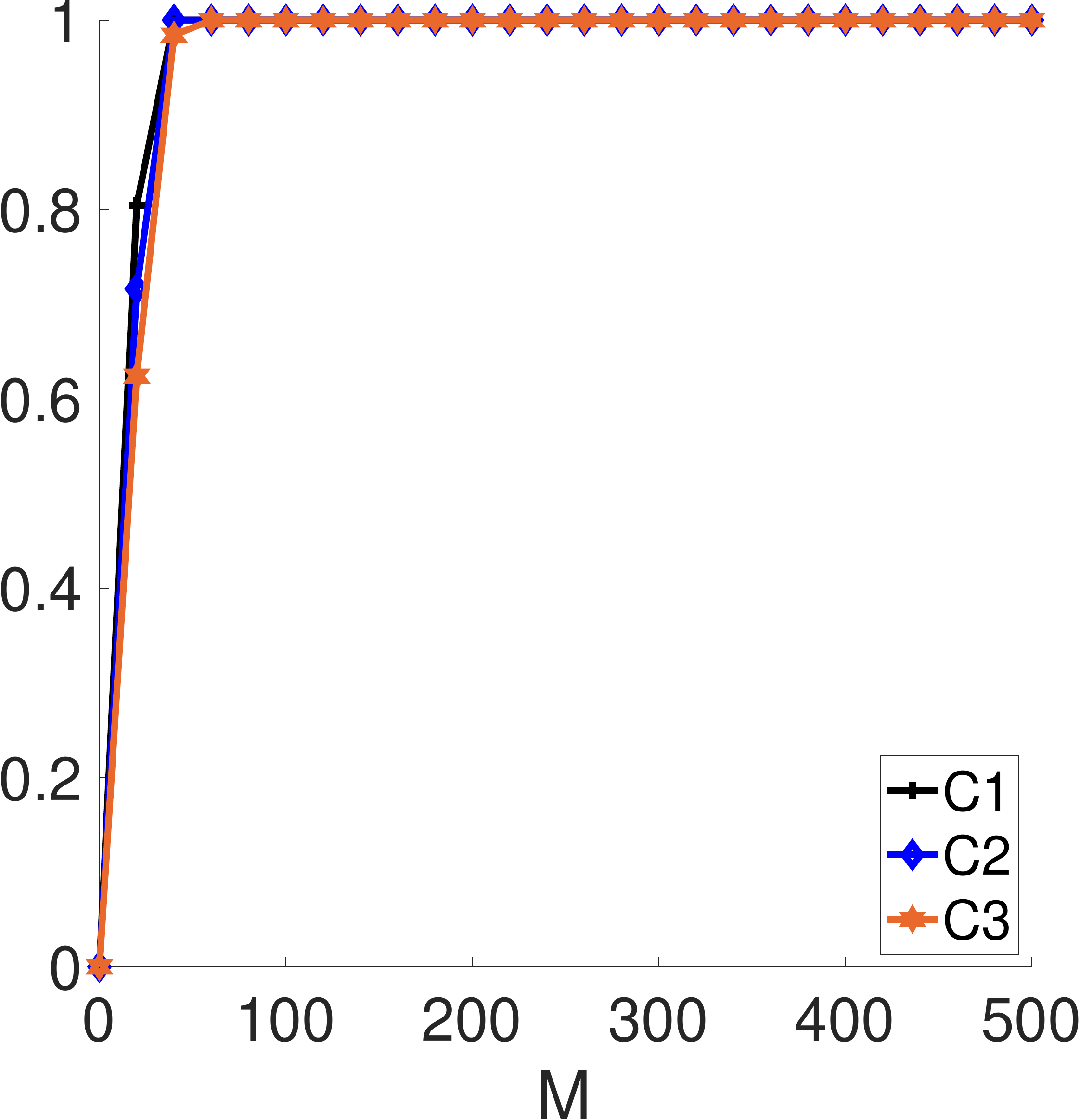}
\includegraphics[width=.3\linewidth, height=0.25\linewidth
]{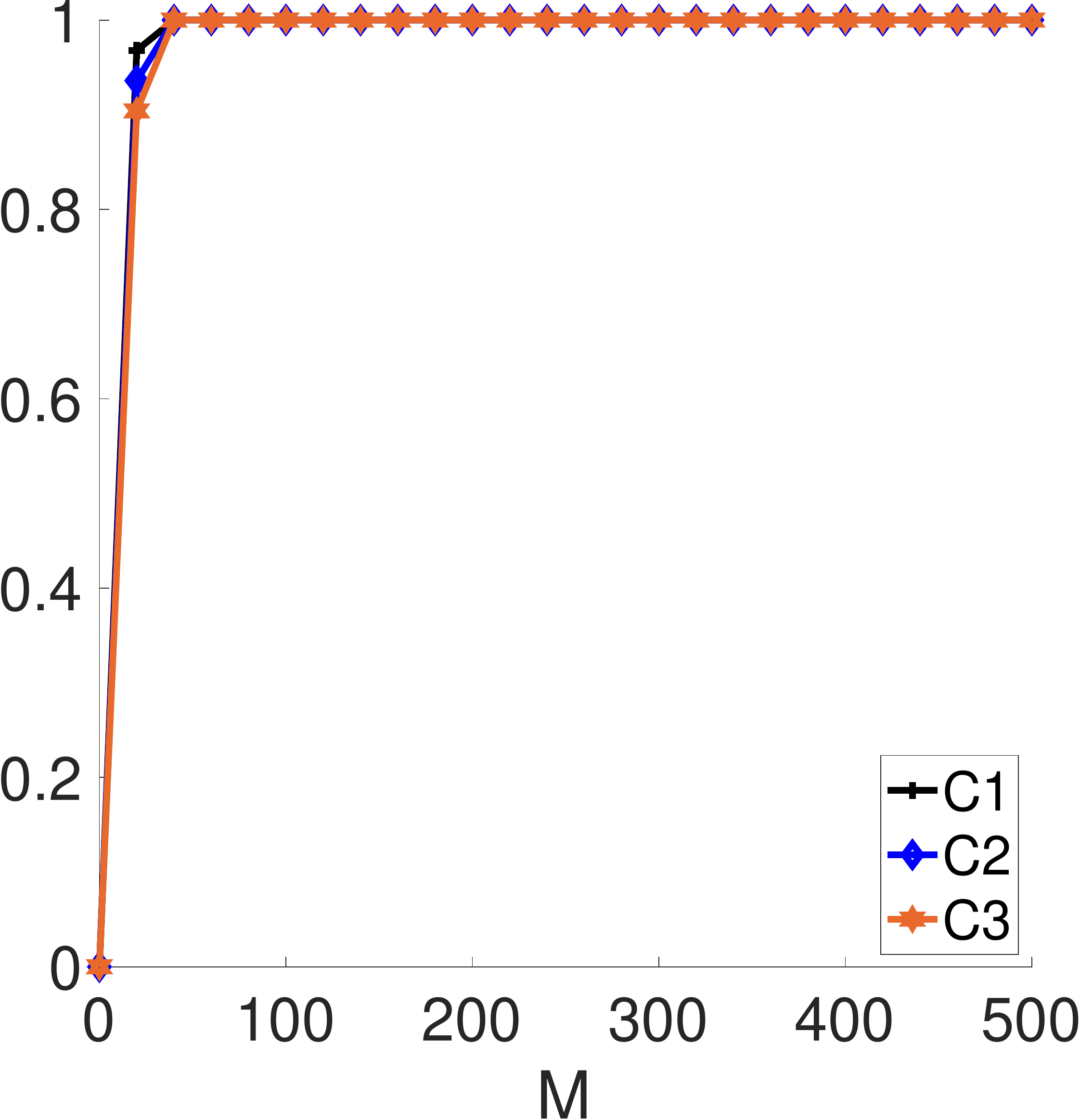}\\
\caption{\footnotesize\label{fig:eig_community} Probability that $\underline{\delta}_{10}$ is great than $0.005$ as a function of $M$ for $5$ different types of community graphs: $C_1$ in black plus, $C_2$ in blue diamond, $C_3$ in orange hexagon. We choose $ M=20:20:500$. 
}
\vspace{-0.2in}
\end{figure}
 
\begin{figure}[th]
\centering
\begin{minipage}{.3\linewidth} \centering \small \hspace{2mm} Regime 1  \end{minipage}
\begin{minipage}{.3\linewidth} \centering \small \hspace{2mm}  Regime 2  \end{minipage}
\begin{minipage}{.3\linewidth} \centering \small \hspace{2mm}  Regime 3  \end{minipage}\\
\includegraphics[width=.30\linewidth,height=0.25\linewidth
]{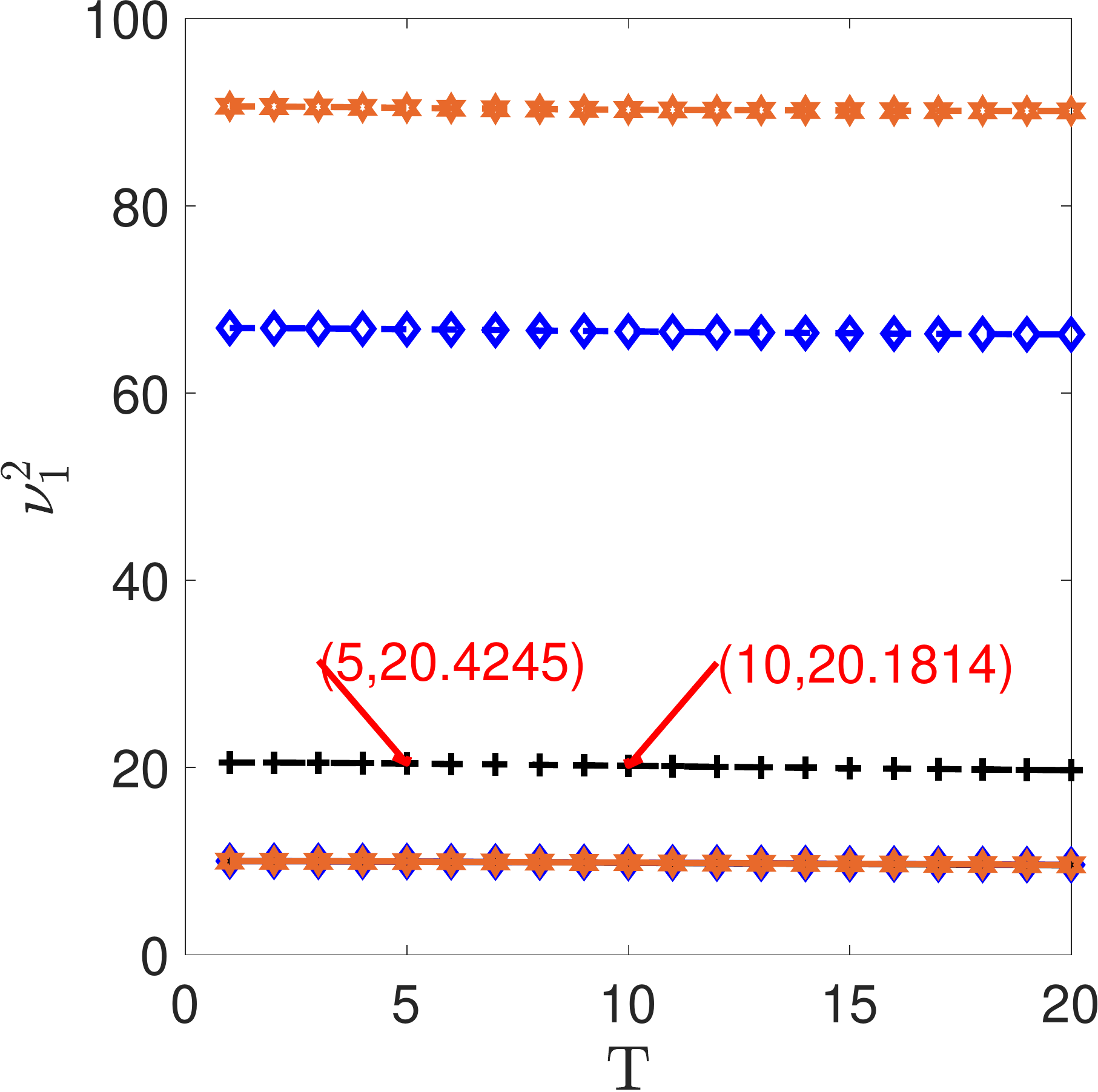}
\includegraphics[width=.30\linewidth, height=0.25\linewidth
]{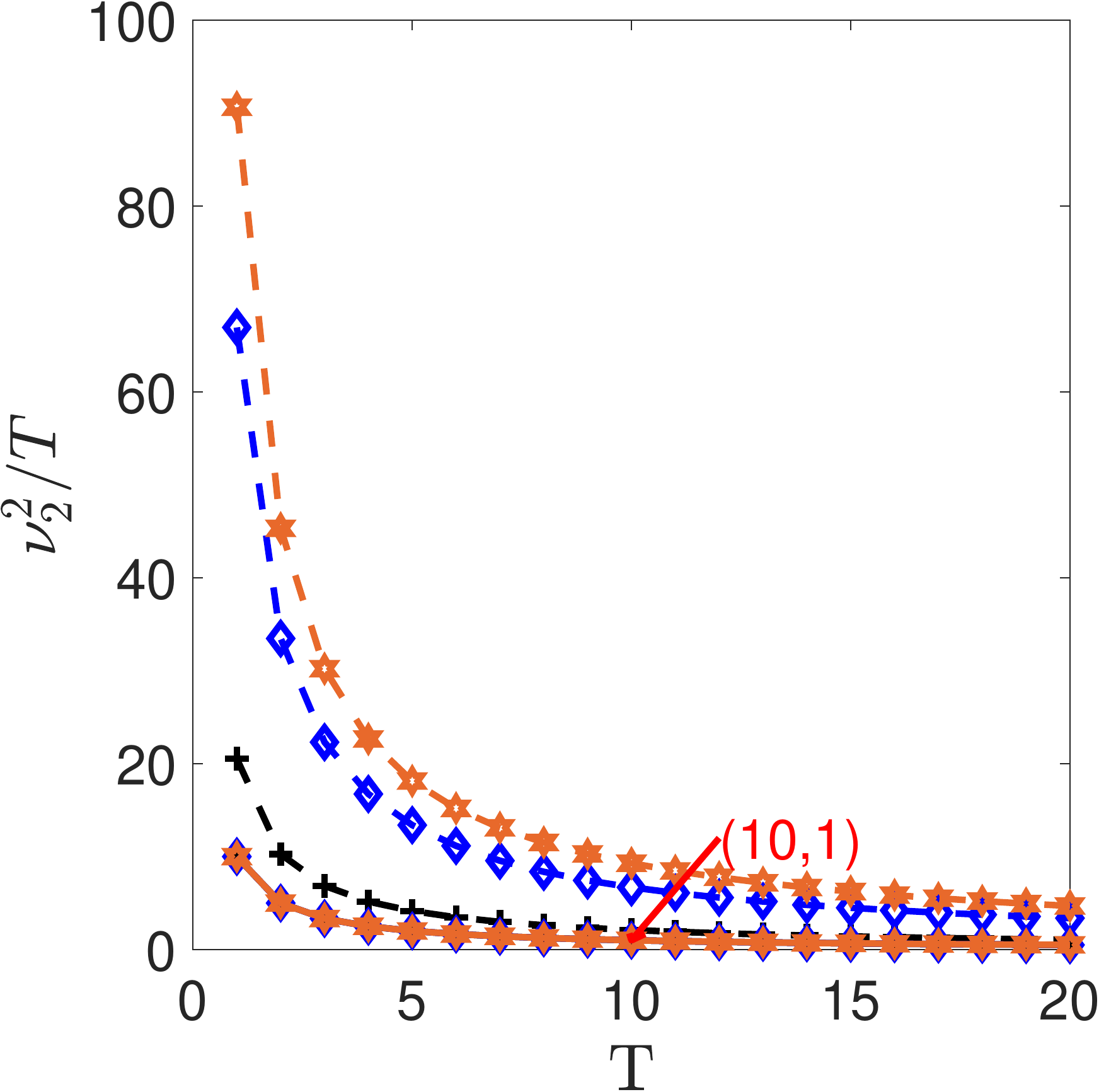}
\includegraphics[width=.30\linewidth,height=0.25\linewidth
]{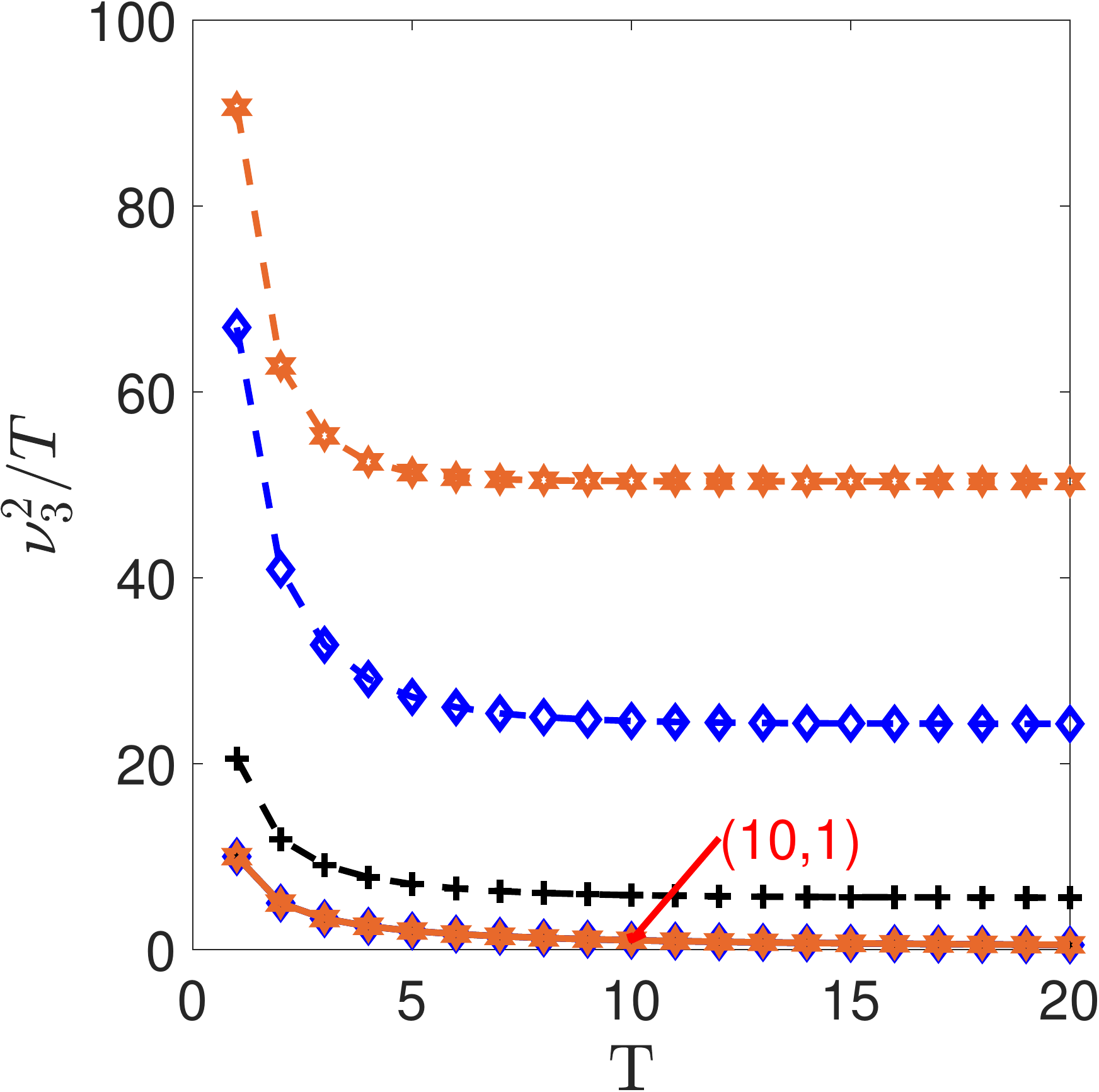}\\
\caption{\footnotesize\label{fig:coherence_vs_Time} The relation between spectral graph weighted coherence of order $(k,T)$   and $T$ with $k=10$ and $T$ ranging from 1 to 20 for community graphs.
 We present results for  $C_1$ in black plus, $C_2$ in blue diamond, and $C_3$ in orange hexagon; uniform sampling distribution in solid curve and optimal sampling distribution in dashed curve.
For regime 1 ,2 and 3, we present  the values of $(\nu_{1,\bf{p}^{(1)}}^{k,T})^2$, $\sum_{t=0}^{T-1}(\nu_{\bf{p}_t^{(2)}})^2/T$ and $(\nu_{3,\bf{p}^{(3)}}^{k,T})^2/T$, respectively.  All figures show that the value are decreasing as $T$ increases, with very slow rate in regime 1 (see two representative data pairs).  For the case of optimal sampling distribution, the curves in regime 2 and regime 3 are in fact $10/T$ (see the representative data pair (10,1)). }
\vspace{-0.3in}
\end{figure}

  We present in Fig.~\ref{fig:eig_community} the probabilities that $\underline{\delta}_{10, \mathbf{p}^{(j)}}$ is greater than 0.005, i.e, $\mathrm{Prob}(\underline{\delta}_{10,\mathbf{p}^{(j)}} \geq 0.005)$, estimated over $250$ trials. 
  Let $M_{i,\mathbf{p}^{(j)}}^*$ be the required number of space-time samples  to ensure $\mathrm{Prob}(\underline{\delta}_{10,\mathbf{p}^{(j)}} \geq 0.005) = 0.9$ for $i$-th graph-type and sampling distribution $\mathbf{p}^{(j)}$ of  each sampling regime. Theorem~\ref{thm:embedding_all} predicts that $M_{i, \mathbf{p}^{(j)}}^*$ scales linearly with $T(\mathbf{\nu}_{1,\mathbf{p}^{(1)}}^{k,T})^2$, $\sum_{t=0}^{T-1}(\mathbf{\nu}_{2,\mathbf{p}^{(2)}}^{k,T}(t))^2$ and $(\mathbf{\nu}_{3,\mathbf{p}^{(3)}}^{k,T})^2$  for $j=1,2,3$ respectively of community graph $i$. 
For the uniform distribution $\mathbf{p}:=\pi$, the first figure from the top panel in Fig.~\ref{fig:eig_community} shows that $M_{1, \mathbf{p}^{(j)}}^*\geq M_{2, \mathbf{p}^{(j)}}^* \geq M_{3, \mathbf{p}^{(j)}}^*$. This matches their ordering of $\mathbf{\nu}_{j,\mathbf{p}^{(j)}}^{k,T}$ presented in  Fig.~\ref{fig:coherence_vs_Time}, and is in accordance with Theorem \ref{thm:embedding_all}. For the optimal sampling distribution $\mathbf{p}_{\mathrm{opt}}$, we have $\mathbf{\nu}_{j,\mathbf{p}_{\textbf{opt}}^{(j)}}^{k,T}$ approximately the same for all types of graphs and $j=1,2,3$.  Therefore $M_{i,\mathbf{p}^{(j)}}^*$ must be nearly identical for all graph-types, as observed in the second panel of Fig.~\ref{fig:eig_community}. For each type of graph, in Proposition \ref{gcoherence}, we showed that for all distributions, 
\vspace{-1.5mm}
\[\sum_{t=0}^{T-1}(\mathbf{\nu}_{2,\mathbf{p}_t^{(2)}}^{k,T})^2\leq (\mathbf{\nu}_{3,\mathbf{p}^{(3)}}^{k,T})^2 \leq T(\mathbf{\nu}_{1,\mathbf{p}^{(1)}}^{k,T})^2,\]  this matches the numerical results in  Fig.~\ref{fig:coherence_vs_Time}.

  In addition, we also computed the spectral graph weighted coherence for $C_1$, $C_2$, $C_3$ with respect to different sampling distributions and  total sampling time $T$ in Fig.~\ref{fig:coherence_vs_Time}. The results show that the number of total space-time samples required in regime 2 and regime 3 are the same as $T$ varies and thus one can expect use fewer spatial samples as $T$ increase with approximately 1-1 trade-off. While in regime 1, the temporal information only brings a tiny ratio of reduction in spatial samples.

\subsubsection{The results for other graphs}  We also perform the  same experiments for  four other graphs that have wide applications:   Minnesota, bunny, path, and sensor graphs (cf.~\cite{perraudin2014}). For the first three graphs, the experiments are performed for bandlimited signals with bandwidth $100$.  For the sensor graph, we set the bandwidth  $50$. Here our goal is to test the performance with respect to signals with different bandwidth.   For the path graph, some eigenvalues have  multiplicities larger than 1. We choose the bandwidth to ensure that $\lambda_k <\lambda_{k+1}$ as required in our assumptions. The results for $\text{Prob}(\underline{\delta}_{k,\mathbf{p}^{(i)}}>0.005)$ \textit{v.s.} $M$ are summarized in Fig.~\ref{fig:eig_other}. 

\subsection{Comparisons}

\subsubsection{Optimal distribution versus uniform distribution}  Compared with the uniform distribution $\pi^{(j)}$, the optimal distribution $\mathbf{p}_{\textbf{opt}}^{(j)}$ for each regime takes the graph topology and temporal information into account so that one can use fewer space-time samples yet yield  robust reconstruction. In the case $k=100$,  the advantage of using $\mathbf{p}_{\textbf{opt}}^{(j)}$  is most  obvious for the Minnesota and bunny graphs in regime 3. For the Minnesota graph, we reach only a probability of $0.008$ at $M = 800$ with $\pi^{(j)}$, whereas $M = 400$  are sufficient to reach a probability $1$ with $\mathbf{p}_{\textbf{opt}}^{(j)}$. $\pi^{(j)}$ exhibits a poorer performance for the bunny graph:  we reach only a probability of $0.016$ at $M = 1400$ with $\pi^{(j)}$, whereas $M = 400$   are sufficient to reach a probability 1 with $\mathbf{p}_{\textbf{opt}}^{(j)}$. 

Uniform sampling is not working well in regime 3 for the Minnesota and  bunny graphs,  since there exist few eigenmodes of the bandlimited diffusion field  whose energy is highly concentrated on few  space-time nodes.  In other words, we have   $ \|\widetilde{U}_{100,T}\delta_i\| \approx 1$ for some space-time node $i$. In this case, the corresponding $\nu_{j,\mathbf{\pi}^{(3)}}^{k,T}$ is approximately $\sqrt{Tn}$.  In contrast,     $\nu_{3,\mathbf{p}_{\textbf{opt}}^{(3)}}^{k,T}$ is $\sqrt{k}$ for $\mathbf{p}_{\textbf{opt}}^{(j)}$. In regime 2, thanks to the the same number of spatial samples taken at consecutive times, the gap between the uniform sampling and optimal sampling is much smaller.

For more regular graphs like the sensor graph and path graph, as our analysis in Section \ref{graphdiffusion} shows, the optimal distributions are close to the uniform distributions, and therefore they have comparable performances.

\textbf{Dynamical sampling versus static sampling.} The static sampling ($T=1$) results using uniform distributions are presented  in  Table \ref{tab:table-uniformdistribution}.  For path graph and community graph, we observed that the embedding produced by space-time samples chosen according to regime 2 and 3  are as good as the one produced by the same number of spatial samples, we even do better in community graph. For the bunny graph and Minnesota graph,  if we count the average spatial samples at each time instance, then we use much fewer spatial samples than the static setting to achieve a comparable performance. From the perspective of sampling, this means that one can use fewer number of sensors and dynamical sampling provides a much cheaper alternative for data acquisition. We also refer to the captions of Figure \ref{fig:rec_nonoise} and Figure \ref{fig:rec_noisy} for the comparison 
of reconstruction with the static case.

\begin{table}[h]
\footnotesize
\centering
\begin{tabular}{|l|l|l|l|l|l|l|l|l|l|l|l|l|}
\hline
Type   & \multicolumn{3}{l|}{Community} & \multicolumn{3}{l|}{Minnesota} & \multicolumn{3}{l|}{Bunny}          & \multicolumn{3}{l|}{Path} \\ \hline
Regime & 1         & 2        & 3       & 1        & 2        & 3       & 1          & 2          & 3         & 1       & 2      & 3      \\ \hline
$M$    & 400       & 80       & 80      & 3000     & 1600     & 1600    & 3200       & 2400       & 2600      & 3200    & 400    & 400    \\ \hline
$M/T$  & 20        & 4        & 4       & 150      & 80       & 80      & 200        & 120        & 120       & 200     & 20     & 20     \\ \hline
Static & \multicolumn{3}{l|}{130}       & \multicolumn{3}{l|}{450}     & \multicolumn{3}{l|}{$320$} & \multicolumn{3}{l|}{360}  \\ \hline
\end{tabular}
\caption{\label{tab:table-uniformdistribution}For the uniform distributions, the number of samples that required to reach a probability at least 0.95.}
\vspace{-0.2in}
\end{table}

\begin{figure}[th]
\centering
\begin{minipage}{.3\linewidth} \centering \small \hspace{2mm} $\text{Minnesota, } \underline{\delta}_{100,\mathbf{p}^{(1)}}$  \end{minipage}
\begin{minipage}{.3\linewidth} \centering \small \hspace{2mm}  $\text{Minnesota, } \underline{\delta}_{100,\mathbf{p}^{(2)}}$ \end{minipage}
\begin{minipage}{.3\linewidth} \centering \small \hspace{2mm}  $\text{Minnesota, } \underline{\delta}_{100,\mathbf{p}^{(3)}}$  \end{minipage}\\
\includegraphics[width=.3\linewidth, height=0.25\linewidth
]{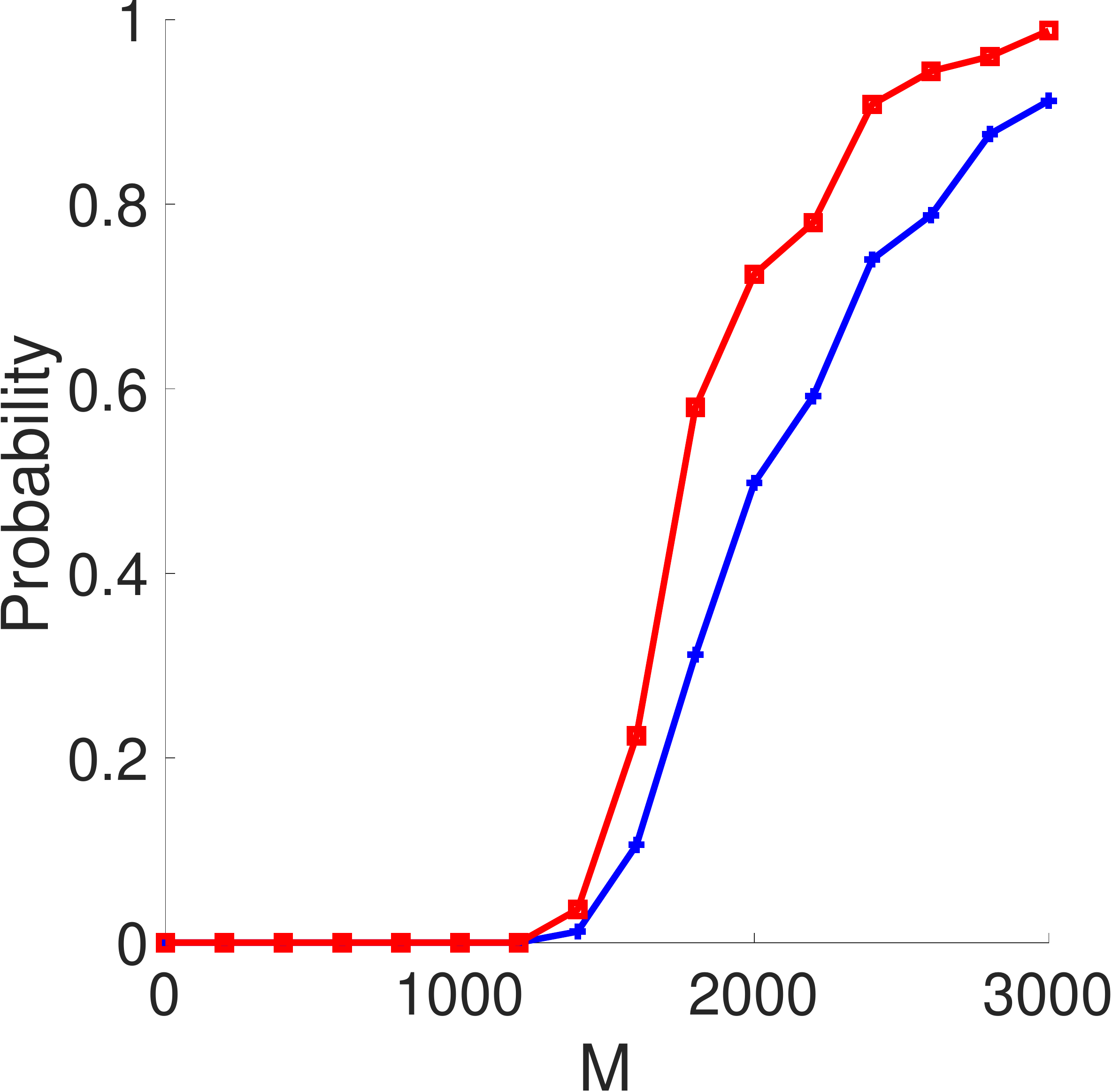}
\includegraphics[width=.3\linewidth, height=0.25\linewidth
]{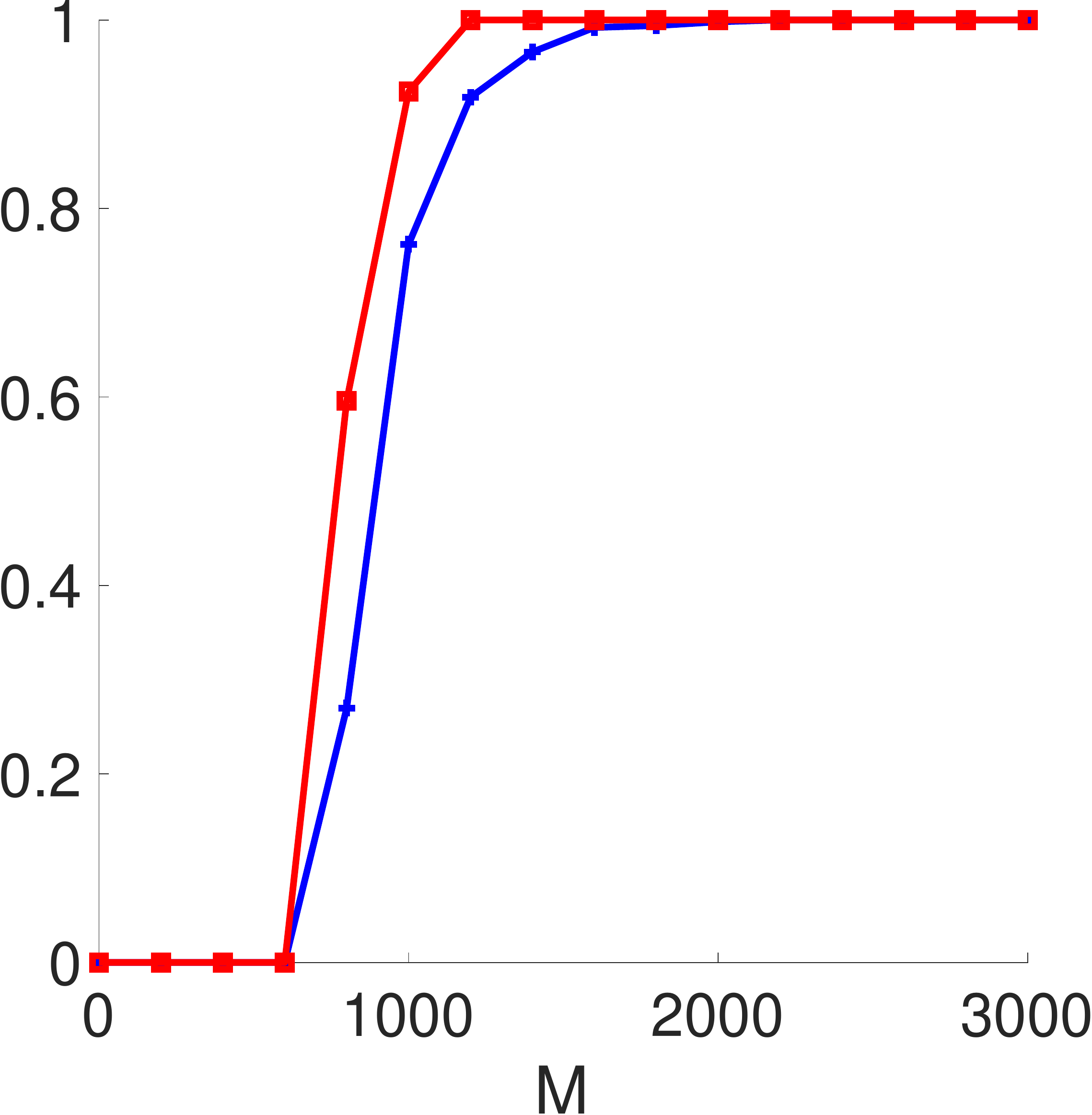}
\includegraphics[width=.3\linewidth, height=0.25\linewidth
]{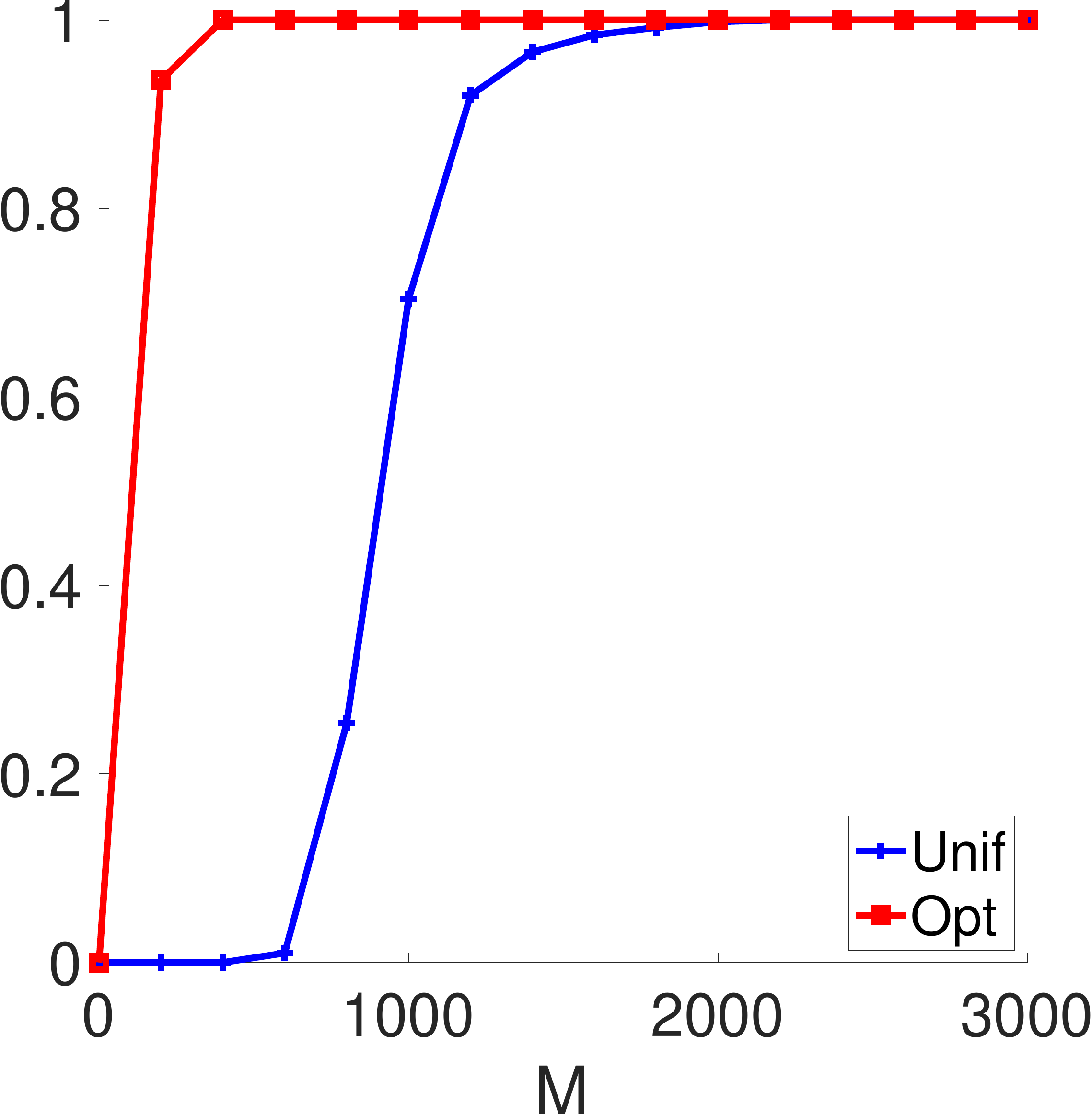}\\
\begin{minipage}{.3\linewidth} \centering \small \hspace{2mm} $\text{Bunny, } \underline{\delta}_{100,\mathbf{p}^{(1)}}$  \end{minipage}
\begin{minipage}{.3\linewidth} \centering \small \hspace{2mm}  $\text{Bunny, } \underline{\delta}_{100,\mathbf{p}^{(2)}}$ \end{minipage}
\begin{minipage}{.3\linewidth} \centering \small \hspace{2mm}   $\text{Bunny, } \underline{\delta}_{100,\mathbf{p}^{(3)}}$  \end{minipage}\\
\includegraphics[width=.3\linewidth, height=0.25\linewidth
]{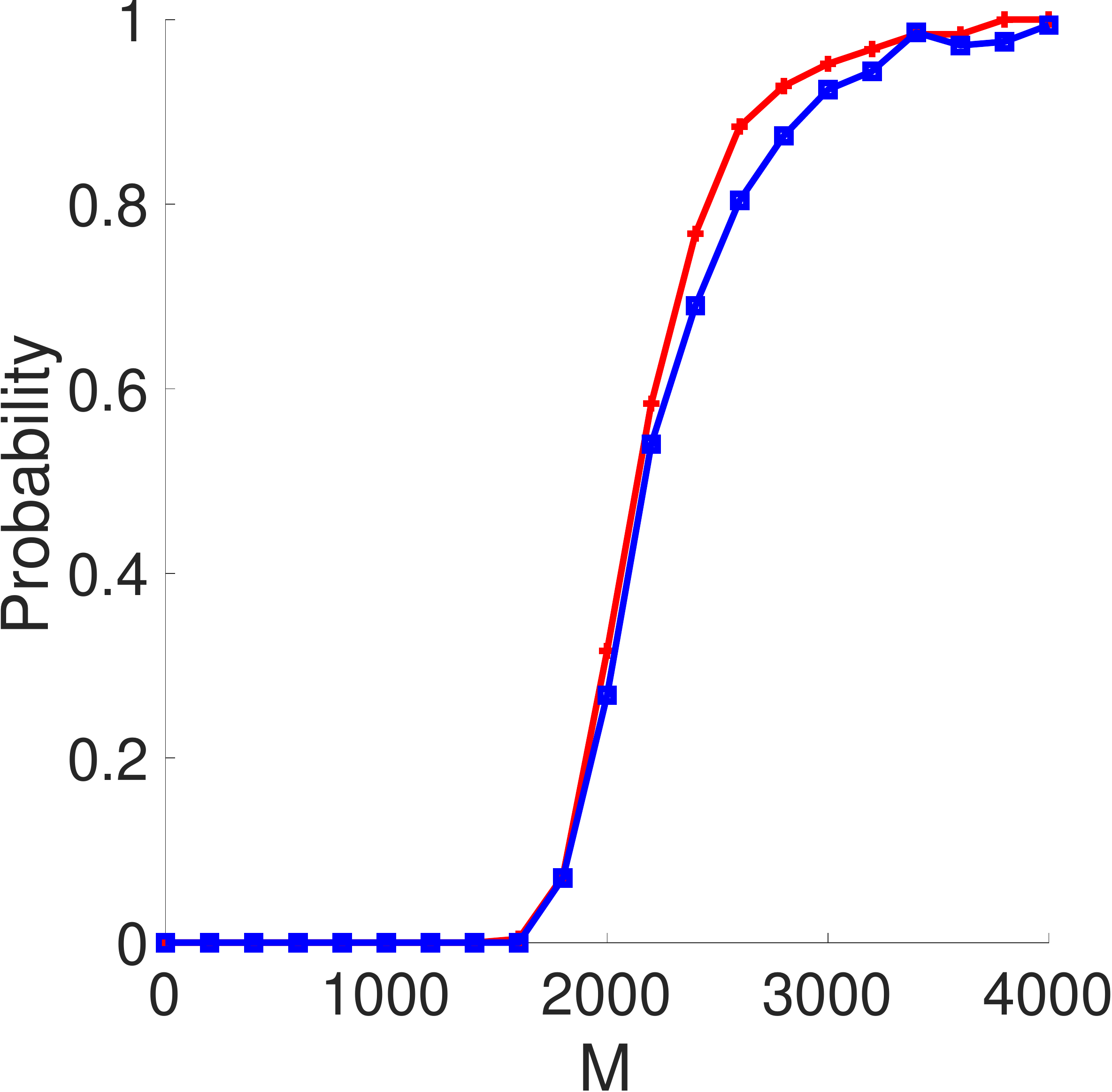}
\includegraphics[width=.3\linewidth, height=0.25\linewidth
]{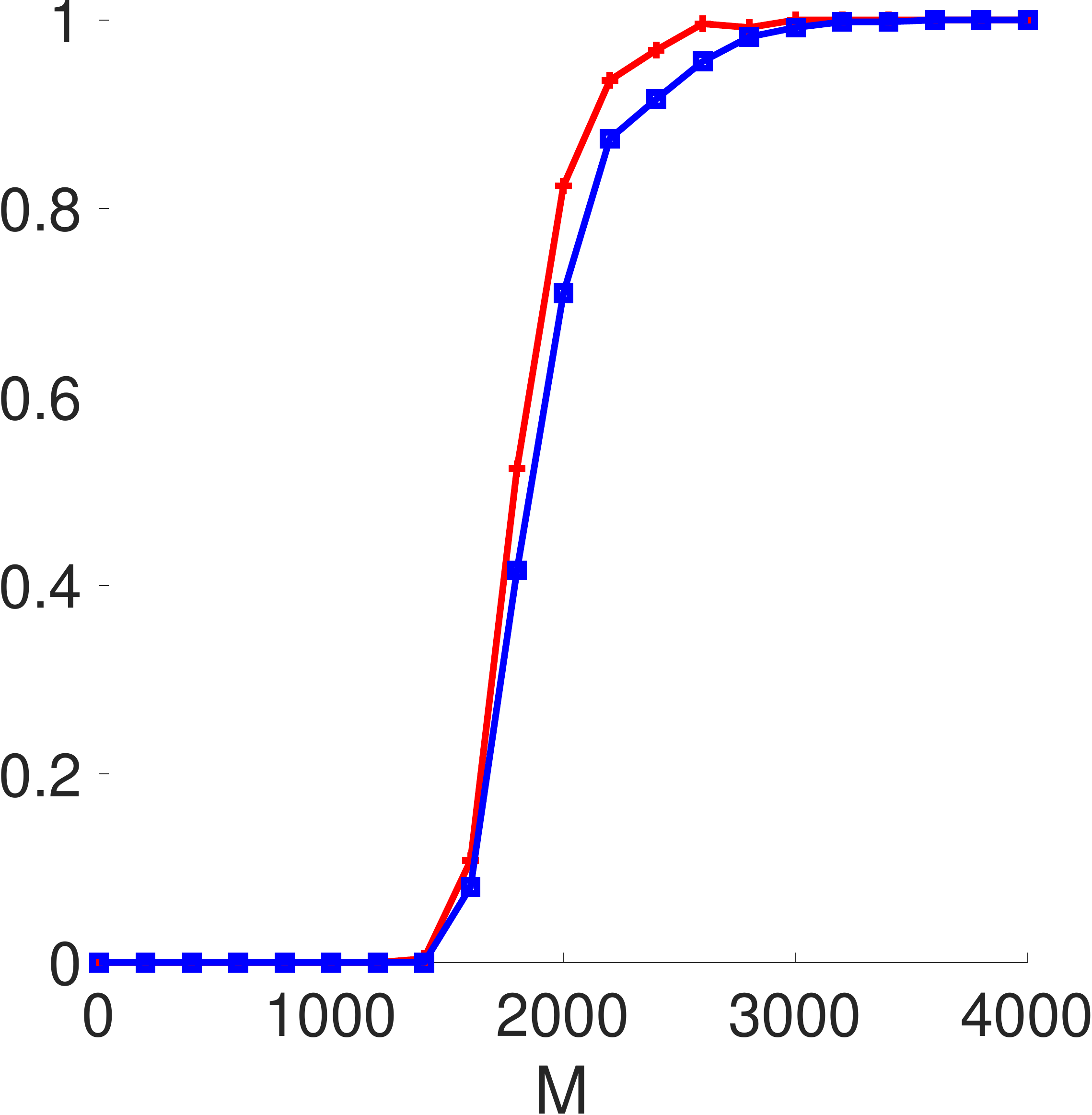}
\includegraphics[width=.3\linewidth, height=0.25\linewidth
]{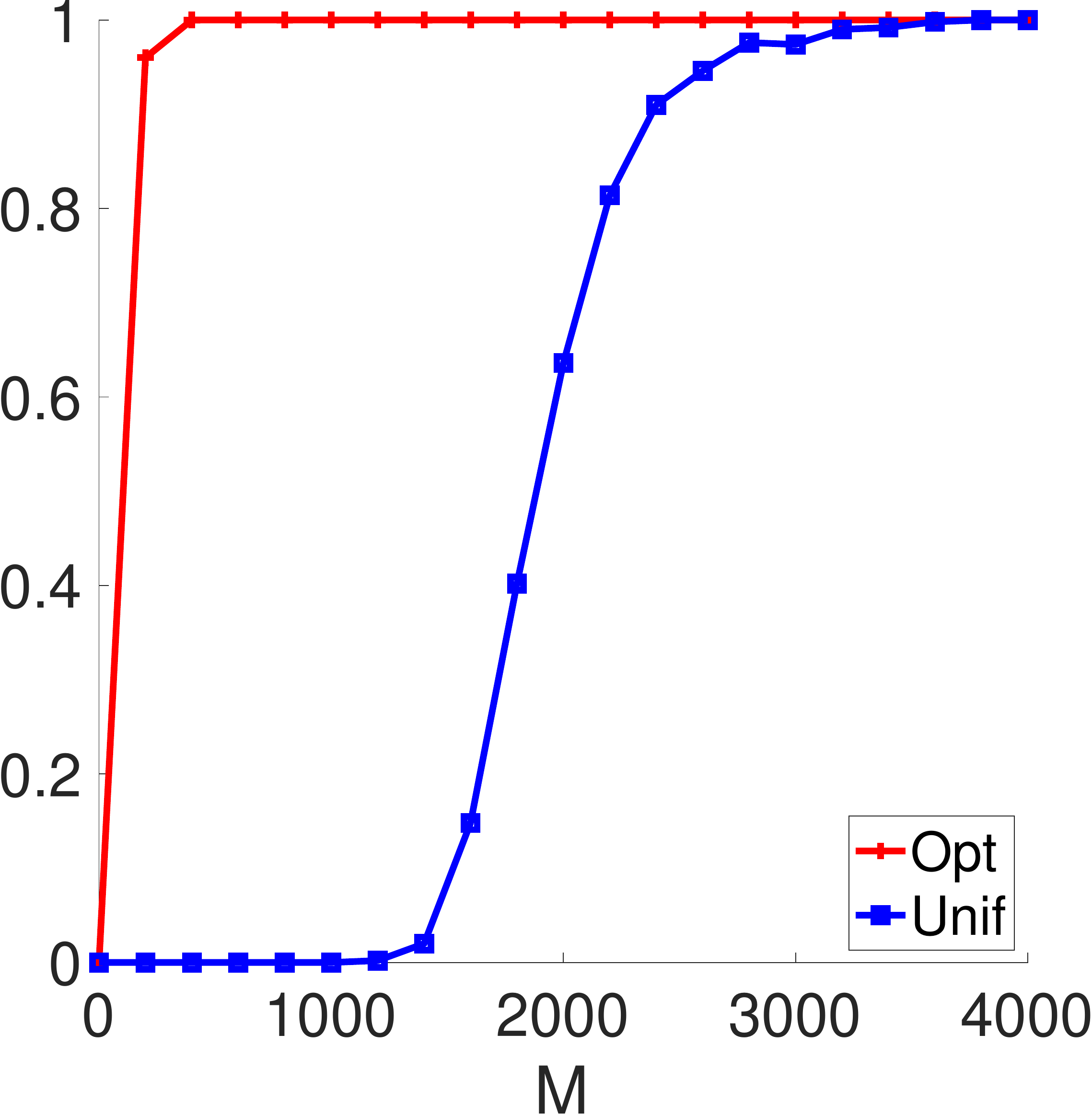}\\

\begin{minipage}{.3\linewidth} \centering \small \hspace{2mm} $\text{Sensor, } \underline{\delta}_{50,\mathbf{p}^{(1)}}$  \end{minipage}
\begin{minipage}{.3\linewidth} \centering \small \hspace{2mm}  $\text{Sensor, } \underline{\delta}_{50,\mathbf{p}^{(2)}}$ \end{minipage}
\begin{minipage}{.3\linewidth} \centering \small \hspace{2mm}   $\text{Sensor, } \underline{\delta}_{50,\mathbf{p}^{(3)}}$  \end{minipage}\\
\includegraphics[width=.3\linewidth, height=0.25\linewidth
]{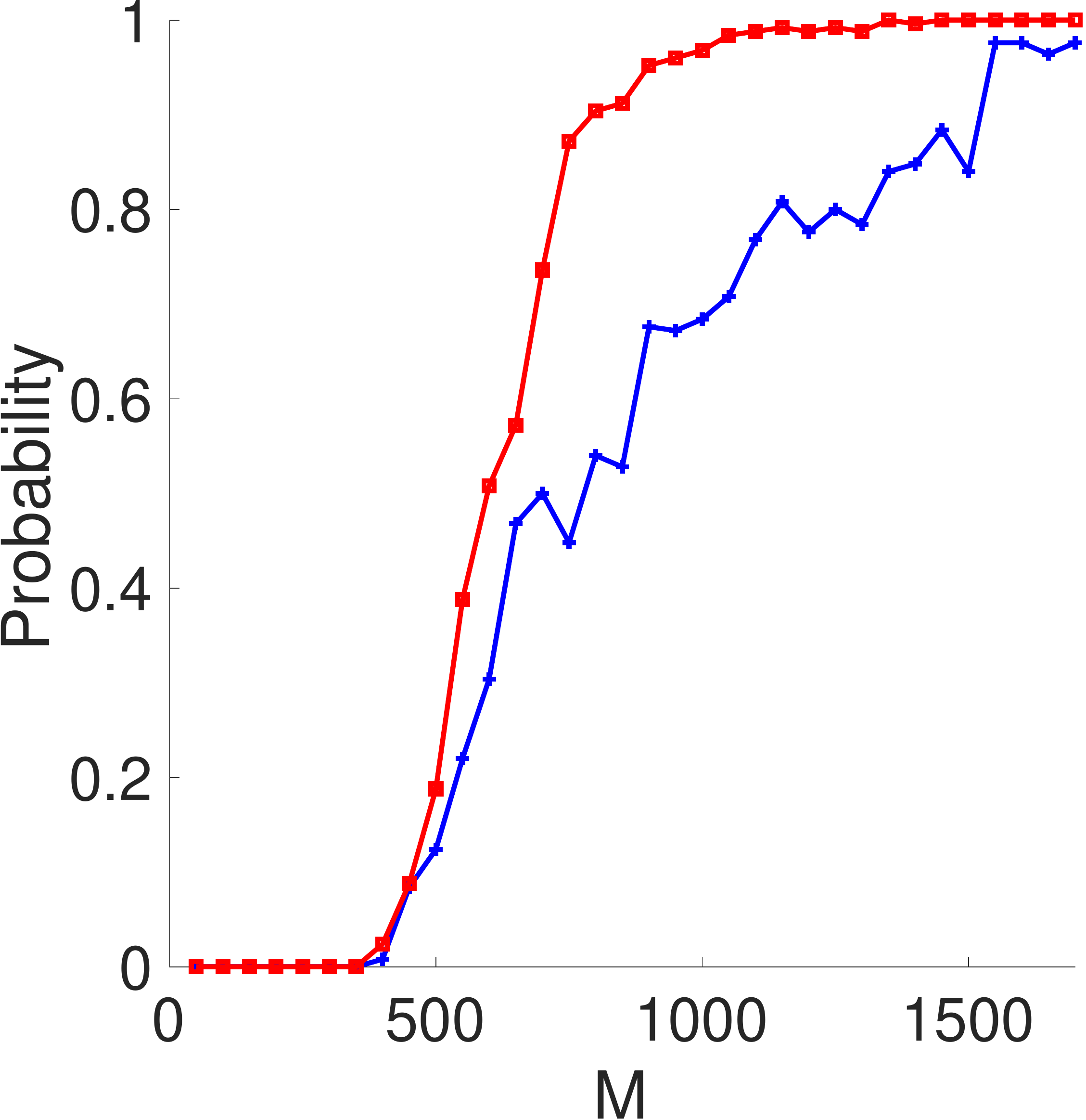}
\includegraphics[width=.3\linewidth, height=0.25\linewidth
]{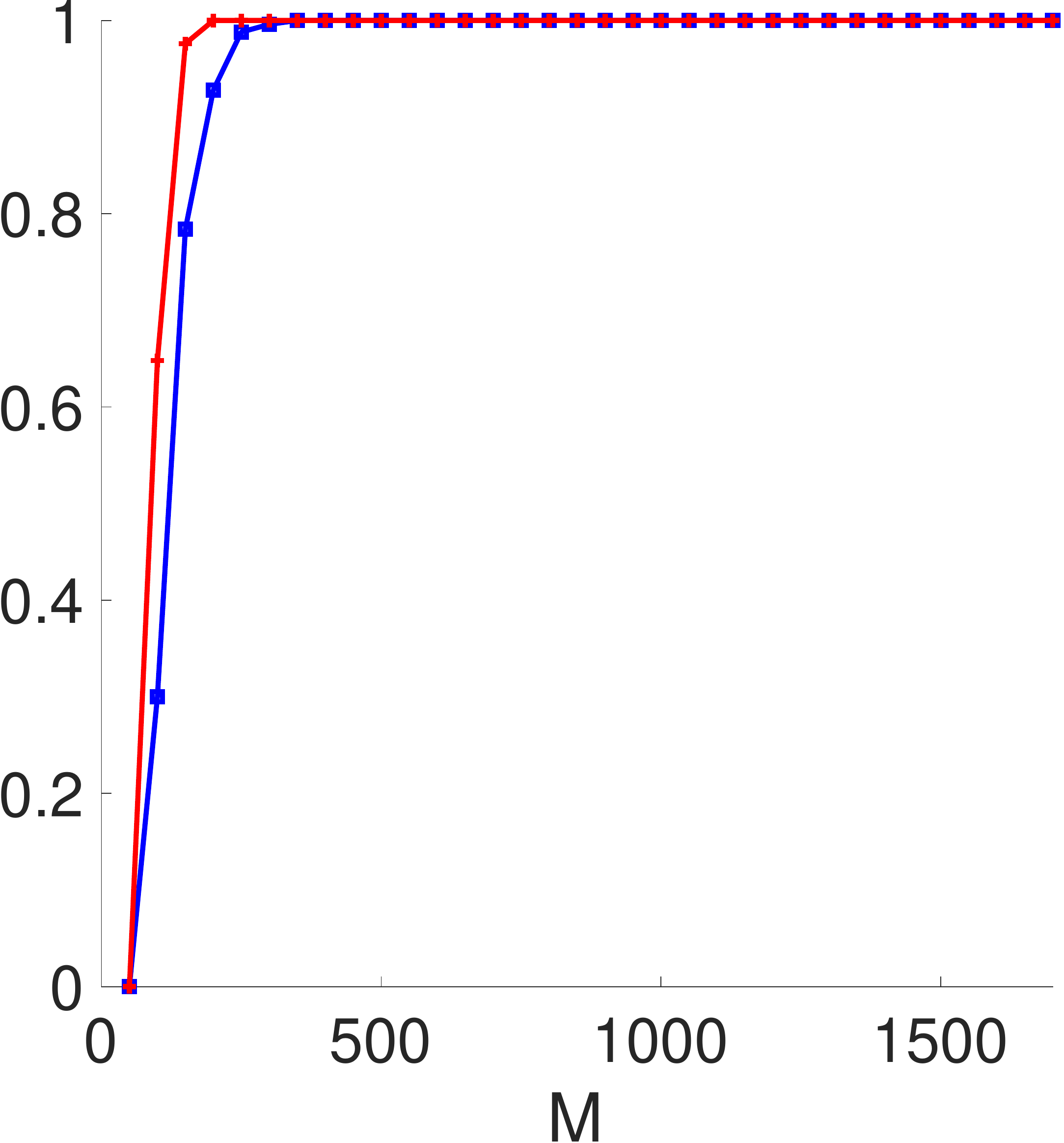}
\includegraphics[width=.3\linewidth, height=0.25\linewidth
]{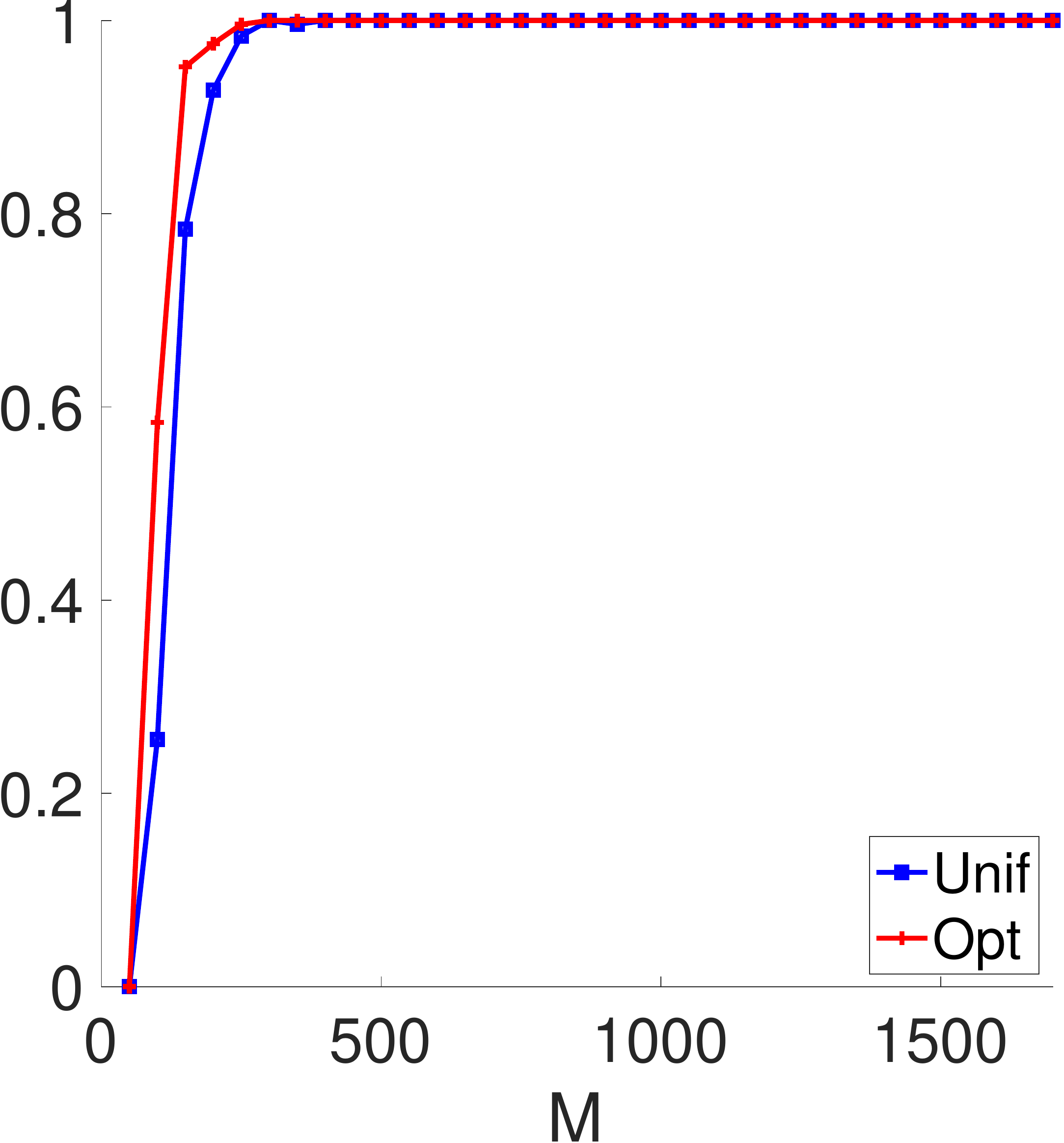}\\
\begin{minipage}{.3\linewidth} \centering \small \hspace{2mm} $\text{Path, } \underline{\delta}_{100,\mathbf{p}^{(1)}}$  \end{minipage}
\begin{minipage}{.3\linewidth} \centering \small \hspace{2mm}  $\text{Path, } \underline{\delta}_{100,\mathbf{p}^{(2)}}$ \end{minipage}
\begin{minipage}{.3\linewidth} \centering \small \hspace{2mm}   $\text{Path, } \underline{\delta}_{100,\mathbf{p}^{(3)}}$  \end{minipage}\\
\includegraphics[width=.3\linewidth, height=0.25\linewidth
]{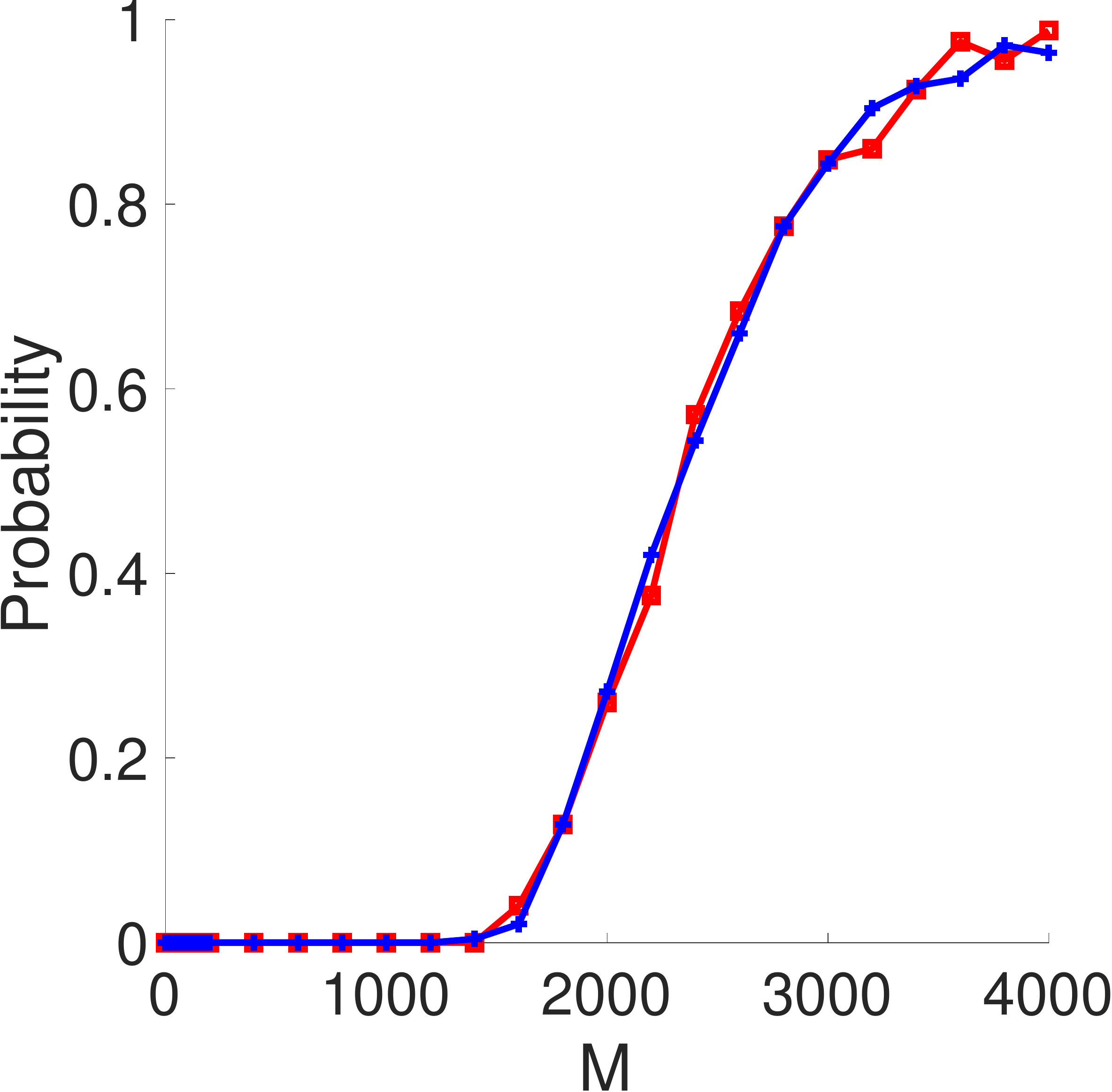}
\includegraphics[width=.3\linewidth, height=0.25\linewidth
]{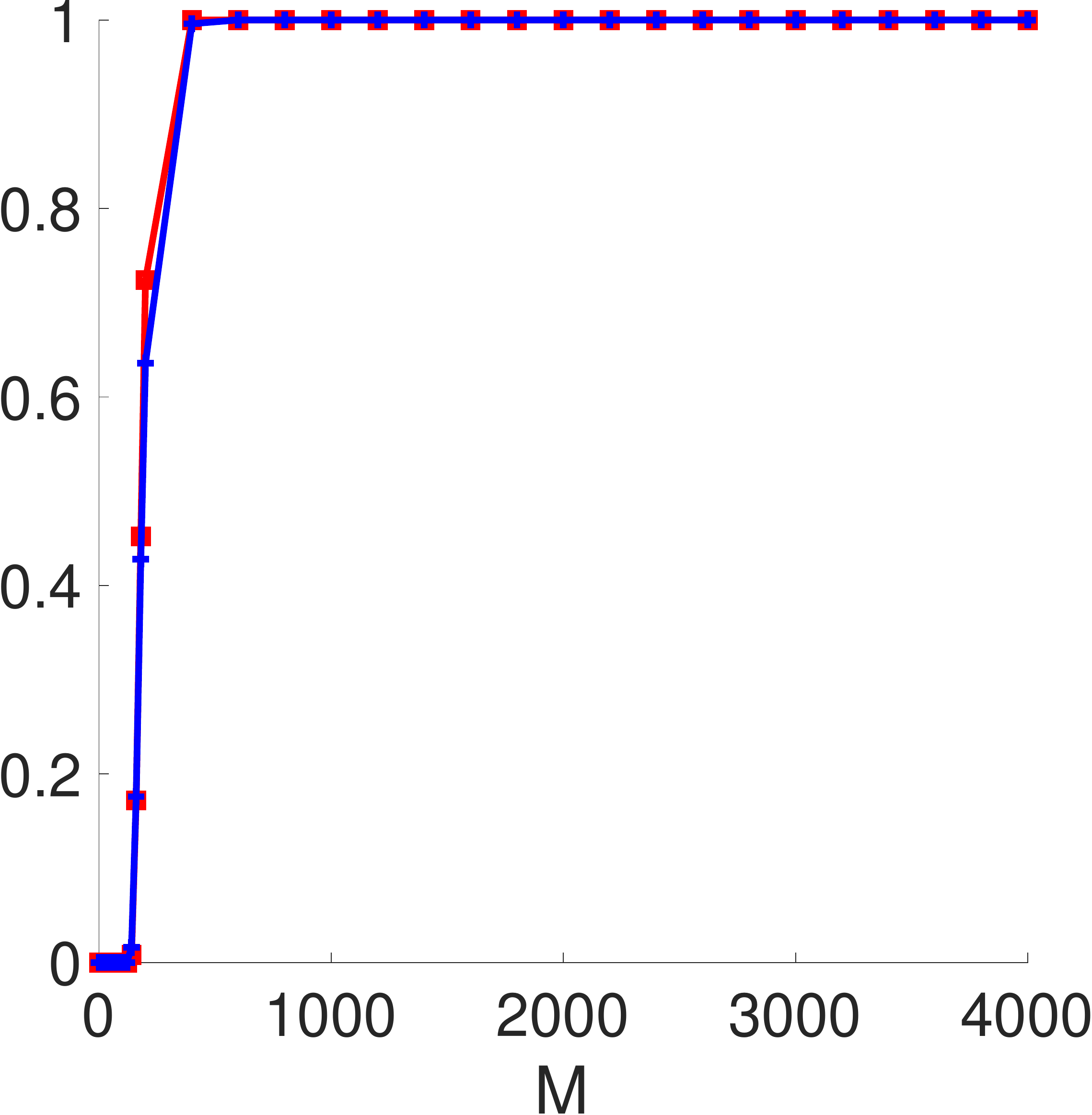}
\includegraphics[width=.3\linewidth, height=0.25\linewidth
]{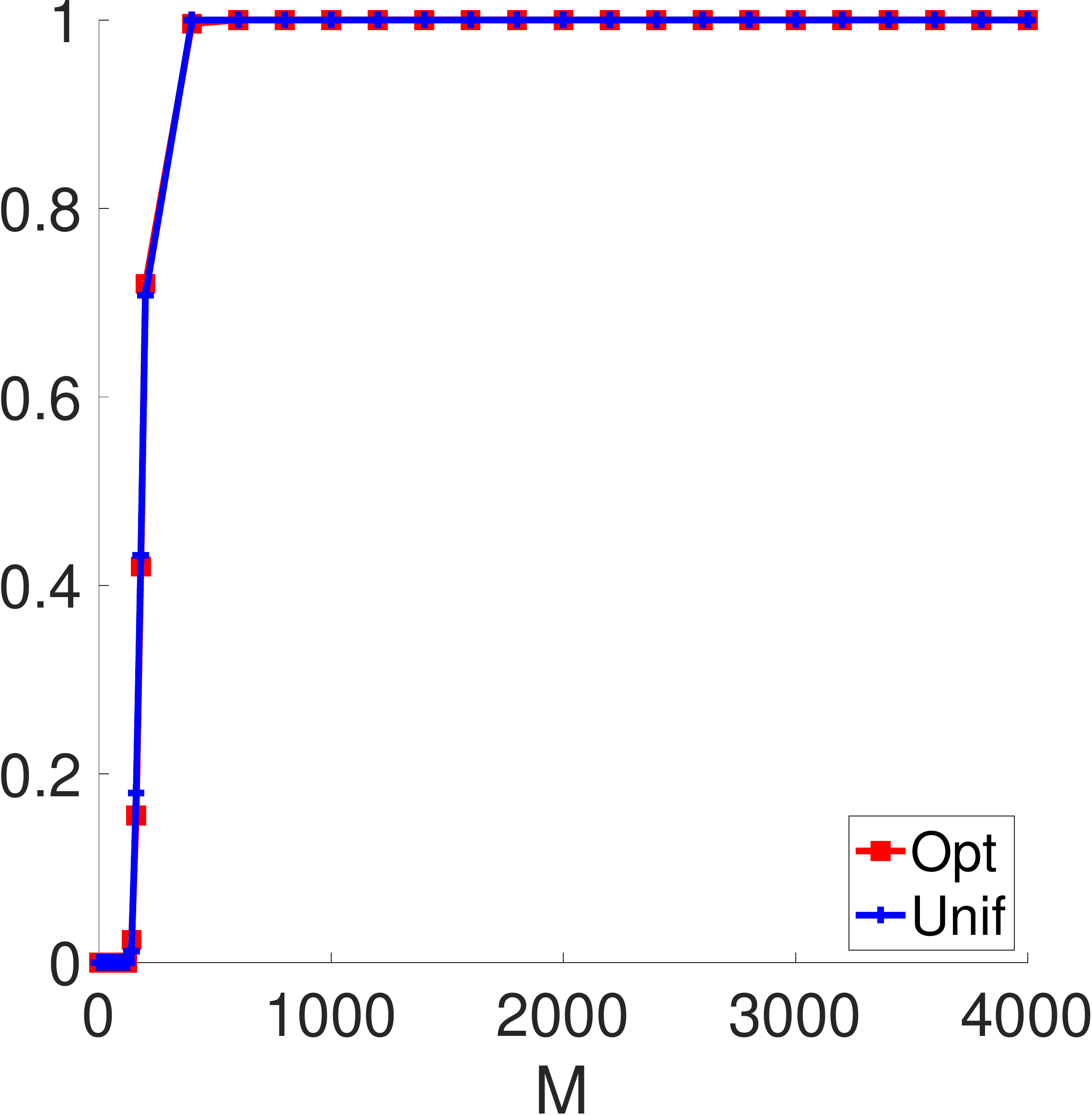}\\
\caption{\footnotesize\label{fig:eig_other} Probability that $\underline{\delta}_{10}$ is great than $0.005$ as a function of $M$. 
The panels show the results at $k=100$ for the Minnesota, bunny and path graphs, and at $k=50$ for the sensor graph using optimal and uniform distributions. The gap for $M$ is 200. }
\vspace{-0.2in}
\end{figure}

\textbf{Comparison of sampling regimes.} Let $M_{\mathbf{p}^{(j)}}^*$  denote the required number of space-time samples  in regime $j$ such that $\underline{\delta}_{k,\mathbf{p}^{(j)}}> 0.005$.  We have found that $M_{\mathbf{p}^{(1)}}^*$ is  largest  for all graphs used in the experiments. 
And regime 2 has the best performance in terms of the sample complexity. That is to say, we have $ M_{\mathbf{p}^{(2)}}^* \leq  M_{ \mathbf{p}^{(3)}}^*\leq  M_{ \mathbf{p}^{(1)}}^*$,
which  indicates the relationship
$T(\nu_{1,\mathbf{p}^{(1)}}^{K,T})^2 \geq  (\nu_{3,\mathbf{p}^{(3)}}^{K,T})^2\geq \sum_{t=0}^{T-1}(\nu_{2,\mathbf{p}^{(2,t)}}^{K,T})^2 $ for uniform and optimal distributions.  These observations coincide with the respective coherence and  are consistent with Theorem~\ref{thm:embedding_all} and the analysis in Proposition \ref{prop:gcoherence}. %

\subsection{Reconstruction of bandlimited signals} 

In this part, we examine the performance of the reconstruction of initial signals from space-time samples on   community graphs of type $C_3$, the Minnesota graphs and  bunny graphs. We consider the recovery of $k$-bandlimited signals with $k=10$, in contrast with reconstruction results for static sampling setting in \cite{puy2018random}.
We choose $M = 200$ for $\mathbf{p}_{\textbf{opt}}^{(j)}$ with $j=1,2,3$.

The experiments are performed with and without noise on the sampled data. In the presence of noise, the noise vector  $\mathbf{e}$ is random with i.i.d. mean 0 and variance $\sigma^2$ Gaussian entries \footnote{For nodes sampled multiple times, the realization of the noise is thus different each time for the same sampled node. Thus, the noise vector  contains no duplicated entry.}.
$\sigma$   are chosen from $\{0, 1.5 \times 10^{-3}, 3.7 \times 10^{-3}, 8.8 \times 10^{-3}, 2.1 \times 10^{-2}, 5.0 \times 10^{-2} \}$. The signals are reconstructed by solving \eqref{eqn:obj2} for different values of the 
parameter $\gamma$ and different functions $g$. For the community graph and the bunny graph, the regularisation  parameter $\gamma$ varies between $10^{-3}$ and $10^2$. For the Minnesota graph, it varies between $10^{-1}$ and $10^{5}$. {For each $\sigma$, $10$ independent random signals of unit norm are drawn, sampled and reconstructed with  all 
possible  $(\gamma, g)$}. Then the mean  errors\footnote{See Theorem \ref{thm:main_regular} for the definition of ${\alpha}^*$ and ${\beta}^*$.} $\|x^* - x\|_2$, $\|{\alpha}^* - x\|_2$ and $\|{\beta}^*\|_2$ over these $10$ signals are computed.
 
 \begin{figure}[]
 \vspace{-0.1in}
\centering
\begin{minipage}{.3\linewidth} \centering \small \hspace{2mm} Regime 1  \end{minipage}
\begin{minipage}{.3\linewidth} \centering \small \hspace{2mm}  Regime 2  \end{minipage}
\begin{minipage}{.3\linewidth} \centering \small \hspace{2mm} Regime3  \end{minipage}\\
\begin{minipage}{.3\linewidth} \centering \small \hspace{2mm} Community  \end{minipage}
\begin{minipage}{.3\linewidth} \centering \small \hspace{2mm}  Community \end{minipage}
\begin{minipage}{.3\linewidth} \centering \small \hspace{2mm} Community  \end{minipage}\\
\includegraphics[width=.3\linewidth, height=0.25\linewidth
]{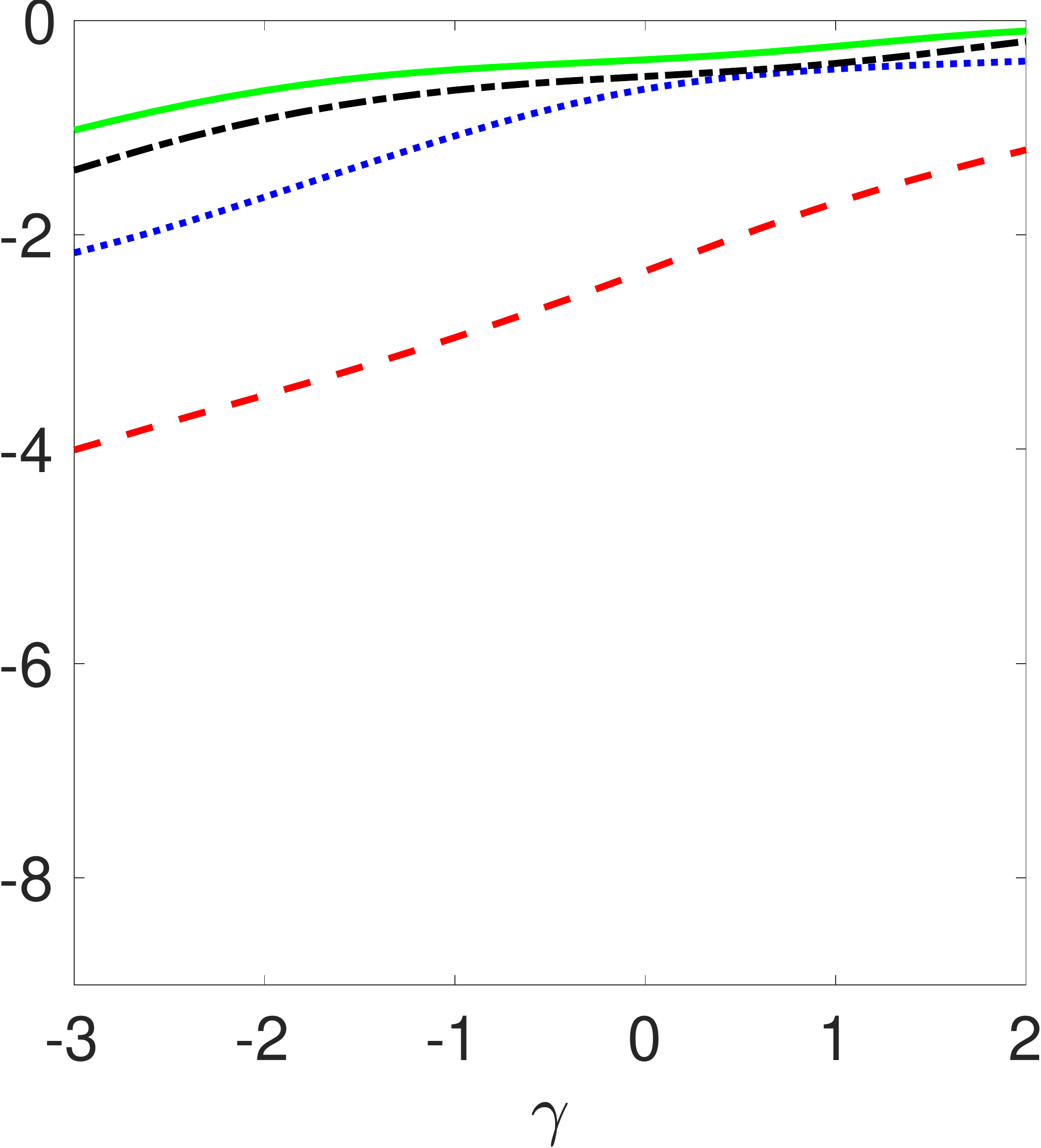}
\includegraphics[width=.3\linewidth, height=0.25\linewidth
]{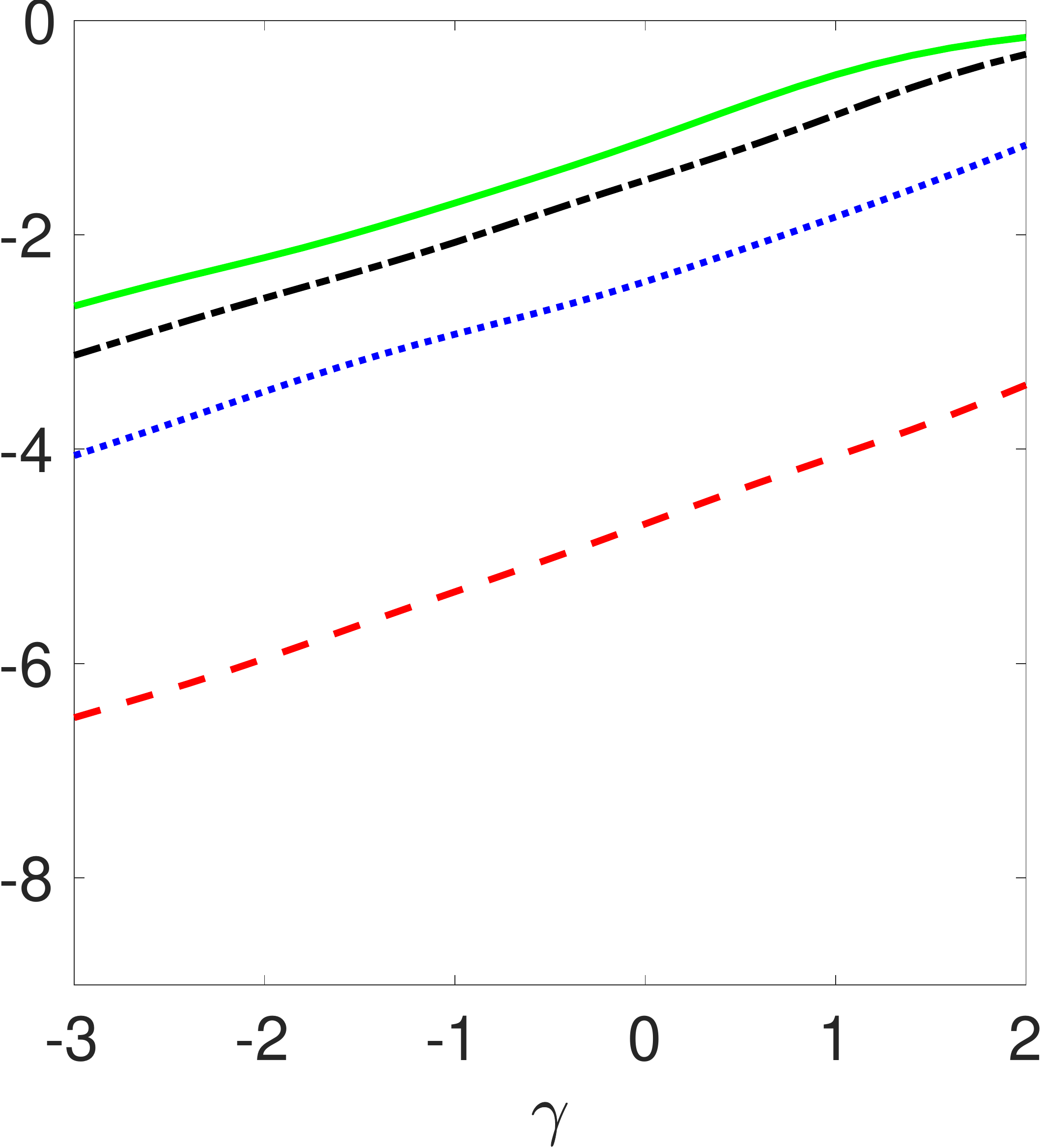}
\includegraphics[width=.3\linewidth, height=0.25\linewidth
]{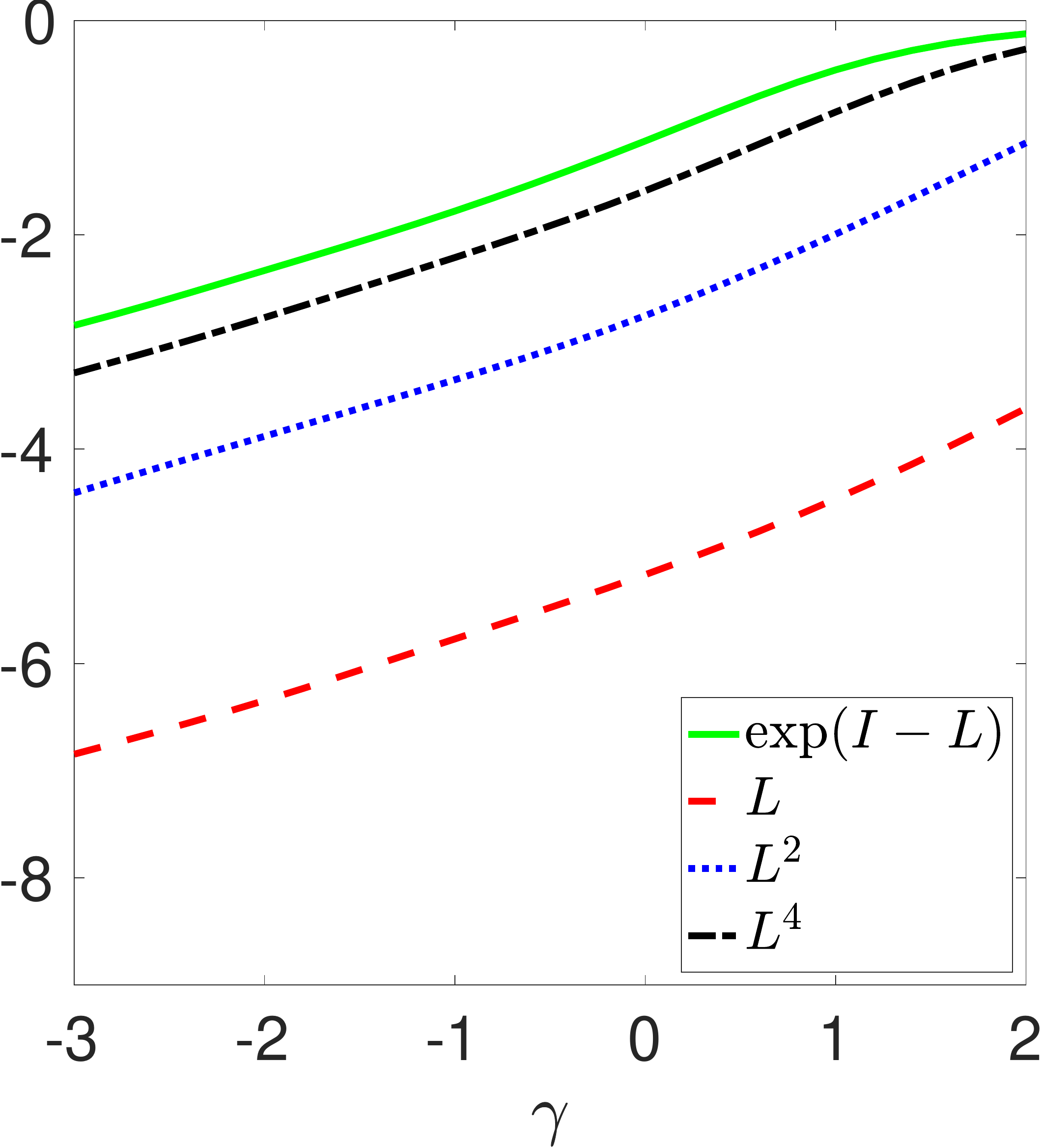}\\
\begin{minipage}{.3\linewidth} \centering \small \hspace{2mm} Bunny  \end{minipage}
\begin{minipage}{.3\linewidth} \centering \small \hspace{2mm}  Bunny  \end{minipage}
\begin{minipage}{.3\linewidth} \centering \small \hspace{2mm}   Bunny  \end{minipage}\\
\includegraphics[width=.3\linewidth, height=0.25\linewidth
]{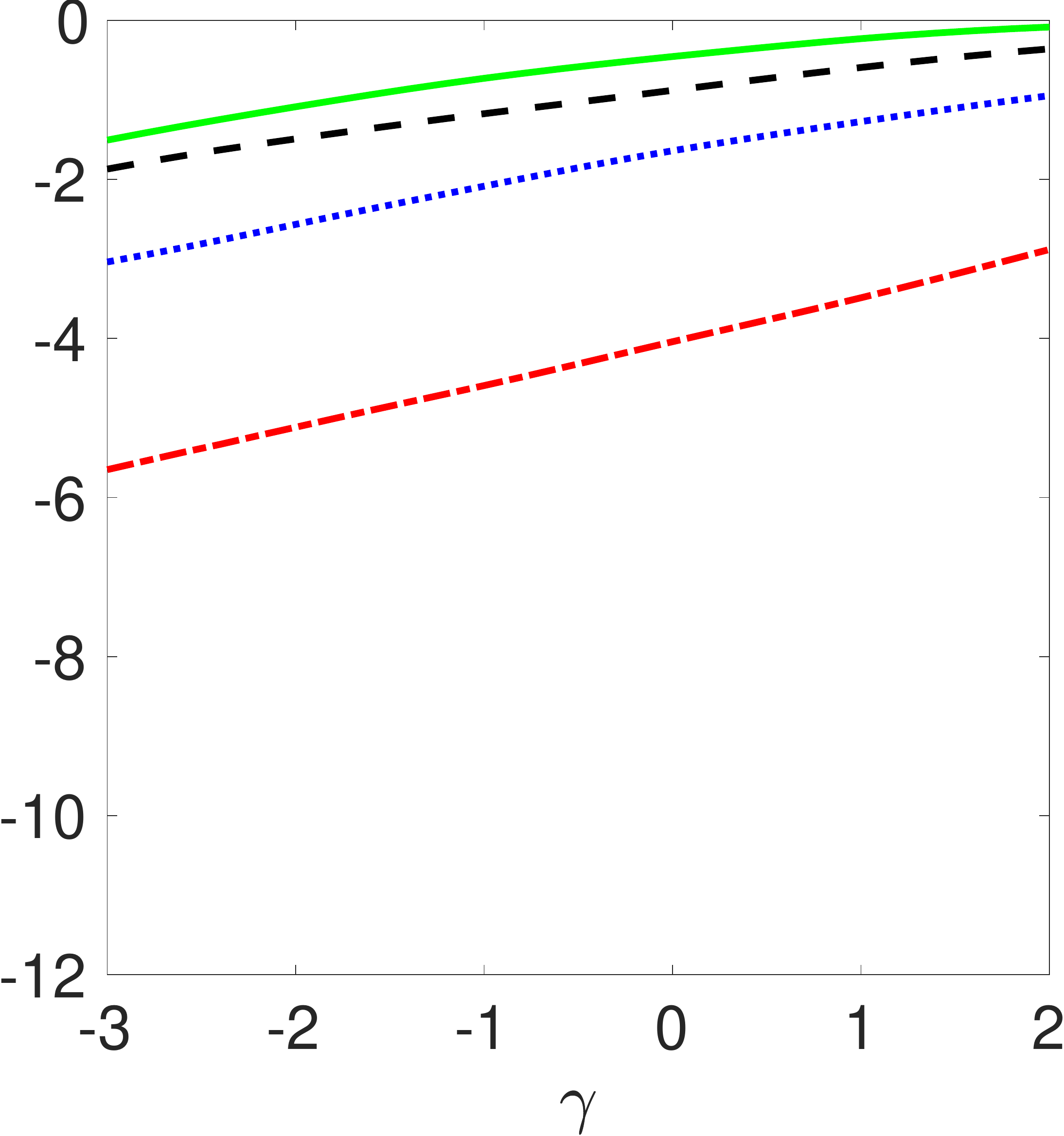}
\includegraphics[width=.3\linewidth, height=0.25\linewidth
]{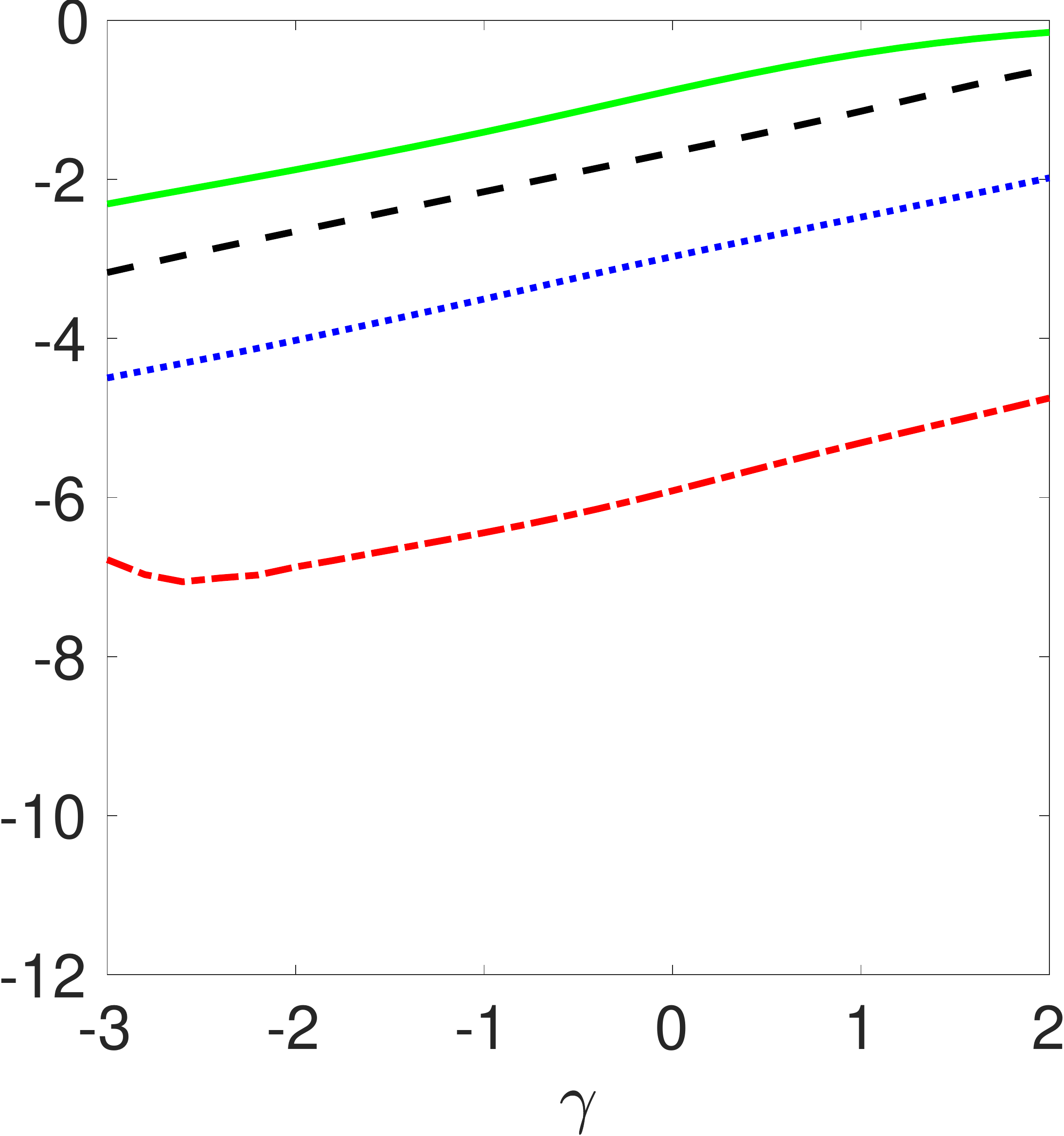}
\includegraphics[width=.3\linewidth, height=0.25\linewidth
]{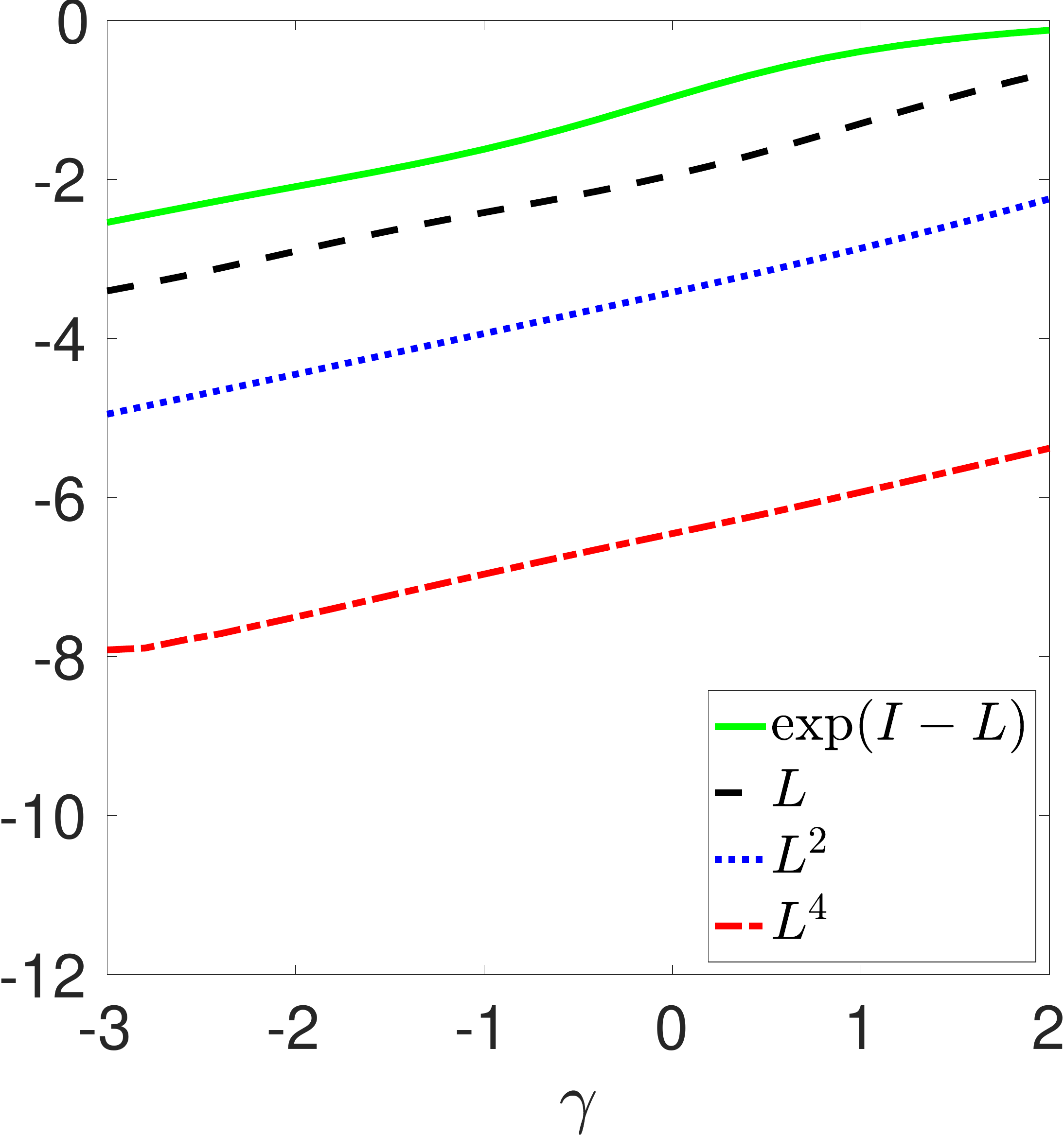}\\
\begin{minipage}{.3\linewidth} \centering \small \hspace{2mm} Minesota  \end{minipage}
\begin{minipage}{.3\linewidth} \centering \small \hspace{2mm}  Minesota  \end{minipage}
\begin{minipage}{.3\linewidth} \centering \small \hspace{2mm}   Minesota  \end{minipage}\\
\includegraphics[width=.3\linewidth, height=0.25\linewidth
]{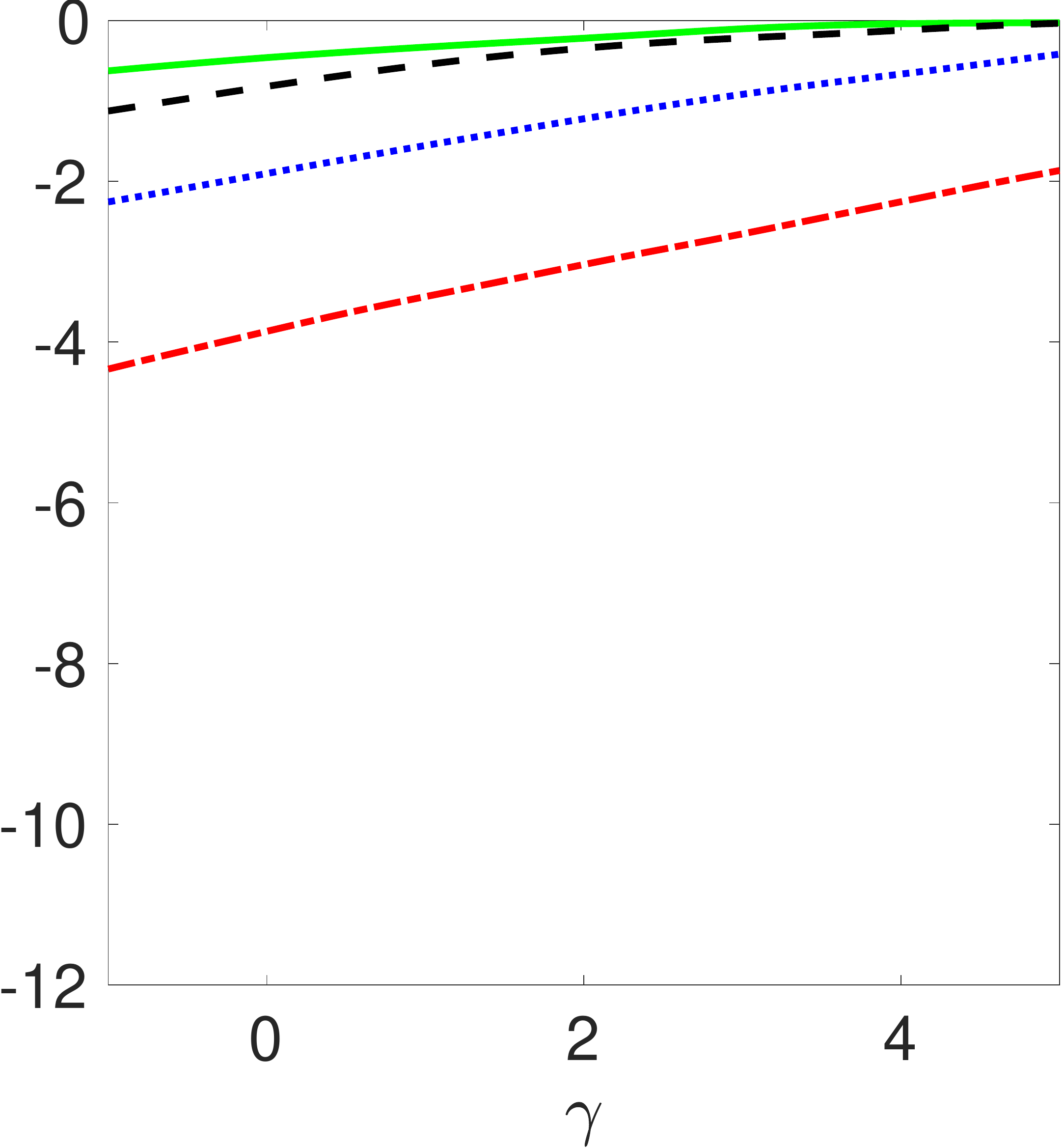}
\includegraphics[width=.3\linewidth, height=0.25\linewidth
]{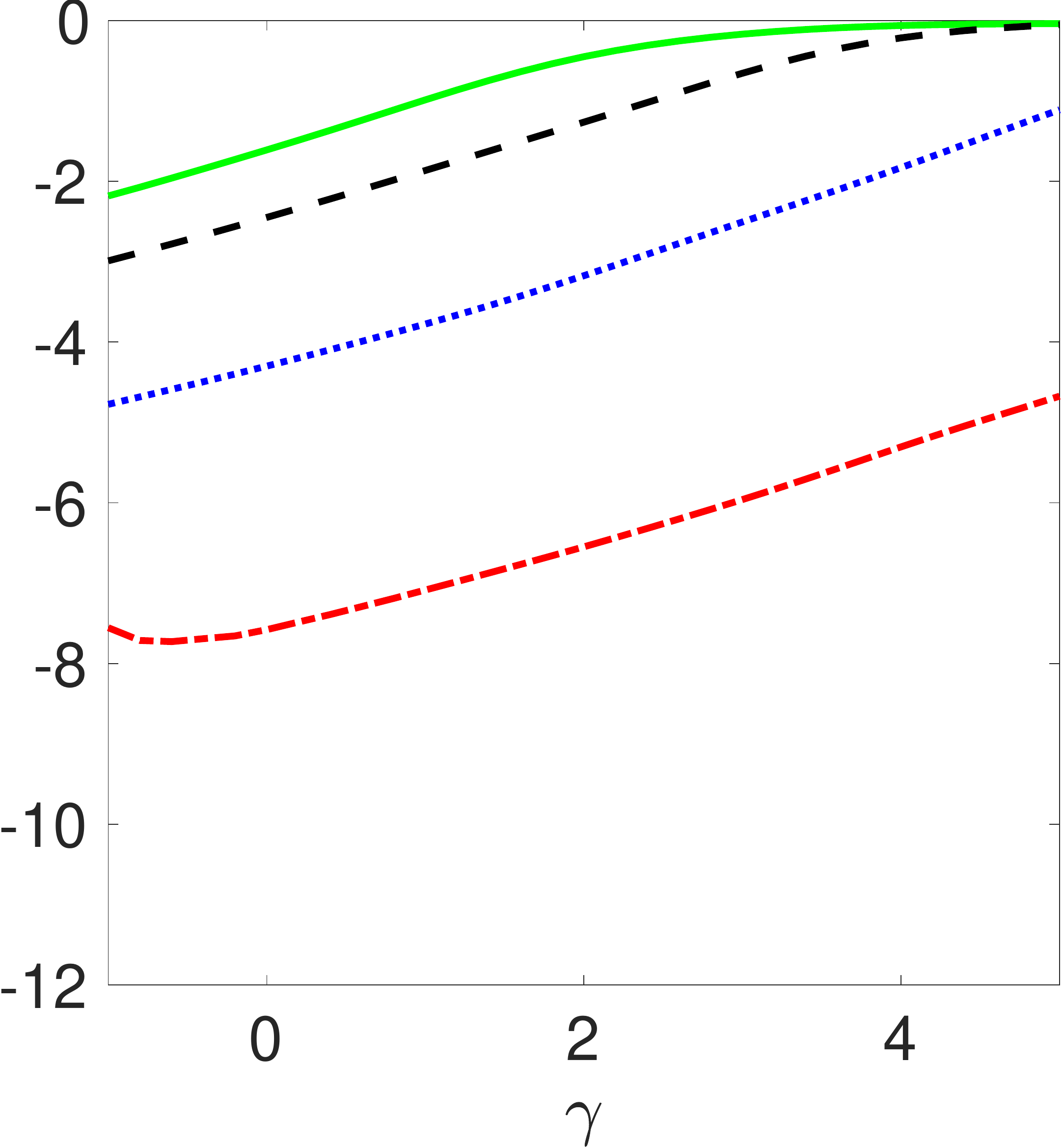}
\includegraphics[width=.3\linewidth, height=0.25\linewidth
]{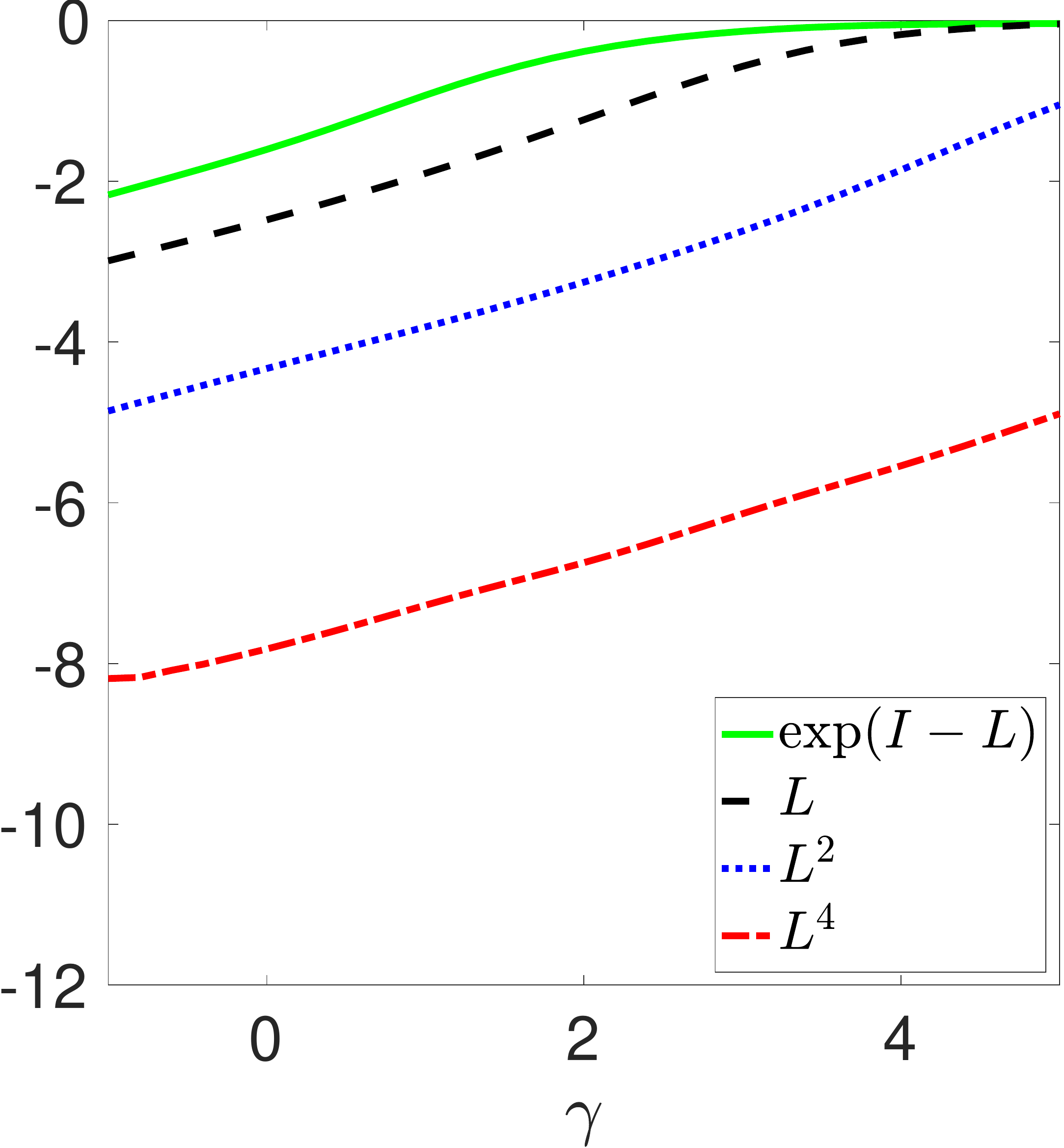}\\
\caption{\footnotesize Mean reconstruction errors $\log_{10}\|x-x^*\|$ of  10 bandlimited signals as a function of $\gamma$  for the community graph $C_3$, Bunny graph and Minnesota graph. We take $M=200$ noise free space-time samples for three regimes. The black dash curve represents results with $g(L)=L$. The blue dotted curves indicate the results with $g(L)=L^2$. The red dashed-dot curves indicate the results with $g(L)=L^4$.  The green solid curve indicate the results with $g(L)=\exp(I-L)$. The first, second and third columns show the results for regime 1, regime 2 and regime 3 using the optimal sampling distribution respectively. We refer to the reconstruction errors of $\|x-\alpha^*\|$ and $\|\beta^*\|$ in log scale to {the supplementary information (SI)} section \ref{addnum} (see their definitions in Theorem \ref{thm:main_regular}). Compared to the static case (Figure \ref{fig:rec_nonoise_static} in SI), our regime 1 achieved  comparable (cf. community graph) or slightly better (cf. bunny and Minesota graphs)  performance, and we observed significantly improvement in regime 2 and regime 3. 
}\label{fig:rec_nonoise}
\vspace{-0.3in}
\end{figure}
 
The mean reconstruction errors are presented in Fig.~\ref{fig:rec_nonoise}  when the measurements are noise-free. 
In these experiments, the signals are reconstructed by setting $g(L) = L, L^2, L^4, \exp(I-L)$.  
Let's recall that the ratio $g(\sigma_{k})/g(\sigma_{k+1})$ decreases as the power of $L$ increases. We observe that all reconstruction errors, $\|x^* - x\|_2$, $\|{\alpha}^* - x\|_2$ and $ \|{{\beta}^*}\|_2$ decrease when the ratio $g(\sigma_{k})/g(\sigma_{k+1})$ {decreases} in the range of small $\gamma$ (cf.~plots in SI section~\ref{addnum}), as predicted by the upper bounds  in Theorem~\ref{thm:main_regular}.  Additionally, Compared to the static case (Figure \ref{fig:rec_nonoise_static} in SI), our regime 1 achieved  comparable (even better)  performance, and we observed significantly improvement in regime 2 and 3.

We present the mean reconstruction errors when the   measurements are noisy 
in Fig.~\ref{fig:rec_noisy}. In these experiments, we reconstruct the signals using $g(L) = L^4$. As expected the best regularisation parameter $\gamma$ increases with the noise level. Compared with  static sampling setting (cf. Fig~\ref{fig:rec_static} in SI section \ref{addnum}), our estimates in regime 2 and regime 3 achieve comparable accuracy. The estimates obtained in Regime 1 did not perform as well as other regimes, as indicated by our analysis: one needs to have more space-time samples due to the large spectral graph weighted coherence.

\begin{figure}[!t]
\centering
\begin{minipage}{.3\linewidth} \centering \small \hspace{2mm} Regime 1  \end{minipage}
\begin{minipage}{.3\linewidth} \centering \small \hspace{2mm}  Regime 2  \end{minipage}
\begin{minipage}{.3\linewidth} \centering \small \hspace{2mm} Regime3  \end{minipage}\\
\begin{minipage}{.3\linewidth} \centering \small \hspace{2mm} Community  \end{minipage}
\begin{minipage}{.3\linewidth} \centering \small \hspace{2mm}  Community \end{minipage}
\begin{minipage}{.3\linewidth} \centering \small \hspace{2mm} Community  \end{minipage}\\
\includegraphics[width=.3\linewidth,height=0.25\linewidth]{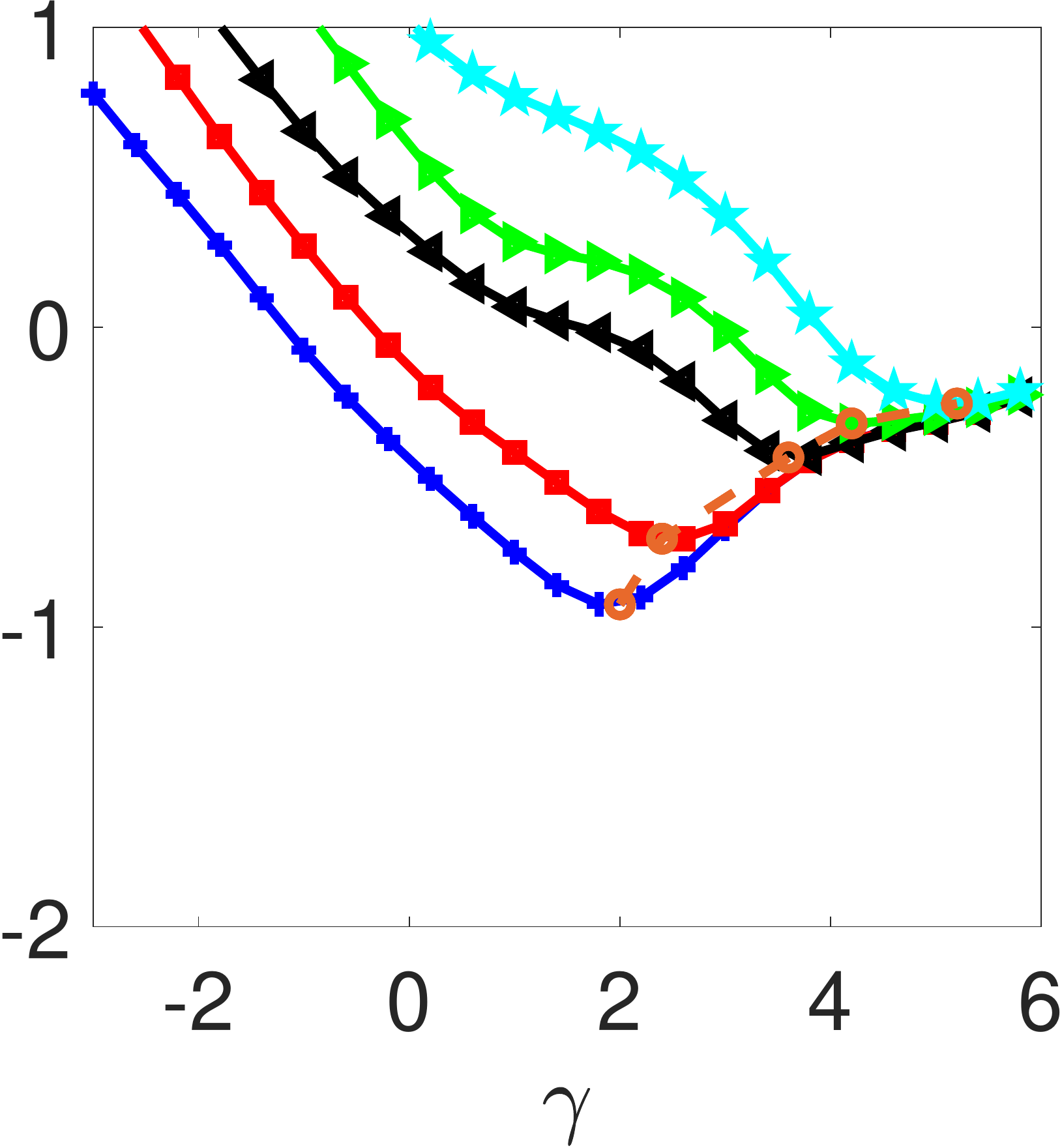}
\includegraphics[width=.3\linewidth,height=0.25\linewidth]{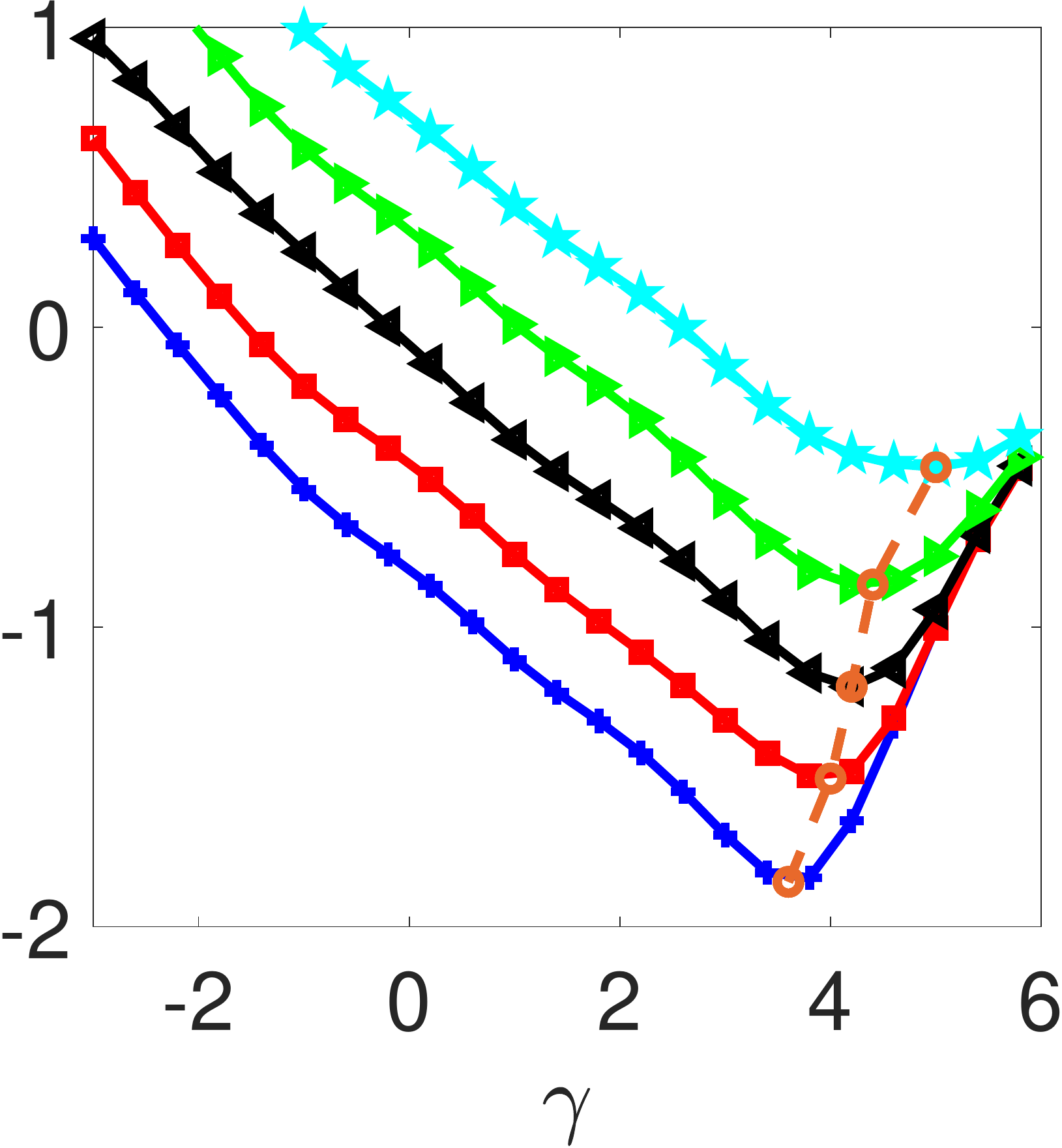}
\includegraphics[width=.3\linewidth,height=0.25\linewidth]{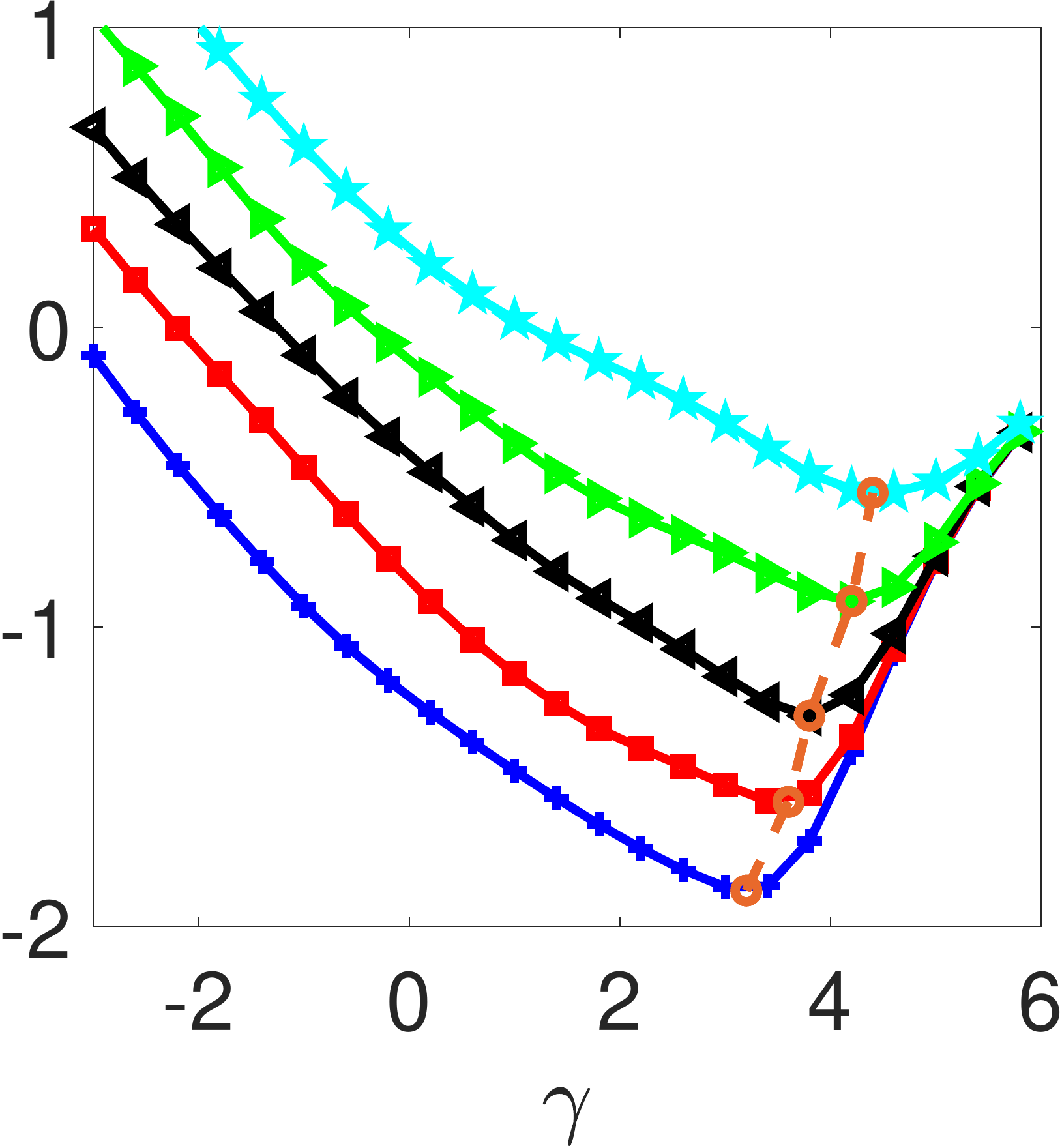}\\
\begin{minipage}{.3\linewidth} \centering \small \hspace{2mm} Bunny  \end{minipage}
\begin{minipage}{.3\linewidth} \centering \small \hspace{2mm}  Bunny \end{minipage}
\begin{minipage}{.3\linewidth} \centering \small \hspace{2mm} Bunny  \end{minipage}\\
\includegraphics[width=.3\linewidth,height=0.25\linewidth]{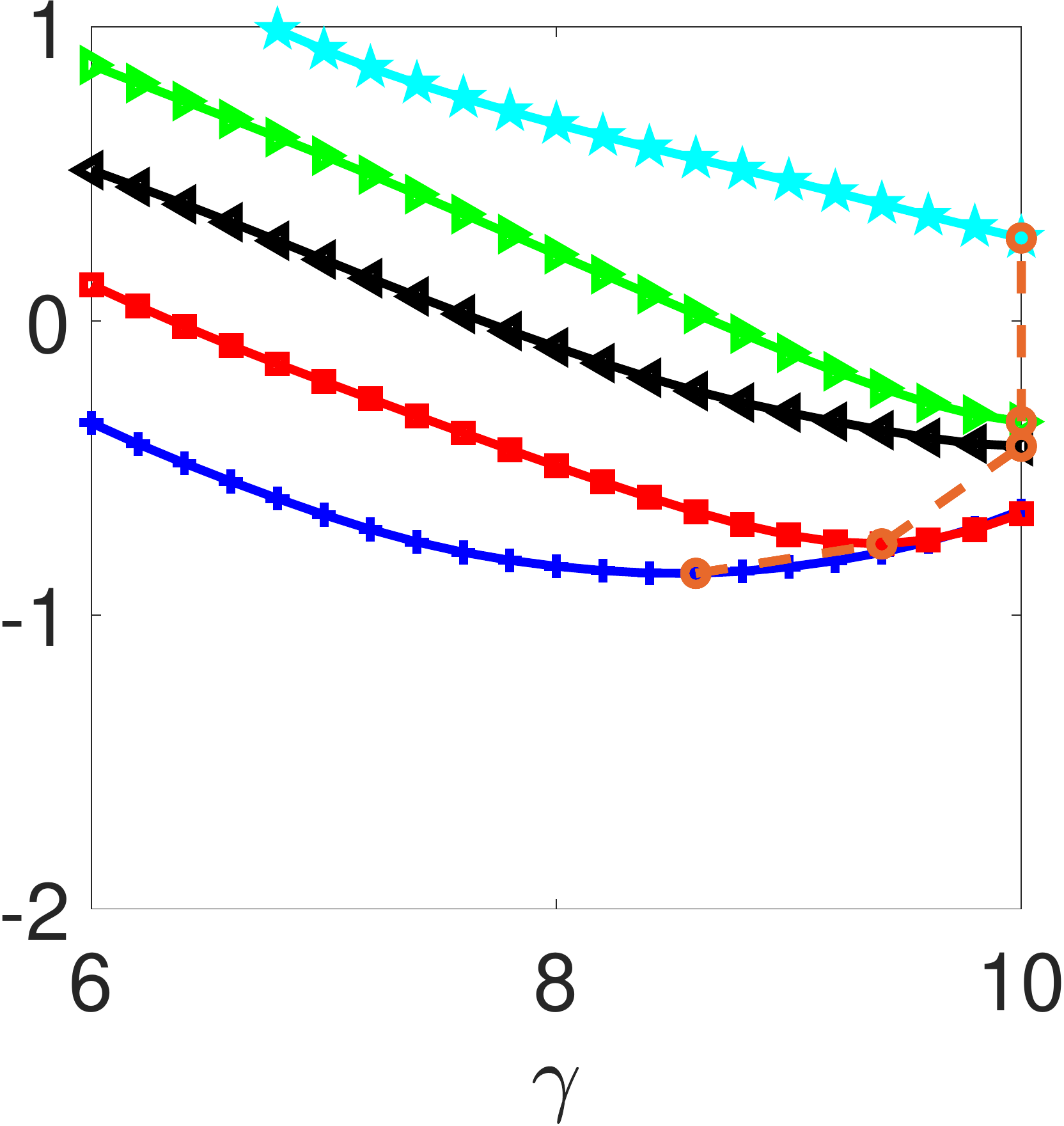}
\includegraphics[width=.3\linewidth,height=0.25\linewidth]{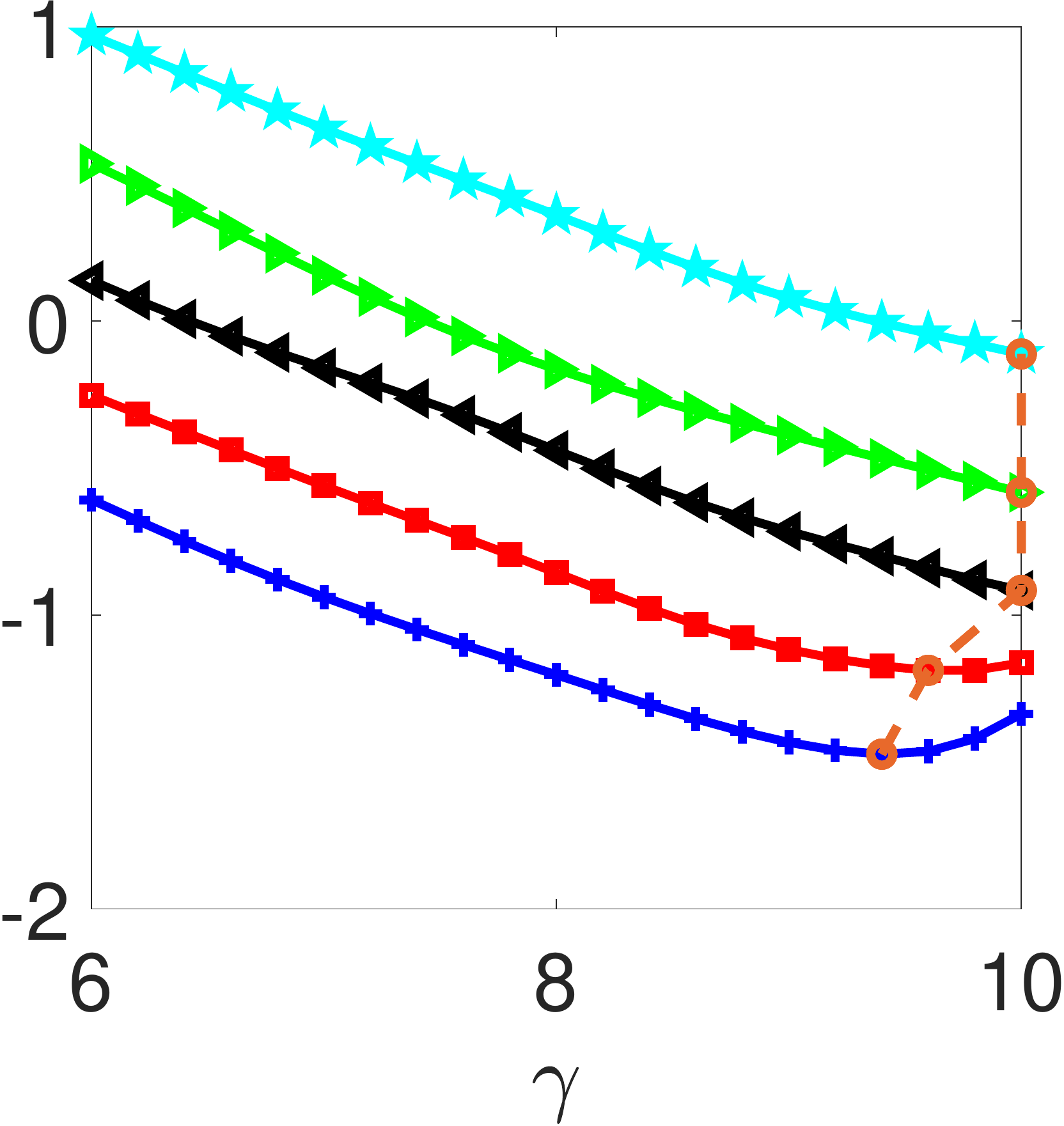}
\includegraphics[width=.3\linewidth,height=0.25\linewidth]{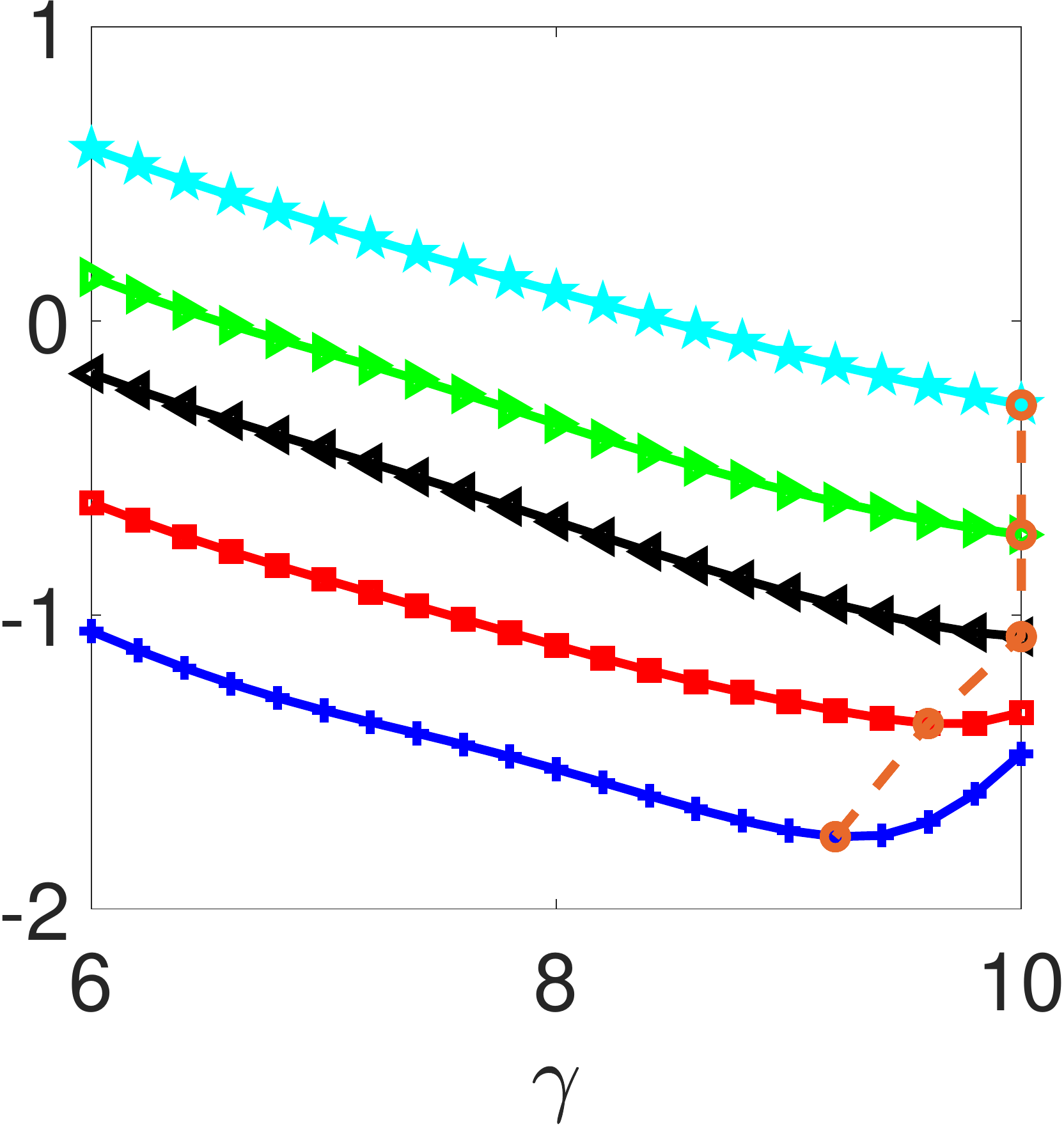}\\
\begin{minipage}{.3\linewidth} \centering \small \hspace{2mm} Minesota  \end{minipage}
\begin{minipage}{.3\linewidth} \centering \small \hspace{2mm}  Minesota \end{minipage}
\begin{minipage}{.3\linewidth} \centering \small \hspace{2mm} Minesota  \end{minipage}\\
\includegraphics[width=.3\linewidth,height=0.25\linewidth]{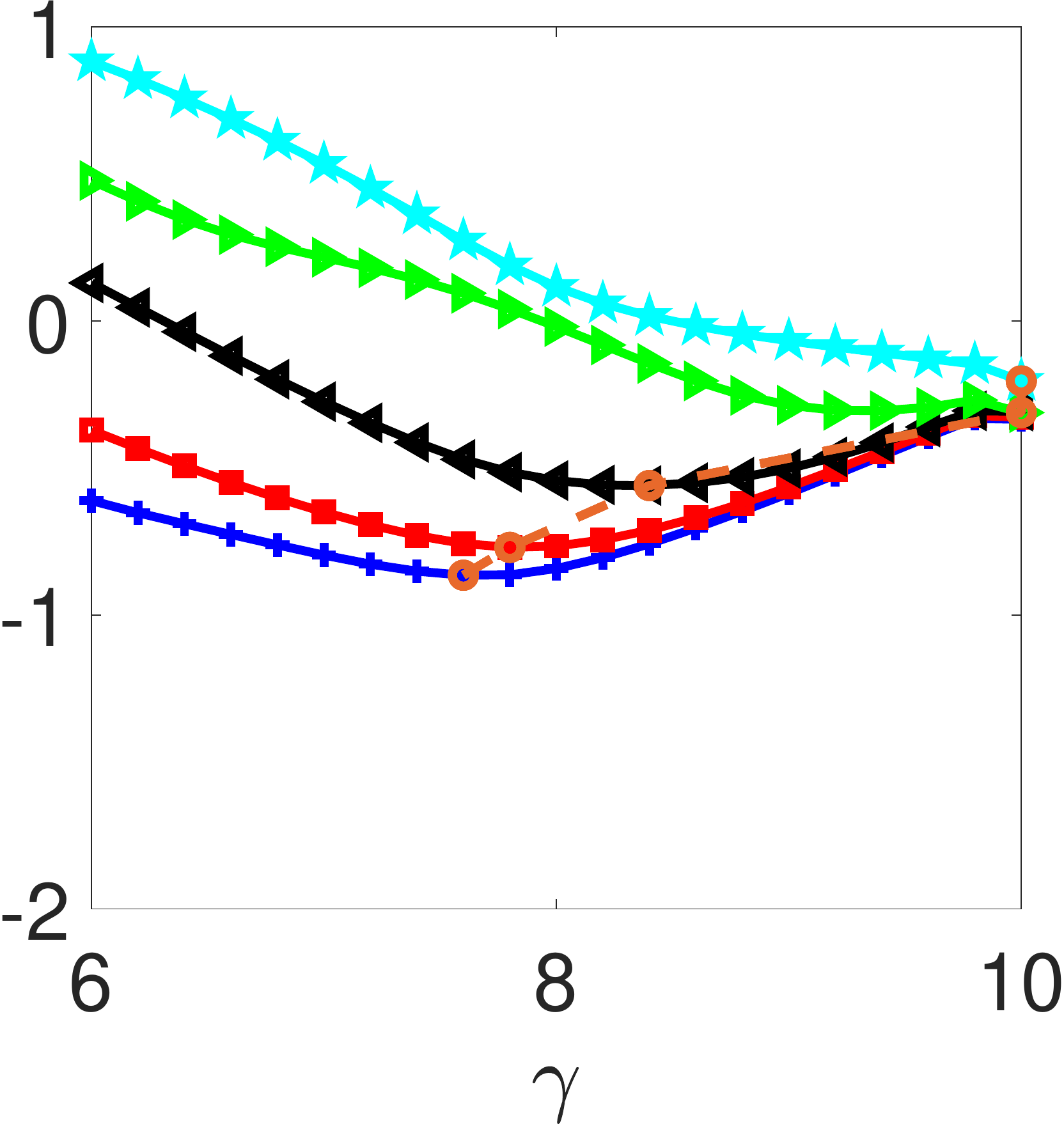}
\includegraphics[width=.3\linewidth,height=0.25\linewidth]{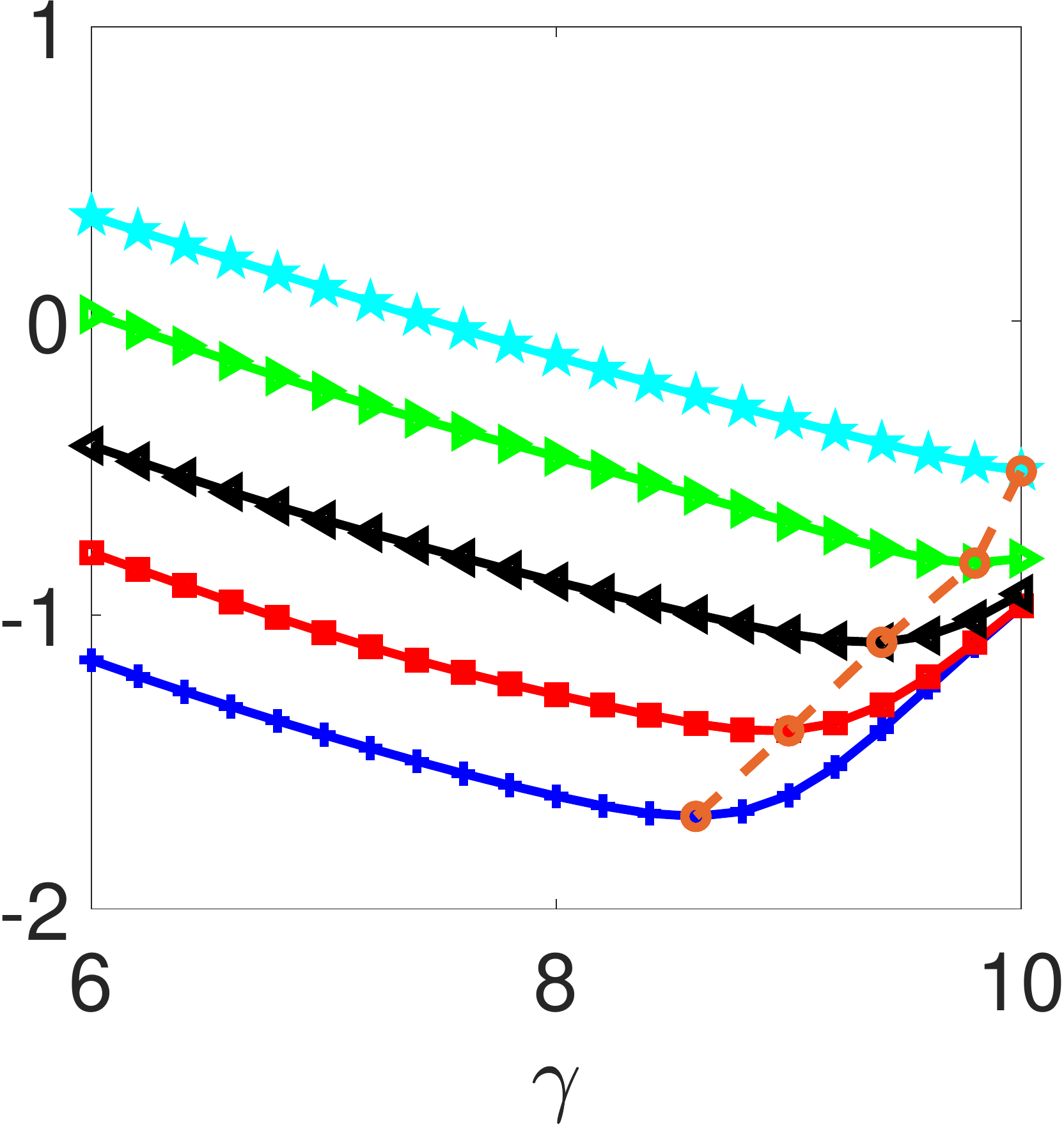}
\includegraphics[width=.3\linewidth,height=0.25\linewidth]{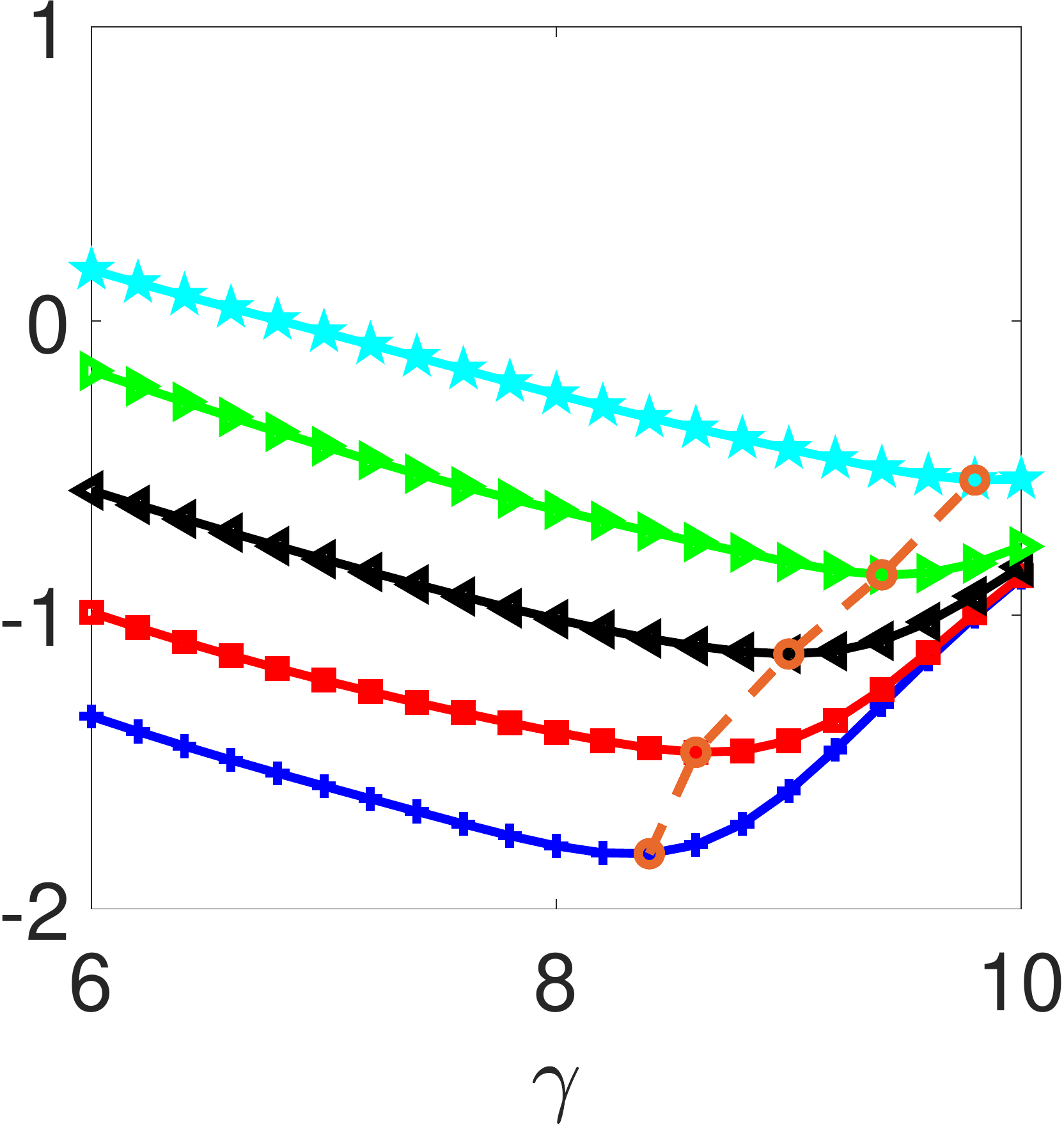}\\
\caption{\footnotesize Mean reconstruction error $\log_{10}\|x^* - x\|_2$ of $10$-bandlimited signals as a function of $\gamma$ with $g(L) = L^4$. The simulations are performed in presence of noise. The standard deviation of the noise is $0.0015$ (blue plus), $0.0037$ (red square), $0.0088$ (black left-pointing triangle), $0.0210$ (green right-pointing triangle), $0.0500$ (cyan pentagram). The best reconstruction errors are indicated by orange circles. The first, second and third columns show the results for a community graph of type $C_5$, and the Minnesota graph, respectively.  The reconstruction accuracy achieved in regime 2 and regime 3 are satisfactory, compared with the static sampling results  summarized in Fig.~\ref{fig:rec_static} in SI, since we only use 10 spatial samples on average.  For bunny graph, we may need to choose bigger $\gamma$, and we expect that the optimal error will also be as comparable with the static case in  Fig.~\ref{fig:rec_static}.  }\label{fig:rec_noisy}
\vspace{-0.2in}
\end{figure}

\subsection{Illustration: space-time sampling of a series  of blurring real images}

 \begin{figure}[h]
\vspace{-0.1in}
\centering
\subfloat{\includegraphics[width=.6\linewidth]{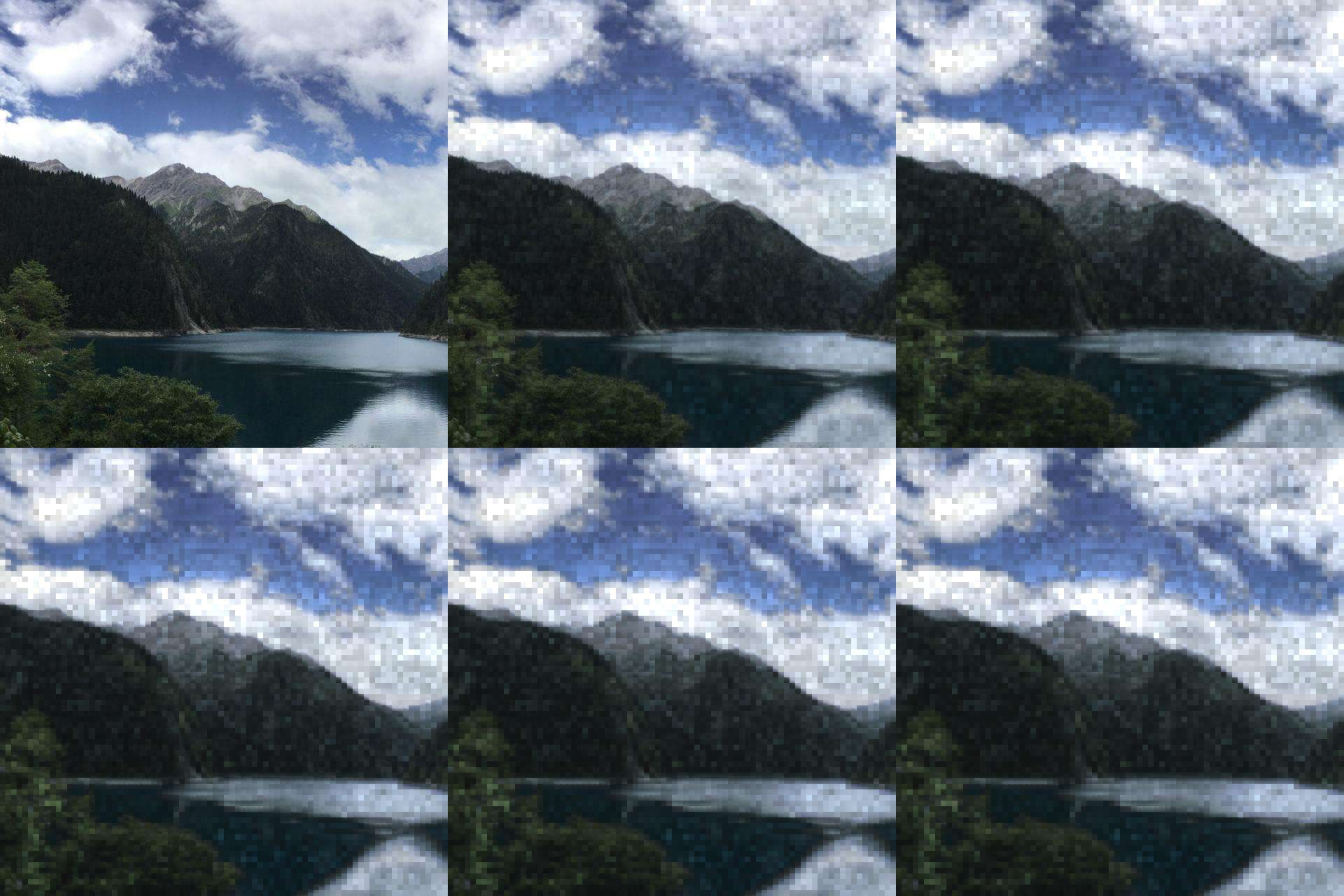}} 
\caption{\footnotesize The evolved images of the Jiuzhaigou over the first 6 time instances.
} \label{fig:evolved_images}
\vspace{-0.2in}
\end{figure}

The experimental section is finished with an example of de-blurring an image from the space-time samples. We use the photo of  Jiuzhaigou park in Sichuan, China.  The original photo is (approximately) bandlimited in a nearest neighborhood graph. We consider the heat diffusion process over the graph so we have a series of blurring images. We illustrate the dynamics in  Fig.~\ref{fig:evolved_images}.

This RGB image contains $640\times 640$ pixels. We divide this image into patches of $8 \times 8$ pixels, thus obtaining $80\times 80$ patches of $64$ pixels per RGB channel.  Each patch is denoted by $\mathbf{q}_{i,j,\ell} \in\mathbb{R} ^{64}$
with $i \in\{1,\cdots,80\}$, $j \in\{ 1,\cdots,80\}$, and $\ell \in \{1,2,3\}$. The pair of indices $(i, j)$ encodes the spatial location of the patch and $\ell$ encodes the color channel. Using these patches, we obtain the following matrix
\[X:=\begin{bmatrix}
\mathbf{q}_{1,1,1} & \mathbf{q}_{2,1,1} &\cdots&\mathbf{q}_{2,1,1} &\cdots &\mathbf{q}_{80,80,1}\\ 
\mathbf{q}_{1,1,2} & \mathbf{q}_{2,1,2} &\cdots&\mathbf{q}_{2,1,2} &\cdots &\mathbf{q}_{80,80,2}\\
\mathbf{q}_{1,1,3} & \mathbf{q}_{2,1,3} &\cdots&\mathbf{q}_{2,1,3} &\cdots &\mathbf{q}_{80,80,3}
\end{bmatrix}\in\mathbb{R}^{192\times n}, n=6400.
\]
 Each column of $X$ represents a color patch of the original image at a given position.

Motivated by the simulations in \cite{puy2018random},  we build a graph by  modelling the similarity between the columns of $X$. Let $\mathbf{x}_i:=X(:,i) \in\mathbb{R}^{192}$. 
For each $i$, we search for the $20$ nearest
neighbours of $\mathbf{x}_i$ among all other columns of $X$. Let $\mathbf{x}_j \in\mathbb{R}^{192}$ be a vector connected to $\mathbf{x}_i$. The weight $W_{ij}$ of the weighted adjacency matrix $W \in\mathbb{R}^{n\times n}$ satisfies
\begin{equation*} 
W_{ij}:=\exp\left(-\frac{\|\mathbf{x}_i-\mathbf{x}_j\|_2^2}{2\sigma^2} \right),
\end{equation*}
where $\sigma> 0$ is the standard deviation of all Euclidean distances between pairs of connected columns/patches. We then symmetrize  $W$. Each column of $X$ is thus connected to at least $20$ other columns after symmetrization. The construction of the graph is finished by computing the normalized Laplacian $L \in\mathbb{R}^{n\times n}$ associated to $W$. Set the evolution operator $A=\exp(-L)$. We let $A$ act on $X^\top$ iteratively with $T=6$ and we treat each column of $X^\top$ a bandlimited graph signal with bandwidth 300.  For this experiment,  we take $M=7200$ measurements using the uniform and optimal distributions 
for these three sampling regimes and denote the samples as matrix $Y$ which means the samples at each time instance is less than $20\%$ for regime 1 and 2.

The sampled images are presented in Fig.~\ref{fig:samples}, where all non-sampled pixels are in black. 
The image is reconstructed by solving \eqref{eqn:obj2} for each column of Y with  $\gamma =  10^{-5}$ and $g(L) = L^4$. Fig.~\ref{fig:real_image}  shows that a very accurate reconstruction of the original image is obtained.
The SNRs between the original and the reconstructed images are presented in Table \ref{tab:SNRs}.  The optimal sampling distribution allows a better image reconstruction quality than the results of the uniform sampling distribution. Additionally, the reconstructions  from sampling regimes 2 and 3 have no big difference but both are better than the results from sampling regime 1. The reconstruction results from space-time samples show that incorporating the temporal dynamics is very effective in reducing the spatial sampling density, and one achieved much better performance than the static setting with 4 times  higher spatial sampling densities.

\begin{figure}[h]
\vspace{-0.08in}
\centering
\subfloat{\includegraphics[width=.3\linewidth]{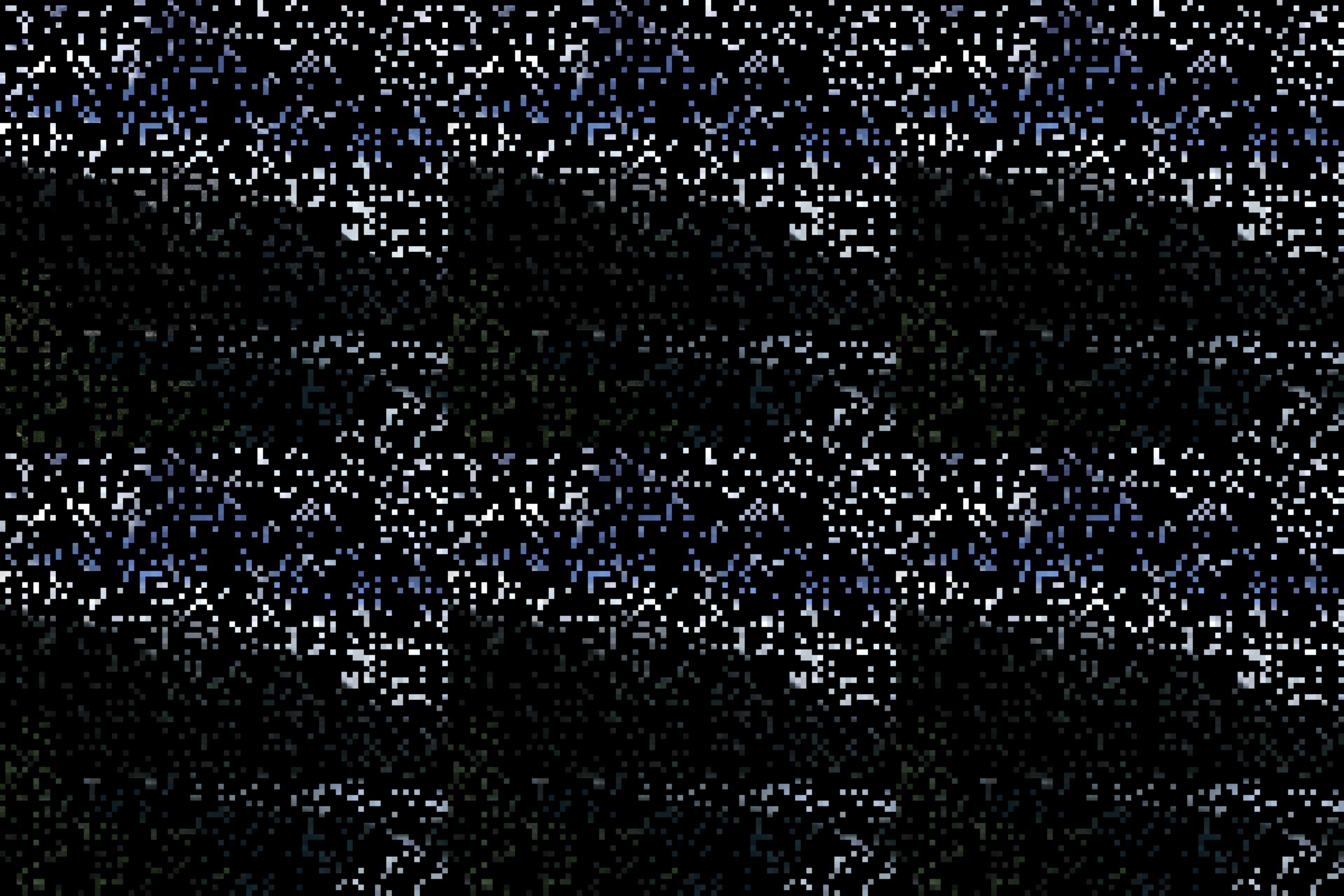}}\hspace{0.005in}
\subfloat{\includegraphics[width=.3\linewidth]{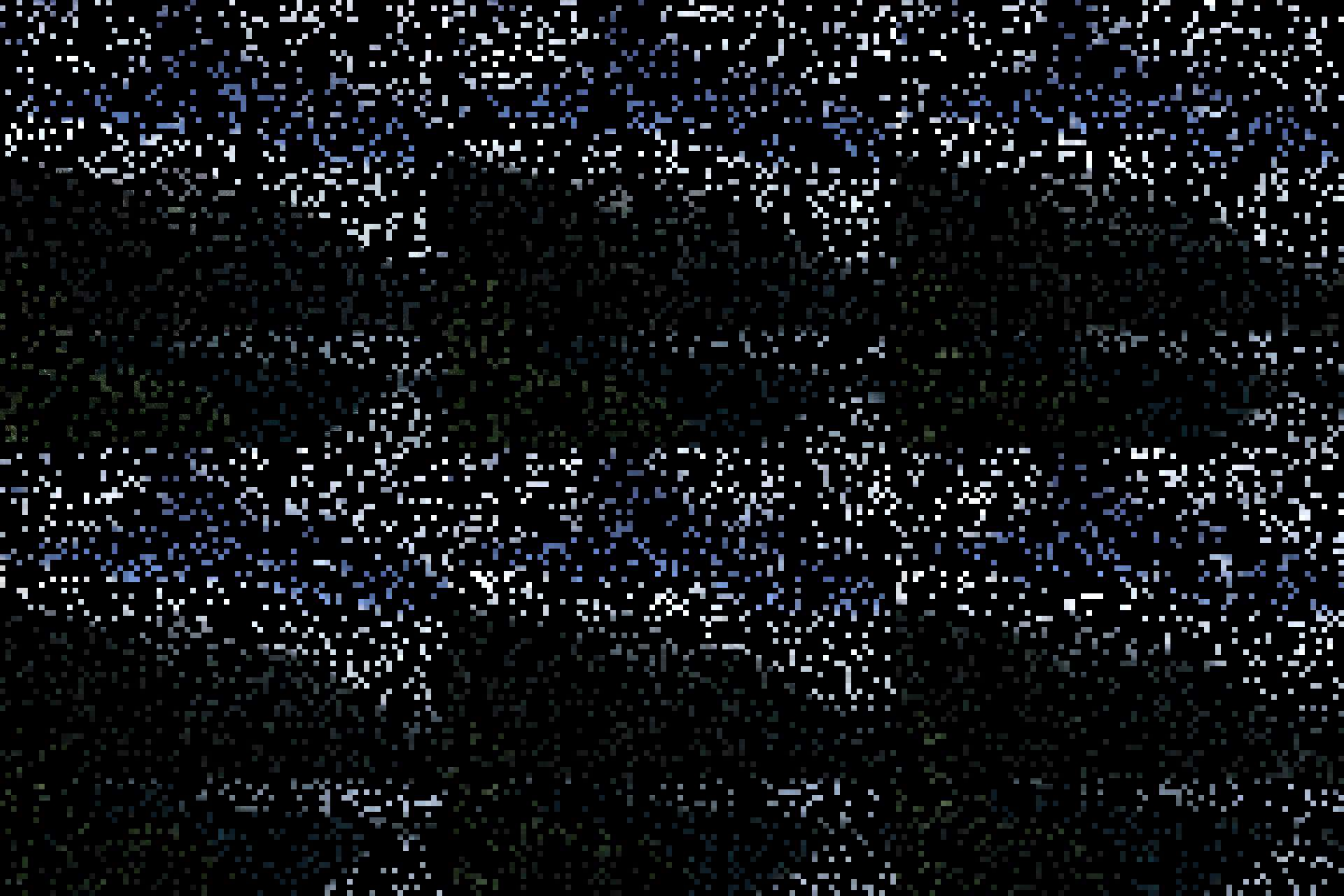}}\hspace{0.005in}
\subfloat{\includegraphics[width=.3\linewidth]{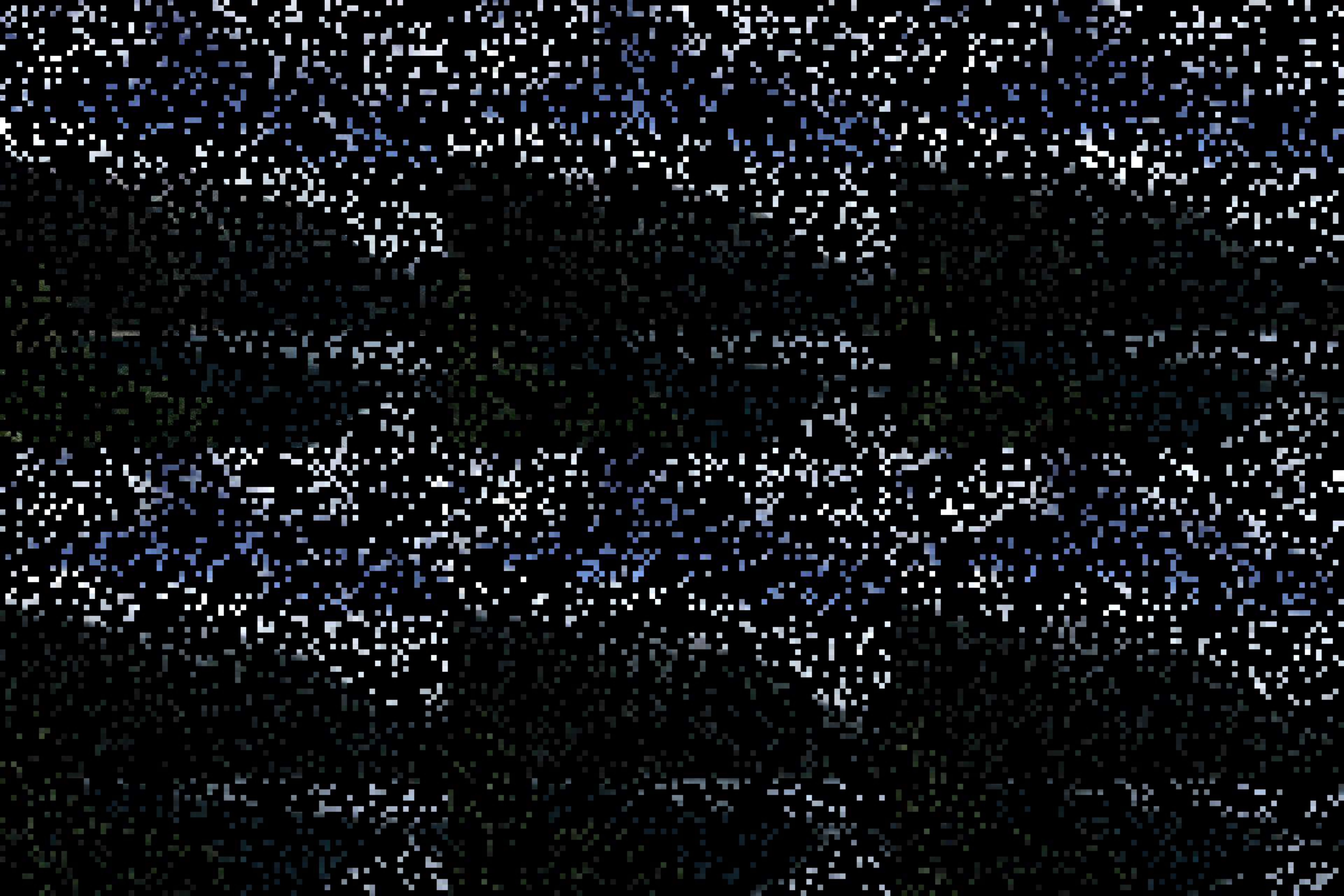}}\\
\vspace{-0.1in}
\subfloat[Regime 1]{\includegraphics[width=.3\linewidth]{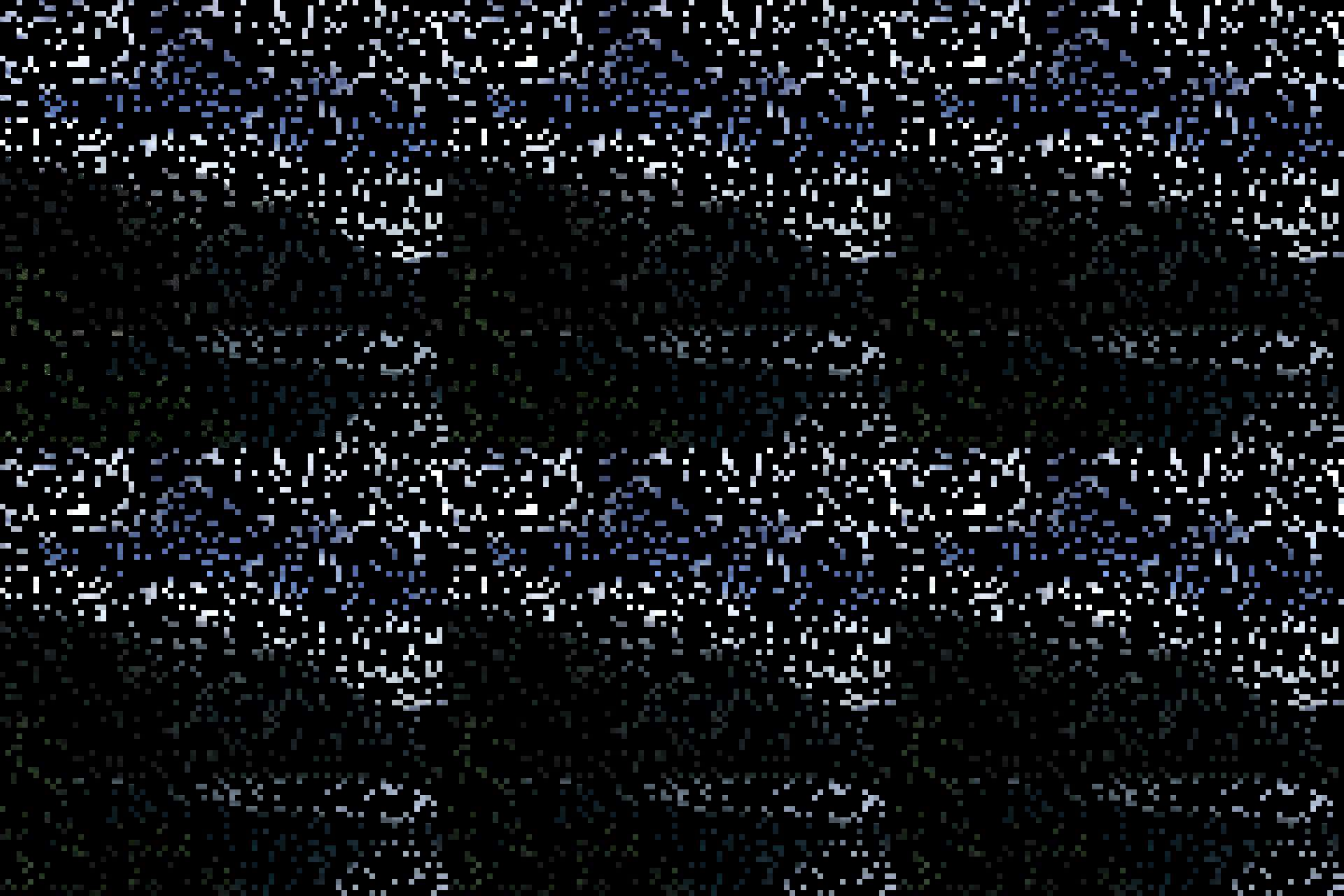}}\hspace{0.005in}
\subfloat[Regime 2]{\includegraphics[width=.3\linewidth]{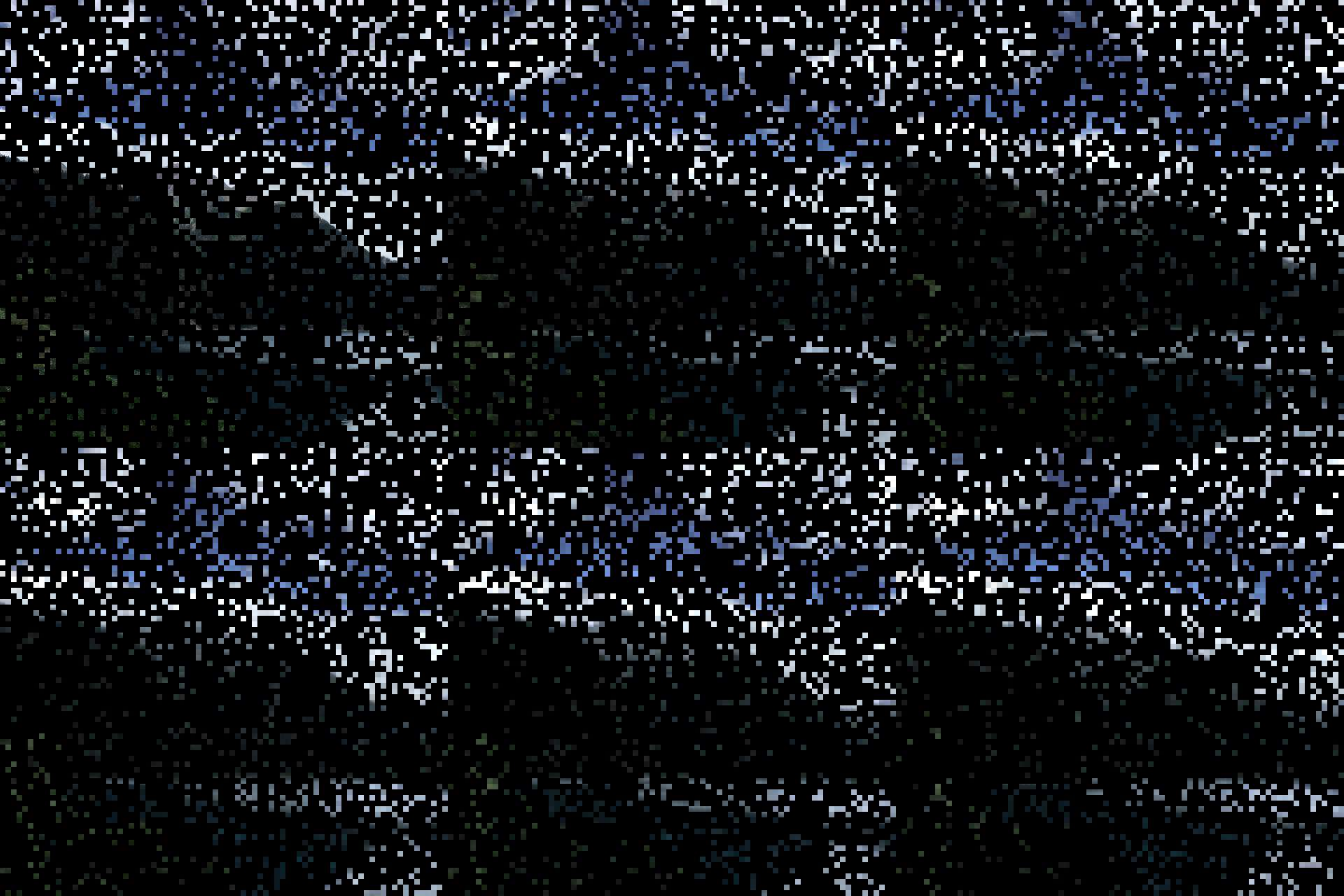}}\hspace{0.005in}
\subfloat[Regime 3]{\includegraphics[width=.3\linewidth]{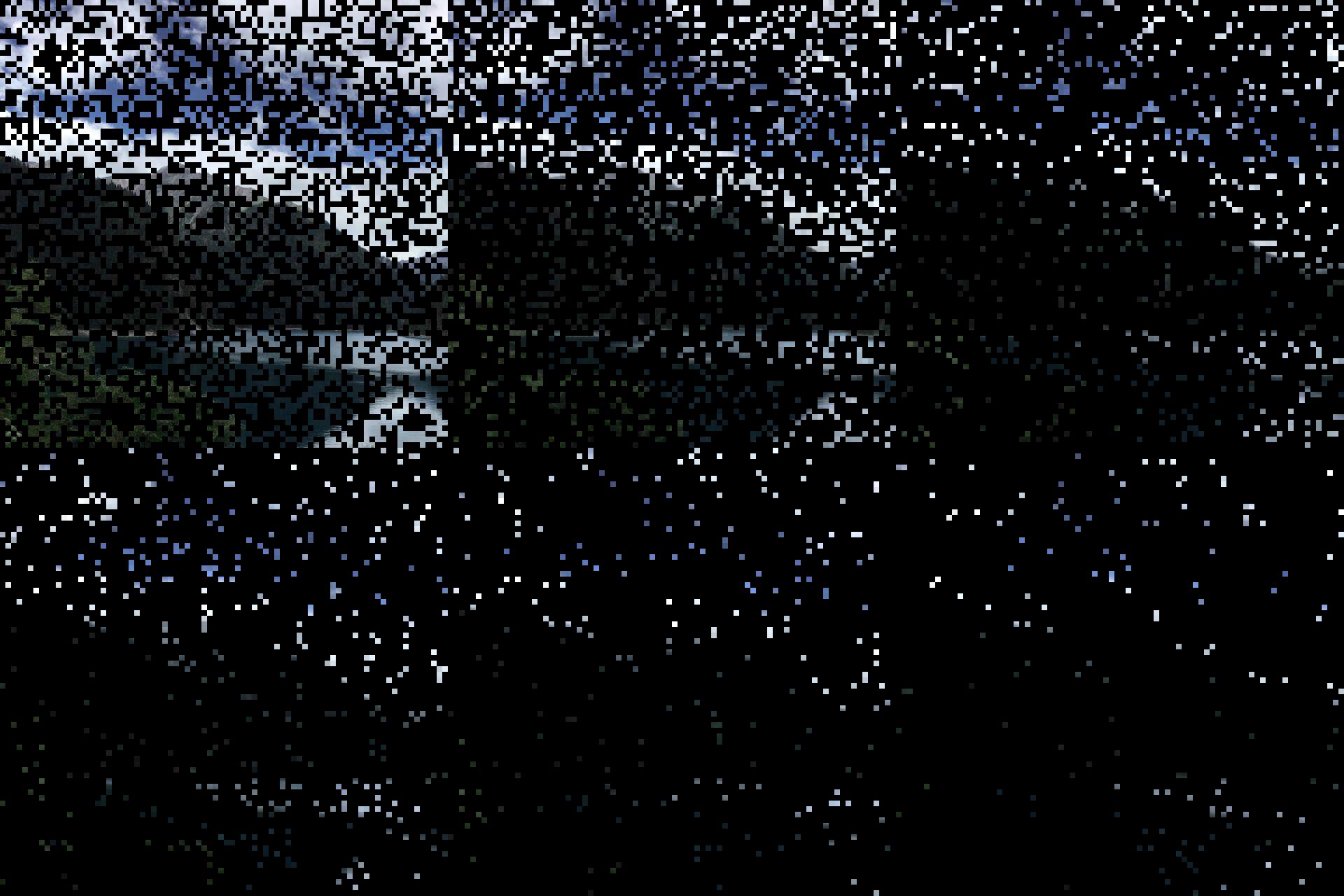}}
\caption{\footnotesize 
Samples  of the  real image by using these three sampling regimes: 
\textbf{Top row:} uniform sampling. \textbf{Bottom row:} optimal sampling.
} \label{fig:samples}
\vspace{-0.1in}
\end{figure}

\begin{figure}[h]
\centering
\subfloat{\includegraphics[width=.2\linewidth]{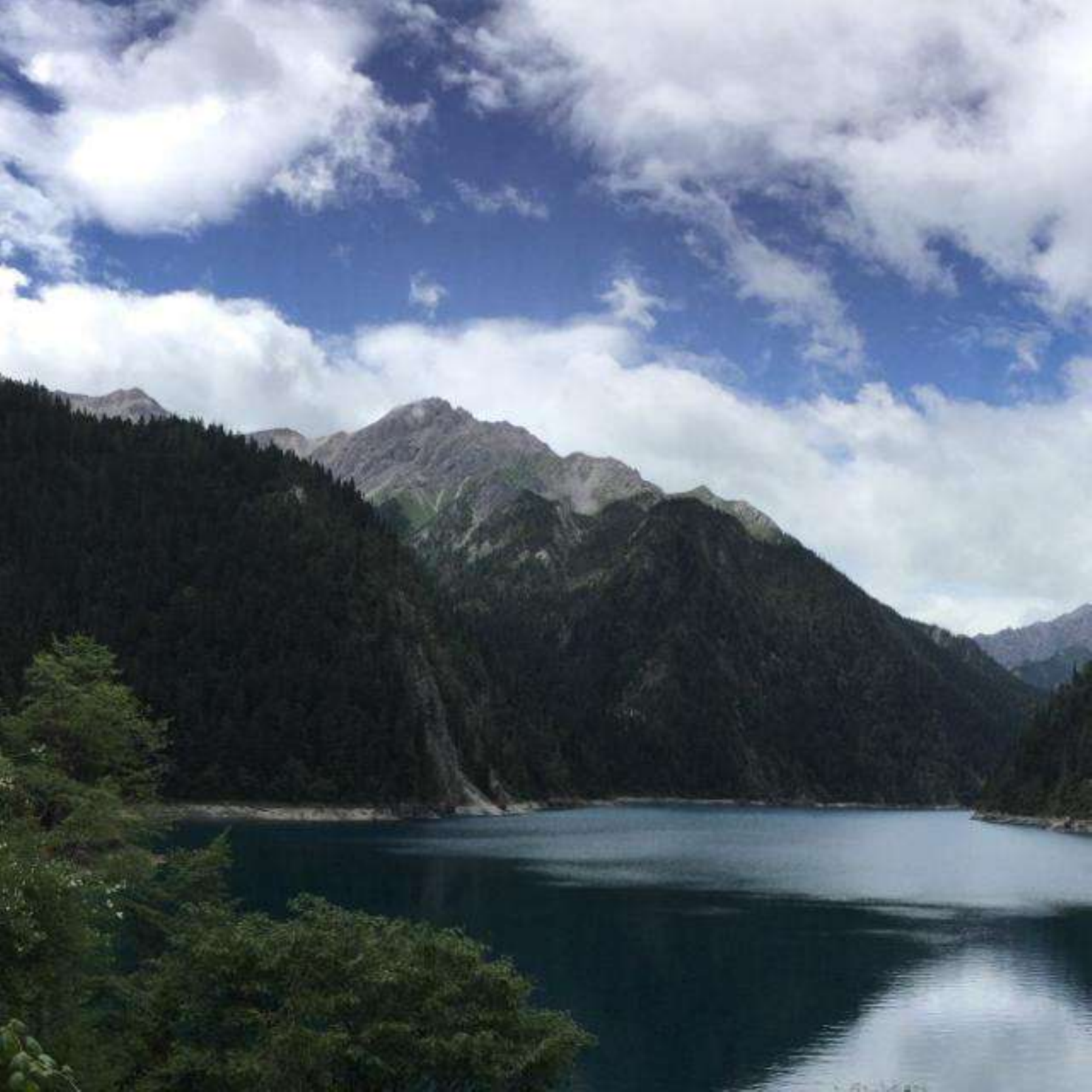}} \hspace{0.005in}
\subfloat{\includegraphics[width=.2\linewidth]{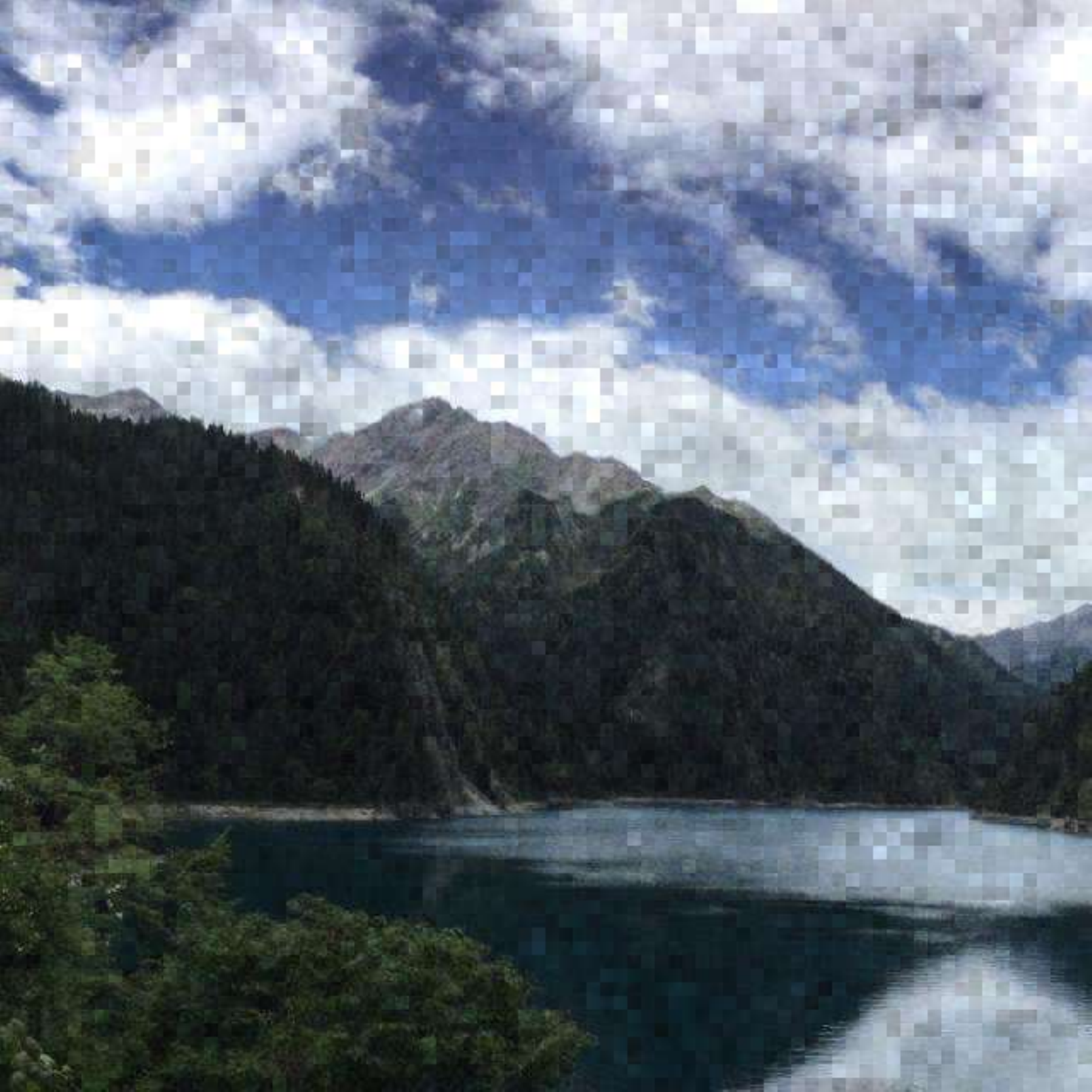}}\hspace{0.005in}
\subfloat{\includegraphics[width=.2\linewidth]{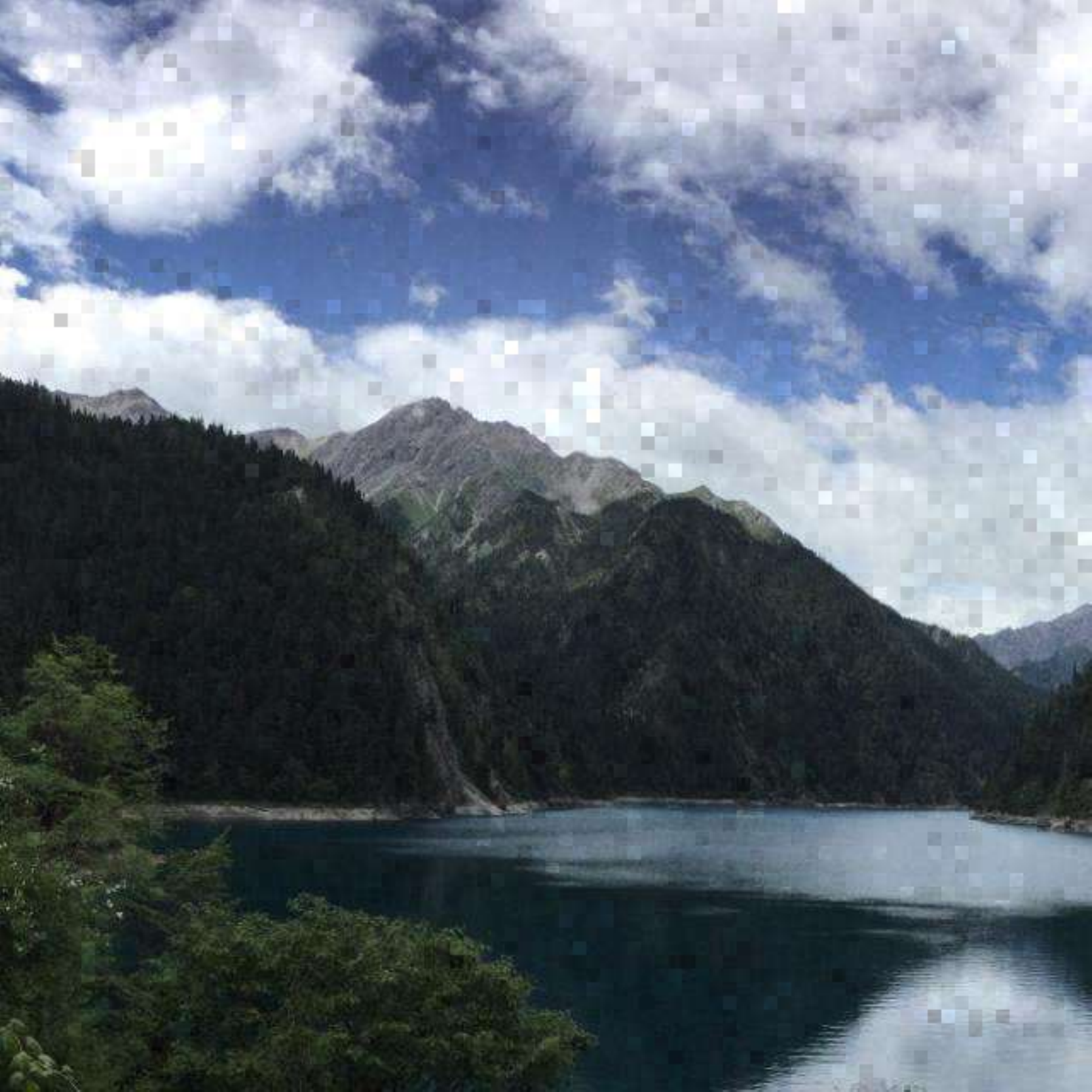}}\hspace{0.005in}
\subfloat{\includegraphics[width=.2\linewidth]{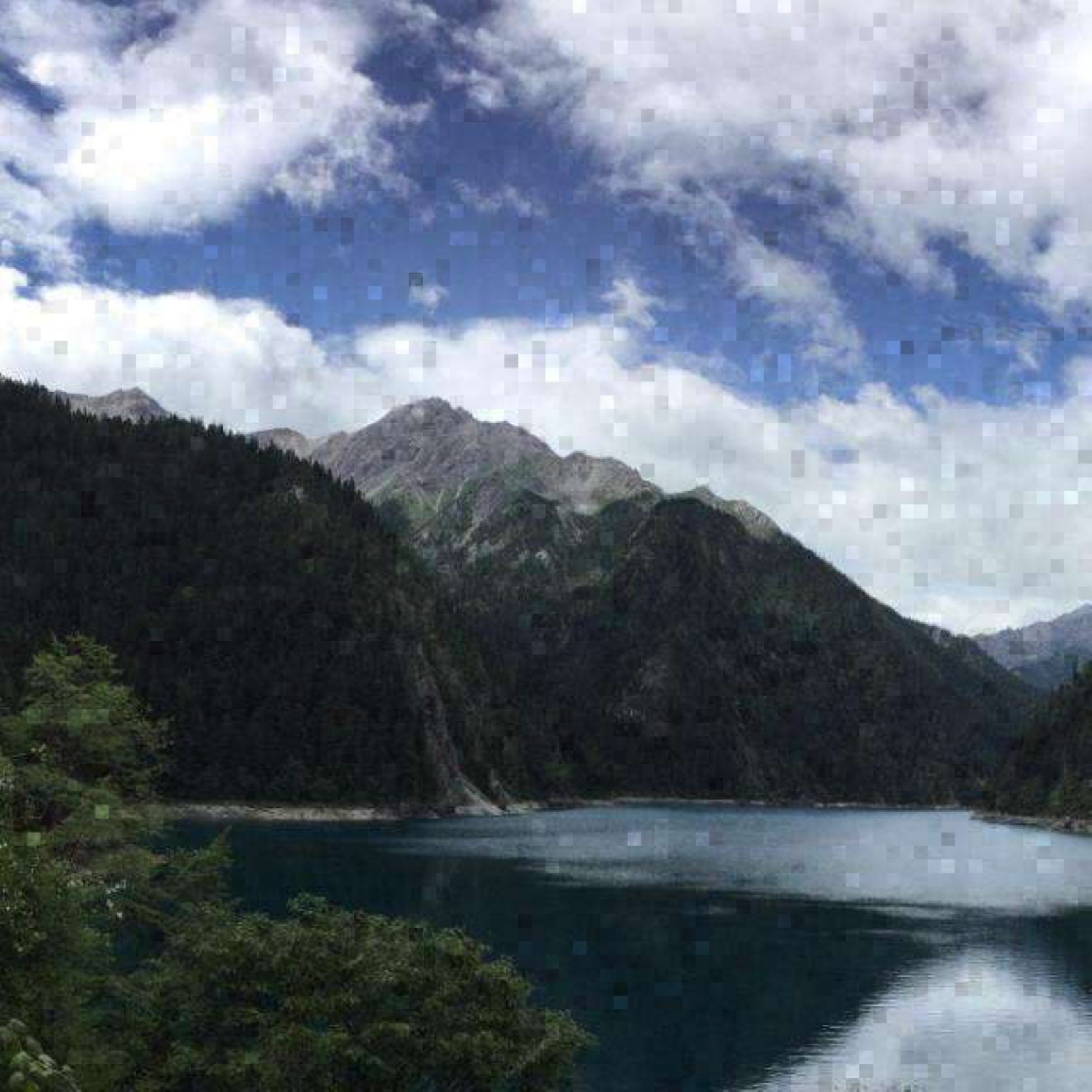}}
\vspace{-.1in}
\subfloat[Original]{\includegraphics[width=.2\linewidth]{figs/Jiuzhaigou.pdf}} \hspace{0.005in}
\subfloat[Regime 1]{\includegraphics[width=.2\linewidth]{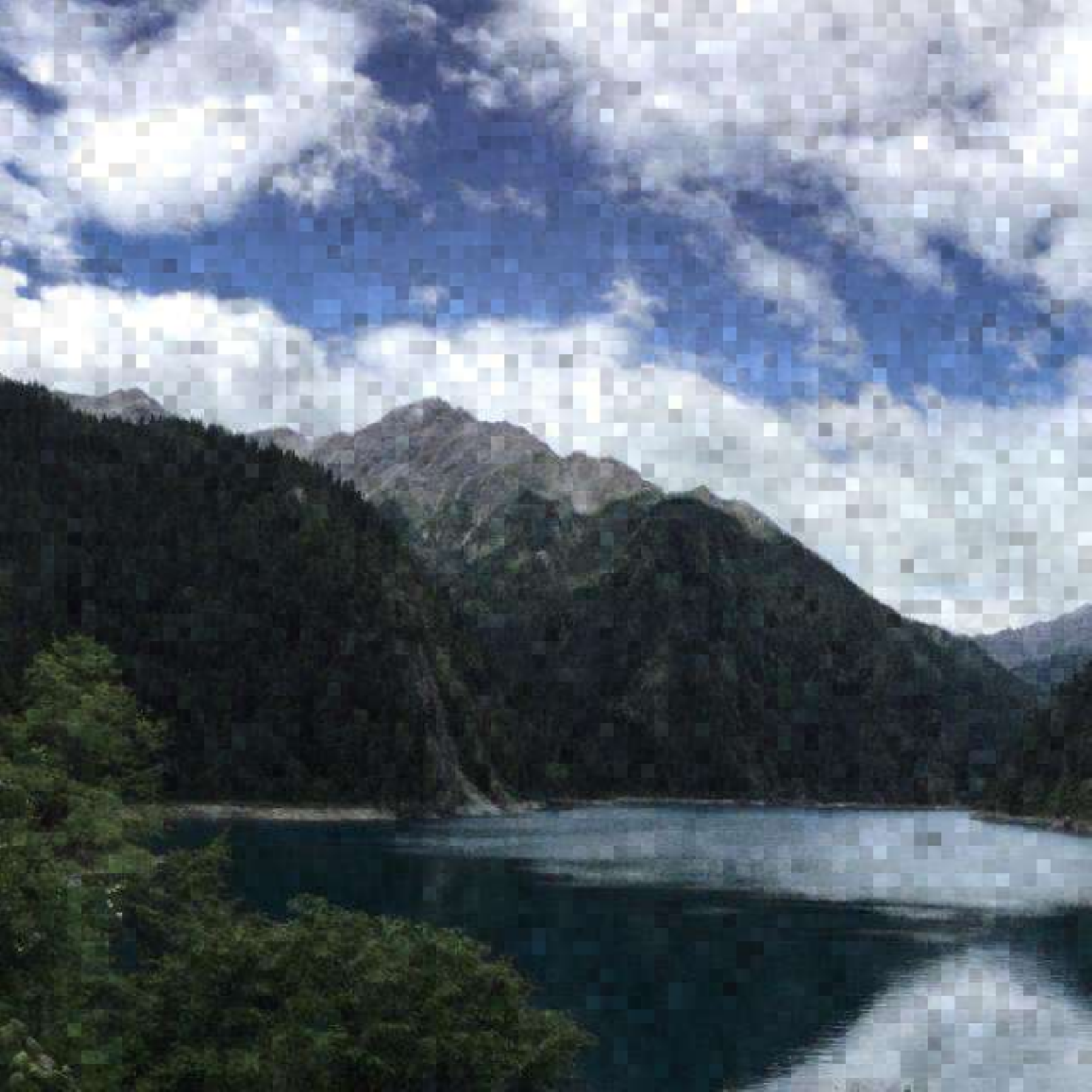}}\hspace{0.005in}
\subfloat[Regime 2]{\includegraphics[width=.2\linewidth]{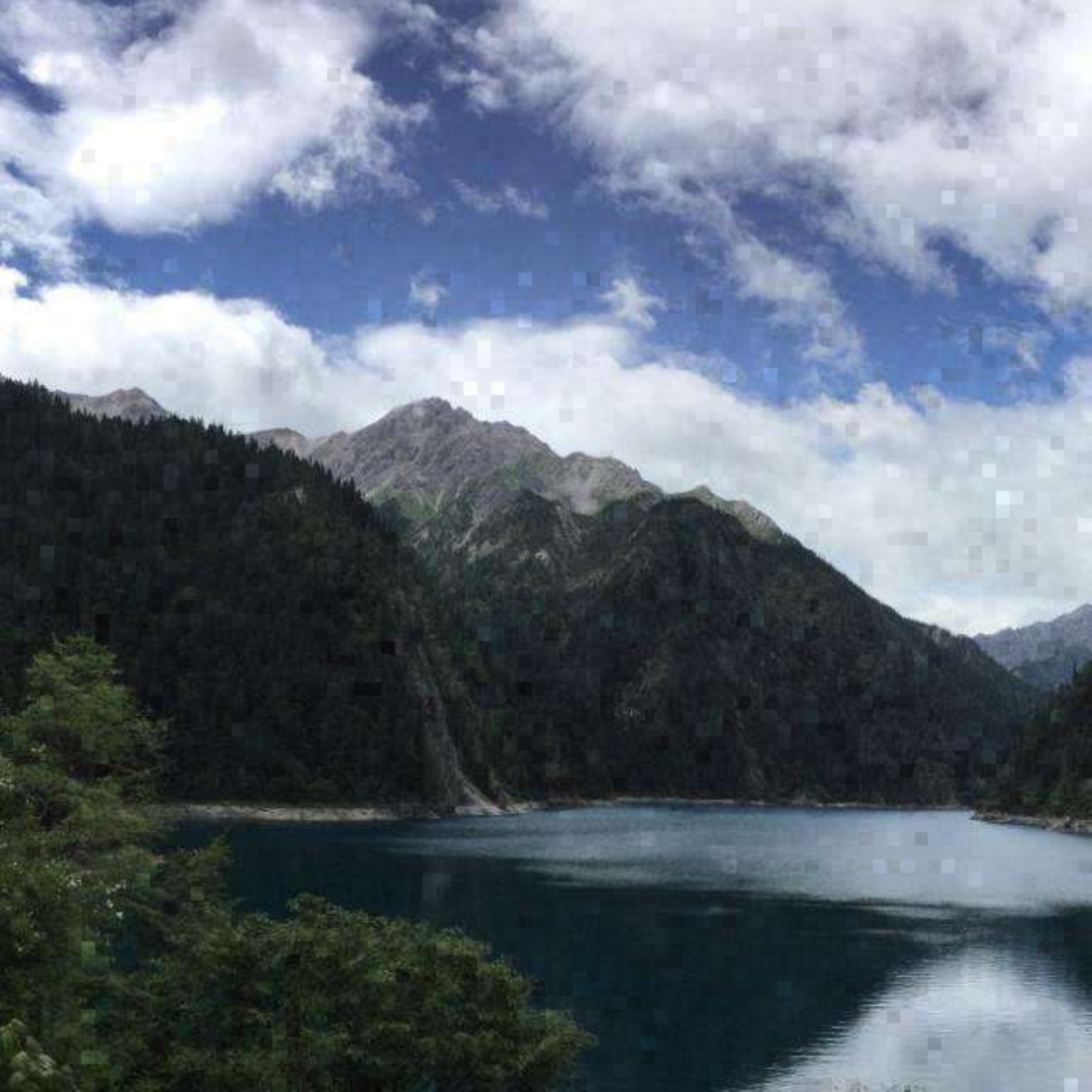}}\hspace{0.005in}
\subfloat[Regime 3]{\includegraphics[width=.2\linewidth]{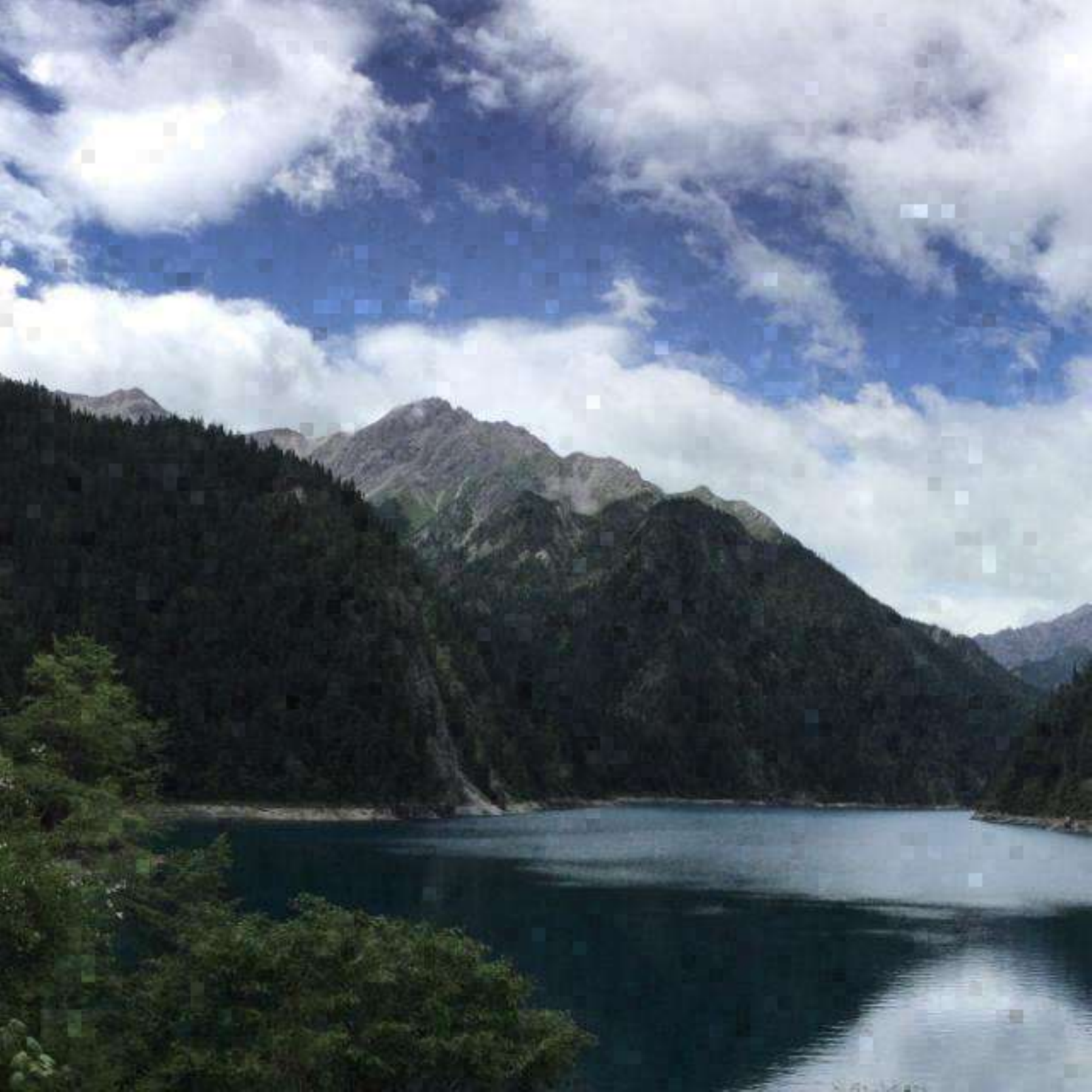}}
\caption{\footnotesize
Comparisons of the reconstruction   of the  real image by using these three sampling regimes with regularizer $\gamma=10^{-5}$: 
\textbf{Top row:} Uniform sampling. \textbf{Bottom row:} Optimal sampling.
} \label{fig:real_image}
\vspace{-0.1in}
\end{figure}

\begin{table}[!h]
\caption{\footnotesize
The SNR between the original and the reconstructed image: The  number of samples per time instance for regime 1,2,3 is 1200 and the total number of samples for static setting is 4800. }\label{tab:SNRs}
 \centering
 \begin{tabular}{ |c||c|c|c|c|} 
\hline
 ~             & \multicolumn{4}{c|}{ $\gamma =10^{-5} $}   \cr
 \cline{2-5}

~             & Regime 1                           & Regime 2                           & Regime 3  & Static   \cr

 \hhline {|=||=|=|=|=|}

\textbf{Uniform}  &$22.59$              & $ 27.92$ & $28.14$ & $21.47$                    \cr
\hline
\textbf{Optimal}    &$22.93$              & $30.18$                  &   $30.87$   & $21.37$  \cr
\hline
\end{tabular}
\end{table}


\bibliographystyle{siamplain}
\bibliography{ref}

\clearpage
\section{Supplementary material}
\subsection{Proofs}
\begin{proof}[Proof of Proposition \ref{prop:gcoherence}]
\begin{enumerate}
    \item [(a)] First let's prove (a.2).
    Notice that 
\begin{align*}
    \sum_{\ell=1}^{k}  \frac{ \lambda_\ell^{2t}}{f_T^2(\lambda_\ell) } =&~  \sum_{\ell=1}^{k} \sum_{i=1}^{n} (U(i,\ell))^2 \frac{ \lambda_\ell^{2t}}{f_T^2(\lambda_\ell) }\\
    =&~\sum_{i=1}^{n} \mathbf{p}_t^{(2)}(i)\cdot  \frac{1}{ \mathbf{p}_t^{(2)}(i)}\|\widetilde{U}_{k,T}^{\top}\delta_{tn+i}\|_2^2
    \leq\left(\nu_{2,\mathbf{p}^{(2)}}^{k,T}(t)\right)^2\cdot\sum_{i=1}^{n} \mathbf{p}_t^{(2)}(i), 
\end{align*}
the equation holds when $\frac{1}{ \mathbf{p}_t^{(2)}(i)}\|\widetilde{U}_{k,T}^{\top}\delta_{tn+i}\|_2^2=\left(\nu_{2,\mathbf{p}^{(2)}}^{k,T}(t)\right)^2$ for all $i\in[n]$.
Recall that $\sum_{i=1}^{n}\mathbf{p}_t^{(2)}(i)=1$, then we have $\left(\nu_{2,\mathbf{p}^{(2)}}^{k,T}(t)\right)^2 \geq  \sum_{\ell=1}^{k}  \frac{ \lambda_\ell^{2t}}{f_T^2(\lambda_\ell) }$. Thus equation holds when $\mathbf{p}_t^{(2)}(i)= (\|\widetilde{U}_{k,T}^{\top}\delta_{tn+i}\|_2^2)/ (\sum_{\ell=1}^{k}  \frac{ \lambda_\ell^{2t}}{f_T^2(\lambda_\ell) })$ for all $i\in[n]$. In particular, we have $\sum_{t=0}^{T-1}\left(\nu_{2,\mathbf{p}^{(2)}}^{k,T}(t)\right)^2 \geq \sum_{t=0}^{T-1} (\sum_{\ell=1}^{k}  \frac{ \lambda_\ell^{2t}}{f_T^2(\lambda_\ell) })=k$.

The proofs for (a.1) and (a.3) are similar to (a.2) by noticing the fact that
\begin{align*}
    \sum_{i=1}^{n}\left\|\widetilde{U}_{k,T}((i:n:Tn),:)\right\|_2^2
    =&~\sum_{i=1}^{n} \mathbf{p}^{(1)}(i)\cdot  \frac{1}{ \mathbf{p}^{(1)}(i)}\left\|\widetilde{U}_{k,T}((i:n:Tn),:)\right\|_2^2\\
    \leq&~\sum_{i=1}^{n} \mathbf{p}^{(1)}(i)\cdot \nu_{1, \mathbf{p}^{(1)}}^{k,T}=\nu_{1,\mathbf{p}^{(1)}}^{k,T}
\end{align*}
 and 
 \begin{align*}
    k=\sum_{i=1}^{nT}  \|\widetilde{U}_{k,T}^{\top}\delta_{i}\|_2^2~=&~\sum_{i=1}^{nT} \mathbf{p}^{(3)}(i)\cdot  \frac{1}{ \mathbf{p}^{(3)}(i)}\|\widetilde{U}_{k,T}^{\top}\delta_{i}\|_2^2\\
    \leq&~\sum_{i=1}^{nT} \mathbf{p}^{(3)}(i)\cdot \nu_{3, \mathbf{p}^{(3)}}^{k,T}=\nu_{3,\mathbf{p}^{(3)}}^{k,T}.
\end{align*}
\item [(b)] Let's prove this result when $\mathbf{p}$ is the uniform sampling distribution. The results for the optimal distributions can be derived from part (a) directly.
Let's first  compare $\left(\nu_{1,\mathbf{p}^{(1)}}^{k,T}\right)^2$ and $\left(\nu_{3,\mathbf{p}^{(3)}}^{k,T}\right)^2$. 
   {Recall the definitions of $\nu_{j,\mathbf{p}^{(j)}}^{k,T}$, $\widetilde{U}_{k,T}$, and $F_{k,T}$, $\Lambda_{k}$ in Definition \ref{def:gswc}, \eqref{eqn:tilde_UkL}, and \eqref{map:f_T} respectively, we have}
    \begin{equation*}\label{eqn:rel_coher_13}
    \begin{aligned}
     \left(\nu_{1,\mathbf{p}^{(1)}}^{k,T}\right)^2:=&~ 
     \max_{1\leq i\leq n} \frac{1}{\mathbf{p}^{(1)}(i)}\left\| \widetilde{U}_{k,T}((i:n:Tn),:)  \right\|_2^2\\
     =&~
     \max_{1\leq i\leq n} n\left\| \sum_{t=0}^{T-1}( F_{k,T}\Lambda_k^tU_k^{\top}\delta_{i}\delta_{i}^{\top} U_k \Lambda_k^tF_{k,T}) \right\|_2\\
     \stack{\ding{172}}{\geq}&~  \max_{1\leq i\leq n} \left\{ n\cdot \max_{0\leq t\leq T-1}\left\| F_{k,T}\Lambda_k^tU_k^{\top}\delta_{i}\delta_{i}^{\top} U_k \Lambda_k^tF_{k,T} \right\|_2\right\}\\
     =&~n\max_{1\leq i\leq n}\max_{0\leq t\leq T-1}
     \left\|\left[\frac{\lambda_1^{t}\cdot U_k(i,1)}{f_{T}(\lambda_1)},   \cdots,\frac{\lambda_k^{t}\cdot U_k(i,k)}{f_{T}(\lambda_k)}\right] \right\|_2^2\\
    =&~\frac{1}{T}\left( nT\max_{i,t}\|\widetilde{U}_{k,T}\delta_{tn+i}\|_2^2 \right)
     =\frac{\left(\nu_{3,\mathbf{p}^{(3)}}^{k,T}\right)^2 }{T},
    \end{aligned}
\end{equation*}
where \ding{172} is coming from that $F_{k,T}\Lambda_k^tU_k^{\top}\delta_{i}\delta_{i}^{\top} U_k \Lambda_k^tF_{k,T}$ is  positive semidefinite matrix for all $t$.

Secondly, let's compare $\nu_{3,\mathbf{p}^{(3)}}^{k,T}$ and $\nu_{2,\mathbf{p}^{(2)}}^{k,T}$. Notice that 

\begin{equation}
\begin{aligned}
     \left(\nu_{2,\mathbf{p}^{(2)}}^{k,T}(t)\right)^2 =&~\max_{1\leq i\leq n}  \frac{1}{ \mathbf{p}_t^{(2)}(i)}\|\widetilde{U}_{k,T}^{\top}\delta_{tn+i}\|_2^2
     =\max_{1\leq i\leq n}  n\|\widetilde{U}_{k,T}^{\top}\delta_{tn+i}\|_2^2\\
     \leq &~ \frac{1}{T}\max_{1\leq j\leq Tn} Tn\|\widetilde{U}_{k,T}^{\top}\delta_{j}\|_2^2
     = \frac{1}{T} \left(\nu_{3,\mathbf{p}}^{k,T}\right)^2
\end{aligned}
\end{equation}
for all $ t=0,\cdots,T-1$.
Thus, the conclusions of part (b) are derived.
\end{enumerate}
\end{proof}

In order to prove Theorem \ref{thm:embedding_all}, we need the lemma from \cite[Theorem 1.1]{tropp2012user}\label{lmm:tropp}:
\begin{lemma}
Let $\{X_k\}$ be   a finite sequence of independent, random, self-adjoint, positive semi-definite matrices with dimension $d$. Assume that each random matrix satisfies  $\lambda_{\max}(X_k) \leq R$ almost surely.
Define
\vspace{-0.1in}
\[\mu_{\min} := \lambda_{\min}(\sum_{i}\mathbb{E}(X_i))
\text{ and } \mu_{\max} := \lambda_{\max} (\sum_{i}\mathbb{E}(X_i)).\]
Then
\vspace{-0.1in}
\[\mathbb{P}\left\{\lambda_{\min}(\sum_{i} X_i)\leq(1-\delta)\mu_{\min} \right\}\leq d\left[ \frac{e^{-\delta}}{(1-\delta)^{1-\delta}}\right]^{\mu_{\min}/R}, \forall \delta\in[0,1]
\]
and
\vspace{-0.1in}
\[\mathbb{P}\left\{\lambda_{\max}(\sum_{i} X_i)\geq(1+\delta)\mu_{\max} \right\}\leq d\left[ \frac{e^{\delta}}{(1+\delta)^{1+\delta}}\right]^{\mu_{\max}/R}, \forall \delta\geq 0.
\]
\end{lemma}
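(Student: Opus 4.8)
The plan is to prove both tail bounds by the matrix Laplace transform method, which converts a probabilistic statement about the extreme eigenvalues of the random sum $Y := \sum_i X_i$ into a deterministic estimate of a trace exponential. I treat the upper tail in detail; the lower tail follows by reversing the sign of a free parameter $\theta$. The starting point is the matrix analogue of the Chernoff--Markov inequality: for any $\theta>0$,
\[
\mathbb{P}\{\lambda_{\max}(Y)\geq t\}\leq e^{-\theta t}\,\mathbb{E}\,\mathrm{tr}\, e^{\theta Y},
\]
which follows from applying scalar Markov to $e^{\theta\lambda_{\max}(Y)}=\lambda_{\max}(e^{\theta Y})\leq\mathrm{tr}\, e^{\theta Y}$, the final inequality using that $e^{\theta Y}\succeq 0$ so that its largest eigenvalue is dominated by its trace.

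The crux is controlling $\mathbb{E}\,\mathrm{tr}\, e^{\theta Y}$, and here I would invoke Lieb's concavity theorem: the map $A\mapsto\mathrm{tr}\,\exp(H+\log A)$ is concave on the positive-definite cone for every fixed self-adjoint $H$. Combined with Jensen's inequality and an iterated conditioning argument over the independent summands, this delivers the subadditivity of the matrix cumulant generating function,
\[
\mathbb{E}\,\mathrm{tr}\,\exp\Big(\sum_i\theta X_i\Big)\leq\mathrm{tr}\,\exp\Big(\sum_i\log\mathbb{E}\, e^{\theta X_i}\Big).
\]
I expect this to be the main obstacle, since Lieb's theorem is the one genuinely deep ingredient; everything surrounding it is elementary convexity and spectral monotonicity.

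It remains to bound each factor in the semidefinite order and to optimize over $\theta$. Because $0\preceq X_i$ and $\lambda_{\max}(X_i)\leq R$, the eigenvalues of $X_i$ lie in $[0,R]$, where convexity of the scalar exponential gives the chord bound $e^{\theta x}\leq 1+\frac{e^{\theta R}-1}{R}x$. Applying this through the transfer rule, taking expectations, and using $I+cM\preceq e^{cM}$ together with operator monotonicity of the logarithm yields $\log\mathbb{E}\, e^{\theta X_i}\preceq\frac{e^{\theta R}-1}{R}\,\mathbb{E}\, X_i$; summing, invoking monotonicity of $\mathrm{tr}\,\exp$ under $\preceq$, and bounding $\mathrm{tr}\,\exp$ of a matrix by $d$ times the exponential of its largest eigenvalue gives
\[
\mathbb{P}\{\lambda_{\max}(Y)\geq t\}\leq d\,\exp\Big(-\theta t+\tfrac{e^{\theta R}-1}{R}\mu_{\max}\Big).
\]
Taking $t=(1+\delta)\mu_{\max}$ and the optimal $\theta=R^{-1}\log(1+\delta)$ collapses the exponent to $\frac{\mu_{\max}}{R}\big[\delta-(1+\delta)\log(1+\delta)\big]$, which is exactly the stated upper bound. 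For the lower tail I would rerun the identical argument with $\theta<0$, noting that the chord bound and the subadditivity step are insensitive to the sign of $\theta$, that $e^{\theta\lambda_{\min}(Y)}=\lambda_{\max}(e^{\theta Y})$ when $\theta<0$, and that $\mathrm{tr}\,\exp\big(\frac{e^{\theta R}-1}{R}\mathbb{E}Y\big)\leq d\,e^{\frac{e^{\theta R}-1}{R}\mu_{\min}}$ because now $\frac{e^{\theta R}-1}{R}<0$; the choice $\theta=R^{-1}\log(1-\delta)$ then yields the minimum-eigenvalue bound.
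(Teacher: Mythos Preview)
Your proof sketch is correct and follows the standard matrix Laplace transform method of Tropp (Lieb's concavity for the subadditivity step, the secant bound on $[0,R]$ for the individual moment generating matrices, and the explicit optimization in $\theta$). However, the paper does not actually prove this lemma: it is stated without proof and attributed directly to \cite[Theorem 1.1]{tropp2012user}. So there is nothing in the paper to compare your argument against; you have supplied a proof where the authors simply cite one. What you have written is essentially a compressed version of Tropp's original derivation, which is entirely appropriate here.
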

\begin{proof}[Proof of Theorem \ref{thm:embedding_all}] Let's first  prove the result for regime 2.
$\pi_{ A,T}(x)\in\text{span}(\widetilde{U}_{k,T})$, we thus only need to study the property of the matrix $P_{\Omega^{(2)}}^{-1/2}W_{2}^{\frac{1}{2}}S_2\widetilde{U}_{k,T}$. Notice that
\begin{equation*}
    \begin{aligned}
        &   (P_{\Omega^{(2)}}^{-1/2}W_{2}^{\frac{1}{2}}S_2\widetilde{U}_{k,T})^\top P_{\Omega^{(2)}}^{-1/2} W_{2}^{\frac{1}{2}}S_2\widetilde{U}_{k,T}\\
        =& \sum_{t=0}^{T-1}\sum_{i=1}^{m_t} \frac{(F_{k,T}\Lambda_k^tU_k^{\top}\delta_{t,\omega_{t,i}}\delta_{t,\omega_{t,i}}^{\top} U_k \Lambda_k^tF_{k,T}) }{m_t\mathbf{p}_{t}^{(2)}(\omega_{t,i}^{(2)})}.
    \end{aligned}
\end{equation*}
Set
$X_{t,i}= \frac{F_{k,T}\Lambda_k^tU_k^{\top}\delta_{t,\omega_{t,i}}\delta_{t,\omega_{t,i}}^{\top} U_k \Lambda_k^tF_{k,T} }{m_t\mathbf{p}_{t}^{(2)}(\omega_{t,i}^{(2)})}$ 
and $X:= \sum_{t=0}^{T-1}\sum_{i=1}^{m_t} X_{t,i}$. 

Thus $X$ is a sum of $M:=\sum_{t=0}^{T-1}m_t$ independent, random, self-adjoint, positive semi-definite matrices.
Now let's estimate $\mu_{\min}$ and $\mu_{\max}$ by computing $\mathbb{E}(X_{t,i})$:
\begin{equation*}
    \begin{aligned}
     \mathbb{E}(X_{t,i}) 
     =&\mathbb{E}\left(  \frac{F_{k,T}\Lambda_k^tU_k^{\top}\delta_{t,\omega_{t,i}}\delta_{t,\omega_{t,i}}^{\top} U_k \Lambda_k^tF_{k,T} }{m_t\mathbf{p}_{t}^{(2)}(\omega_{t,i})}\right)\\
     =& \sum_{k=1}^{n}\mathbf{p}_{t}^{(2)}(k)\cdot  \frac{(F_{k,T}\Lambda_k^tU_k^{\top}\delta_{t,k}\delta_{t,k}^{\top} U_k \Lambda_k^tF_{k,T}) }{m_t\mathbf{p}_{t}^{(2)}(k)}
      =\frac{1}{m_t} \Lambda_{k}^{2t}F_{k,T}^2.
    \end{aligned}
\end{equation*}
Thus $\sum_{t=0}^{T-1}\sum_{i=1}^{m_t} \mathbb{E}(X_{t,i})=
\diag(\frac{\sum_{t=0}^{T-1} \lambda_1^{2t}}{f_{T}^2(\lambda_1)},\frac{\sum_{t=0}^{T-1} \lambda_2^{2t}}{f_{T}^2(\lambda_2)},\cdots,\frac{\sum_{t=0}^{T-1}  \lambda_k^{2t}}{f_{T}^2(\lambda_k)})=I_k$.
Therefore, $\mu_{\min}=\mu_{\max}=1$. 
Additionally, for all $X_{t,i}$, we have
\begin{equation*}
    \begin{aligned}
     \lambda_{\max}(X_{t,i}) =&\left\|X_{t,i}\right\|_2
     =\left\| \frac{F_{k,T}\Lambda_k^tU_k^{\top}\delta_{t,\omega_{t,i}}\delta_{t,\omega_{t,i}}^{\top} U_k \Lambda_k^tF_{k,T} }{m_t\mathbf{p}_{t}^{(2)}(\omega_{t,i})}\right\|_2\\
      \leq&     \max_{0\leq t\leq T-1} \max_{1\leq i\leq n}  \frac{\|F_{k,T}\Lambda_k^tU_k^{\top}\delta_{t,i}\|_2^2}{m_t\mathbf{p}_{t}^{(2)}(i)}=\max_{0\leq t\leq T-1}\frac{\left(\nu_{2,\mathbf{p}_t^{(2)}}^{k,T}(t)\right)^2}{m_{t}}=:\nu.
    \end{aligned}.
\end{equation*}
 Lemma \ref{lmm:tropp} yields, for any $\delta\in(0,1)$, we have
 $\mathbb{P}\left\{\lambda_{\min}(X)\leq 1-\delta   \right\}\leq k\left[ \frac{e^{-\delta}}{(1-\delta)^{1-\delta}}\right]^{1/\nu}\leq k\exp\left( -\frac{\delta^2}{3\nu}\right)$
and $\mathbb{P}\left\{\lambda_{\max}(X)\geq 1+\delta\right\}\leq k\left[ \frac{e^{\delta}}{(1+\delta)^{1+\delta}}\right]^{1/\nu}\leq k\exp\left( -\frac{\delta^2}{3\nu }\right)$ for all $ \delta\in[0,1]$,
here we have used the fact that $\delta+(1-\delta\log(1-\delta))\geq \frac{\delta^2}{3}$ and $-\delta+(1+\delta\log(1+\delta))\geq \frac{\delta^2}{3}$.
Therefore, for any $\delta\in(0,1)$, 
\begin{equation}
    1-\delta\leq \lambda_{\min}(X)\leq\lambda_{\max}(X)\leq 1+\delta
\end{equation}
holds with probability at least $1-\epsilon$ 
when $\frac{1}{\nu}\geq \frac{3}{\delta^2} \log\left(\frac{2k}{\epsilon} \right)$.
Notice that $m_t\geq \frac{3 }{\delta^2} \left(\nu_{2,\mathbf{p}_t^{(2)}}^{k,T}(t)\right)^2 \log\left(\frac{2k}{\epsilon} \right)$ for all $t=0,\cdots,T-1$ can ensure $\frac{1}{\nu}\geq \frac{3}{\delta^2} \log\left(\frac{2k}{\epsilon} \right)$. Thus, \[ (1-\delta)\|\pi_{ A,T}(x)\|^2\leq  \|P_{\Omega^{(2)}}^{-1/2}W_{2}^{\frac{1}{2}}S_2\pi_{ A,T}(x)\|_2^2\leq (1+\delta) \|\pi_{ A,T}(x)\|_2^2
\]
holds  with probability $1-\epsilon$, provided that $m_{t}\geq \frac{3}{\delta^2} \left(\nu_{2,\mathbf{p}_t^{(2)}}^{k,T}(t)\right)^2\log\left(\frac{2k}{\epsilon} \right)$.
Combining Lemma \ref{lmm: embedding}, we derive the conclusion.

For the results of regime 1 and regime 3, notice that
\begin{equation*}
    \begin{aligned}
         (P_{\Omega^{(1)}}^{-1/2}W_{1}^{\frac{1}{2}}S_1\widetilde{U}_{k,T})^\top P_{\Omega^{(1)}}^{-1/2} W_{1}^{\frac{1}{2}}S_1\widetilde{U}_{k,T}
        =&  \sum_{i=1}^{m}\frac{1}{m\mathbf{p}^{(1)}(i)}\sum_{t=0}^{T-1}F_{k,T}\Lambda_k^tU_k^{\top}\delta_{i}\delta_{i}^{\top} U_k \Lambda_k^tF_{k,T}
    \end{aligned}
\end{equation*}
and 
\begin{equation*}
    \begin{aligned}
         (P_{\Omega^{(3)}}^{-1/2}W_{3}^{\frac{1}{2}}S_3\widetilde{U}_{k,T})^\top P_{\Omega^{(3)}}^{-1/2} W_{3}^{\frac{1}{2}}S_3\widetilde{U}_{k,T}
        =&  \sum_{i=1}^{m}\frac{1}{m\mathbf{p}^{(3)}(i)}\widetilde{U}_{k,T}^{\top}\delta_{i}\delta_{i}^{\top} \widetilde{U}_{k,T}
    \end{aligned}
\end{equation*}
Set $X_i^{(1)}=\frac{1}{m\mathbf{p}^{(1)}(i)}\sum_{t=0}^{T-1}F_{k,T}\Lambda_k^tU_k^{\top}\delta_{i}\delta_{i}^{\top} U_k \Lambda_k^tF_{k,T}$, $X^{(1)}:=\sum_{i=1}^{m}X_i^{(1)}$, $X_i^{(3)}=\frac{1}{m\mathbf{p}^{(3)}(i)}\widetilde{U}_{k,T}^{\top}\delta_{i}\delta_{i}^{\top} \widetilde{U}_{k,T}$, and $X^{(3)}:=\sum_{i=1}^{m}X_i^{(3)}$. Then $X^{(1)}$ and $X^{(3)}$ are sum of $m$ independent, random, self-adjoint, positive semi-definite matrices. Similar to the proof for  regime 2, we can get the results for regime 1 and 3.
\end{proof}

\begin{proof}[Proof of Theorem \ref{thm: err_dirct}]
Note that  $ {\mathbf{x}}^*$ is a solution to \eqref{eqn:obj1}. Thus by optimality of ${\mathbf{x}}^*$, we obtain
\begin{equation}\label{eqn:thm_main1_eqn1}
\|P_{\Omega^{(j)}}^{-1/2}W_{j}^{\frac{1}{2}}(S_j\pi_{ A,T}({\mathbf{x}}^*)-\mathbf{y})\|_2\leq\|P_{\Omega^{(j)}}^{-1/2}W_{j}^{\frac{1}{2}}(S_j\pi_{ A,T}(\widetilde{\mathbf {x}})-\mathbf{y})\|
\end{equation}
for any $\widetilde{\mathbf{x}}\in\text{span}(U_k)$. In particular, for $\widetilde{\mathbf{ x}}=\mathbf{x}$, we have
\begin{equation}
 \small \|P_{\Omega^{(j)}}^{-1/2}W_{j}^{\frac{1}{2}}S_j\pi_{ A,T}({\mathbf{x}}^*)-\mathbf{y})\|_2\leq\|P_{\Omega^{(j)}}^{-1/2}W_{j}^{\frac{1}{2}}(S_j\pi_{ A,T}(\mathbf{x})-\mathbf{y})\|=\|P_{\Omega^{(j)}}^{-1/2}W_{j}^{\frac{1}{2}} \mathbf e\|.
\end{equation}
Notice that
\begin{equation}\label{eqn:thm_main1_eqn2}
    \begin{aligned}
    ~&\|P_{\Omega^{(j)}}^{-1/2}W_{j}^{\frac{1}{2}}S_j\pi_{ A,T}({\mathbf{x}}^*)-\mathbf{y})\|_2\\ =&\|P_{\Omega^{(j)}}^{-1/2}W_{j}^{\frac{1}{2}}S_j\pi_{ A,T}({\mathbf{x}}^*)-P_{\Omega^{(j)}}^{-1/2}W_{j}^{\frac{1}{2}}S_j\pi_{ A,T}(\mathbf{x})-P_{\Omega^{(j)}}^{-1/2}W_{j}^{\frac{1}{2}}\mathbf e\|_2\\
    \geq&\|P_{\Omega^{(j)}}^{-1/2}W_{j}^{\frac{1}{2}}S_j\pi_{ A,T}({\mathbf{x}}^*-\mathbf{x})\|_2-\|P_{\Omega^{(j)}}^{-1/2}W_{j}^{\frac{1}{2}}\mathbf e\|_2\\
    \geq&\sqrt{1-\delta}f_{T}(\lambda_k)\|{\mathbf{x}}-\mathbf{x}^*\|_2-\|P_{\Omega^{(j)}}^{-1/2}W_{j}^{\frac{1}{2}}\mathbf e\|_2.
    \end{aligned}
\end{equation}
Combing \eqref{eqn:thm_main1_eqn1}--\eqref{eqn:thm_main1_eqn2}, the following inequality holds 
\[ \|P_{\Omega^{(j)}}^{-1/2}W_{j}^{\frac{1}{2}} \mathbf{e}\|\geq \sqrt{1-\delta}f_{T}(\lambda_k)\|{\mathbf{x}}-\mathbf{x}^*\|_2-\|P_{\Omega^{(j)}}^{-1/2}W_{j}^{\frac{1}{2}} \mathbf{e}\|_2.\] 
Thus, we have the first bound 
\[\|\mathbf{x}^*- {\mathbf{x}}\|_2\leq \frac{2}{\sqrt{1-\delta}f_{T}(\lambda_k)}\|P_{\Omega^{(j)}}^{-1/2}W_{j}^{\frac{1}{2}}\mathbf e\|_2.\]

To prove the second bound, let us choose $\mathbf e_0=S_j\pi_{ A,T}(\mathbf x_0)$ for some $\mathbf x_0\in\text{span}(U_k)$. Therefore, $\mathbf{y}=S_j(\pi_{ A,T}(\mathbf x_0+\mathbf{x}))$ and ${\mathbf{x}}^*=\mathbf{x}+\mathbf{x}_0$ is a solution for \eqref{eqn:obj1} in this case. By Theorem  \ref{thm:embedding_all},  we have
\begin{equation*}
    \begin{aligned}
    \|{\mathbf{x}}-\mathbf{x}^*\|_2=\|\mathbf x_0\|_2\geq& \frac{1}{\sqrt{1+\delta} f_{T}(\lambda_1)}\|P_{\Omega^{(j)}}^{-1/2}W_{j}S_j\pi_{ A,T}(\mathbf x_0)\|_2\\
    =&\frac{1}{\sqrt{1+\delta} f_{T}(\lambda_1)}\|P_{\Omega^{(j)}}^{-1/2}W_{j}^{\frac{1}{2}}S_j\pi_{ A,T}(\mathbf{x}_0)\|_2\\
    =&\frac{1}{\sqrt{1+\delta} f_{T}(\lambda_1)}\|P_{\Omega^{(j)}}^{-1/2}W_{j}^{\frac{1}{2}}\mathbf e_0\|_2.
    \end{aligned}
\end{equation*}
The proof of Theorem \ref{thm: err_dirct} is completed.
\end{proof}

\begin{proof}[Proof of Theorem \ref{thm:main_regular} ]
Since ${\mathbf{x}}^*$ is a solution to \eqref{eqn:obj2}, we have for all $\widetilde{\mathbf{x}}\in\mathbb{R}^{n}$
\begin{equation}\label{eqn:obj2-1}
\begin{aligned}
   &\|P_{\Omega^{(j)}}^{-1/2}W_{j}^{\frac{1}{2}}(S_j\pi_{ A,T}(\mathbf{x}^*)-\mathbf{y})\|_2^2+\gamma (\mathbf{x}^*)^{T}g(L)\mathbf{x}^*\\
   \leq& \|P_{\Omega^{(j)}}^{-1/2}W_{j}^{\frac{1}{2}}(S_j\pi_{ A,T}(\widetilde{\mathbf{x}})-\mathbf{y})\|_2^2+\gamma \widetilde{\mathbf{x}}^\top g(L)\widetilde{\mathbf{x}}
\end{aligned}.
\end{equation}
 We also have $\mathbf{x}^*=\mathbf \alpha^*+\mathbf \beta^*$ with $\mathbf \alpha^*\in\text{span}(U_k)$ and $\mathbf\beta^*\in\text{span}(U_k)^{\perp}=\text{span}(\overline{U}_k)$, where the matrix $\overline{U}_k$ is defined as
$\overline{U}_k:=(\mathbf u_{k+1},\cdots,\mathbf u_n)\in\mathbb{R}^{n\times (n-k)}$.

Choosing $\widetilde{\mathbf{x}}=\mathbf{x}$ in \eqref{eqn:obj2-1} and using the fact that $U_k^{\top}\beta^*=0$, $\overline{U}_k^{\top}\alpha^*=0$,  $\overline{U}_k^{\top}\mathbf{x}=0$, and the eigen-decomposition of $g(L)$ is  $g(L)=Ug(\Sigma)U^{\top}$, we obtain that
\begin{equation}\label{eqn:reg_pre}
\begin{aligned}
   ~& \|P_{\Omega^{(j)}}^{-1/2}W_{j}^{\frac{1}{2}}(S_j\pi_{ A,T}(\mathbf{x}^*)-\mathbf{y})\|_2^2+\gamma (\mathbf{x}^*)^{T}g(L)\mathbf{x}^* \\
    =&\|P_{\Omega^{(j)}}^{-1/2}W_{j}^{\frac{1}{2}}(S_j\pi_{ A,T}(\mathbf{x}^*)-\mathbf{y})\|_2^2+\gamma (U_k^{\top}\alpha^*)^{T}g( \Sigma_k)(U_k^{\top}\alpha^*)\\
    &~+\gamma (\overline{U}_k^{\top}\beta^*)^{T}g( \overline{\Sigma}_k)(\overline{U}_k^{\top}\beta^*)\\
    \leq& \|P_{\Omega^{(j)}}^{-1/2}W_{j}^{\frac{1}{2}}(S_j\pi_{ A,T}(\mathbf{x})-\mathbf{y})\|_2^2+\gamma \mathbf{x}^{T}g(L)\mathbf{x} \\
      =& \|P_{\Omega^{(j)}}^{-1/2}W_{j}^{\frac{1}{2}} \mathbf{e}\|_2^2+\gamma (U_k^{\top}\mathbf{x})^{T}g( \Sigma_k)(U_k^{\top}\mathbf{x}),
\end{aligned}
\end{equation}
where 
$g( \Sigma_k)=\diag(g(\sigma_1),\cdots,g(\sigma_k))$ and  $g(\overline{\Sigma}_k)=\diag(g(\sigma_{k+1}),\cdots,g(\sigma_n))$.

Notice that 
\[(U_k^{\top}\alpha^*)^{T}g(\Sigma_k)(U_k^{\top}\alpha^*)\geq 0,\] 
\[(\overline{U}_k^{\top}\beta^*)^{T}g(\overline{\Sigma}_k)(\overline{U}_k^{\top}\beta^*)\geq g(\sigma_{k+1})\|\overline{U}_k^{\top}\beta^*\|_2^2=g(\sigma_{k+1})\|\beta^*\|_2^2\]
and 
\[(U_k^{\top}\mathbf{x})^{T}g( \Sigma_k)(U_k^{\top}\mathbf{x})\leq g( \sigma_k)\|U_k^{\top}\mathbf{x}\|_2^2=g( \sigma_k)\|\mathbf{x}\|_2^2.\] 
Combining \eqref{eqn:reg_pre}, we thus have 
\begin{equation}\label{eqn:obj2-2}
\small
\|P_{\Omega^{(j)}}^{-1/2}W_{j}^{\frac{1}{2}}S_j\pi_{ A,T}(\mathbf{x}^*)-\mathbf{y})\|_2^2+\gamma g( \sigma_{k+1})\|\beta^*\|_2^2\leq \|P_{\Omega^{(j)}}^{-1/2}W_{j}^{\frac{1}{2}} \mathbf{e}\|_2^2+\gamma g( \sigma_k)\|\mathbf{x}\|_2^2.
\end{equation}
As the left hand side of   \eqref{eqn:obj2-2} is a sum of two positive quantities, we also have
\begin{equation}\label{eqn:obj2-3}
\begin{aligned}
&\|P_{\Omega^{(j)}}^{-1/2}W_{j}^{\frac{1}{2}}(S_j\pi_{ A,T}(\mathbf{x}^*)-\mathbf{y})\|_2^2\\
\leq &\|P_{\Omega^{(j)}}^{-1/2}W_{j}^{\frac{1}{2}}\mathbf{e}\|_2^2+\gamma g( \sigma_k)\|\mathbf{x}\|_2^2
\leq  (\|P_{\Omega^{(j)}}^{-1/2} W_{j}^{\frac{1}{2}}\mathbf e\|_2+\sqrt{\gamma g( \sigma_k)}\|\mathbf{x}\|_2)^2
\end{aligned}
\end{equation}
and
\begin{equation}\label{eqn:obj2-4}
\gamma g( \sigma_{k+1})\|\beta^*\|_2^2
\leq  (\|P_{\Omega^{(j)}}^{-1/2} W_j^{\frac{1}{2}} \mathbf{e}\|_2+\sqrt{\gamma g( \sigma_k)}\|\mathbf{x}\|_2)^2.
\end{equation}
Thus, we have
$\|P_{\Omega^{(j)}}^{-1/2}W_{j}^{\frac{1}{2}}S_j\pi_{ A,T}(\mathbf{x}^*)-\mathbf{y})\|_2 \leq  \|P_{\Omega^{(j)}}^{-1/2}W_{j}^{\frac{1}{2}}\mathbf{e}\|_2+\sqrt{\gamma g( \sigma_k)}\|\mathbf{x}\|_2$ and 
$ \sqrt{\gamma g( \sigma_{k+1})}\|\beta^*\|_2 \leq  \|P_{\Omega^{(j)}}^{-1/2}W_{j}^{\frac{1}{2}}\mathbf{e}\|_2+\sqrt{\gamma g( \sigma_k)}\|\mathbf{x}\|_2$. 
The second inequality yields  \eqref{eqn:thm:main_regular2}.
It remains to prove \eqref{eqn:thm:main_regular1}. Recall that 
  \[  \begin{aligned}
        &\|P_{\Omega^{(j)}}^{-1/2}W_{j}^{\frac{1}{2}}(S_j\pi_{ A,T}(\mathbf{x}^*)-\mathbf{y})\|_2\\   
        =&\|P_{\Omega^{(j)}}^{-1/2}W_{j}^{\frac{1}{2}}(S_j\pi_{ A,T}(\alpha^*-\mathbf{x}))-P_{\Omega^{(j)}}^{-1/2}W_{j}^{\frac{1}{2}}(S_j\pi_{ A,T}(\beta^*)+\mathbf{e})\|_2\\
         \geq &\|P_{\Omega^{(j)}}^{-1/2}W_{j}^{\frac{1}{2}}(S_j\pi_{ A,T}(\alpha^*-\mathbf{x}))\|_2-\|P_{\Omega^{(j)}}^{-1/2}W_{j}^{\frac{1}{2}}\mathbf{e}\|_2-\|P_{\Omega^{(j)}}^{-1/2}W_{j}^{\frac{1}{2}} S_j\pi_{ A,T}(\beta^*)\|_2\\
        \geq &\sqrt{1-\delta}f_{T}(\lambda_k)\|\alpha^*-\mathbf{x}\|_2-\|P_{\Omega^{(j)}}^{-1/2}W_{j}^{\frac{1}{2}}\mathbf{e}\|_2-M_{\max}\|\pi_{ A,T}(\beta^*)\|_2.
    \end{aligned}
    \]
Since $\beta^*\in\text{span}(\overline{U}_k)$, we have $\|\pi_{ A,T}(\beta^*)\|_2\leq f_{T}(\lambda_{k+1})\|\beta^*\|_2$. Therefore, 
\begin{equation}\label{eqn:obj2-5}
   \begin{aligned}
        &\|P_{\Omega^{(j)}}^{-1/2}W_{j}^{\frac{1}{2}}(S_j\pi_{ A,T}(\mathbf{x}^*)-\mathbf{y})\|_2\\
        \geq &\sqrt{1-\delta}f_{T}(\lambda_{k})\|\alpha^*-\mathbf{x}\|_2-\|P_{\Omega^{(j)}}^{-1/2}W_{j}^{\frac{1}{2}}\mathbf{e}\|_2-M_{\max}f_{T}(\lambda_{k+1})\|\beta^*\|_2.
    \end{aligned}
\end{equation}
Combing \eqref{eqn:obj2-3} and \eqref{eqn:obj2-5}, we have
\[\begin{aligned}
&   \|P_{j}^{-1/2}W_j^{\frac{1}{2}}\mathbf{e}\|_2+\sqrt{\gamma g( \sigma_k)}\|\mathbf{x}\|_2\\
 \geq &   \sqrt{1-\delta}f_{T}(\lambda_{k})\|\alpha^*-\mathbf{x}\|_2-\|P_{j}^{-1/2}W_{j}^{\frac{1}{2}}\mathbf{e}\|_2-M_{\max}f_{T}(\lambda_{k+1})\| \beta^*\|_2\\
 \geq &\sqrt{1-\delta}f_{T}(\lambda_{k})\|\alpha^*-\mathbf{x}\|_2-\|P_{j}^{-1/2}W_{j}^{\frac{1}{2}}\mathbf{e}\|_2-\\
 ~&M_{\max}f_{T}(\lambda_{k+1})\left(
\frac{1}{\sqrt{\gamma g(\sigma_{k+1})}}\|P_{\Omega^{(j)}}^{-1/2}W_{j}\mathbf{e}\|_2+\left( \sqrt{\frac{g(\sigma_{k})}{g(\sigma_{k+1})}}\|\mathbf{x}\|_2 \right)\right).
\end{aligned}
\]
Thus the conclusion \eqref{eqn:thm:main_regular1}
holds.
\end{proof}

\subsection{Additional numerical results}\label{addnum}
In this section, we present results for the  reconstruction errors $\|x-\alpha^*\|_2$ and $\|\beta^*\|$ for all  graphs, in the spirit of Theorem \ref{thm:main_regular}, see Fig.~\ref{fig:rec_bunny_appendix} and \ref{fig:rec_minesota_appendix}, complementary to Fig.~\ref{fig:rec_nonoise}.

 We present mean reconstruction errors of  10 bandlimited signals as a function of $\gamma$. We take $M=200$ noise free space-time samples for three regimes. The back curve represents results with $g({L})={L}$. The blue curves indicate the results with $g({L})={L}^2$. The red curves indicate the results with $g({L})={L}^4$.  The green curve indicates the results with $g({L})=\exp(I-L)$. The first, second and third columns show the results for regime 1, regime 2 and regime 3 using the optimal sampling distribution respectively. The first and second rows show the mean reconstruction errors $\|x-\alpha^*\|$ and $\|\beta^*\|$ respectively. 
 
 \paragraph{Comparison to the static case} We also plot the reconstruction results by changing $T=1$.

\begin{figure}[h]
\centering
\begin{minipage}{.3\linewidth} \centering \small \hspace{2mm} Regime 1  \end{minipage}
\begin{minipage}{.3\linewidth} \centering \small \hspace{2mm}  Regime 2  \end{minipage}
\begin{minipage}{.3\linewidth} \centering \small \hspace{2mm} Regime3  \end{minipage}\\

\includegraphics[width=.3\linewidth, height=0.25\linewidth]{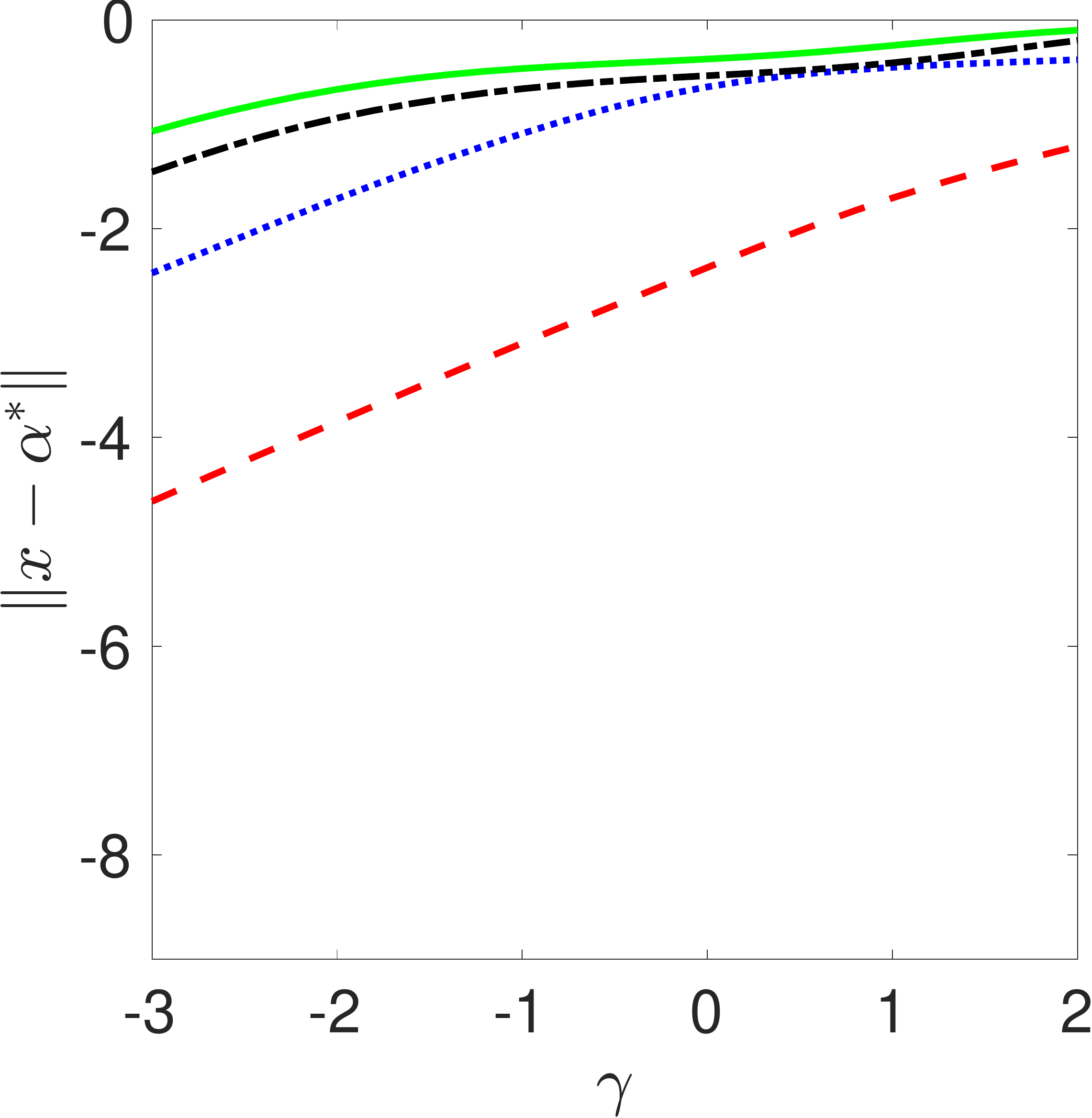}
\includegraphics[width=.3\linewidth, height=0.25\linewidth]{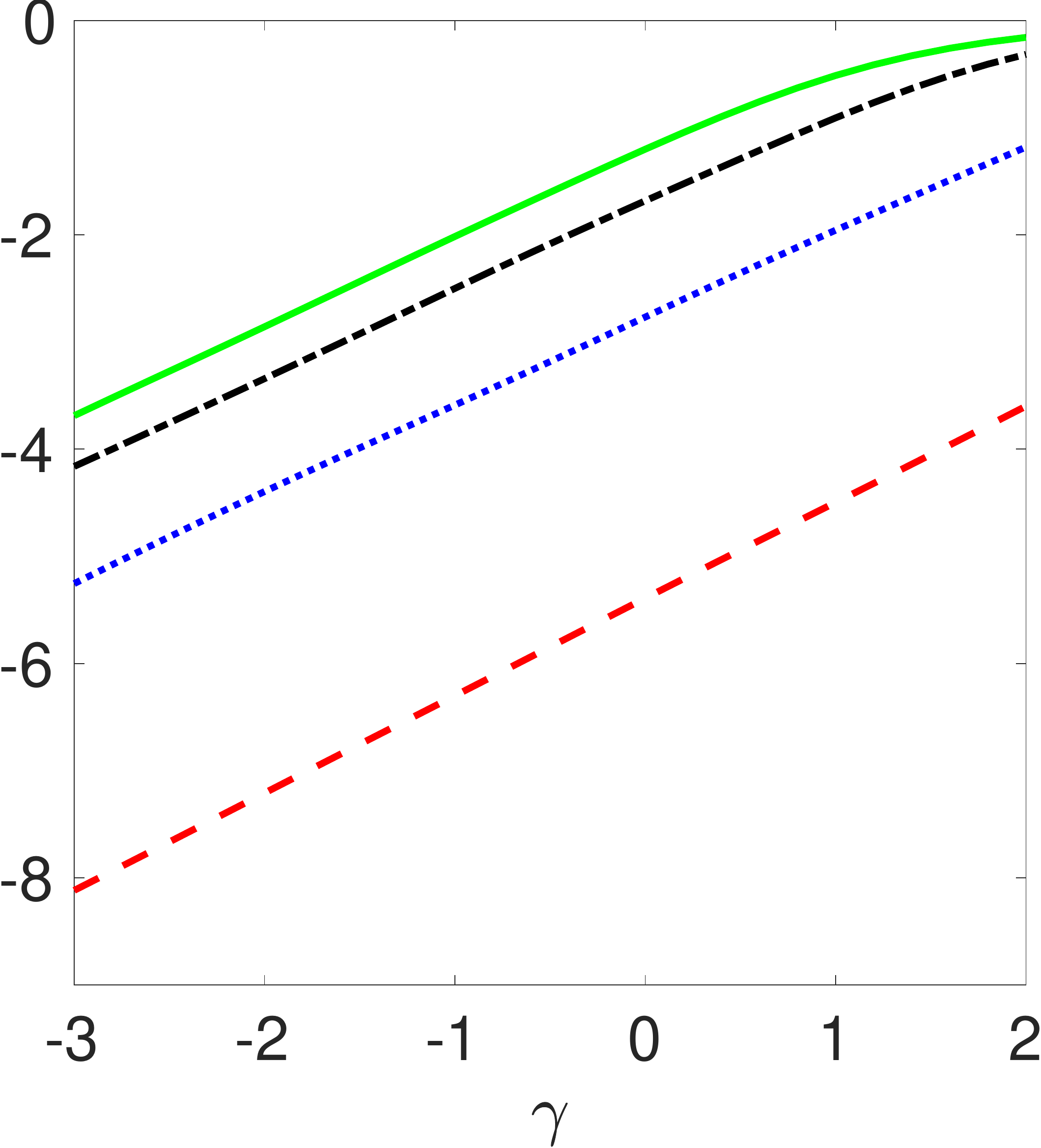}
\includegraphics[width=.3\linewidth, height=0.25\linewidth]{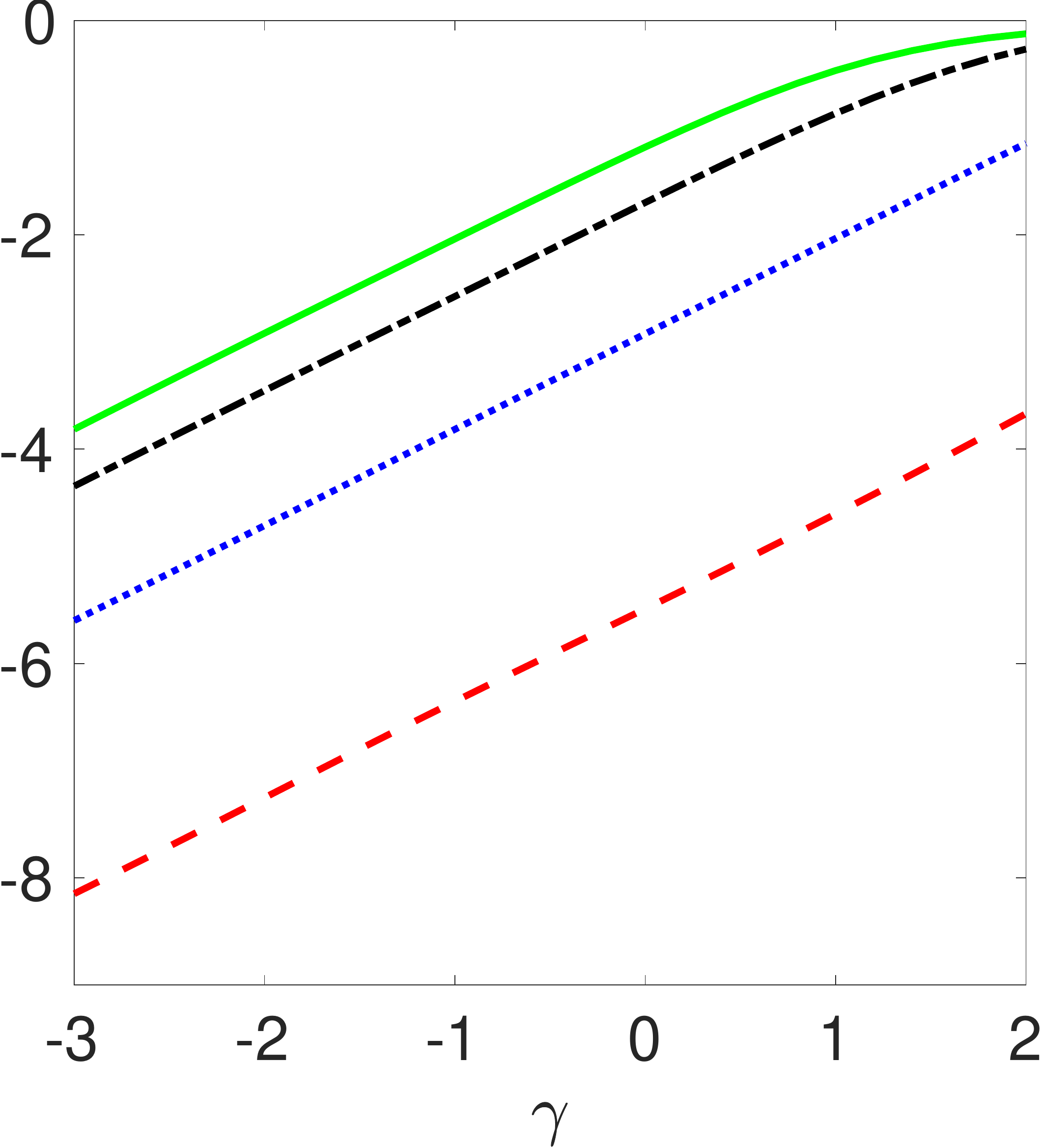}\\
\includegraphics[width=.3\linewidth, height=0.25\linewidth]{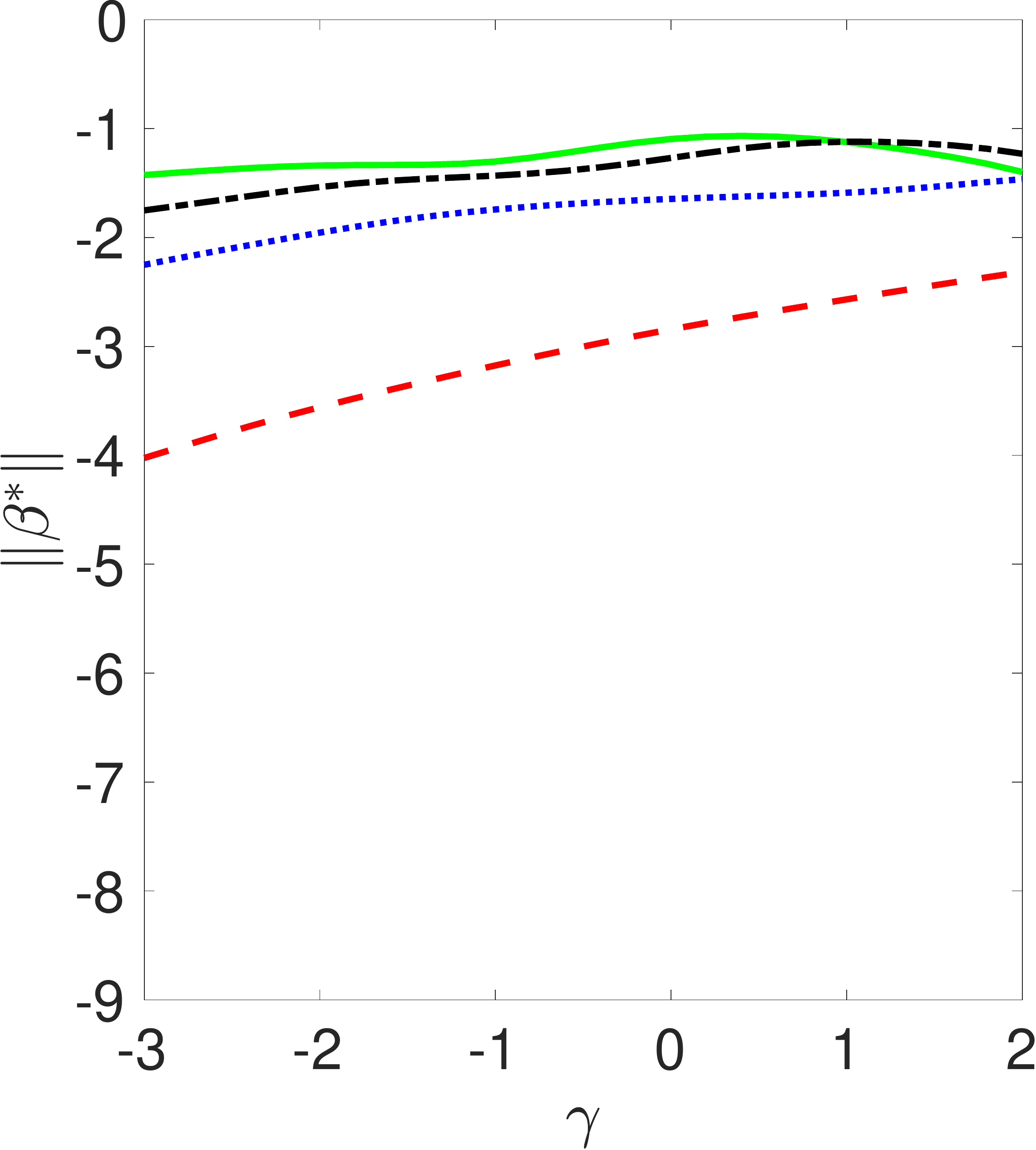}
\includegraphics[width=.3\linewidth, height=0.25\linewidth]{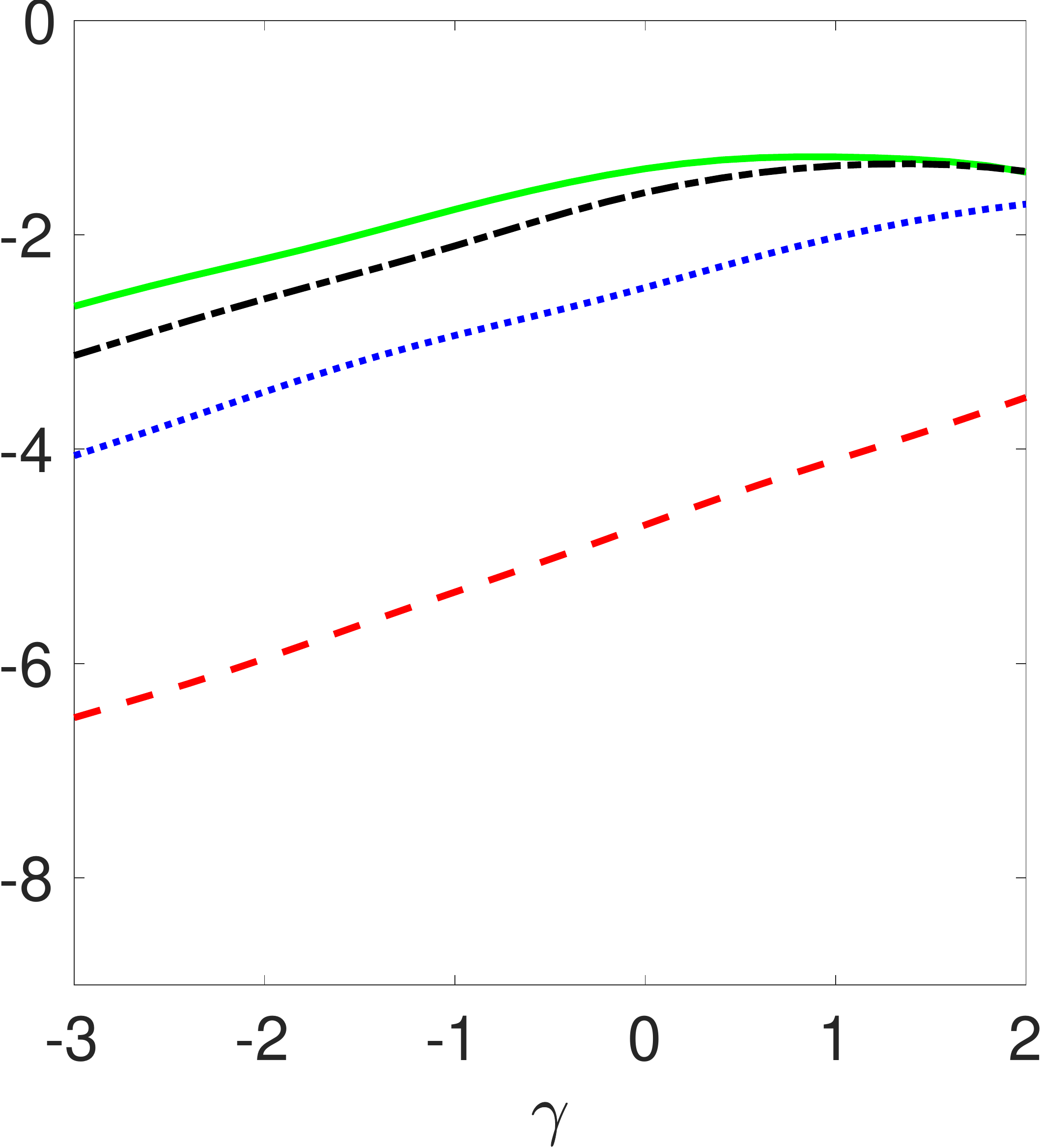}
\includegraphics[width=.3\linewidth, height=0.25\linewidth]{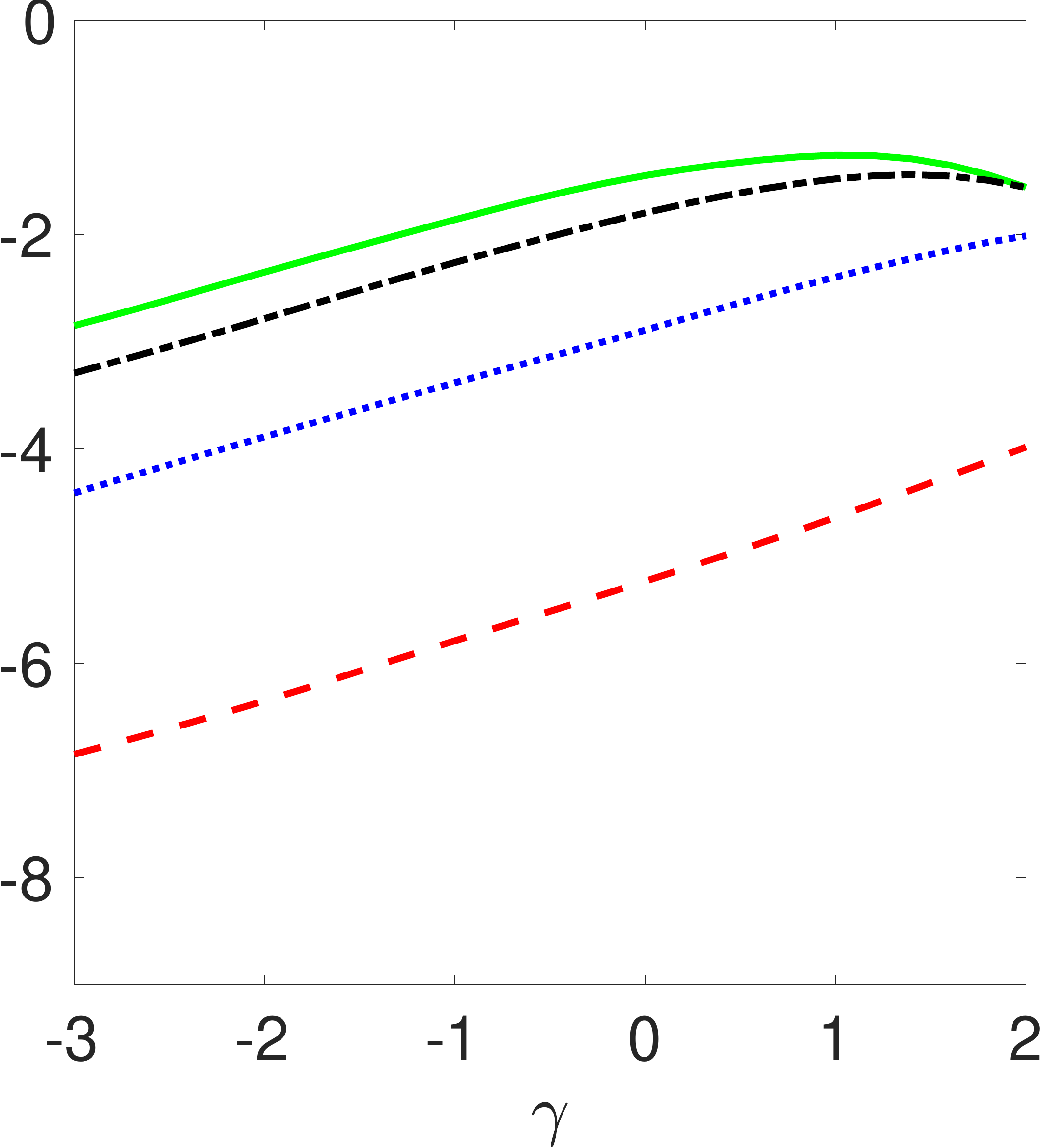}\\
\caption{\footnotesize \textbf{Community graph $C_5$}  }\label{fig:rec_community}
\end{figure}

\begin{figure}[h]
\centering
\begin{minipage}{.3\linewidth} \centering \small \hspace{2mm} Regime 1  \end{minipage}
\begin{minipage}{.3\linewidth} \centering \small \hspace{2mm}  Regime 2  \end{minipage}
\begin{minipage}{.3\linewidth} \centering \small \hspace{2mm} Regime3  \end{minipage}\\
\includegraphics[width=.3\linewidth, height=0.25\linewidth]{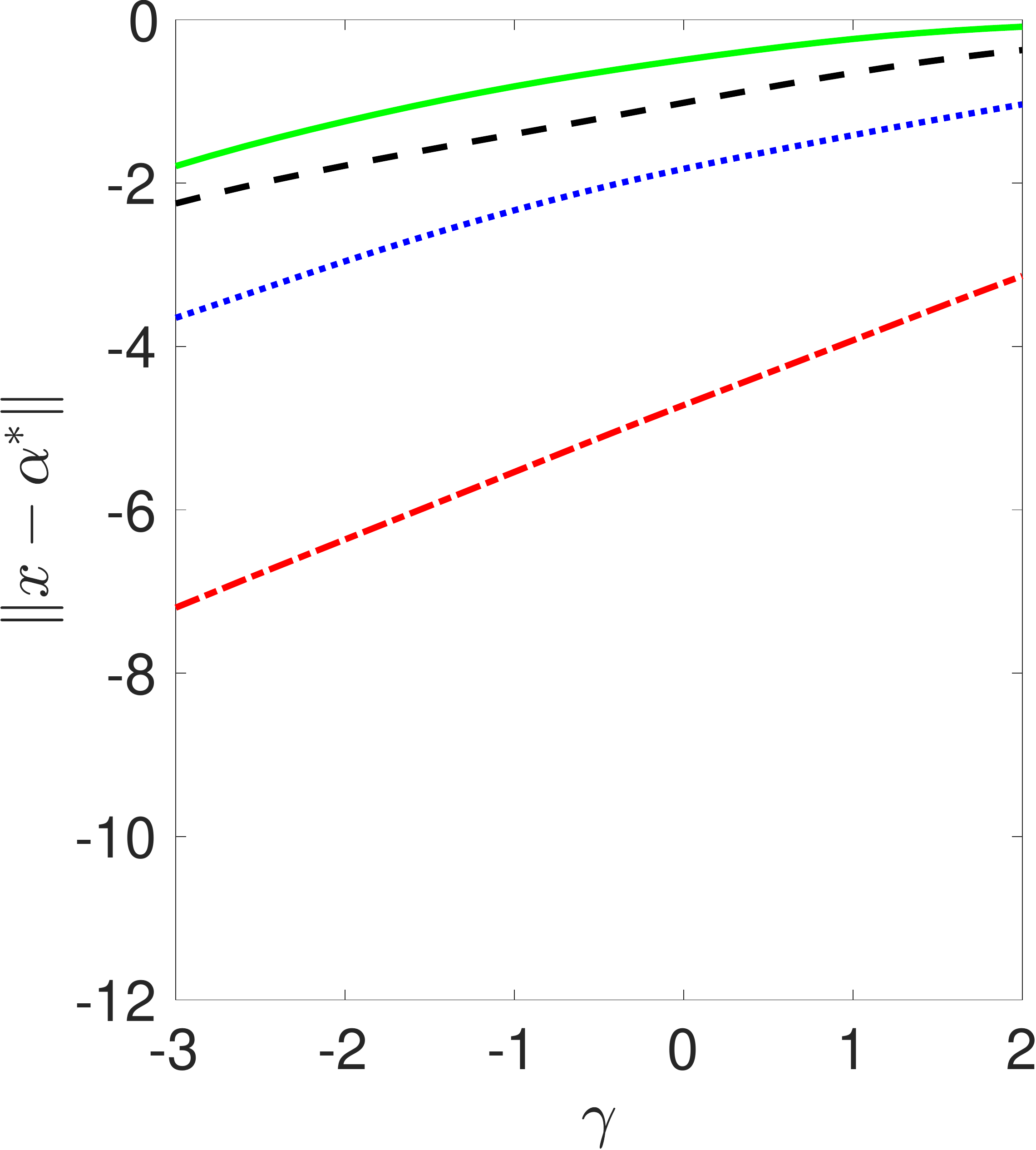}
\includegraphics[width=.3\linewidth, height=0.25\linewidth]{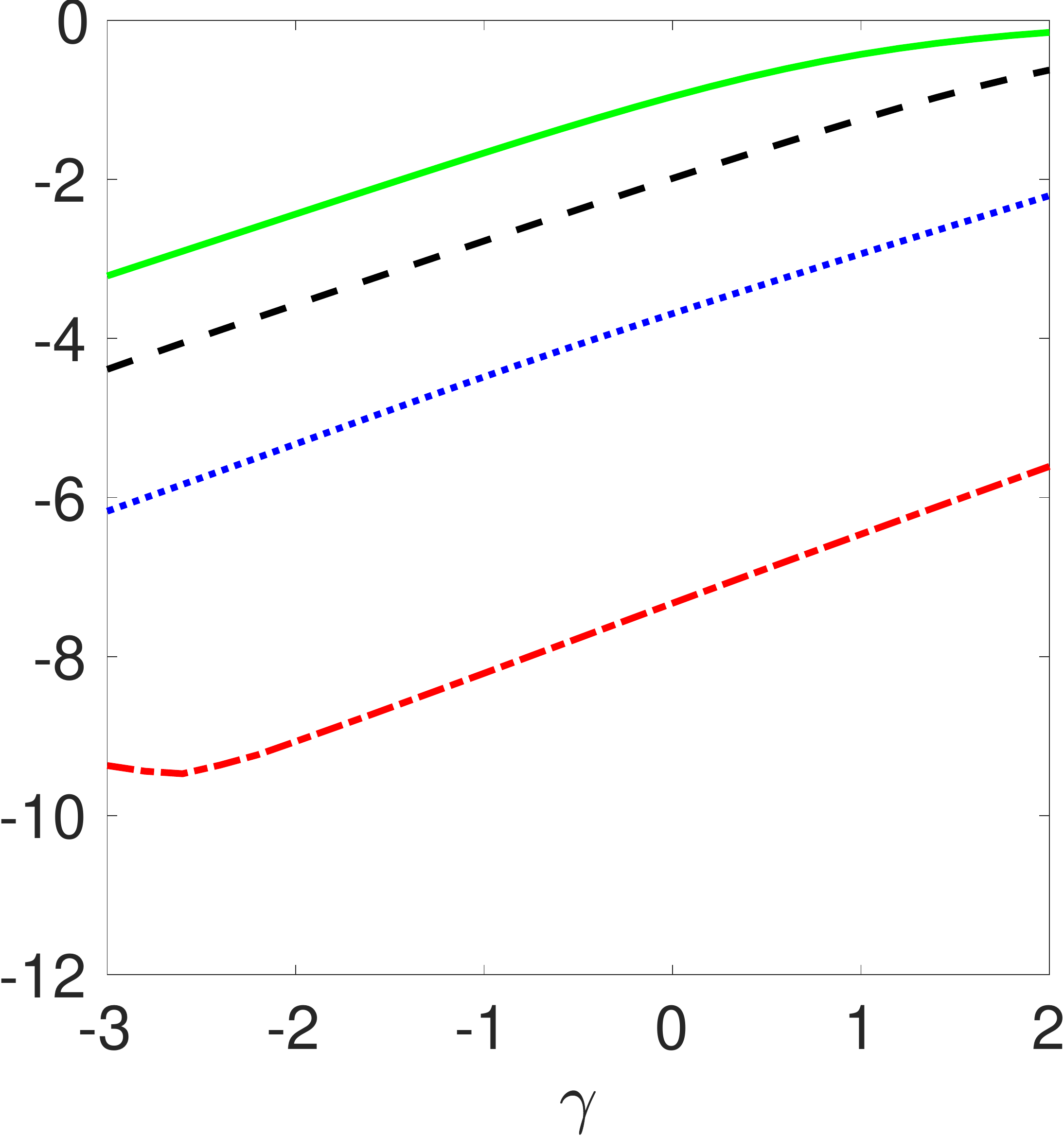}
\includegraphics[width=.3\linewidth, height=0.25\linewidth]{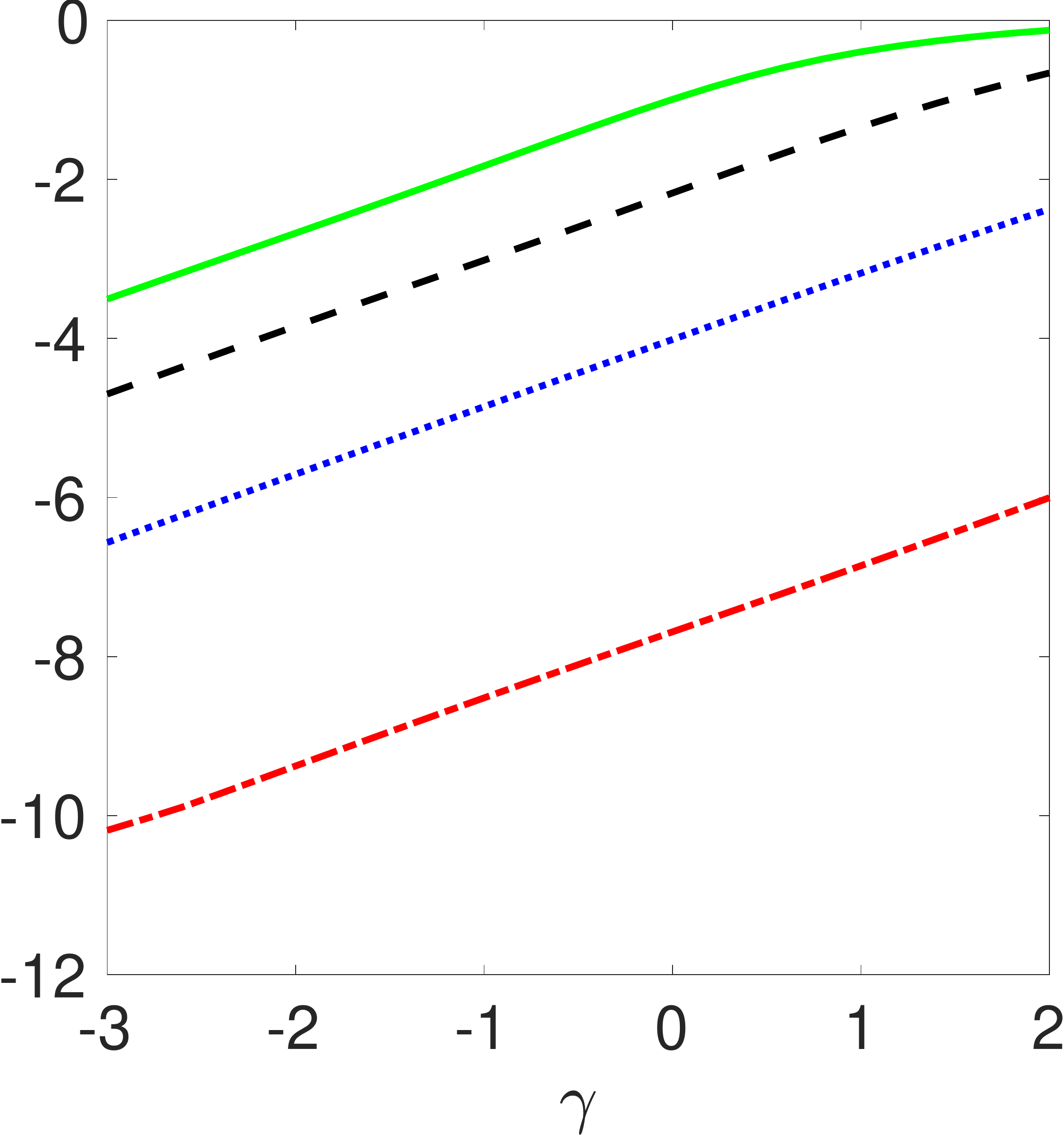}\\
\includegraphics[width=.3\linewidth, height=0.25\linewidth]{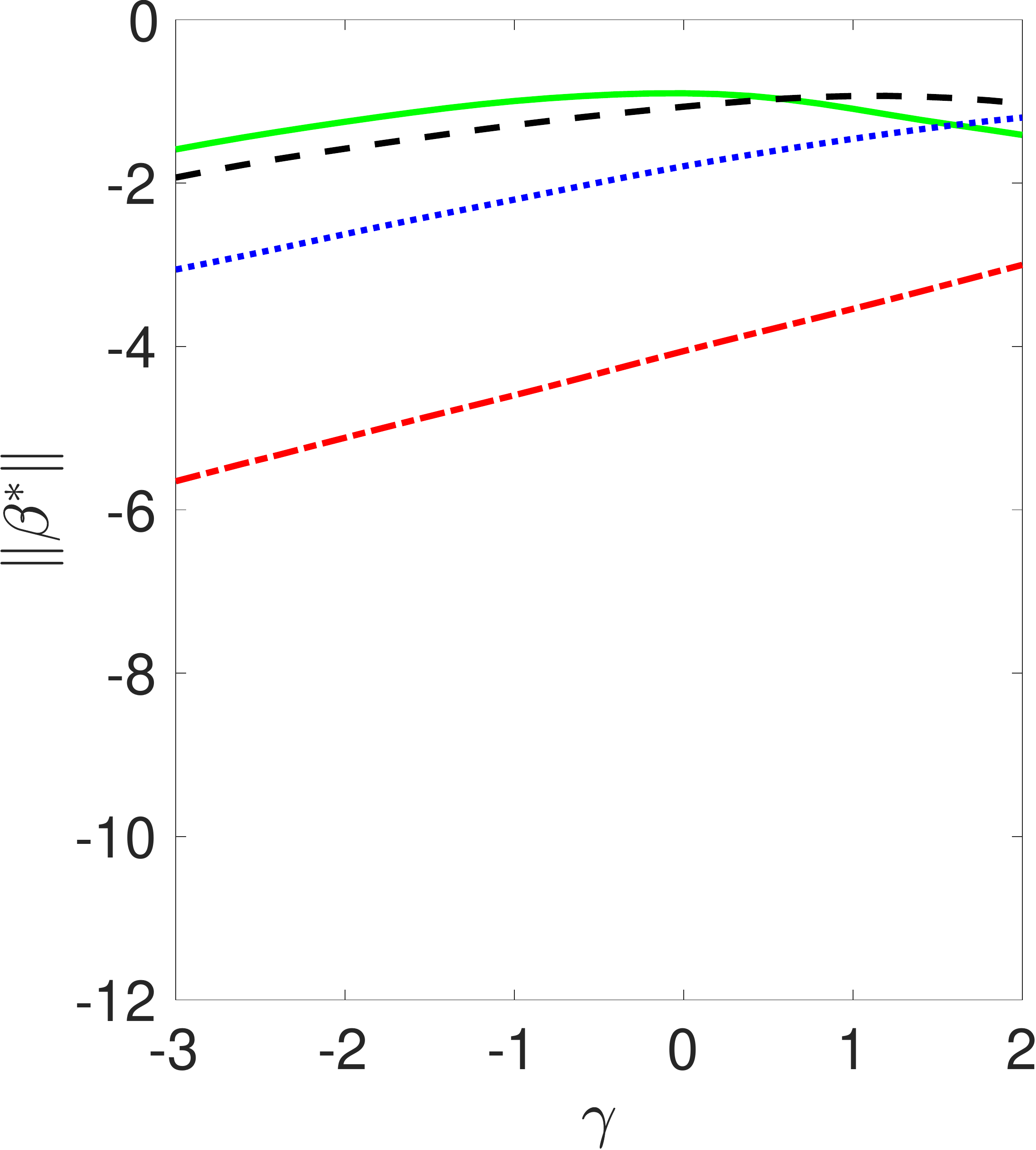}
\includegraphics[width=.3\linewidth, height=0.25\linewidth]{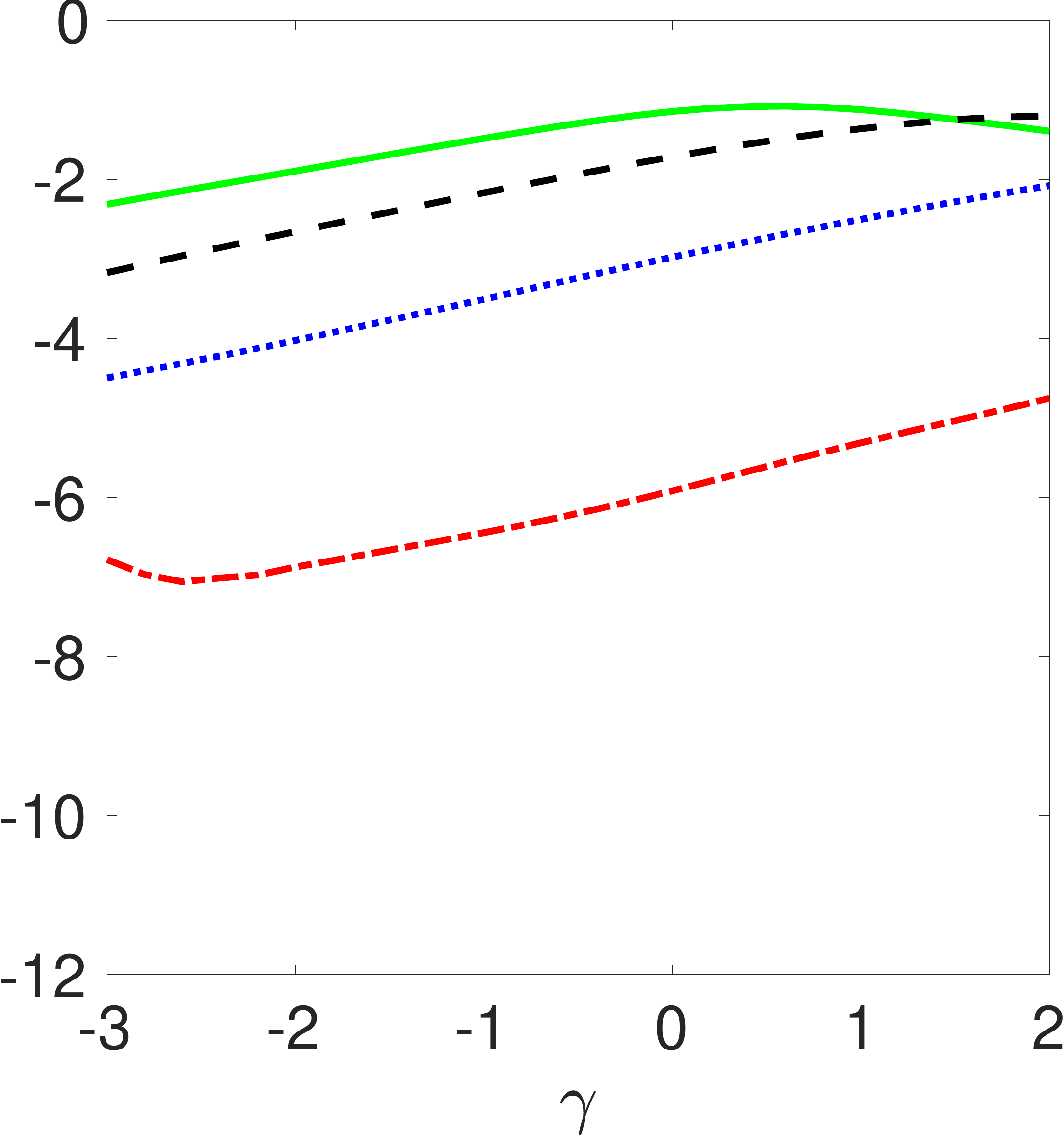}
\includegraphics[width=.3\linewidth, height=0.25\linewidth]{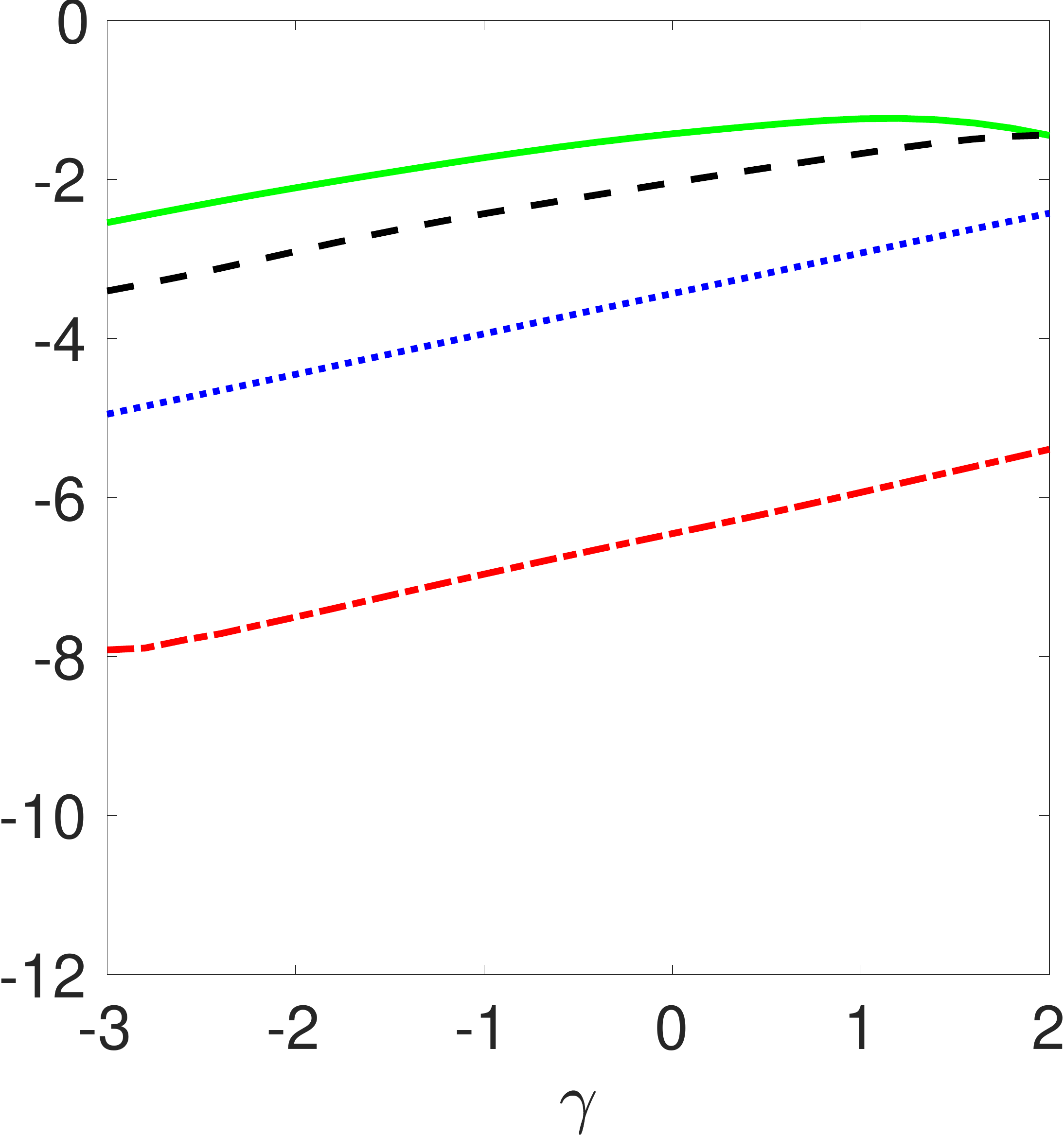}\\
\caption{\footnotesize\label{fig:rec_bunny_appendix} \textbf{Bunny graph}. }
\end{figure}

\begin{figure}[h]
\centering
\begin{minipage}{.3\linewidth} \centering \small \hspace{2mm} Regime 1  \end{minipage}
\begin{minipage}{.3\linewidth} \centering \small \hspace{2mm}  Regime 2  \end{minipage}
\begin{minipage}{.3\linewidth} \centering \small \hspace{2mm} Regime3  \end{minipage}\\
\includegraphics[width=.3\linewidth, height=0.25\linewidth]{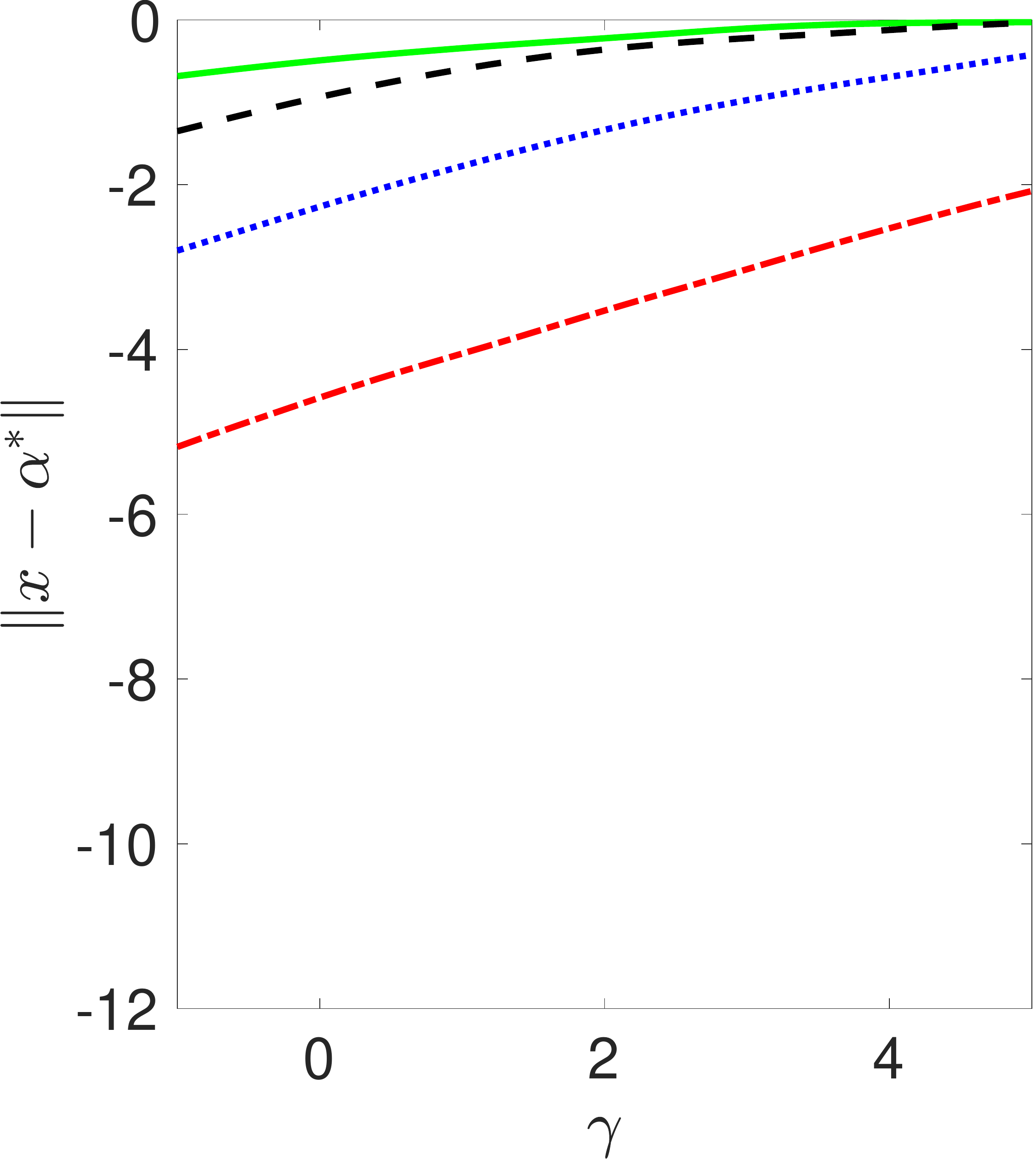}
\includegraphics[width=.3\linewidth, height=0.25\linewidth]{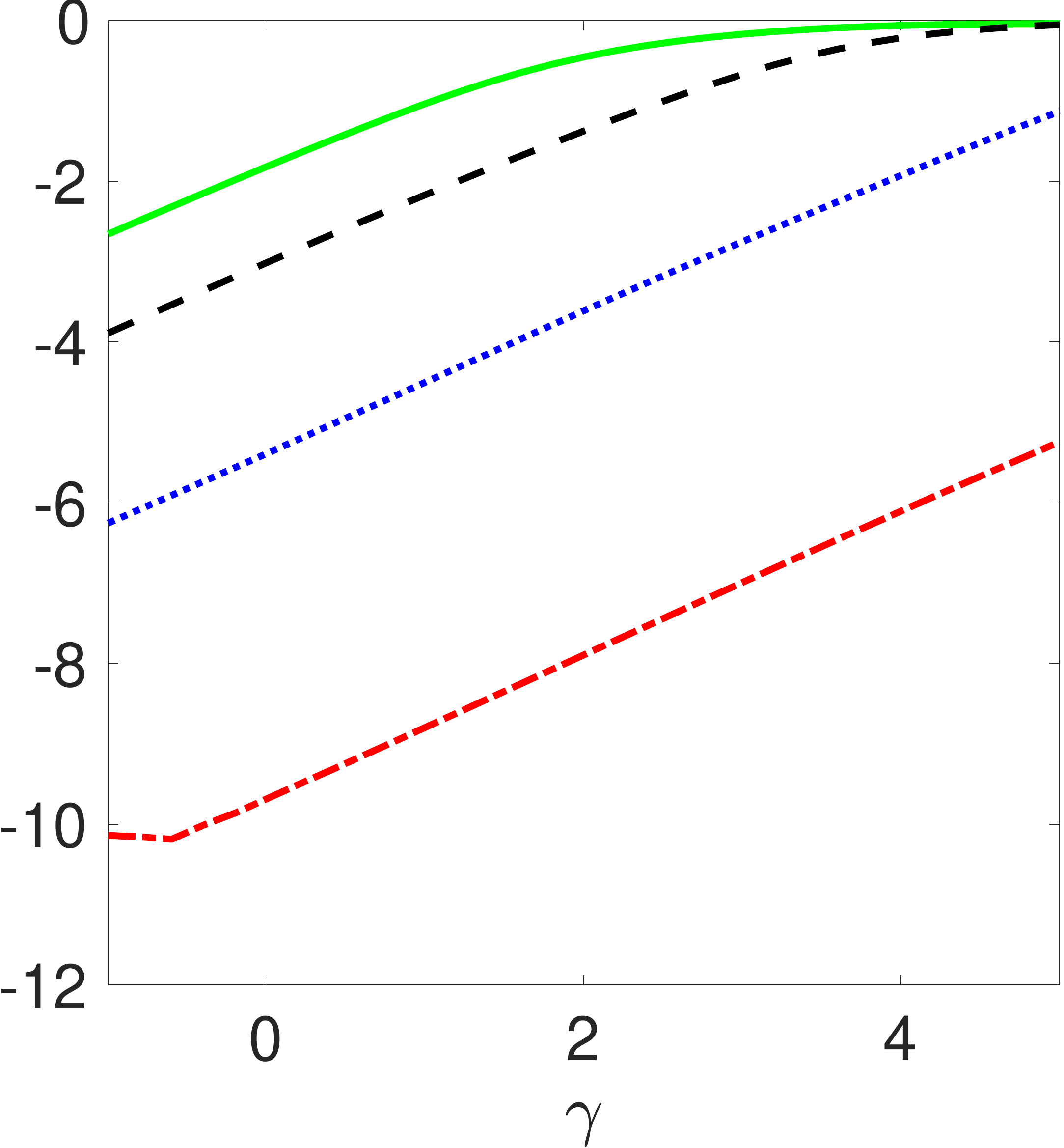}
\includegraphics[width=.3\linewidth, height=0.25\linewidth]{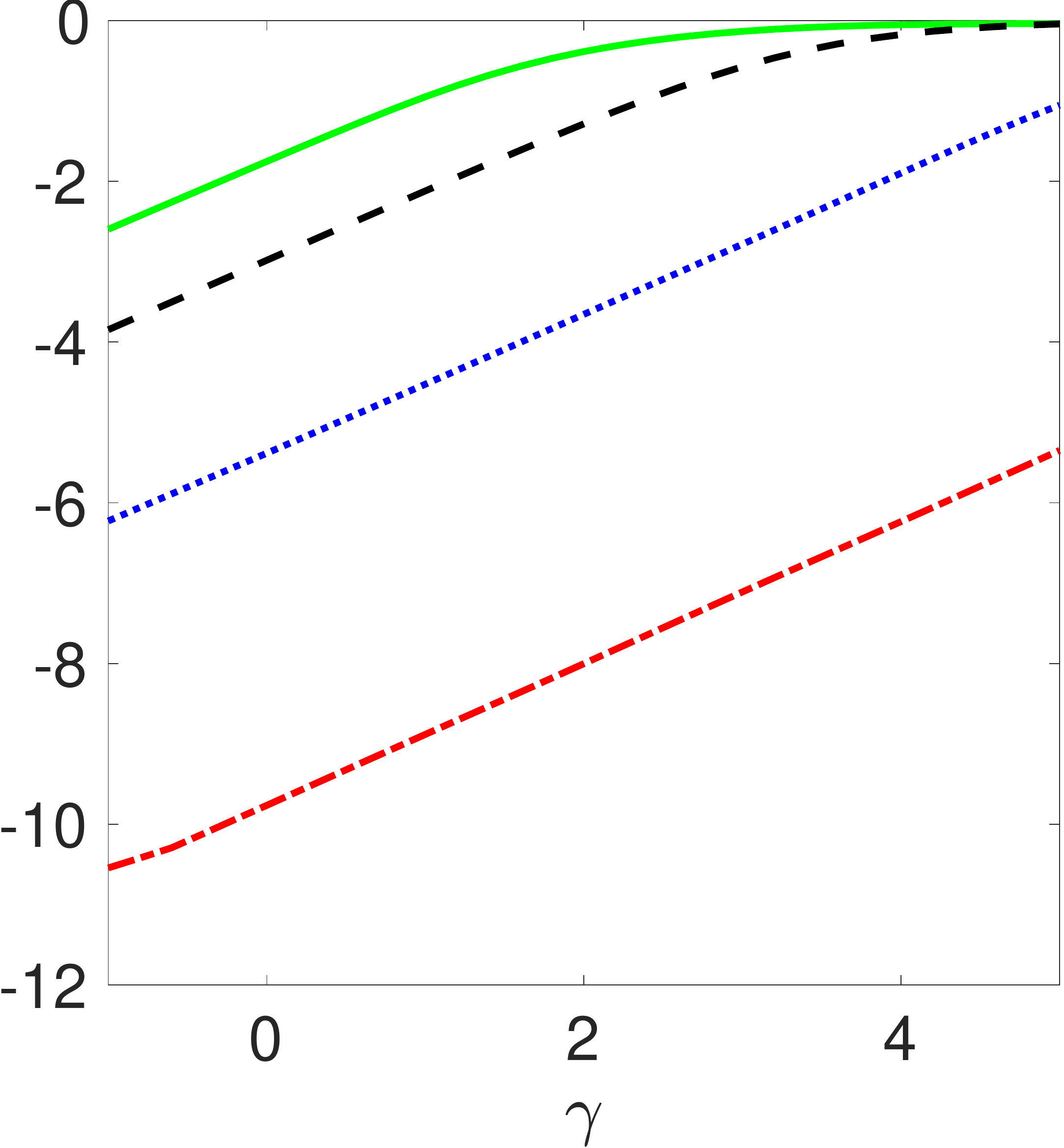}\\
\includegraphics[width=.3\linewidth, height=0.25\linewidth]{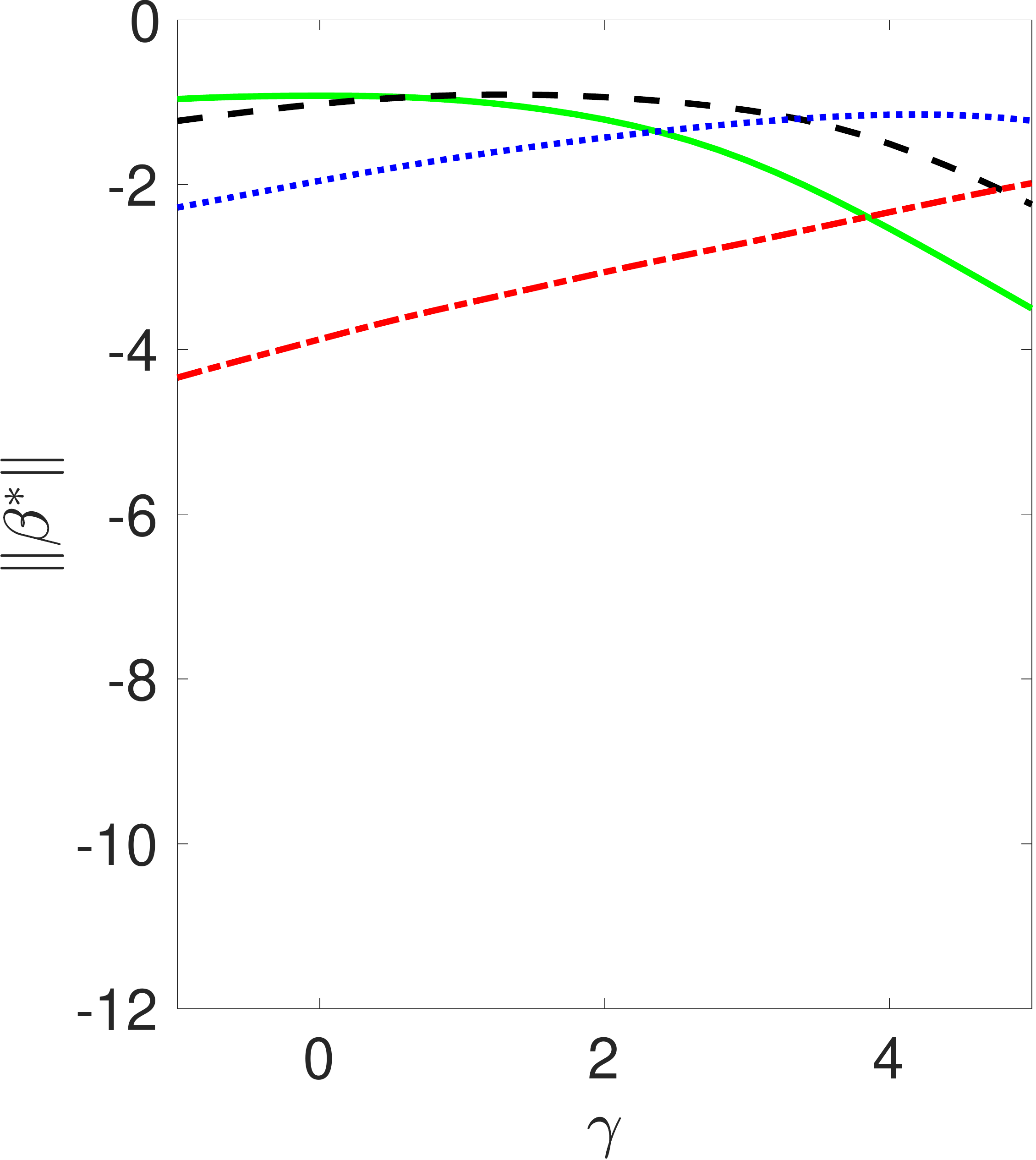}
\includegraphics[width=.3\linewidth, height=0.25\linewidth]{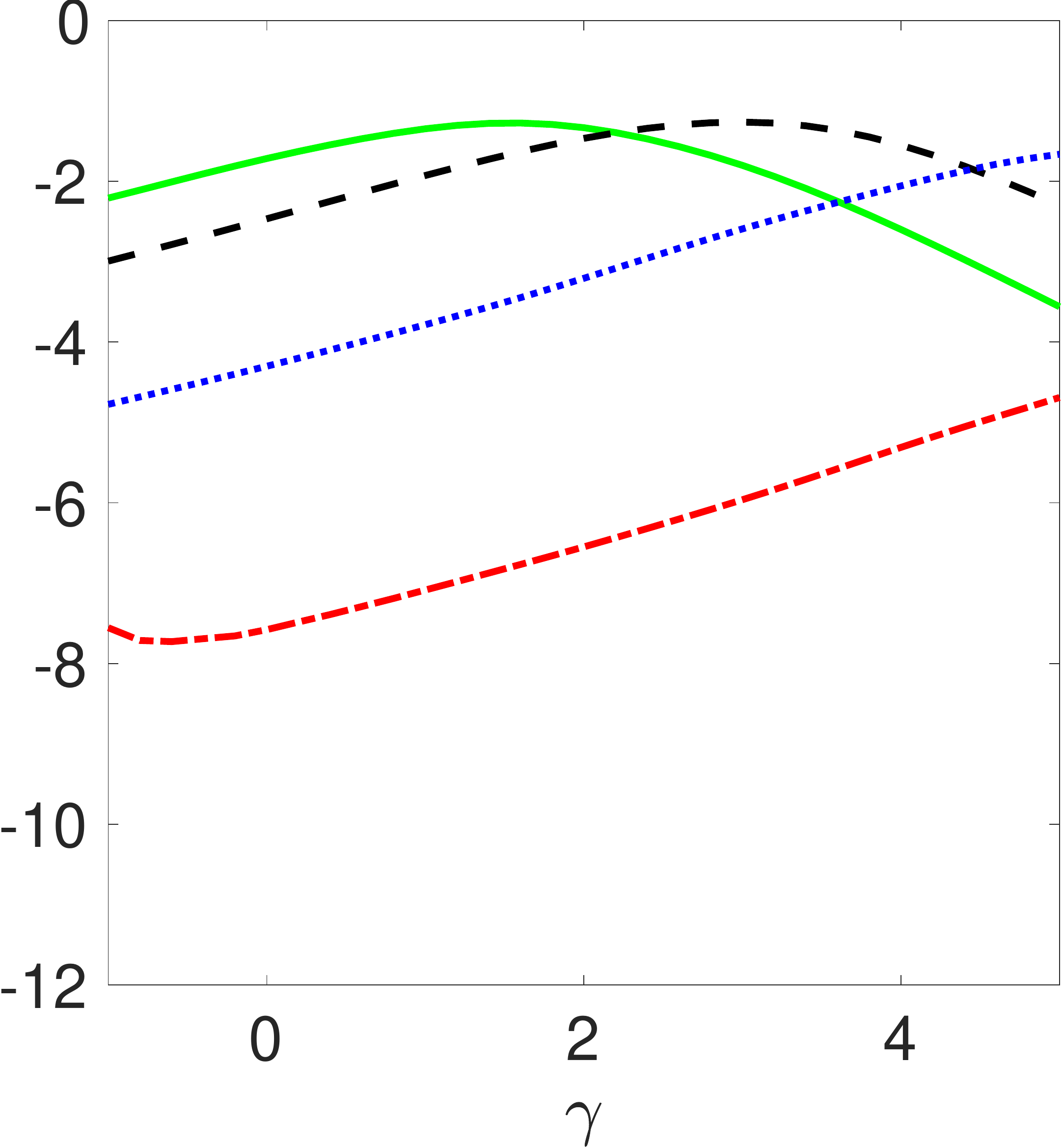}
\includegraphics[width=.3\linewidth, height=0.25\linewidth]{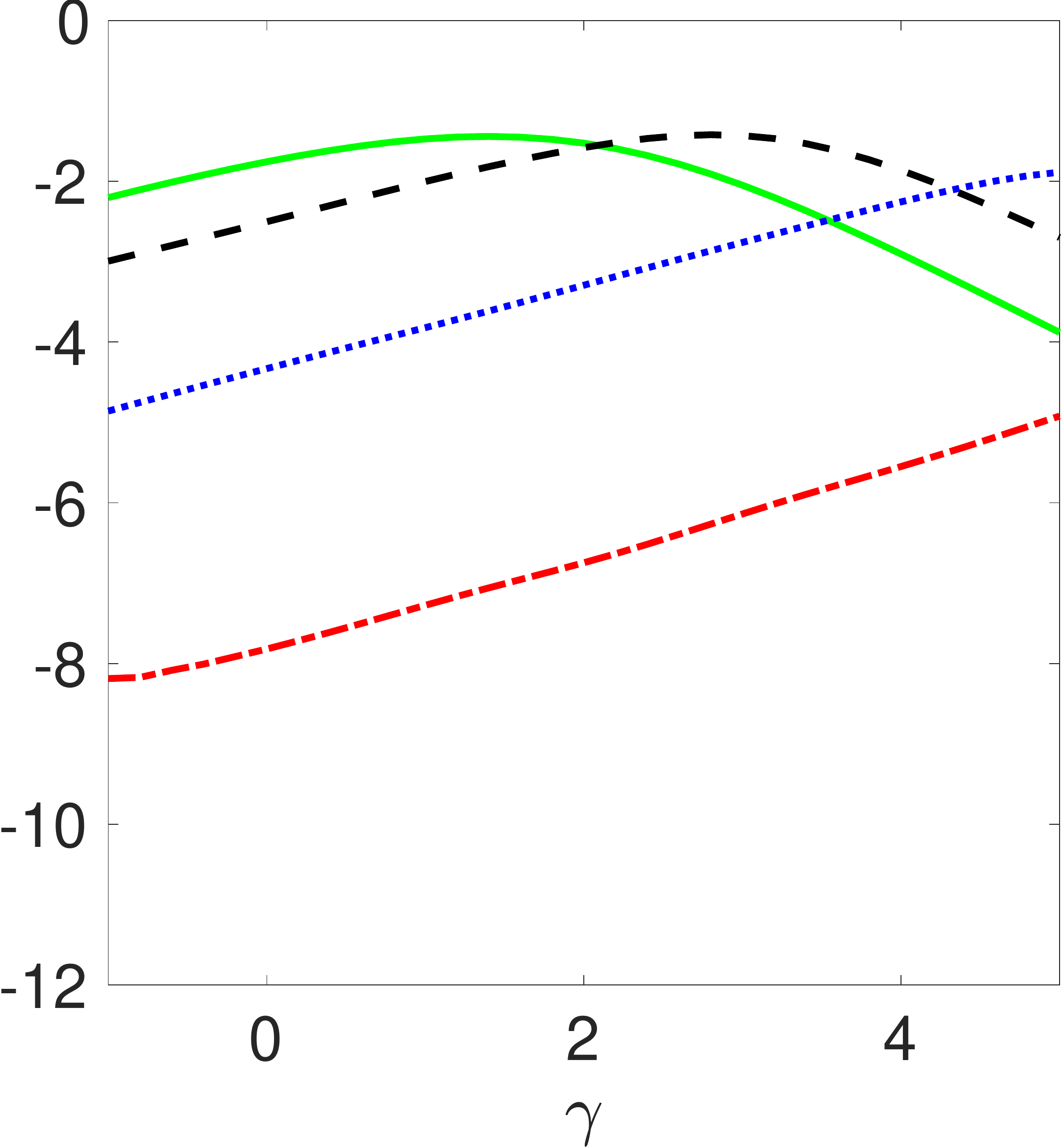}\\
\caption{\footnotesize \textbf{Minnesota  graph} }\label{fig:rec_minesota_appendix}
\end{figure}

\begin{figure}[h]
\centering
\begin{minipage}{.3\linewidth} \centering \small \hspace{2mm}  Community   \end{minipage}
\begin{minipage}{.3\linewidth} \centering \small \hspace{2mm} Bunny  \end{minipage}
\begin{minipage}{.3\linewidth} \centering \small \hspace{2mm}  Minesota   \end{minipage}\\
\includegraphics[width=.3\linewidth, height=0.25\linewidth]{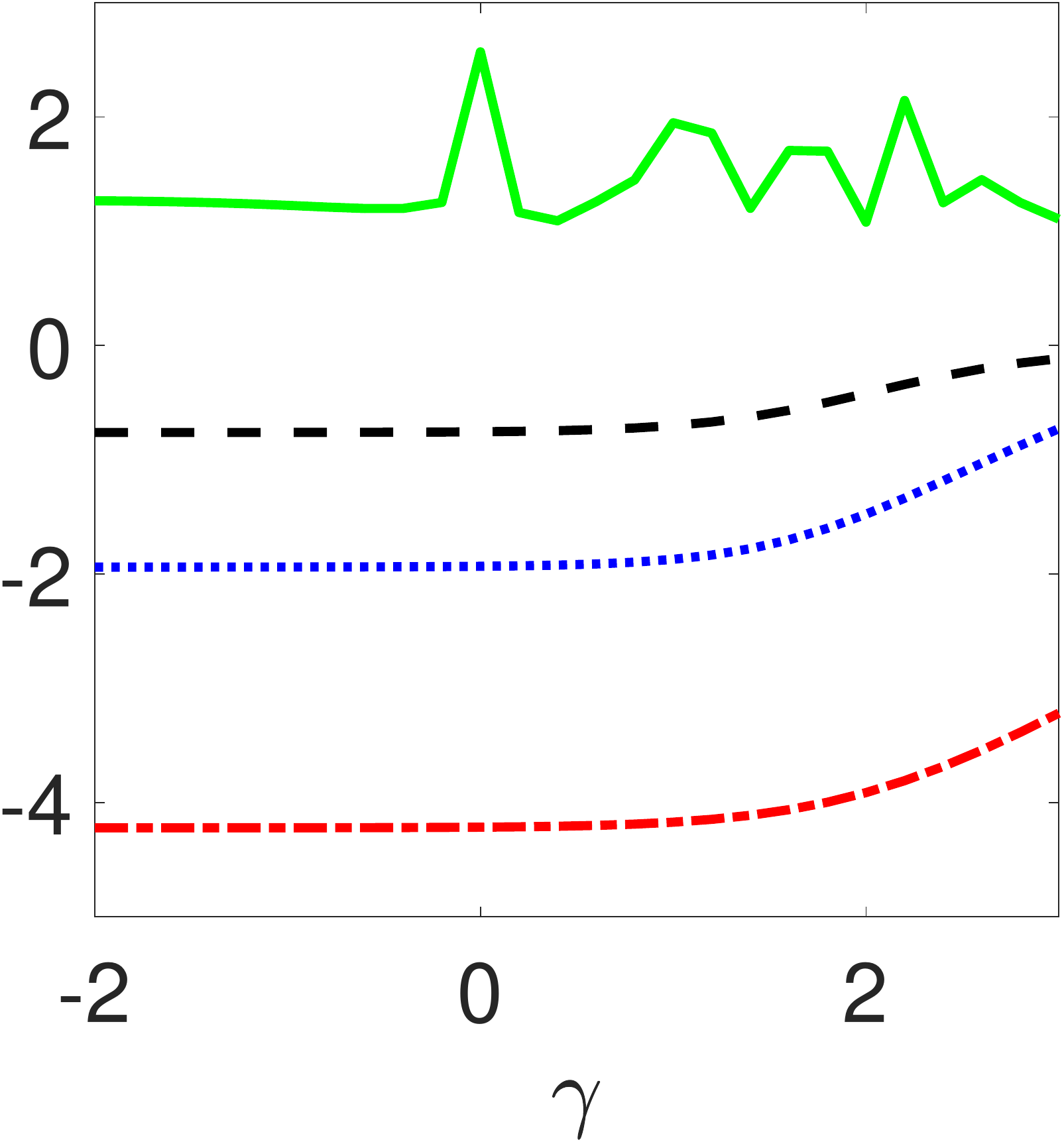}
\includegraphics[width=.3\linewidth, height=0.25\linewidth]{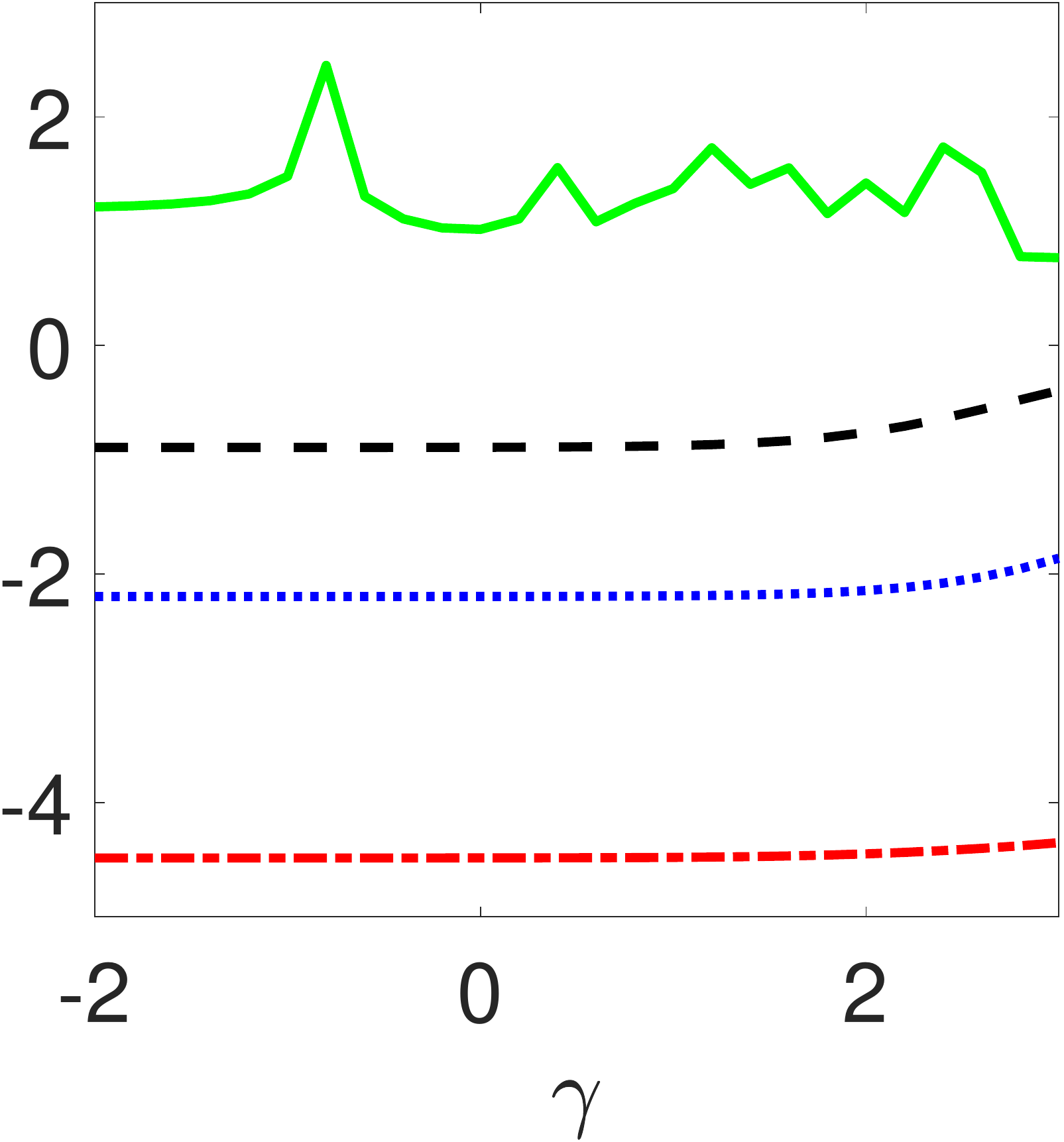}
\includegraphics[width=.3\linewidth, height=0.25\linewidth]{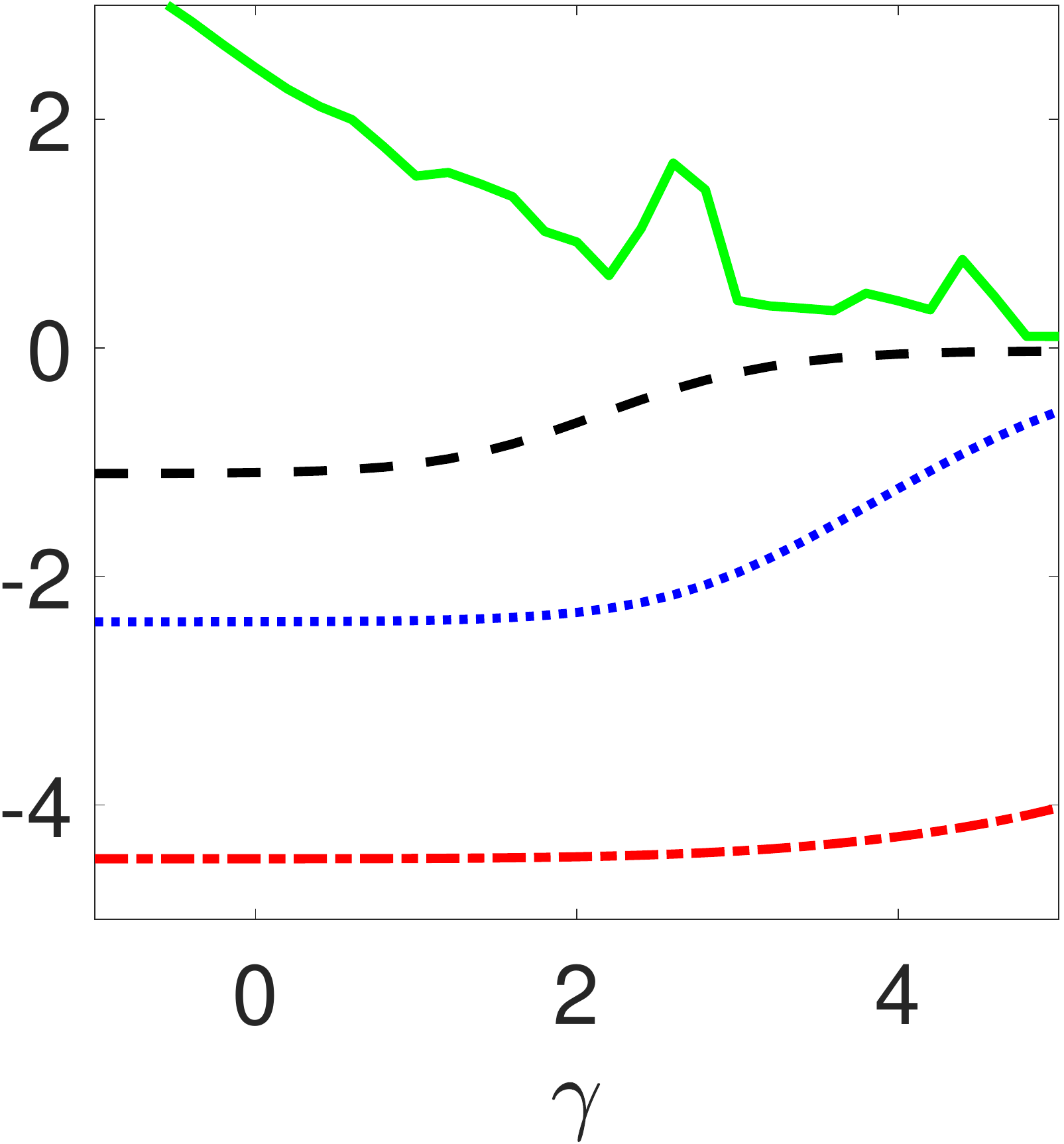}\\
\caption{\footnotesize \textbf{Mean reconstruction error $\log_{10}\|x^*-x\|$ in static case $T=1$ from 200  spatial samples}. All other parameters are the same with Fig.~\ref{fig:rec_nonoise}. }\label{fig:rec_nonoise_static}
\end{figure}

\begin{figure}[h]
\centering
\begin{minipage}{.3\linewidth} \centering \small \hspace{2mm}  Community   \end{minipage}
\begin{minipage}{.3\linewidth} \centering \small \hspace{2mm} Bunny  \end{minipage}
\begin{minipage}{.3\linewidth} \centering \small \hspace{2mm}  Minesota   \end{minipage}\\
\includegraphics[width=.3\linewidth, height=0.25\linewidth]{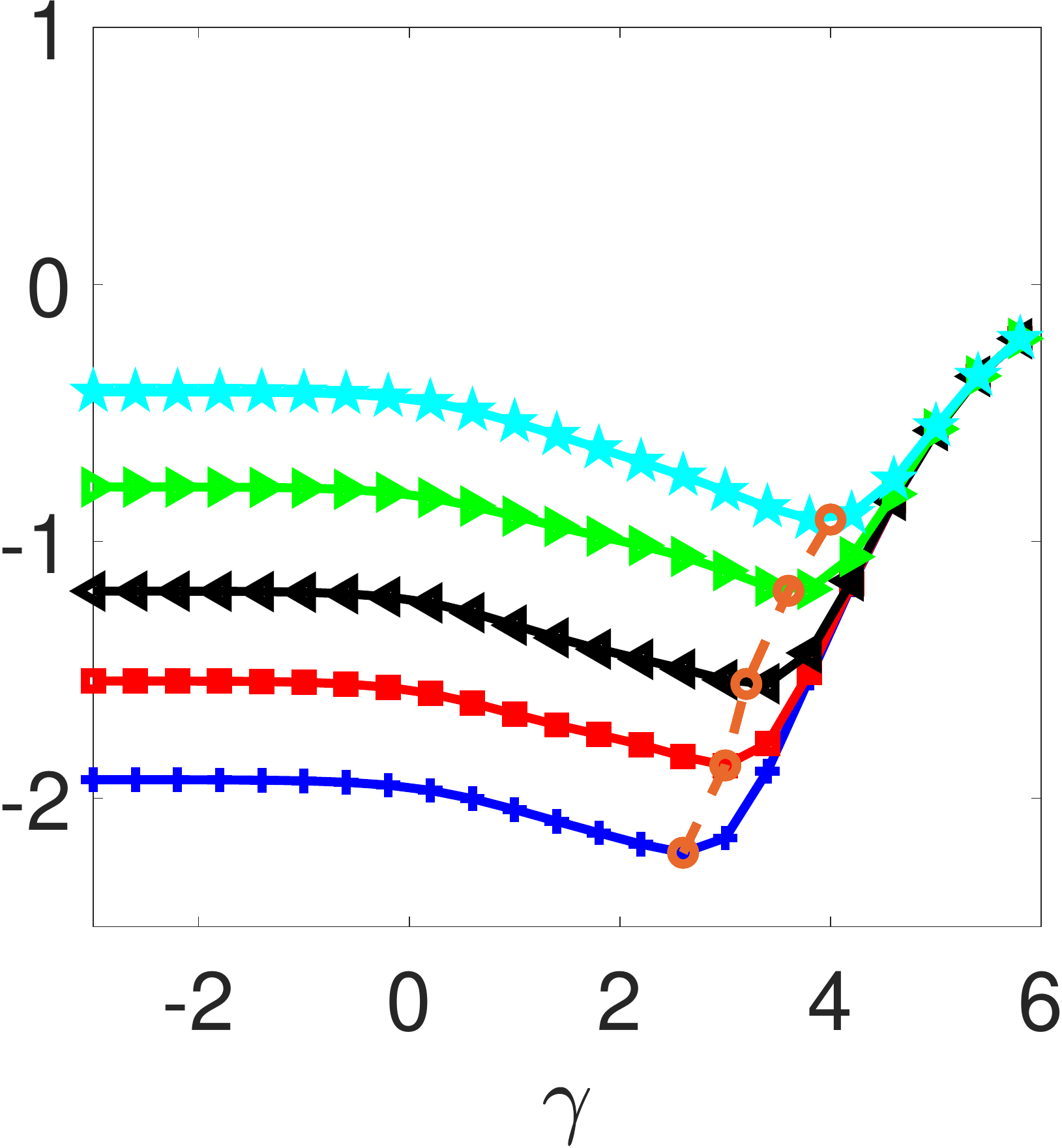}
\includegraphics[width=.3\linewidth, height=0.25\linewidth]{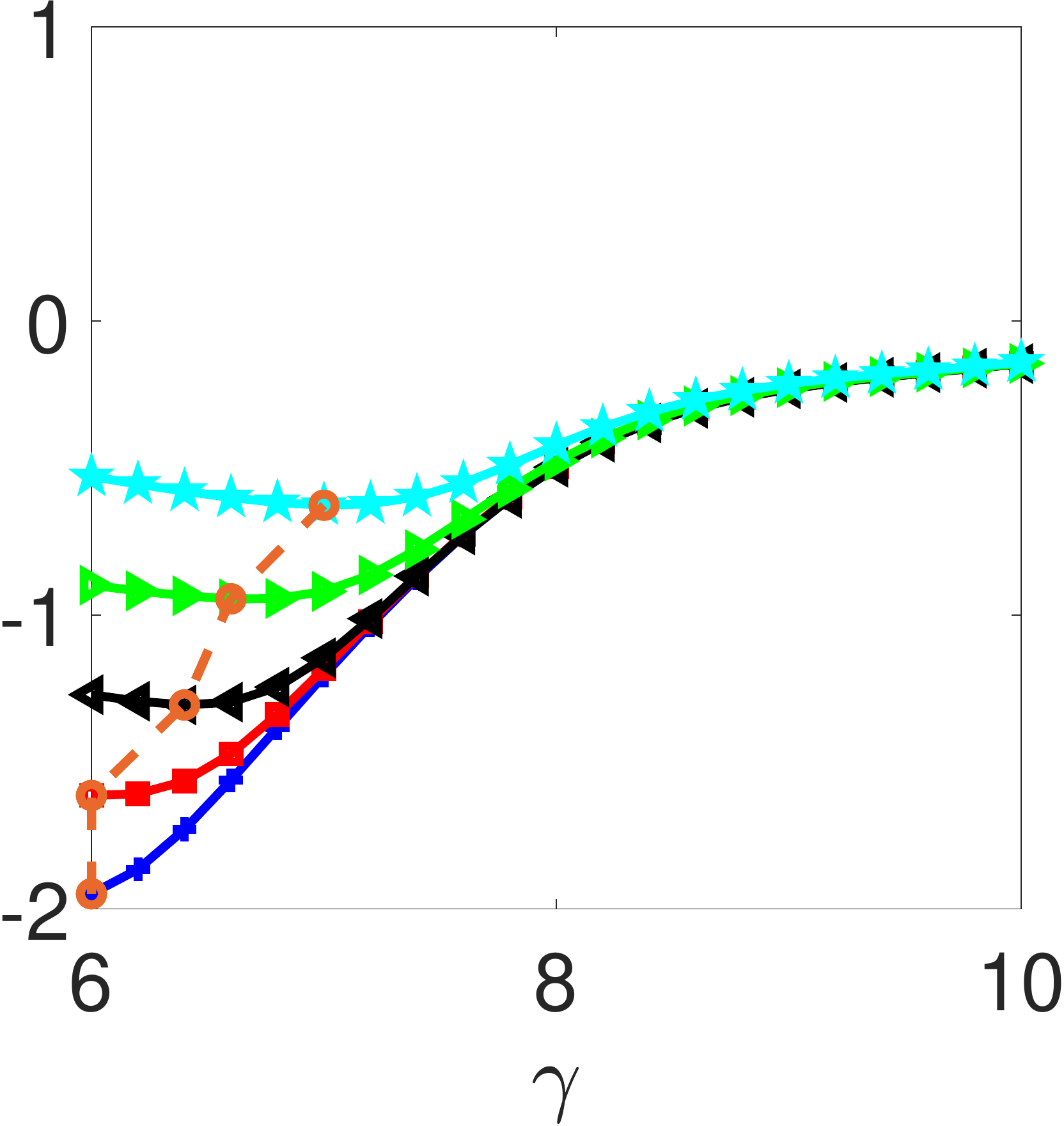}
\includegraphics[width=.3\linewidth, height=0.25\linewidth]{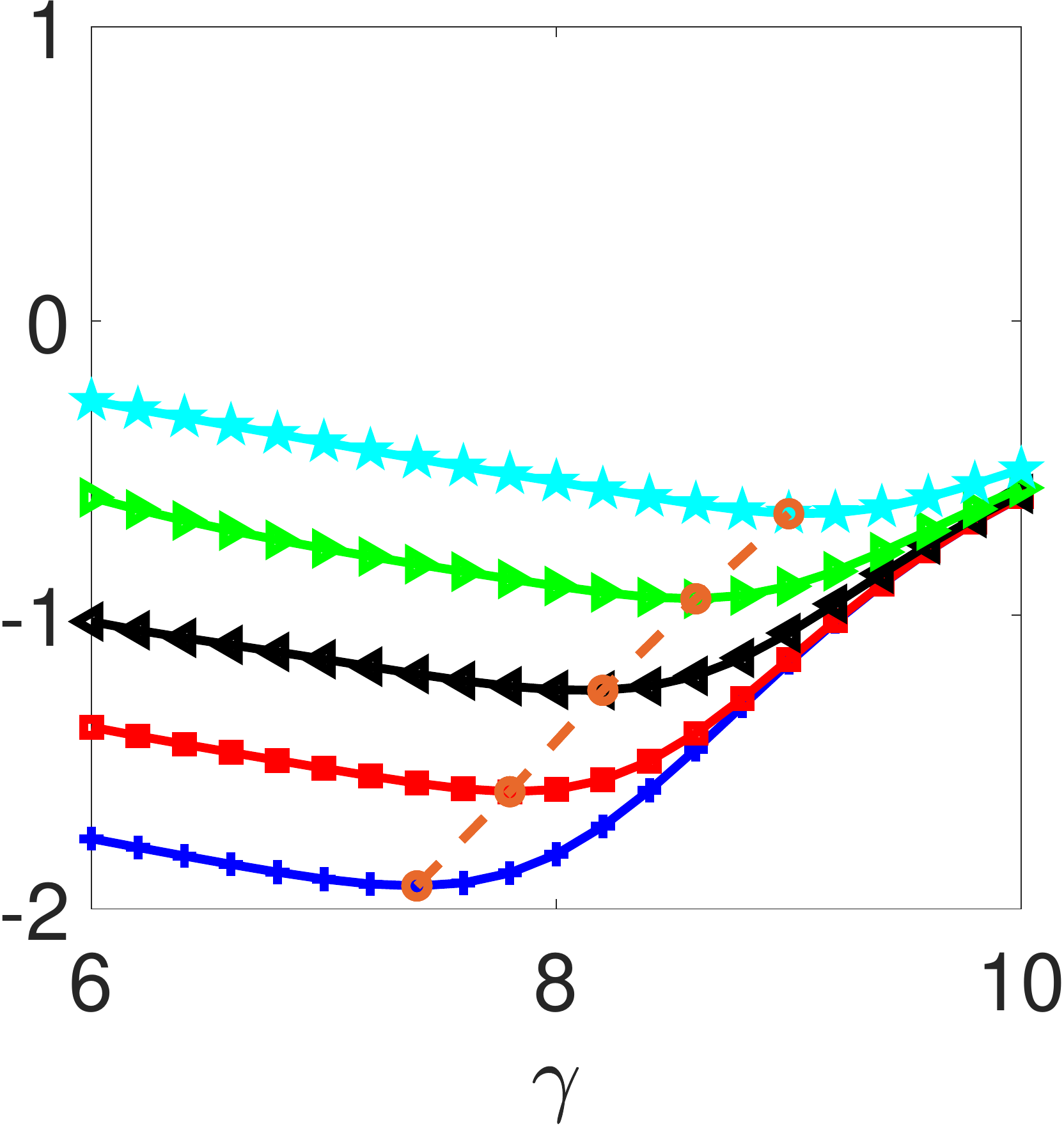}
\\
\caption{\footnotesize \textbf{Mean reconstruction error $\log_{10}\|x^*-x\|$ in static case $T=1$ from 200 noisy spatial samples}. All other parameters are the same with Fig.~\ref{fig:rec_noisy}. }\label{fig:rec_static}
\end{figure}

\end{document}